\journal{Journal of Computational Physics}
\newcommandx{\question}[2][1=]{\todo[linecolor=red,backgroundcolor=red!20,bordercolor=red,#1]{#2}}
\newcommandx{\change}[2][1=]{\todo[linecolor=blue,backgroundcolor=blue!25,bordercolor=blue,#1]{#2}}
\newcommandx{\info}[2][1=]{\todo[linecolor=OliveGreen,backgroundcolor=OliveGreen!25,bordercolor=OliveGreen,#1]{#2}}
\newcommandx{\improve}[2][1=]{\todo[linecolor=Plum,backgroundcolor=Plum!25,bordercolor=Plum,#1]{#2}}
\newcommandx{\thiswillnotshow}[2][1=]{\todo[disable,#1]{#2}}
\newcommandx{\answer}[2][1=]{\todo[linecolor=blue,backgroundcolor=White!25,bordercolor=Plum,#1]{#2}}
\newtheorem{proposition}{Proposition}
\newtheorem{remark}{Remark}
\begin{document}

\newcommand{\TODO}[1]{ \fbox{\parbox{3in}{\bf TODO: #1}}}

\newcommand{\grbf}[1] {\mbox{\boldmath${#1}$\unboldmath}}
\newcommand{\gbf}[1] {\mathbf{#1}}

\newcommand{\beq} {\begin{equation}}
\newcommand{\eeq} {\end{equation}}
\newcommand{\bdm} {\begin{displaymath}}
\newcommand{\edm} {\end{displaymath}}
\newcommand{\bit}{\begin{itemize}}
\newcommand{\eit}{\end{itemize}}
\newcommand{\bde}{\begin{description}}
\newcommand{\ede}{\end{description}}
\newcommand{\bce}{\begin{center}}
\newcommand{\ece}{\end{center}}
\newcommand{\ben} {\begin{enumerate}}
\newcommand{\een} {\end{enumerate}}
\newcommand{\bea} {\begin{eqnarray}}
\newcommand{\eea} {\end{eqnarray}}
\newcommand{\barr} {\begin{array}}
\newcommand{\earr} {\end{array}}
\newcommand{\bean} {\begin{eqnarray*}}
\newcommand{\eean} {\end{eqnarray*}}
\newcommand{\edoc} {

\begin{frontmatter}

\title{Sparse Grids based Adaptive Noise Reduction strategy for Particle-In-Cell schemes}

    \author[srk]{Sriramkrishnan Muralikrishnan}
    \author[Antoine]{Antoine J. Cerfon}
    \author[srk]{Matthias Frey}
    \author[Lee]{Lee F. Ricketson}
    \author[srk]{Andreas Adelmann}
    \address[srk]{Paul Scherrer Institut, 5232 Villigen, Switzerland.}
    \address[Antoine]{Courant Institute of Mathematical Sciences, New York University, New York NY 10012, USA.}
    \address[Lee]{Lawrence Livermore National Laboratory, Livermore, USA.}

\begin{abstract}
    We propose a sparse grids based adaptive noise reduction strategy for electrostatic particle-in-cell (PIC) simulations. Our approach is based on the key 
    idea of relying on sparse grids instead of a regular grid in order to increase the number of particles per cell for the same total number of particles, as first introduced
    in \cite{ricketson2016sparse}. Adopting a new filtering perspective for this idea, we construct the algorithm so that it can be easily integrated into high performance large-scale PIC code bases. Unlike the physical and Fourier domain filters typically used in PIC codes, our approach automatically adapts to mesh size, number of particles per cell, smoothness of the density profile and the initial sampling technique. Thanks to the truncated combination technique \cite{leentvaar2008pricing,benk2012hybrid,benk2012variants}, we can reduce the larger grid-based error of the standard sparse grids approach for non-aligned and non-smooth
    functions. We propose a heuristic based on formal error analysis
    for selecting the optimal truncation parameter at each time step, and develop a natural framework to minimize the total error in sparse PIC simulations. We demonstrate its efficiency and performance by means of two test cases: the diocotron
    instability in two dimensions, and the three-dimensional electron dynamics in a Penning trap. Our run time performance studies indicate that our new scheme can 
    provide significant speedup and memory reduction as compared to regular PIC for achieving comparable accuracy in the charge density deposition.
\end{abstract}

\begin{keyword}
PIC \sep Sparse grids \sep Filters \sep Adaptive noise reduction \sep Penning trap \sep Diocotron instability
\end{keyword}

\end{frontmatter}

\section{Introduction}
    Particle-in-cell (PIC) schemes have been a popular and effective method for the simulation of kinetic plasmas for a long period of time \cite{hockney1988computer,birdsall2004plasma,dawson1983particle}. Compared to continuum kinetic codes,
PIC schemes effectively reduce the dimension from six to three for three-dimensional simulations whereas compared to pure particle codes with direct summation it reduces the computation of self-consistent forces from $\mc{O}(N_p^2)$ to $\mc{O}(N_p + N_c)$ where $N_p$ is the total 
number of particles and $N_c \ll N_p$ is the number of mesh points. Even though the fast multipole method \cite{greengard1987fast} reduces the complexity of pure particle schemes to $\mc{O}(N_p)$, such an approach has other limitations such as the need for overly restrictive small time stepsizes. The other attractive features of PIC schemes include simplicity, ease of parallelization and robustness for wide variety of physical scenarios \cite{ricketson2016sparse}.

The main drawback of PIC schemes as compared to deterministic continuum kinetic schemes is the numerical error associated with particle noise, which decreases slowly as one increases the number of particles. Specifically, the noise in PIC schemes decreases as $1/\sqrt{P_c}$ where $P_c = N_p/N_c$ is the number of particles per cell. High fidelity large-scale 3D PIC simulations thus often require at least $\mc{O}(10^9)$ grid points and 
$\mc{O}(10^{12})$ particles to get the desired accuracy level \cite{nakashima2017large}.
These simulations require hours to complete even on large-scale state-of-the-art supercomputers available today. Thus, noise reduction approaches are of great interest to the PIC community to improve accuracy and also to speed up the computations and reduce memory 
requirements.

    There have been several efforts in this area over the past years and a brief overview is given in section \secref{noise_review}. Some of the strategies
    such as the $\delta f$ technique \cite{denton1995deltaf,aydemir1994unified,sydora1999low} are applicable for certain classes of plasma physics problems
    and give great computational savings whereas for others they are not obvious to apply. Filtering is a common noise reduction technique which finds applications in many production-level PIC codes such as TRISTAN-MP \cite{spitkovsky2005simulations,buneman1993computer}, ORB5 \cite{jolliet2007global}, IMPACT-T \cite{terzic2007particle} and Warp-X \cite{vay2018warp}, to name a few. One of the primary reasons for it is its simplicity and ease of 
    implementation in these frameworks. The stencil width and number of passes in case of digital filters and the cut-off wavenumber in case of Fourier domain
    filters is typically selected based on experience and knowledge about the physical problem at hand. Thus these could result in scenarios where either
    too much signal is smoothed or the high frequency noise is not removed sufficiently. Even if we managed to choose the parameters in the filter 
    so that they are optimal for a particular mesh size, number of particles per cell, time instant/smoothness of the function and the initial sampling technique they may no longer be optimal once we change any of the above and require tuning once again.

    Our objective in this work is to develop a noise reduction strategy, or filtering scheme, that automatically adapts itself to the aforementioned parameters. Just like other filtering techniques we also want it to be easily integrated into existing production-level PIC codes. Towards that goal, we adopt a key idea introduced in the recent work \cite{ricketson2016sparse} which combined sparse
    grids with the PIC scheme. In that article, the authors showed that owing to the large cell sizes involved in sparse grids compared to regular grids, the PIC
    scheme combined with sparse grids has many more particles per cell than its regular counterpart. This led to significant noise reduction and enormous
    speedups for certain classes of problems which have smooth or axis-aligned density profiles.

    Let us give a brief overview of the present work. By re-visiting the idea introduced in \cite{ricketson2016sparse} from a filtering 
    perspective, we first construct a sparse grids based noise reduction strategy for electrostatic 
    PIC simulations. Compared to existing filtering approaches, this sparse grids based approach is superior for functions which are 
    smooth or aligned with an axis. In simple terms this can be understood as follows: with any filtering technique the reduction in noise
    comes with a price, which is an increase in the grid-based error. In case of the sparse grids filtering, the noise reduction is maximal due to the least number of grid points involved in the approximation which translates to maximum particles per cell in the context of PIC. At the same time the increase in grid-based error for 
    smooth or axis-aligned functions is minimal. However, the same cannot be said for other functions in general and in those cases the
    increase in grid-based error associated with sparse grids is high. In order to tackle that issue, we use the so-called truncated combination 
    technique \cite{leentvaar2008pricing,benk2012hybrid,benk2012variants}, which reduces the large grid-based error of standard sparse 
    grids technique for non-aligned and non-smooth functions. This is because the truncated combination technique uses a different choice of coarse grids with finer mesh sizes than those used in the standard sparse grid combination. The truncation parameter involved in the combination technique is crucial for
    minimizing the sum of grid-based error and particle noise. Hence, we propose a heuristic
    based on formal error analysis to calculate the optimal truncation parameter on the fly which minimizes the total error.  

    This paper is organized as follows.\ Section \secref{pic} introduces the PIC method in the context of electrostatic Vlasov-Poisson equations.\ Section \secref{noise_review} briefly reviews the existing noise reduction strategies in PIC and provides motivation and objectives for this article.\ Section \secref{sparse_noise_reduction} explains in detail the components and algorithm for the sparse grids based noise 
    reduction strategy.\ Numerical results for the 2D diocotron test case and 3D penning trap are presented in 
    section \secref{numerical_results} and section \secref{conclusions} presents the conclusions and future work.

\section{Particle-in-cell method}
\seclab{pic}
    In this work, we consider the non-relativistic electrostatic Vlasov-Poisson system with a fixed magnetic field, and introduce the PIC method in that setting. The electrons are immersed in a uniform, immobile, neutralizing background ion population and the system is given by
    \begin{equation}
        \eqnlab{Vlasov}
        \frac{\partial f}{\partial t} + \vb \cdot \nabla_{{\bf x}} f + \frac{q_e}{m_e}\LRp{\Eb + \vb \times \B_{ext}} \cdot \nabla_{\vb} f = 0,
    \end{equation}
where $\Eb = \Eb_{sc} + \Eb_{ext},$ and the self-consistent fields due to space charge are given by 
\[
    \Eb_{sc} = -\nabla \phi, \quad -\Delta \phi = \rho = \rho_e - \rho_i.
\]
In the above equation $f({\bf x},\vb,t)$ is the electron phase-space distribution, $q_e$ and $m_e$ are the electron charge and mass respectively. The total
electron charge in the system is given by $Q_e=q_e\int\int f d{\bf x}d\vb$, the electron charge density by $\rho_e({\bf x}) = q_e\int f d\vb$ and the
constant ion density by $\rho_i = \frac{Q_e}{\int d{\bf x}}$. Throughout this paper we use bold letters for vectors and non-bold ones for scalars.

The particle-in-cell method discretizes the phase space distribution $f({\bf x},\vb,t)$ in a Lagrangian way by means of macro-particles (hereafter referred to as particles for simplicity). At time $t=0$, the
distribution $f$ is sampled to get the particles and after that a typical computational cycle in PIC consists of the following steps:

\begin{enumerate}
    \item \label{step1_pic} Assign a shape function e.g., cloud-in-cell \cite{birdsall2004plasma} to each particle $p$ and deposit the electron charge onto an underlying mesh.
    \item \label{step2_pic} Use a grid-based Poisson solver to compute $\phi$ by solving $-\Delta \phi = \rho$ and differentiate $\phi$ to 
        get the electric field $\Eb=-\nabla \phi$ on the mesh.
    \item Interpolate $\Eb$ from the grid points to particle locations ${\bf x}_p$ using the same interpolation function as used for the 
        charge deposition in step 1.
    \item By means of a time integrator advance the particle positions and velocities using
        \begin{align*}
            \frac{d\vb_p}{dt} &= \frac{q_e}{m_e}\LRp{\Eb + \vb \times \B_{ext}}|_{{\bf x}={\bf x}_p}, \\
            \frac{d{\bf x}_p}{dt} &= \vb_p.
        \end{align*}
\end{enumerate}
    
     The sources of different errors in the PIC simulations and their order of accuracies for typical choices are 
    as follows. For simplicity if we consider a uniform mesh with spacing $h$ in all the directions then for the shape functions used in typical PIC schemes (B-splines), the grid-based error scales as $\mc{O}(h^2)$. This is a result of approximating $\delta$ functions in the configuration space by shape functions of compact support. The Poisson equation is typically solved by means of FFT solvers or by multigrid
    methods. In case of multigrid solvers the equation is discretized by second-order finite difference or finite element schemes. The field
    solves together with the interpolation (typically linear) accounts for an additional $\mc{O}(h^2)$. The particle noise is the result of approximating the expected value of the shape function by an arithmetic mean over a finite number of discrete particles. It scales as 
    $(N_p h^d)^{-1/2}$ \cite{ricketson2016sparse}, where $d$ is the spatial dimension of the problem. The initial distribution is 
    sampled using one of the standard sampling techniques such as the naive Monte-Carlo strategy \cite{aydemir1994unified}, importance 
    sampling \cite{aydemir1994unified} or by means of the quiet start \cite{wang2011particle,wang2012adaptive,myers20174th}. The choice of 
    initial sampling plays an important role in determining the constant associated with the particle noise.\ Finally, for time integration usual choices are the second-order leap frog scheme \cite{birdsall2004plasma} and Runge-Kutta schemes of order 2 and higher. If we consider the leap frog scheme then the error in the time discretization scales as $\mc{O}(\Delta t^2)$. The mesh size $h$, time step $\Delta t$ and the number of particles $N_p$ in most PIC simulations are such that the dominant error comes from the 
   particle noise. Hence, high fidelity simulations 
    typically require a large number of particles to minimize it.  The high noise associated with PIC simulations motivated researchers to develop several noise
    reduction strategies which we will discuss next.

\section{Noise reduction strategies in PIC}
\seclab{noise_review}

    Noise reduction can be achieved in several ways in the context of PIC simulations and they can be categorized as: (i) variance reduction techniques such as the $\delta f$ method \cite{denton1995deltaf,aydemir1994unified,sydora1999low} and quiet start \cite{sydora1999low}; (ii) phase space remapping \cite{wang2011particle,wang2012adaptive,myers20174th}; (iii) filtering in physical domain
    \cite{birdsall2004plasma,verboncoeur2005particle,spitkovsky2005simulations,buneman1993computer,vay2011numerical},
    Fourier domain \cite{birdsall2004plasma,jolliet2007global} and wavelet domain \cite{gassama2007wavelet,terzic2007particle,terzic2011new}.\ The list is
    not exhaustive and there are many other contributions in this area. In addition, recently a noise reduction strategy using kernel density estimation algorithm has been proposed in \cite{wu2018reducing}, where the authors adaptively select the shape functions in PIC which minimize the sum of bias squared and variance of the error in the density. Also, in \cite{ricketson2016sparse} sparse grid techniques are used to achieve noise reduction in PIC, as we will explain in detail in the next section since our work is based on this strategy. In
    this section, we will focus more on the filtering strategies due to their relevance in the context
    of our work.

    The goal of filtering in PIC simulations is to smooth high frequency oscillations usually associated with noise. Filtering can be done in
    any field quantity, although the most common one in electrostatic PIC is the charge density \cite{verboncoeur2005particle} as it is the origin of noise  and the potential and electric field are smoother because of the integration involved in solving Poisson's equation.
    In case of filtering in the physical domain, one typically selects a filter of certain stencil width e.g., binomial filter, and does few passes on the field quantity. On the other hand for filters in the Fourier domain a maximum wavenumber is specified by the user 
    and the filter eliminates all the wavenumbers higher than the specified cut-off wavenumber \cite{birdsall2004plasma}. In almost all the filtering strategies, the number of passes/stencil width in the 
    physical domain or the cut-off wavenumber in the Fourier domain has to be chosen
    a priori such that the total error, which is the sum of grid-based error (bias) and particle noise (variance) is minimized. However, in practice there are not many 
    constructive strategies available to pick these parameters and in many cases the values are chosen based on rule of thumb and previous 
    experiences \cite{shalaby2017sharp}.\ If we managed to choose these parameters so that they are optimal for a particular time instant, mesh, 
    number of particles per cell and sampling technique, they are unlikely to remain so as the simulation evolves. Indeed, due to non-linear space charge effects fine scale structures appear in the density and this 
     changes the smoothness of the profile continuously with time. Hence an ideal filter should be adaptive with respect to 
    all aforementioned parameters to minimize the total error.

    Our objective is to develop a noise reduction strategy for PIC simulations which automatically adapts with 
    respect to time, mesh size, number of particles per cell and the initial sampling technique. Since our ultimate goal is large-scale 3D
    high fidelity plasma simulations we also want the approach to be scalable in high performance computing 
    architectures. Finally, it would be desirable if the strategy integrates well with the existing high performance large-scale PIC code bases such as OPAL \cite{adelmann2019opal} so that the user community of these codes can benefit from it. Towards these goals, we propose a sparse grids based adaptive
    noise reduction strategy in the following section.

\section{Sparse grids based noise reduction}
\seclab{sparse_noise_reduction}

    The sparse grid combination technique was first introduced in \cite{griebel1990combination} as a way to approximate smooth functions
    on rectangular grids efficiently, by a specific linear combination of their approximations on different coarse grids. If we consider linear 
    interpolation as an example, then for a regular grid of mesh size $h$ we need $\mc{O}(h^{-d})$ grid points to get an accuracy of  
    $\mc{O}(h^2)$. The sparse grid combination technique on the other hand uses only $\mc{O}(h^{-1}|log(h)|^{(d-1)})$ total number 
    of points to get an accuracy of $\mc{O}(h^2|log(h)|^{(d-1)})$ for smooth functions, which is only slightly deteriorated 
    compared to the regular grids. Thus we can clearly see the advantages of sparse grids in high dimensions and it is mainly used to
    tackle the curse of dimensionality in many applications \cite{bungartz2004sparse}. The key idea is the cancellations that happen between the error expansions in the different 
    coarse grids, which are called component grids in the sparse grids terminology. Also, the scalar values that multiply each component grid involved in the combination are called the combination coefficients. In Figure \figref{truncated_comb} an illustration is shown, where we can see the different component grids and their combination coefficients involved in approximating a $2^8\times2^8$ regular grid. The literature on sparse grid combination technique and sparse grids in general is vast and the readers can refer to 
    \cite{bungartz2004sparse,griebel1990combination,pfluger2010spatially,heene2018massively,griebel1995efficient} and 
    the references therein for more details.

    In \cite{ricketson2016sparse}, the authors combined sparse grid combination technique with the 
    PIC method and showed noise reduction and order of magnitude speedups for a certain class of problems.
    One of the key ideas from \cite{ricketson2016sparse} is that for the same total
    number of particles $N_p$, sparse grids have many more particles per cell compared to regular
    grids and hence significantly less noise. However, this comes with a limitation which is explained as follows. 
    While for smooth functions (Figure \figref{time0_diocotron_tau1_density}) or functions aligned with an axis (Figure \figref{time0_penning_tau1_density}) the increase in grid-based 
    error for using sparse grids is very small in comparison to regular grids, for non-smooth or non-aligned functions (Figures \figref{time875_diocotron_tau1_density}, \figref{time200_penning_tau1_density} and \figref{time300_penning_tau1_density}) the increase is much higher \cite{trefethen2017cubature}. Hence, the sparse PIC proposed
    in \cite{ricketson2016sparse} showed large speedups and memory savings for smooth functions, whereas for ones with fine scale structures that are not aligned with the grid the regular PIC itself was more advantageous. The sparse PIC for these cases seemed to provide benefits only in the limit of asymptotically fine meshes at least for two-dimensional problems where the weaker dependence of the computational cost on dimension is not as noticeable.

    We base our work on the key idea of increased particles per cell in sparse grids shown in
    \cite{ricketson2016sparse} compared to regular grids for the same total number of particles
    $N_p$. However, in this paper we provide the following novelties and improvements compared to
    \cite{ricketson2016sparse}. 

    Firstly, the sparse PIC scheme in \cite{ricketson2016sparse} performs all the operations e.g., charge deposition and Poisson solve on the sparse grids and does not introduce regular grids
    at all except for visualization purposes or post-processing. By contrast in the
    current approach we use sparse grids only for noise reduction in the charge density while
    all the operations such as charge deposition and Poisson solve happen in the regular grid
    just like in usual PIC codes. 

    The reason for pursuing this approach is that we 
    want our strategy to fit well within the existing large-scale high performance regular PIC code bases in a light weight and easy to integrate manner. While
    the parallelization strategy outlined in \cite{cerfon2019sparse} for sparse PIC is attractive
    in its own sense, it is also very different from the parallelization strategy adopted in 
    many of the existing large-scale PIC code bases. In that sense, since the strategy
    proposed in this paper is an add-on for regular PIC it can be more easily integrated into existing
    frameworks such as OPAL \cite{adelmann2019opal}, Warp-X \cite{vay2018warp}, TRISTAN-MP \cite{spitkovsky2005simulations,buneman1993computer} within their parallelization framework without disturbing other parts.
    
    One may be concerned that since both regular grids and sparse grids are involved in our 
    current approach the significant benefits mentioned in \cite{ricketson2016sparse} may not
    be realized. However, in the regime where particle operations dominate we still get significant speedups compared to the regular PIC for similar accuracy 
    as demonstrated with the numerical results in section \secref{numerical_results}. Apart from
    these software reasons, our approach here also opens the door to understanding the noise reduction
    achieved with sparse grids from a filtering perspective as we show in the next section. This has the potential to extend sparse grids based noise reduction/filtering in many more directions for future research.
   
   Secondly, we tackle the high grid-based error of sparse grids for non-aligned or non-smooth functions by means of the so-called truncated combination technique \cite{leentvaar2008pricing,benk2012hybrid,benk2012variants}. This technique was proposed as a modification to the standard
   sparse grid combination technique to tackle the convergence issues in certain types of PDEs
   in financial applications caused by the presence of extremely anisotropic grids in the standard
   sparse grid technique. Here, in the context of PIC it provides a natural way to 
   minimize the sum of grid-based error and particle noise within the framework of sparse grids based noise reduction without much modification to the 
   standard sparse grid combination technique. 

   Finally, we propose a heuristic based on a generalization of the formal error analysis in \cite{ricketson2016sparse} to adaptively select the optimal truncation parameter in the combination technique depending on the smoothness of the density profile, mesh size, number of particles per cell and the initial sampling technique. In what follows we explain in detail these three contributions and lay out the algorithm for the noise reduction strategy.  
   
   \subsection{A filtering perspective}

   Let us consider a domain of size $[0,L]^d$ where $d$ is the dimension (typically $d=2$ or $3$), and for simplicity a regular grid of mesh size
   $h=\frac{L}{2^n}$ in all the directions. In our noise reduction strategy, after step \ref{step1_pic} in 
   the PIC algorithm shown in section \secref{pic} we perform a sparse grids projection of the
   charge density as follows
    \begin{equation}
       \eqnlab{sparse_filter}
        \varrho_e = G \rhoa_e = \LRp{\sum_{i=1}^{nc} c_i P_i R_i}\rhoa_e. 
    \end{equation}
Here, $\rhoa_e$ and $\varrho_e$ are the 
the charge densities on the regular grid before and after the sparse grids transformation. $R_i$ and $P_i$ are the transfer operators which transfer the density from the regular grid to the $i$th 
component grid in the sparse grid combination technique and vice versa, respectively. We also denote the transfer operators simply as $R$ and $P$ in places where the subscript $i$ is not needed. $c_i$ is 
the combination coefficient for the $i$th component grid which is a scalar value and $nc$ is the
number of component grids involved in the combination technique. 

    One requirement for the transfer operators $P_i$ and $R_i$ is to ensure global charge conservation.
In our approach, we use the cloud-in-cell or linear interpolation operators and they are given by  
\begin{equation}
\eqnlab{restriction}
    R_i({\bf x} - \tilde{\bf x}) = \frac{h^d}{V_i}\prod_{m=1}^d W_m\LRp{x_m - \tilde x_m}, \quad P_i({\bf x} - \tilde{\bf x}) = \prod_{m=1}^d W_m\LRp{x_m - \tilde x_m} 
\end{equation}
where $W_m(\zeta) = \max\LRc{0,1-\frac{|\zeta|}{h_m}}$. Here, ${\bf x}$ and $\tilde{\bf x}$ are the locations of the grid points in the $i$th component grid and regular
grid respectively and $h_m$ is the mesh size of the $i$th component grid along the $m$th coordinate axis. $V_i$ is the volume of each cell in the 
$i$th component grid.

    Let us consider the standard sparse grid combination technique in \cite{griebel1990combination}, then the sparse grids projection or interpolation in equation
    \eqnref{sparse_filter} essentially removes high frequency components which are 
    coupled between the axes. This is because the sparse grid combination corresponding
    to a regular grid of mesh size $h$ does not have the fine resolution $h$ in all the 
    directions. In this sense the sparse grid combination acts as a multi-dimensional low pass filter and keeps only
    certain wavenumbers resolved by a regular grid of mesh size $h$. This is the filtering point of view for the noise
    reduction obtained from the sparse grids. It can also be understood from a Monte Carlo point of view as shown in \cite{ricketson2016sparse} by means of 
    increased particles per cell in the sparse grids compared to the regular grid.
    However, in the sparse PIC presented in \cite{ricketson2016sparse} the particles deposit directly onto the component grids
    unlike the strategy pursued here. These two approaches are related as stated in the following proposition and hence the noise reduction obtained with the sparse grids can be understood from a Monte Carlo point of view or from a filtering perspective. In later sections we use either viewpoint to 
    explain the noise reduction with sparse grids depending on the context.

\begin{proposition}
    \propolab{MC_filtering_equivalence}
For node-centered grids and linear interpolation shape functions the direct charge density deposition
    onto the component grids in the sparse PIC approach \cite{ricketson2016sparse} is equivalent to first depositing the charge density onto the regular grid
    and then transferring it to the component grids by means of the operator $R$ in equation \eqnref{restriction}\footnote{Let us refer this as two-step approach for simplicity.}. 
    In case of the cell-centered grids an exact equivalence between the two approaches does not hold.\ There the two-step approach can be viewed as direct charge deposition onto the component grids with
    a different shape function than the standard hat function which is also second-order accurate in space. 
\end{proposition}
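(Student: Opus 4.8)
\emph{Proof strategy.} The plan is to write both deposition procedures as explicit sums of shape-function weights and to show that, node by node, they produce identical weights. Fix a component grid $i$ with cell volume $V_i$ and axiswise mesh sizes $h_m$, let $\overline{W}_m(\zeta)=\max\{0,1-|\zeta|/h\}$ denote the regular-grid hat of width $h$, and let $W_m(\zeta)=\max\{0,1-|\zeta|/h_m\}$ be the component-grid hat from \eqnref{restriction}. For a particle of charge $q_p$ at $\mathbf{x}_p$, direct deposition assigns to a component node $\mathbf{x}$ the density $\frac{1}{V_i}\sum_p q_p\prod_{m=1}^d W_m(x_m-x_{p,m})$, whereas the two-step approach first forms the regular-grid density $\hat{\rho}(\tilde{\mathbf{x}})=\frac{1}{h^d}\sum_p q_p\prod_{m=1}^d \overline{W}_m(\tilde x_m-x_{p,m})$ and then applies $R_i$ from \eqnref{restriction}. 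Substituting $R_i=\frac{h^d}{V_i}\prod_m W_m$ and using the tensor-product structure of both grids, the two-step density at $\mathbf{x}$ becomes $\frac{1}{V_i}\sum_p q_p\prod_{m=1}^d\big(\sum_{\tilde x_m}W_m(x_m-\tilde x_m)\,\overline{W}_m(\tilde x_m-x_{p,m})\big)$, where the $h^d$ factors cancel. Hence the two procedures coincide for every particle and every component node if and only if the one-dimensional identity
\begin{equation}
\sum_{\tilde x_m} W_m(x_m-\tilde x_m)\,\overline{W}_m(\tilde x_m-x_{p,m}) = W_m(x_m-x_{p,m})
\end{equation}
holds along each axis, which is the single fact I must establish.

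\emph{Node-centered case.} The left-hand side above is exactly the value at $x_{p,m}$ of the piecewise-linear interpolant, on the regular grid, of the map $\tilde x_m\mapsto W_m(x_m-\tilde x_m)$. I would prove the identity by observing that this map is a hat of half-width $h_m$ centered at the component node $x_m$, so its only kinks lie at $x_m-h_m$, $x_m$, and $x_m+h_m$. For node-centered grids $x_m$ is itself a regular-grid node and $h_m$ is an integer multiple of $h$, so all three kinks fall on regular-grid nodes; since linear interpolation reproduces any piecewise-linear function whose breakpoints coincide with nodes exactly, the identity follows. Taking the product over $m$ then yields the claimed equivalence in $d$ dimensions. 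I would also note that global charge conservation holds in both procedures, because $\sum_{\tilde x_m}W_m(x_m-\tilde x_m)=1$ is the partition-of-unity property of the hat functions and is precisely what motivates the $\frac{h^d}{V_i}$ normalization in $R_i$.

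\emph{Cell-centered case and second-order accuracy.} When the grids are cell-centered, the component node $x_m$ is offset by a half cell from the regular grid, so the kinks of $\tilde x_m\mapsto W_m(x_m-\tilde x_m)$ no longer land on regular-grid nodes; linear interpolation is then only approximate, the identity fails, and the two approaches cannot agree exactly. I would instead define the effective one-dimensional shape function $\widetilde{W}_m(x_m-x_{p,m}):=\sum_{\tilde x_m}W_m(x_m-\tilde x_m)\,\overline{W}_m(\tilde x_m-x_{p,m})$ produced by the two-step map and show it is second-order accurate by verifying that it reproduces polynomials of degree $\le 1$. The zeroth moment $\sum_{\tilde x_m}\widetilde{W}_m=1$ again follows from the two nested partitions of unity, which simultaneously re-confirms charge conservation, while the linear-reproduction (first-moment) condition must be checked to hold up to a perturbation of order $\mathcal{O}(h^2)$. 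Because $h\le h_m$, the consistency error of the composed operator is then $\mathcal{O}(h_m^2)$, so the effective shape function is second-order in space even though it differs from the standard hat. I expect this last step to be the main obstacle: unlike the node-centered identity, which is exact and purely structural, the cell-centered claim requires careful moment or Taylor bookkeeping of the composition of the two interpolation operators to confirm that the leading-order term cancels and only the second-order term survives.
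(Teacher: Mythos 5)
Your node-centered argument is essentially the paper's own: reduce to one dimension via the tensor-product structure of the hats and the grid, and prove the composition identity
\begin{equation*}
\sum_j W_c(x_k-\tilde x_j)\,W_f(\tilde x_j-x_p)\;=\;W_c(x_k-x_p)
\end{equation*}
by recognizing the left-hand side as the piecewise-linear interpolant of $\tilde x\mapsto W_c(x_k-\tilde x)$ evaluated at $x_p$, which is exact because every kink of the coarse hat lands on a fine-grid node when the grids are node-centered and the mesh ratio is an integer. That half is complete and correct.

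The cell-centered half, however, has a genuine gap, and you have also mis-anticipated where the difficulty lies. You correctly define the effective shape function $\mathcal{W}_c(x_k-x)=\sum_j W_c(x_k-\tilde x_j)W_f(\tilde x_j-x)$, but you then defer the second-order accuracy claim to an unexecuted ``moment or Taylor bookkeeping'' step, expecting the first-moment condition to hold only up to an $\mathcal{O}(h^2)$ perturbation. In fact the first moment vanishes \emph{exactly}, by symmetry: both $W_c$ and $W_f$ are even, and the fine cell-centers are placed symmetrically about each coarse cell-center, so $\mathcal{W}_c$ is an even function of its argument. This is the first of the three shape-function conditions the paper verifies: (i) evenness, $\mathcal{W}_c(\zeta)=\mathcal{W}_c(-\zeta)$; (ii) unit mass, $\frac{1}{h_c}\int \mathcal{W}_c(\zeta)\,d\zeta = 1$, which follows because $W_f$ integrates to $h_f$ and the remaining sum $h_f\sum_j W_c(x_k-\tilde x_j)$ is the midpoint rule applied to $\int W_c$, exact since $W_c$ is linear on every fine cell; and (iii) partition of unity, which gives charge conservation. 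With evenness and unit mass in hand, the Taylor expansion of the paper's Appendix B applies verbatim: the first-order term drops by evenness, yielding a grid error of $\mathcal{O}(h_c^2)$ (with constants depending on $h_c/h_f$) and particle noise of $\mathcal{O}\bigl(\sqrt{|Q_e\rho_e|/(N_p h_c)}\bigr)$. So the step you flagged as ``the main obstacle'' is not an approximate cancellation to be estimated but an exact one forced by symmetry; without that observation, your outline does not yet establish the stated second-order accuracy of the modified shape function.
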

\begin{proof}
    The proof is given in appendix A.
\end{proof}


    The advantage of the Monte Carlo point of view is that we can estimate the grid-based error and particle noise with explicit dependence on the number of particles and mesh size as we show in the section \secref{error_analysis}. From a pure filtering perspective this may be very difficult or not possible. 

    Now, we are interested in knowing how much grid-based error and particle noise are increased and decreased by the sparse grids filter respectively. To answer this, we observe that for interpolation the sparse grid combination technique is equivalent to
    the sparse grids based on hierarchical bases \cite{pfluger2010spatially}. The latter
    is identified based on an optimization process \cite{bungartz2004sparse} which guarantees for
    smooth functions, the least number of degrees of freedom for maximal accuracy of $\mc{O}\LRp{|log(h)|^{d-1}h^2}$ based on $L^2$ or $L^\infty$ norm. Thanks to this, in the context of PIC, the sparse grids transformation in equation \eqnref{sparse_filter} gives maximal
    noise reduction (because of the least number of grid points and hence maximum particles per cell) and at the same time the increase in grid-based error is minimal for smooth functions. Thus compared to other filters, the one based on the standard sparse grid combination technique is optimal in the sense of minimizing the total error
    for functions which are either smooth or aligned with an axis.

    \subsection{Truncated combination technique to handle non-aligned and non-smooth functions}

    The optimality mentioned in the previous section for sparse grids filtering is no longer 
    applicable in case of non-smooth functions or functions which are not aligned with either of the axes. 
    Here the grid-based error is significantly large compared to the regular grid because
    of large mixed derivatives \cite{trefethen2017cubature}. This is why in \cite{ricketson2016sparse}, the authors reported poor performance 
    of sparse PIC for the diocotron instability test case as it falls into the non-aligned category when simulated with a Cartesian grid. There are a few ways to tackle this problem as mentioned in \cite{ricketson2016sparse,cerfon2019sparse}, namely optimized coordinate systems which evolve with the 
    charge density or by means of spatially adaptive sparse grids. These strategies, which are perhaps more elegant from a mathematical point of view 
    and more efficient, however require significant changes 
    in the existing regular PIC code bases. Also no detailed, robust algorithm is known at the moment. Here, we pursue another direction using the truncated 
    combination technique \cite{leentvaar2008pricing,benk2012hybrid,benk2012variants} which is much simpler and can be easily implemented in existing 
    codes.

\begin{figure}[h!b!t!]
    \vspace{-15mm}
    \begin{center}
    \includegraphics[trim=9cm 3cm 3cm 3cm,clip=true,width=0.8\textwidth]{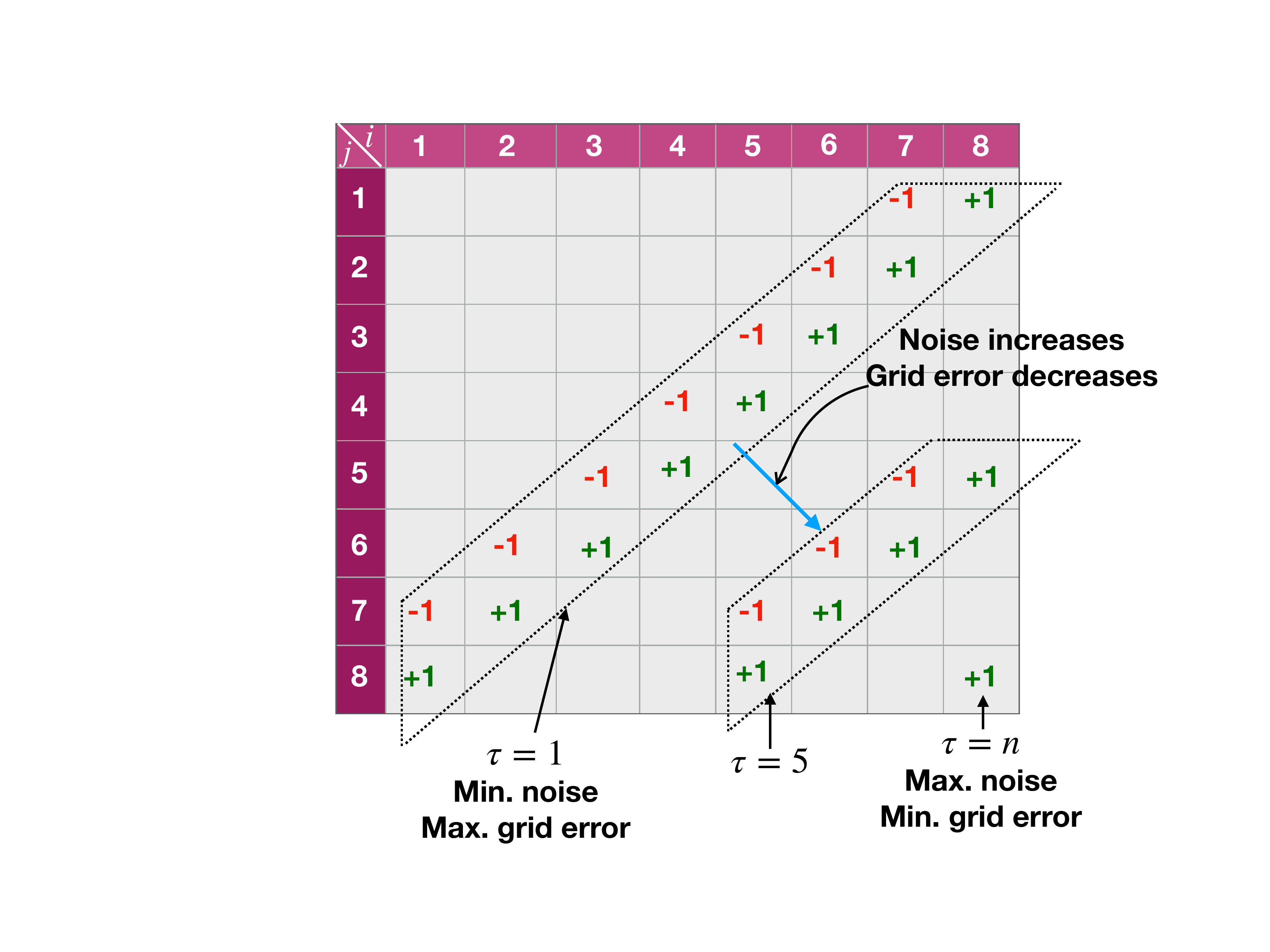}
    \caption{Schematic explaining the sparse grid combination technique and how the truncated combination can be used to minimize the 
    total error. Here, $\tau=1$ corresponds to the standard sparse grid combination technique 
        and $\tau=n$ corresponds to the regular grid. The indices $i$ and $j$ on the row and column headers indicate the mesh sizes of the component grids involved in the combination technique such that the $(i,j)$th component grid has mesh sizes $h_i=\frac{L}{2^i}$ and $h_j=\frac{L}{2^j}$, where $L$ is the length of the domain in each direction. The $+1$ and $-1$ are the combination coefficients $c_i$ in equation \eqnref{sparse_filter} corresponding to the component grids.}
    \figlab{truncated_comb}
    \end{center}
    \vspace{-5mm}
\end{figure}

    In Figure \figref{truncated_comb}, we show for a 2D problem with a regular mesh of size $2^8\times2^8$ the different combination strategies. The truncated combination technique \cite{leentvaar2008pricing,benk2012hybrid,benk2012variants} introduces a truncation parameter $\tau$\footnote{In this work we consider the same truncation parameter $\tau$ in all the directions.},
which is a positive integer that determines the component grids involved in the combination. If we consider a $2^n\times2^n$ regular grid, then the value of $\tau=1$ corresponds to 
the standard combination technique in \cite{griebel1990combination} and $\tau=n$ corresponds to 
the regular grid. By increasing $\tau$, fewer component grids are used in the combination technique, but each with finer 
mesh sizes than the previous $\tau$. This alleviates the issue of non-aligned and non-smooth functions by controlling the error term associated with the mixed fourth 
derivatives. Thus, the truncated combination technique provides a unified framework to 
transition from standard sparse grids to regular grid in terms of approximation capability by
increasing $\tau$.

Let us consider a PIC simulation with $N_p$ total number of particles and $2^n\times2^n$ regular grid with mesh size $h=\frac{L}{2^n}$. Now the regular grid with $\tau=n$ will have the least grid-based error and 
maximal noise because it has the mesh size $h$ in all the directions. The 
standard sparse grids technique with $\tau=1$ on the other extreme has maximal grid-based error and minimal noise as it has the mesh size $h$ in directions aligned with $x$ or $y$
axis but not in others. As we increase $\tau$ from $1$ to $n$ as shown in Figure \figref{truncated_comb}, we decrease the grid-based error
because of the inclusion of finer mesh sizes in the component grids but at the same time increase the particle noise due to decreased particles per cell or inclusion of higher wavenumbers in the filtering process \eqnref{sparse_filter}. Thus depending on the
smoothness and the orientation of the function there is an optimal $\tau$
at which the total error, which is the sum of grid-based error and particle noise, is minimized.
In the following we will present a formal error analysis and propose a heuristic approach to estimate the optimal $\tau$.

\subsection{Formal error analysis}
\seclab{error_analysis}

In \cite{ricketson2016sparse}, a formal error analysis is shown for sparse PIC quantifying the grid-based error and particle noise. Since our codes are
based on cell-centered grids (as it is the default choice in many plasma PIC codes \cite{adelmann2019opal,vay2018warp}), as mentioned in Proposition \proporef{MC_filtering_equivalence} the direct charge 
deposition in \cite{ricketson2016sparse} and the current approach are not exactly equivalent because of the differences in the shape
functions. Nevertheless, the order of accuracy is same for both the approaches and they differ only in constants.  
Hence, we will mostly
follow the steps in \cite{ricketson2016sparse} and generalize it to include the truncated combination technique.

As shown in \cite{ricketson2016sparse} and appendix B, approximating $\rho_e$ in PIC simulations consists of two parts namely grid-based error and particle noise. In what follows we will quantify these two components and get an estimate of the total error.


\subsubsection{Grid-based error}
 Let us remind that as per our notations, $\rho_e$ is the exact electron charge density given by 
 \[
     \rho_e({\bf x}) = q_e\int \f({\bf x},\vb) d\vb = \int\int \f({\bf \xi},\vb) \delta({\bf x}-{\bf \xi}) d{\bf \xi}d\vb,
 \]
 and $\varrho_e$ is the density on the regular grid after sparse grids transformation in equation \eqnref{sparse_filter}. We will denote the grid
 error component of the total error as $||\rho_e-\varrho_e||_{grid}$, where for simplicity we have denoted the $L^\infty$ norm $||.||_{L^\infty}$ by $||.||$ (equivalently, we can also use the
$L^2$-norm). In our approach the grid-based error comes from the approximation of delta-functions in the configuration space by shape functions of compact support as well as from the transfer operators $R$ and $P$. 

Towards quantifying the grid-based error, for simplicity, let us consider a 2D PIC simulation in a periodic domain $[0,L]^2$ and 
a regular mesh of size $2^n \times 2^n$. Let the mesh size of the regular grid be $h_n=\frac{L}{2^n}$ and the mesh sizes of the component grids be $h_i=\frac{L}{2^i}$ and $h_j=\frac{L}{2^j}$ for the $(i,j)$th component 
grid in Figure \figref{truncated_comb}. In our approach, we use the cloud-in-cell or linear interpolation operators for all the grid transfer operations. Hence, similar to \cite{ricketson2016sparse,griebel1990combination,leentvaar2008pricing}, we assume an error expansion of the form $C_1(h_i)h_i^2 + C_2(h_j)h_j^2 + D_1(h_i,h_j)h_i^2h_j^2$ where $C_1,C_2,D_1$ are appropriate coefficient functions with a uniform upper bound. The summation over the component grids in equation \eqnref{sparse_filter}
leads to pair-wise cancellations both in the standard sparse grid combination technique as well as in the truncated combination 
technique as shown in Figure \figref{truncated_comb}. After multiplying with
the combination coefficients and summing across all the component grids
we get
\begin{align}
    (\rho_e-\varrho_e)_{grid} &= C_1(h_n)h_n^2 + C_2(h_n)h_n^2 \nonumber \\ &+\frac{4h_n^2L^2}{2^{2\tau}}\LRs{\frac{1}{4}\sum_{\substack{i+j=n+\tau \\ i,j\geq \tau}}D_1(h_i,h_j) - \sum_{\substack{i+j=n+\tau-1 \\ i,j\geq \tau}}D_1(h_i,h_j)},
\end{align}
where we used the fact that $h_ih_j = \frac{h_nL}{2^{\tau}}$ when 
$i+j=n+\tau$ and $h_ih_j = \frac{h_nL}{2^{(\tau-1)}}$ when $i+j=n+\tau-1$. Taking the norm on both sides of the above equation and noting that there are
$n-(\tau-1)$ component grids with $i+j=n+\tau$ and $(n-1)-(\tau-1)$ component grids with $i+j=n+\tau-1$ we get
\begin{align}
    ||\rho_e-\varrho_e||_{grid} &\leq \kappa_1 h_n^2 + \kappa_2 h_n^2 + \frac{4\beta_1 h_n^2L^2}{2^{2\tau}}\LRs{\frac{n-(\tau-1)}{4}+\LRc{(n-1)-(\tau-1)}} \nonumber \\
    &\leq h_n^2 \LRp{\kappa_1  + \kappa_2  + \beta_1 L^2 2^{-2\tau}\LRs{5(n-\tau)+1}}. \eqnlab{2D_grid_error}
\end{align}
Here, $\kappa_1,\kappa_2$ and $\beta_1$ are the constants corresponding to the upper bounds such that $||C_1(h_n)||\leq\kappa_1$, $||C_2(h_n)||\leq\kappa_2$ and $||D_1(h_i,h_j)||\leq\beta_1$, $\forall h_i,h_j$. The same expression for the error is 
also obtained in \cite{leentvaar2008pricing} for the truncated combination
in 2D. Similarly one can derive the estimates in 3D and the grid-based error in that case is given by
\begin{align}
    ||\rho_e-\varrho_e||_{grid} &\leq h_n^2 \LRp{\kappa_1  + \kappa_2  + \kappa_3 + (\beta_1 + \beta_2 + \beta_3) L^2 2^{-2\tau}\LRs{5(n-\tau)+1} \right.\nonumber \\&+ \left. \gamma L^4 2^{-(4\tau+1)}\LRc{25(n-\tau)^2-5(n-\tau)+2}}, \eqnlab{3D_grid_error}
\end{align}
where the upper bounds for the coefficient functions in 3D are such that $||C_d(h_n)||\leq\kappa_d$, $||D_d(h_i,h_j)||\leq\beta_d$ and $||F(h_i,h_j,h_k)||\leq\gamma$ for $d=1,2,3$ and $\forall h_i,h_j,h_k$. By plugging in 
$\tau=1$ and $\tau=n$ in \eqnref{2D_grid_error} and \eqnref{3D_grid_error}
we recover the estimates for the standard sparse grid combination in \cite{griebel1990combination} and for regular 
grids respectively.

\subsubsection{Particle noise}

Now, we will consider the estimates for the particle noise in the total 
error. The particle noise is the result of approximating the expected value of the shape function by an arithmetic mean over a finite number of 
discrete particles. 
As per the error analysis in \cite{ricketson2016sparse}, in 2D 
the particle noise in each component grid is 
$\mc{O}\LRp{1/\sqrt{N_ph_ih_j}}$ and as stated in the grid error 
estimates we have $n-(\tau-1)$ component grids each with $h_ih_j = \frac{h_nL}{2^{\tau}}$ and $(n-1)-(\tau-1)$ component 
grids with $h_ih_j = \frac{h_nL}{2^{(\tau-1)}}$. Thus we can write an estimate for the particle noise as
\begin{align}
    ||\rho_e-\varrho_e||_{noise} &= \mc{O}\LRp{\sigma\LRs{\frac{n-(\tau-1)}{\sqrt{\frac{N_ph_nL}{2^\tau}}}+\frac{(n-1)-(\tau-1)}{\sqrt{\frac{N_ph_nL}{2^{(\tau-1)}}}}}} \nonumber \\
                        &= \mc{O}\LRp{\sigma\LRc{\frac{2^{0.5(\tau-1)}\LRs{(n-\tau)(1+\sqrt{2})+\sqrt{2}}}{\sqrt{N_p h_n L}}}}, \eqnlab{2D_particle_noise}
\end{align}
where $\sigma$ is the particle noise constant. Following the same procedure, the noise estimate in 3D is given by 
\begin{equation}
    ||\rho_e-\varrho_e||_{noise} = \mc{O}\LRp{\sigma \LRc{\frac{2^{(\tau-2)}\LRs{(3+\sqrt{2})(n-\tau)^2 + (5+\sqrt{2})(n-\tau) + 4}}{\sqrt{N_p h_n L^2}}}}.
    \eqnlab{3D_particle_noise}
\end{equation}
Again, by plugging in $\tau=1$ and $\tau=n$ in equations \eqnref{2D_particle_noise}, \eqnref{3D_particle_noise} we recover the estimates shown in \cite{ricketson2016sparse} for standard sparse grids
technique and regular grids respectively. With the grid and particle error
estimates in hand we will show how these can be used in practice to adaptively select the optimal $\tau$.

\subsubsection{A heuristic approach for estimating coefficients of the error}
\seclab{heuristic}
   In order to use the grid and particle error estimates derived in the previous section we need to have an estimate of the coefficients. To that extent, we note that a rigorous derivation of coefficients for the current approach in case of cell-centered grids depends on the ratio of the mesh sizes of the component grids to the regular grid and is more involved. Instead in this section we approximate the grid and particle coefficients based on heuristic arguments and empirical
   observations and intend to improve these choices in the future iterations of our algorithm.
Let us first consider the grid-based error before we discuss the particle noise. As explained in \cite{ricketson2016sparse,cerfon2019sparse} and equations \eqnref{2D_grid} and \eqnref{3D_grid} in appendix B,
   the coefficient functions in the grid error estimates are proportional to the derivatives of the charge density $\rho_e$ such that
\begin{align*}
    &C_1 \propto \frac{\partial^2\rho_e}{\partial x^2},C_2 \propto \frac{\partial^2\rho_e}{\partial y^2},C_3\propto \frac{\partial^2\rho_e}{\partial z^2},
D_1\propto \frac{\partial^4\rho_e}{\partial x^2\partial y^2} \\
    &D_2\propto \frac{\partial^4\rho_e}{\partial y^2\partial z^2},D_3\propto \frac{\partial^4\rho_e}{\partial z^2\partial x^2}, F\propto \frac{\partial^6\rho_e}{\partial x^2\partial y^2\partial z^2}.
\end{align*}
    
    In PIC we only have an approximation of $\rho_e$ on the regular grid, which we call $\rhoa_e$ as defined in equation \eqnref{rhoaa1}, and this also contains the particle noise. In order to have 
    a realistic approximation of the derivatives of the charge density from noisy regular PIC data $\rhoa_e$ we perform a denoising by thresholding in 
    the Fourier domain. Specifically, we first take the Fourier transform of the density on the regular grid $\rhoaa_e = \mc{F}\LRp{\rhoa_e}$ and perform a
    hard thresholding such that  
    \begin{equation}
        \rhoaa_e =
          \begin{cases}
              \rhoaa_e &\quad |\rhoaa_e|\geq \epsilon,\\
              0  &\quad |\rhoaa_e| < \epsilon,
          \end{cases}
       \eqnlab{denoising}     
    \end{equation}
where $\epsilon$ is the threshold for denoising and $|\rhoaa_e|$ denotes the magnitude of the Fourier transform $\rhoaa_e$. This type of denoising is common
in signal processing as well as wavelet denoising \cite{donoho1995adapting} techniques. 

The threshold parameter $\epsilon$ is a function of the number of particles per cell $P_c$, the initial sampling method and also the distribution $f$. It determines how much noise and signal
is removed by the denoising process. Too low a value will not remove much noise and too high a value may remove a significant portion of the signal along with the noise. However, in contrast to denoising techniques in signal 
processing where after applying this threshold one performs an inverse 
transform to get the signal in the physical domain, we emphasize the fact that for our scheme we only use it for
selecting the truncation parameter $\tau$ (which performs the final filtering). Hence the 
threshold $\epsilon$ does not need to be optimal, and we only need to ensure that we do not pick up excessive noise.

At present, we use an ad-hoc strategy to select the value of $\epsilon$ as a certain percentage of the maximum 
value of $|\rhoaa_e|$, namely $\epsilon = \alpha \max\LRp{|\rhoaa_e|}$, where $\alpha$ 
denotes the percentage. To determine $\alpha$ in our algorithm, for a certain number of particles per cell $(P_c)_{ref}$ (e.g., 5) we run the PIC simulation for a few different values of $\alpha$ and pick the minimum value necessary for denoising. To reduce the run time we use a coarse mesh necessary for the problem in these simulations. Once we pick the value of $\alpha$
for a reference number of particles per cell $(P_c)_{ref}$, we run simulations
with other values of $P_c$ by multiplying $\alpha$ by $\sqrt{(P_c)_{ref}/P_c}$, as we know the noise
in PIC methods scales as $1/\sqrt{P_c}$. In the numerical results in section 
\secref{numerical_results}, we performed the experiments with few different $\alpha$ to examine
robustness. It is observed that within a range of $\alpha$ (which is problem specific) the results not change much. In our future work we will develop a more systematic way to pick the threshold from the density data, based on techniques similar to the ones used in wavelet denoising \cite{donoho1995adapting}. Machine learning techniques can also be used for this purpose and this is another direction we will pursue.

After denoising the charge density, we compute the derivatives in the 
Fourier domain and perform inverse transforms. Next, in order to fix the constants in front of these derivatives in appendix B we derive the
grid-based error for regular PIC schemes. Since, each component grid in the
sparse grid combination technique is a regular grid with mesh sizes $h_i$, $h_j$ and $h_k$, equations \eqnref{2D_grid} and \eqnref{3D_grid} can be used for determining the constants involved in the upper bounds. 
To that extent, we note that by means of the grid transfer operations with
$R$ and $P$ we incur two times the grid-based error of similar magnitude given in equations \eqnref{2D_grid} and \eqnref{3D_grid}. Moreover, the 
charge density $\rhoa_e$ in the regular grid adds another $1/12$ in 
front of the second derivative terms. Summing all these contributions we get an estimate for the coefficients in equations \eqnref{2D_grid_error} and \eqnref{3D_grid_error} as  
\begin{align}
    &\kappa_1 = \frac{1}{4}\left\Vert\frac{\partial^2\bar\rho_e}{\partial x^2}\right\Vert,\kappa_2 = \frac{1}{4}\left\Vert\frac{\partial^2\bar\rho_e}{\partial y^2}\right\Vert,\kappa_3= \frac{1}{4}\left\Vert\frac{\partial^2\bar\rho_e}{\partial z^2}\right\Vert,
    \beta_1 = \frac{1}{72} \left\Vert\frac{\partial^4\bar\rho_e}{\partial x^2\partial y^2}\right\Vert \nonumber \\
    &\beta_2 = \frac{1}{72}\left\Vert\frac{\partial^4\bar\rho_e}{\partial y^2\partial z^2}\right\Vert,\beta_3 =\frac{1}{72}\left\Vert\frac{\partial^4\bar\rho_e}{\partial z^2\partial x^2}\right\Vert, \gamma = \frac{1}{864}\left\Vert\frac{\partial^6\bar\rho_e}{\partial x^2\partial y^2\partial z^2}\right\Vert,
    \eqnlab{grid_constants}
\end{align}
where $\bar\rho_e$ is the denoised charge density defined in equation \eqnref{rhoa1}.

Finally, following the particle noise estimates in equations \eqnref{2D_noise} and \eqnref{3D_noise} as well as \cite{ricketson2016sparse,terzic2007particle}, for our algorithm we take 
\begin{equation}
\eqnlab{particle_constant}
\sigma = \sqrt{\LRp{2/3}^{d}\left\Vert Q_e\rhoa_e\right\Vert}
\end{equation}
in equations \eqnref{2D_particle_noise} and \eqnref{3D_particle_noise}, where $d$ is the dimension and $\rhoa_e$ is the charge density on the regular grid before denoising
as defined in equation \eqnref{rhoaa1}. Here, we use the density $\rhoa_e$ instead of the denoised density $\bar\rho_e$ as it helps in adjusting the  
particle constant with respect to different sampling techniques.

Through numerical experiments we also found another choice for the coefficients in the grid-based error and particle noise as  
\begin{align}
    &\kappa_1 = \left\Vert k_x^2\rhoaa_e\right\Vert,\kappa_2 =\left\Vert k_y^2\rhoaa_e\right\Vert,\kappa_3 =\left\Vert k_z^2\rhoaa_e\right\Vert,\beta_1 =\left\Vert k_x^2k_y^2\rhoaa_e\right\Vert \nonumber \\
    &\beta_2 =\left\Vert k_y^2k_z^2\rhoaa_e\right\Vert, \beta_3 =\left\Vert k_x^2k_z^2\rhoaa_e\right\Vert, \gamma =\left\Vert k_x^2k_y^2k_z^2\rhoaa_e\right\Vert, \sigma = \sqrt{\left\Vert Q_e\rhoa_e\right\Vert}
    \eqnlab{empirical_constants}    
\end{align}
where $k_x,k_y$ and $k_z$ are the wavenumbers in $x,y$ and $z$ respectively. We do not present detailed results, but for the numerical experiments in section \secref{numerical_results} as well
as for other synthetic examples in the context of interpolation we found this choice yields similar optimal $\tau$ values as that of the constants in 
equations \eqnref{grid_constants} and \eqnref{particle_constant}. It has an added advantage that we do not need to take inverse transform of the 
derivatives, which is three in 2D and seven in 3D. Thus it may be of interest from a practical point of view, and for the numerical experiments in section
\secref{numerical_results} we observed up to 7 times speedup in the $\tau$ estimation part with this choice compared to the ones in equations \eqnref{grid_constants} and \eqnref{particle_constant}.

In Algorithm \ref{al:tauEstimator} we consolidate the steps in the optimal $\tau$ estimator algorithm. For the range of $\tau$, we 
    consider $[1,n-3]$ for 2D and $[1,n-2]$ for 3D where $2^{n}$ is the number of points in the regular grid in each dimension. This is because in 2D, the total
    number of degrees of freedom in $\tau=n-2$ sparse grid is same as that of the regular grid and $\tau=n-1$ has more degrees of freedom than the regular
    grid. Thus we include up to $n-3$ in 2D to have a higher value of $P_c$ than for the regular 
    grid and hence less particle noise. In 3D on the other hand the permissible range of $\tau$ is $[1,n-2]$ and even with $\tau=n-2$
    we have fewer degrees of freedom than for the regular grid.
\subsection{Implementation in a HPC PIC code base.}

    Once the optimal $\tau$ is obtained from Algorithm \ref{al:tauEstimator} we need to perform sparse grids noise reduction. In Algorithm \ref{al:transferToSparse} we present a matrix-free implementation of the sparse grids filtering in equation \eqnref{sparse_filter}. This 
    implementation would be suitable for large-scale high performance PIC code bases like OPAL. In these codes the density in the 
    regular grid is domain decomposed between different processors and in
    Algorithm \ref{al:transferToSparse} each processor holds the entire
    component grid in the combination technique. For moderate values of $\tau$, each component grid
    has very few degrees of freedom compared to the regular grid and this is not
    very expensive in terms of memory. However for high $\tau$, the component grids involved in the combination has a considerable number of degrees of freedom (especially in 3D) and hence both memory as well as the MPI\_Allreduce step in Algorithm \ref{al:transferToSparse} 
    could present a bottleneck. In our future work we will also split up the 
    component grids between processors which would require a more 
    complicated parallelization strategy as shown in \cite{strazdins2015highly}.

    If the parallelization of the code base uses MPI for inter-node parallelism and OpenMP for intra-node parallelism then the for loop over component grids in Algorithm \ref{al:transferToSparse} can be done in parallel with OpenMP. 
    Algorithms \ref{al:tauEstimator} and \ref{al:transferToSparse} would go
    in between steps \ref{step1_pic} and \ref{step2_pic} in the regular PIC procedure outlined in section \secref{pic}. Ingredients such as FFT which are required for the tauEstimator algorithm are already available in many of the large-scale PIC
    code bases and hence these two algorithms can be incorporated inside them very easily without any modification to the other parts. 
    
\begin{algorithm}
  \begin{algorithmic}[1]
      \STATE Compute Fourier transform of the charge density $\rhoaa_e = \mc{F}\LRp{\rhoa_e}$.
      \STATE Perform denoising by hard thresholding according to equation \eqnref{denoising}.
      \STATE Compute the constants for the grid-based error with \eqnref{grid_constants} and the particle error constant \eqnref{particle_constant}.
      \FOR {$\tau=1$ \TO $n-3$ for 2D and $n-2$ for 3D} 
       \STATE Evaluate grid-based error and particle noise using equations \eqnref{2D_grid_error},\eqnref{2D_particle_noise} for 2D and \eqnref{3D_grid_error},\eqnref{3D_particle_noise} for 3D. 
    \ENDFOR
      \STATE Select the $\tau$ with minimum total error.
    \end{algorithmic}
  \caption{tauEstimator: An algorithm for estimating optimal $\tau$.}
  \label{al:tauEstimator}
\end{algorithm}
\begin{algorithm}
  \begin{algorithmic}[1]
      \FOR{$i=1$ \TO $nc$}
      \STATE Each processor deposits their regular grid partition of $\rhoa_e$ to the
      $i$th component grid using the transfer operator $R_i$ in equation \eqnref{restriction}.
      \STATE MPI\_Allreduce to add contributions from all processors on the $i$th component grid.
      \STATE Each processor interpolates from the $i$th component grid to their regular grid partition of $\rhoa_e$ using transfer operator $P_i$ in equation \eqnref{restriction}.
      \STATE Multiply by combination coefficient and accumulate.
    \ENDFOR
    \end{algorithmic}
  \caption{transferToSparse: An algorithm for sparse grids based noise reduction with a given $\tau$.}
  \label{al:transferToSparse}
\end{algorithm}

\begin{remark}
    In general the charge density $\varrho_e$ after sparse grids transformation is not guaranteed to be positive everywhere. This is not unique to our approach and also happens in other noise reduction strategies such as high-order shape functions \cite{myers20174th}, compensating filters \cite{birdsall2004plasma} and wavelet-based density estimation \cite{del2010wavelet}. In our numerical
    results in section \secref{numerical_results} we do not observe any
    problems caused by this. However, we could adopt the density redistribution procedure used in \cite{myers20174th} to make the charge
    density positive everywhere after the sparse grids transformation. This will be studied in future versions of the algorithm. Also, as shown
    in \cite{shalaby2017sharp}, the filtering procedures used in explicit PIC simulations improve energy conservation but at the loss of 
    momentum conservation. In our future study we will investigate in detail the impact of the noise reduction strategy on energy and momentum 
    conservation and report the results.
\end{remark}

\section{Numerical results}
\seclab{numerical_results}
    In this section we will test the performance of the adaptive noise reduction strategy on two benchmark problems in plasma physics and
    beam dynamics; namely two-dimensional diocotron instability, and three-dimensional electron dynamics in a Penning trap with a neutralizing ion background. These test cases produce fine-scale structures during the nonlinear evolution and thus can be used to evaluate the ability of the adaptive $\tau$ method to capture them
    while still reducing noise. Also, they are very relevant to the large-scale accelerator simulations which we intend to 
    perform in our future works. 

    In all the simulations we consider a periodic box $\Omega=[0,L]^d$, where $d$ is the dimension and $L$ is the length in each dimension.
    The charge to mass ratio $q_e/m_e$ in all our simulations is $-1$. In measuring the error in field quantities we use the relative discrete $L^2$-norm
    also known as the normalized root mean squared error given by
    \begin{equation}
        \eqnlab{error_def}
        \mathcal{E}(\boldsymbol \psi) = \sqrt{\frac{\sum_{i=1}^{N_{points}}\LRp{\boldsymbol \psi({\bf x}_i) - \boldsymbol \psi_{ref}({\bf x}_i)}^2}{\sum_{i=1}^{N_{points}}(\boldsymbol \psi_{ref}({\bf x}_i))^2}},  
    \end{equation}
    where $\boldsymbol \psi$ is any field quantity, $\boldsymbol \psi_{ref}$ is the reference field which is
    obtained from an ensemble average of high-resolution regular PIC simulations and ${\bf x}_i$ are the locations of points in the domain at which we measure the error. This error is for a particular time instant and we measure the error
    at few instants in the whole simulation. In both numerical examples, we calculate the error for regular PIC, adaptive $\tau$ PIC and fixed $\tau$ PIC with the range of $\tau$ taken to be the same as the one used in the tauEstimator Algorithm \ref{al:tauEstimator}. By means of these error curves we can see how well the adaptive $\tau$ algorithm 
    performs in terms of picking the optimal $\tau$ and also how the errors compare to that of the regular PIC results with different number of particles per cell $P_c$. We always define the number of particles per cell $P_c$ based on the regular grid. It is given by 
    \[
        P_c = \frac{N_p}{N_c} = \frac{N_p}{2^{nd}}.
    \]
        
        For the time integration we use the leap frog method and for the Poisson equation we use the second order cell-centered finite difference
        method as in \cite{martin2000cell,frey2019architecture} with single level and without
        any spatial adaptivity. For solving the linear system arising 
        from the discretized Poisson equation we use the smoothed aggregation algebraic multigrid (SAAMG) from the second generation Trilinos
        MueLu library \cite{berger2019muelu}. The stopping tolerance for 
        the iterative solver is set as $10^{-10}$ multiplied by the infinity norm of the right hand side. More details on the solver can be found in 
        \cite{frey2019architecture}. The code is written on top of a C++ miniapp based on the particle accelerator library 
        OPAL \cite{adelmann2019opal} and box structured adaptive 
        mesh refinement library AMReX \cite{AMReX_JOSS}. Even though 
        FFT solver would be the most accurate and fastest option \cite{gholami2016fft} in this context, the reason for the 
        above choice of field solver is in our future work we want to extend the current approach to include adaptive mesh refinement. Also, the conclusions of the present
        study will not be much affected by this choice and will be applicable 
        for FFT solver too.

    All the computations are performed on the Merlin6 HPC cluster at the Paul Scherrer Institut, the details of which are as follows. Each Merlin6 
    node consists of 2 sockets  and each socket in turn has Intel Xeon Gold 6152 processor with $22$ cores at 2.1-3.7GHz. There are 2 threads in each core,
    however in all the present computations we only use single thread. Each node contains 384 GB DDR4 memory in total.

\subsection{2D diocotron instability}
    As a first example, we consider the 2D diocotron instability test case as already described in \cite{ricketson2016sparse}. In this test case, we have electrons with a hollow density profile immersed in a neutralizing immobile and uniform ion background and confined by a uniform external axial magnetic field. The magnetic field is strong enough that the electron dynamics is dominated by advection in the self-consistent $\Eb_{sc} \times \B_{ext}$ velocity field \cite{driscoll1990experiments,fine1991,cerfon2013,cerfon2016}. The initial electron density profile is not monotonic in the radial direction, which translates to an $\Eb_{sc} \times \B_{ext}$ shear flow which is unstable to what is known as the Kelvin-Helmholtz shear layer instability \cite{driscoll1990experiments,DrazinReid2004,cerfon2016} in fluid dynamics, and the diocotron instability in beam and plasma physics \cite{aydemir1994unified,davidson2001physics,driscoll1990experiments}. This instability deforms the initially axisymmetric electron density distribution, leading, in the nonlinear phase, to the formation of a discrete number of vortices, and eventually breakup \cite{cerfon2016,davidson2001physics}. This test case
    has importance both from a fundamental physics point of view \cite{aydemir1994unified,davidson2001physics,driscoll1990experiments} as well as in practical applications such as beam collimation \cite{jo2018control}.

    The parameters for this test case are as follows and are very similar to the ones in \cite{ricketson2016sparse}. We apply a uniform 
    external magnetic field $\B_{ext} = \LRc{0,0,5}$ along the $z-$axis in a domain of length $L=22$. The external electric field $\Eb_{ext}={\bf 0}$ for this 
    problem. The initial distribution is given by
    \begin{align}
        \nonumber
        f(t=0) &= \frac{C}{2\pi} e^{-|\vb|^2/2} \exp\LRc{-\frac{(r-L/4)^2}{2(0.03L)^2}},\\
       \eqnlab{dist_diocotron}
        r &= \sqrt{(x-L/2)^2 + (y- L/2)^2},
    \end{align}
    and the constant $C$ is chosen such that the total electron charge $Q_e=-400$. We sample the phase space using Gaussian
    distribution in the velocity variables with mean $0$ and standard deviation $1$. For the configuration space, we use a uniform
    distribution for $\theta$ in $[0,2\pi]$, and for $r$ a Gaussian distribution with mean $L/4$ and standard deviation $0.03L$. From $(r,\theta)$
    we do the polar to Cartesian transformation to get $(x,y)$ for the particles. We will refer this sampling as Gaussian sampling to differentiate it from the uniform sampling which will be presented later in this section.

        For denoising in equation \eqnref{denoising}, we take $\epsilon=\alpha \sqrt{(P_c)_{ref}/P_c} \max(|\rhoaa_e|)$ as explained in section 
        \secref{heuristic}, where $(P_c)_{ref} = 5$ and $\alpha=0.01$. This means that with $5$ particles
    per cell, charge densities with Fourier amplitude less than 1 percent of the maximum amplitude will be set to 0 and for other $P_c$ the threshold
    will be scaled accordingly. The time step of the time integrator is chosen as $\Delta t=0.02$
    and the simulation is run till final time $T=17.5$.

\begin{figure}[h!t!b!]
\subfigure[time=0]{
\includegraphics[trim=23cm 6cm 23cm 6cm,clip=true,width=0.3\columnwidth]{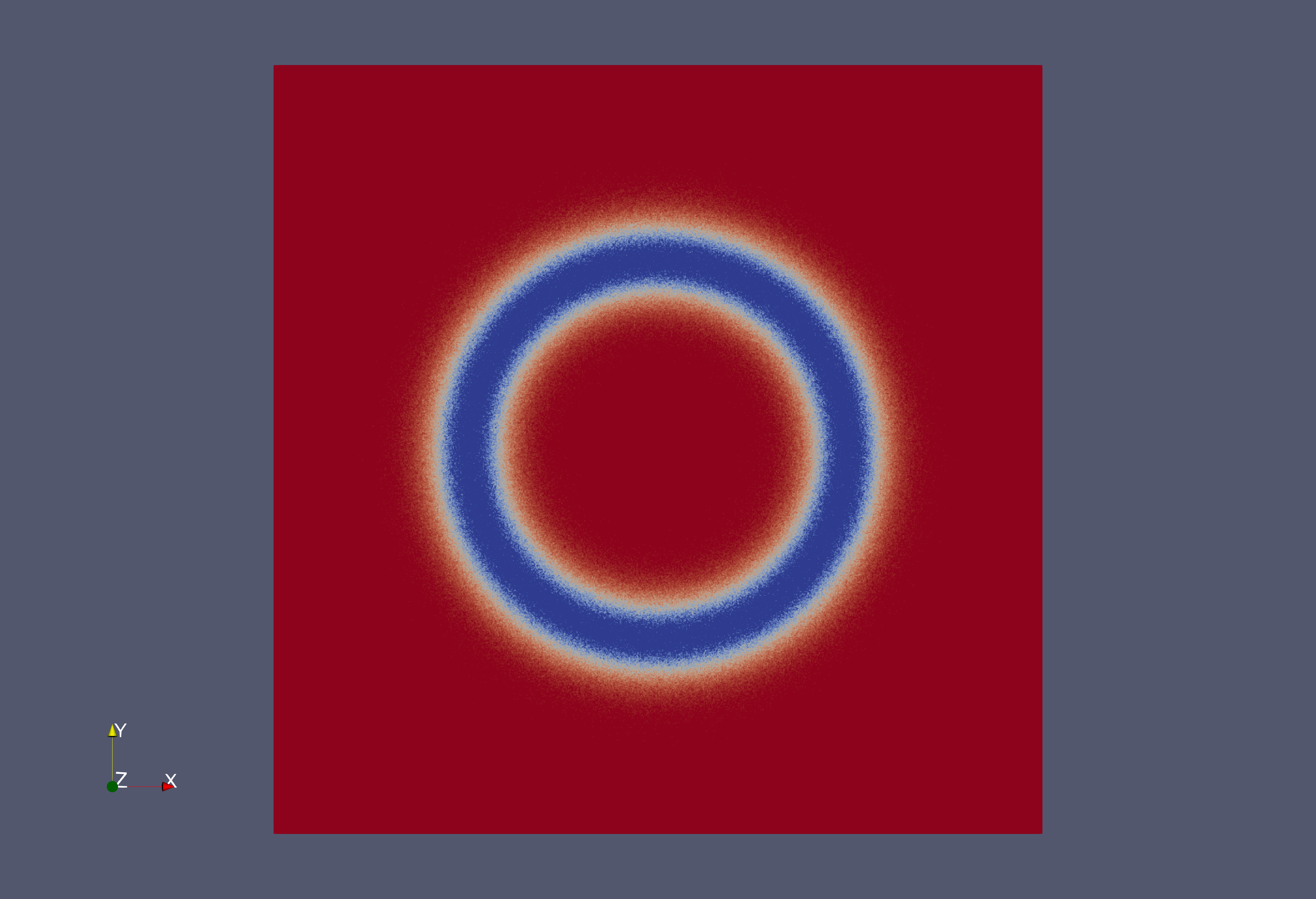}
}
\subfigure[time=10]{
\includegraphics[trim=23cm 6cm 23cm 6cm,clip=true,width=0.3\columnwidth]{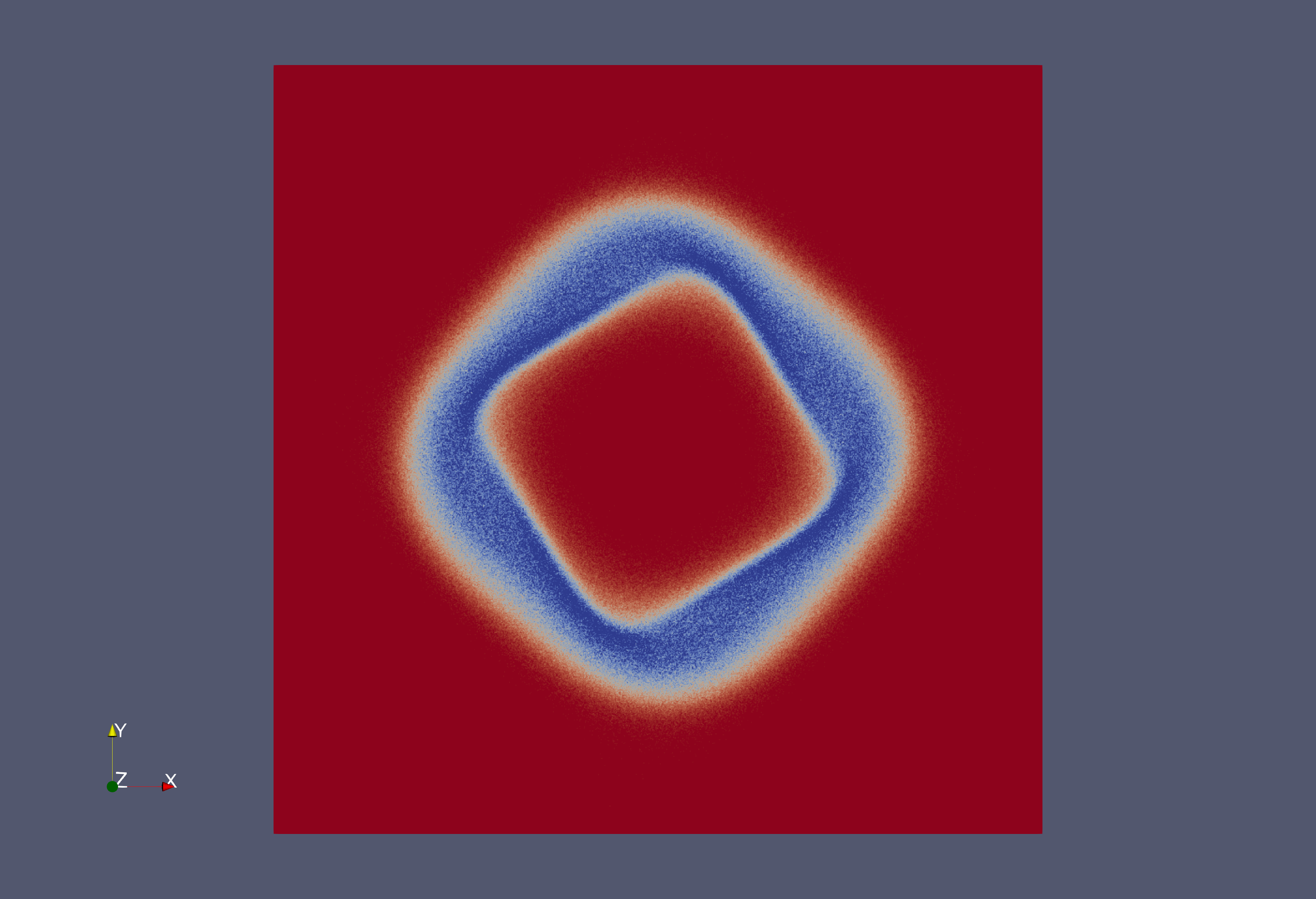}
}
\subfigure[time=17.5]{
\includegraphics[trim=23cm 6cm 23cm 6cm,clip=true,width=0.3\columnwidth]{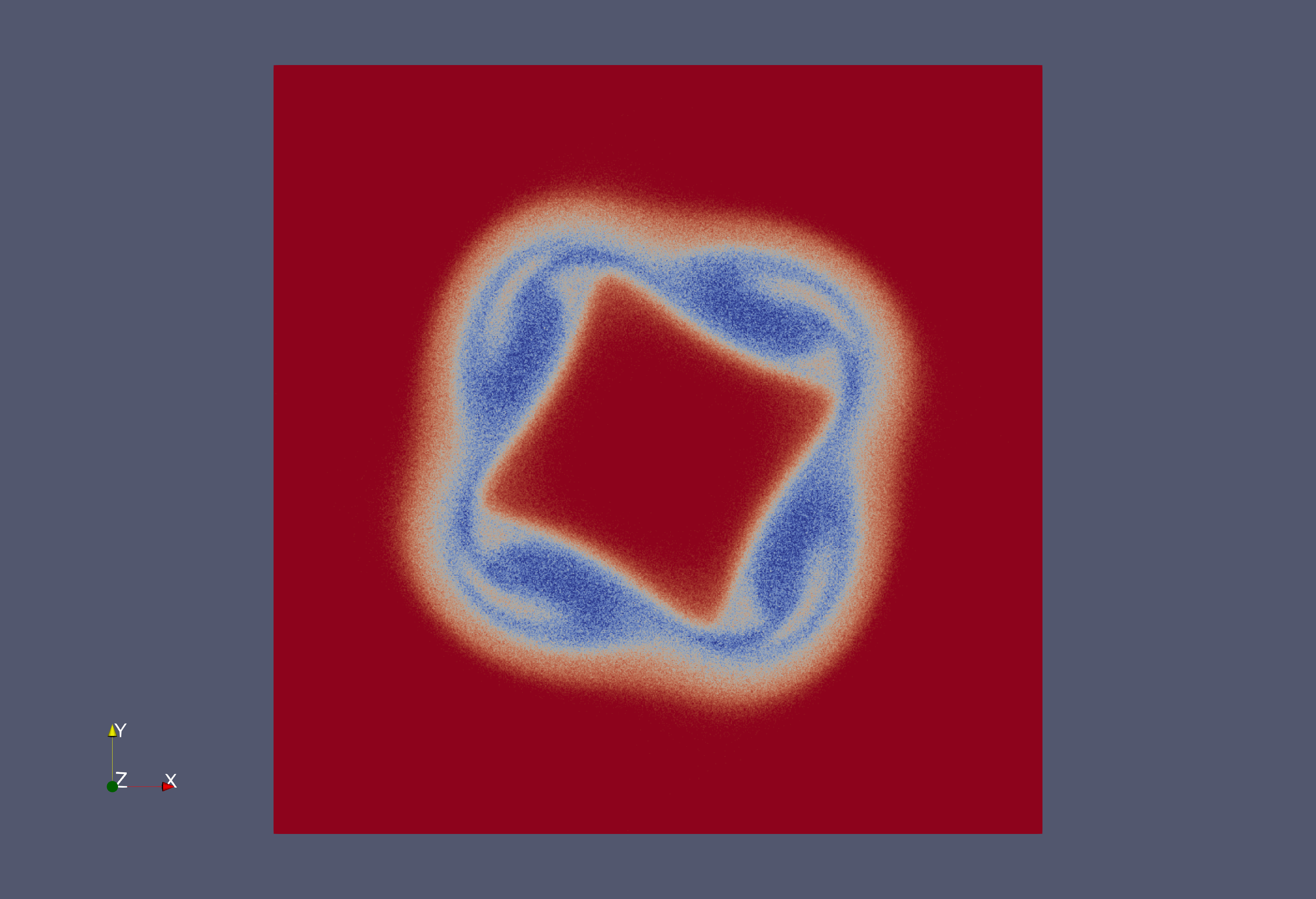}
}
\subfigure[time=0]{
\includegraphics[trim=23cm 6cm 23cm 6cm,clip=true,width=0.3\columnwidth]{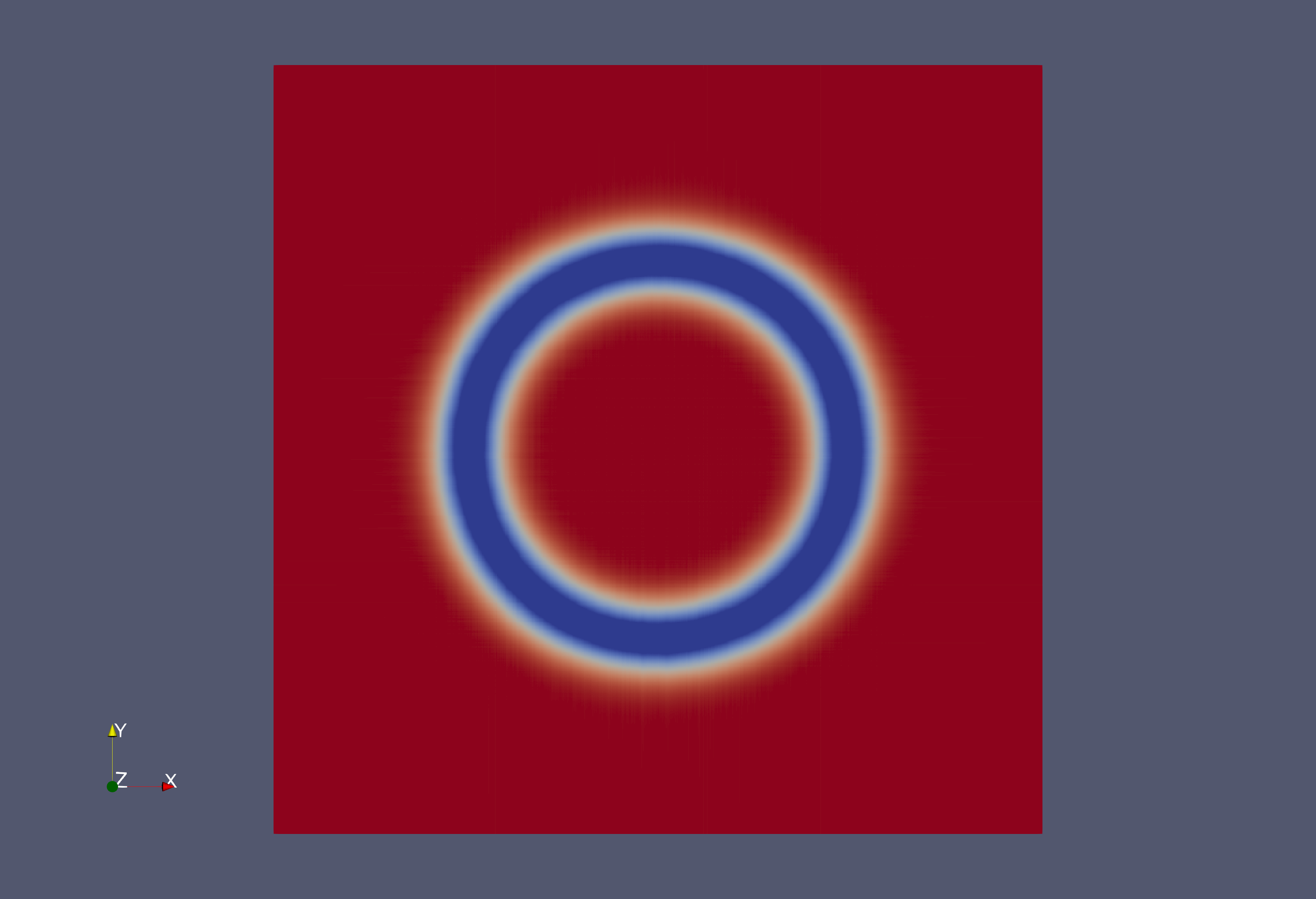}
    \figlab{time0_diocotron_tau1_density}
}
\subfigure[time=10]{
\includegraphics[trim=23cm 6cm 23cm 6cm,clip=true,width=0.3\columnwidth]{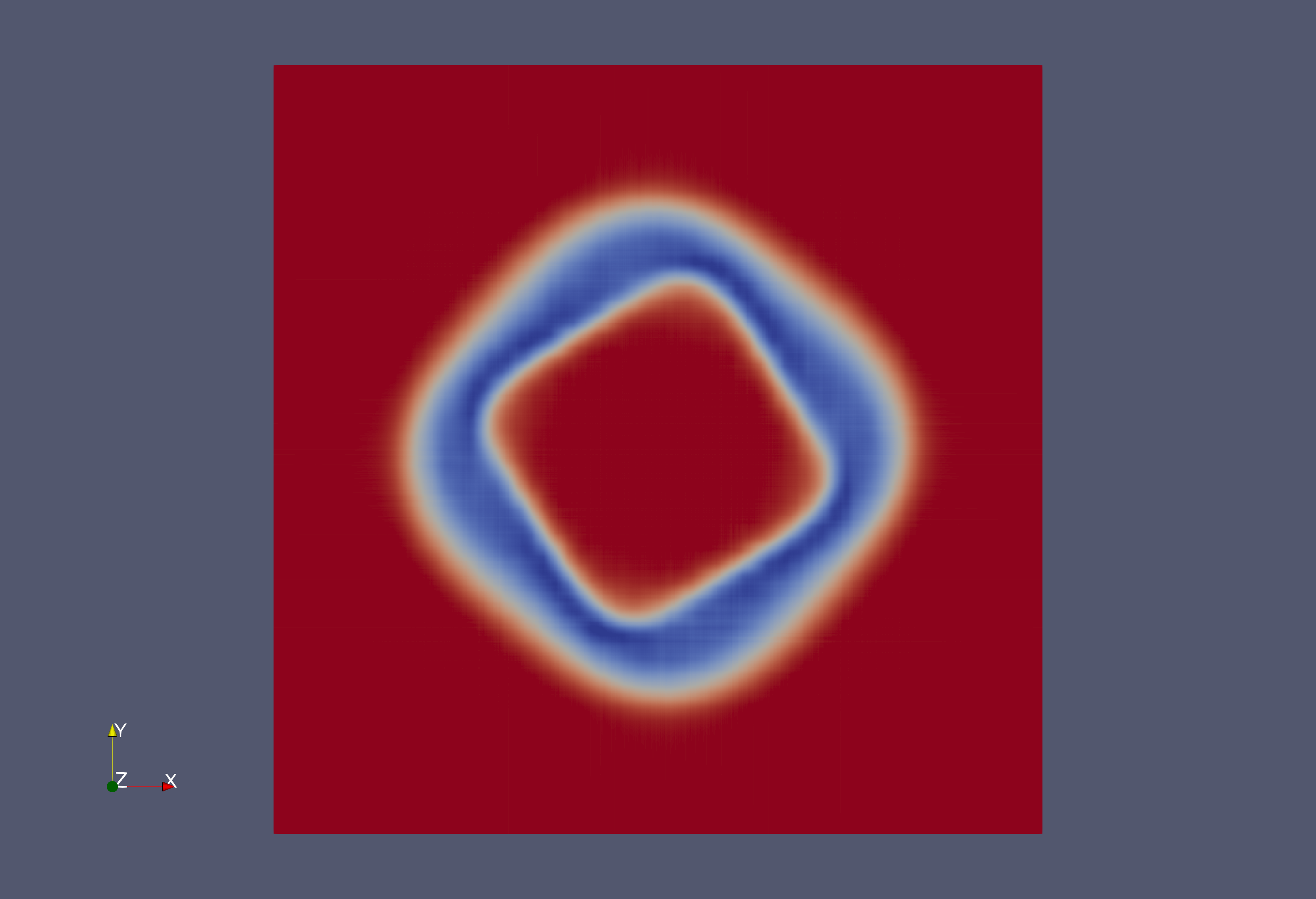}
}
\subfigure[time=17.5]{
\includegraphics[trim=23cm 6cm 23cm 6cm,clip=true,width=0.3\columnwidth]{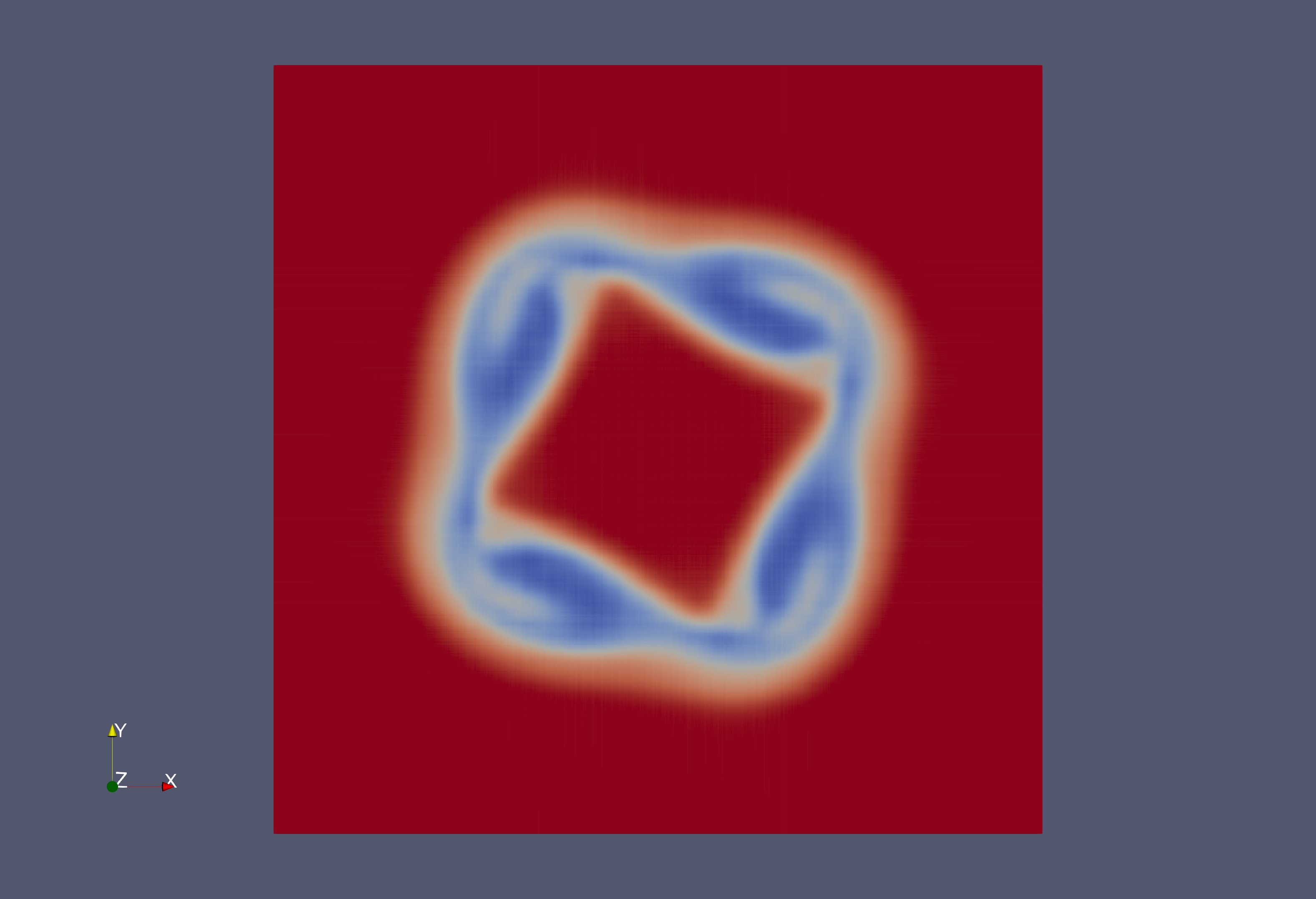}
    \figlab{time875_diocotron_tau1_density}
}
\subfigure[time=0]{
\includegraphics[trim=23cm 6cm 23cm 6cm,clip=true,width=0.3\columnwidth]{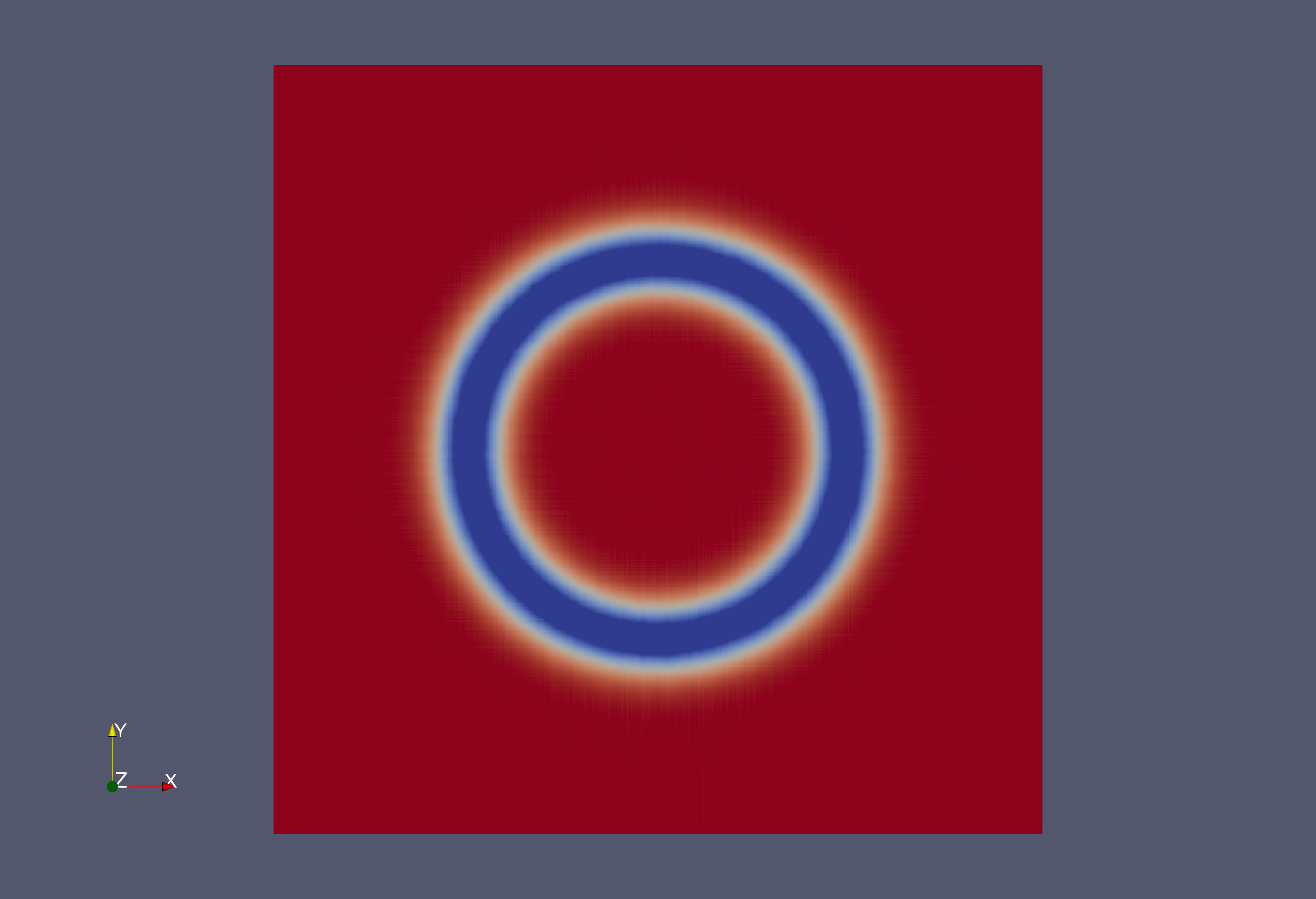}
}
\subfigure[time=10]{
\includegraphics[trim=23cm 6cm 23cm 6cm,clip=true,width=0.3\columnwidth]{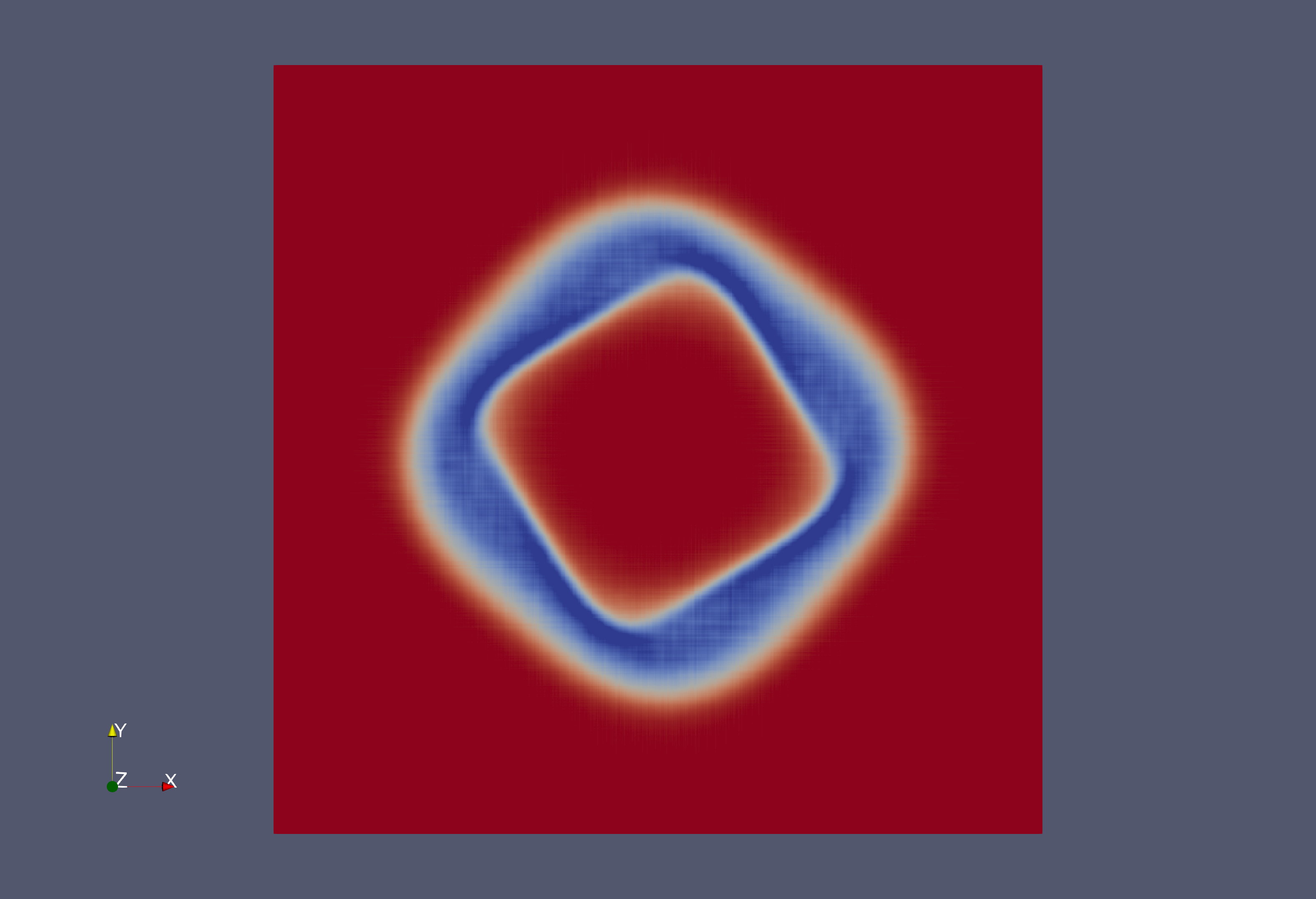}
}
\subfigure[time=17.5]{
\includegraphics[trim=23cm 6cm 23cm 6cm,clip=true,width=0.3\columnwidth]{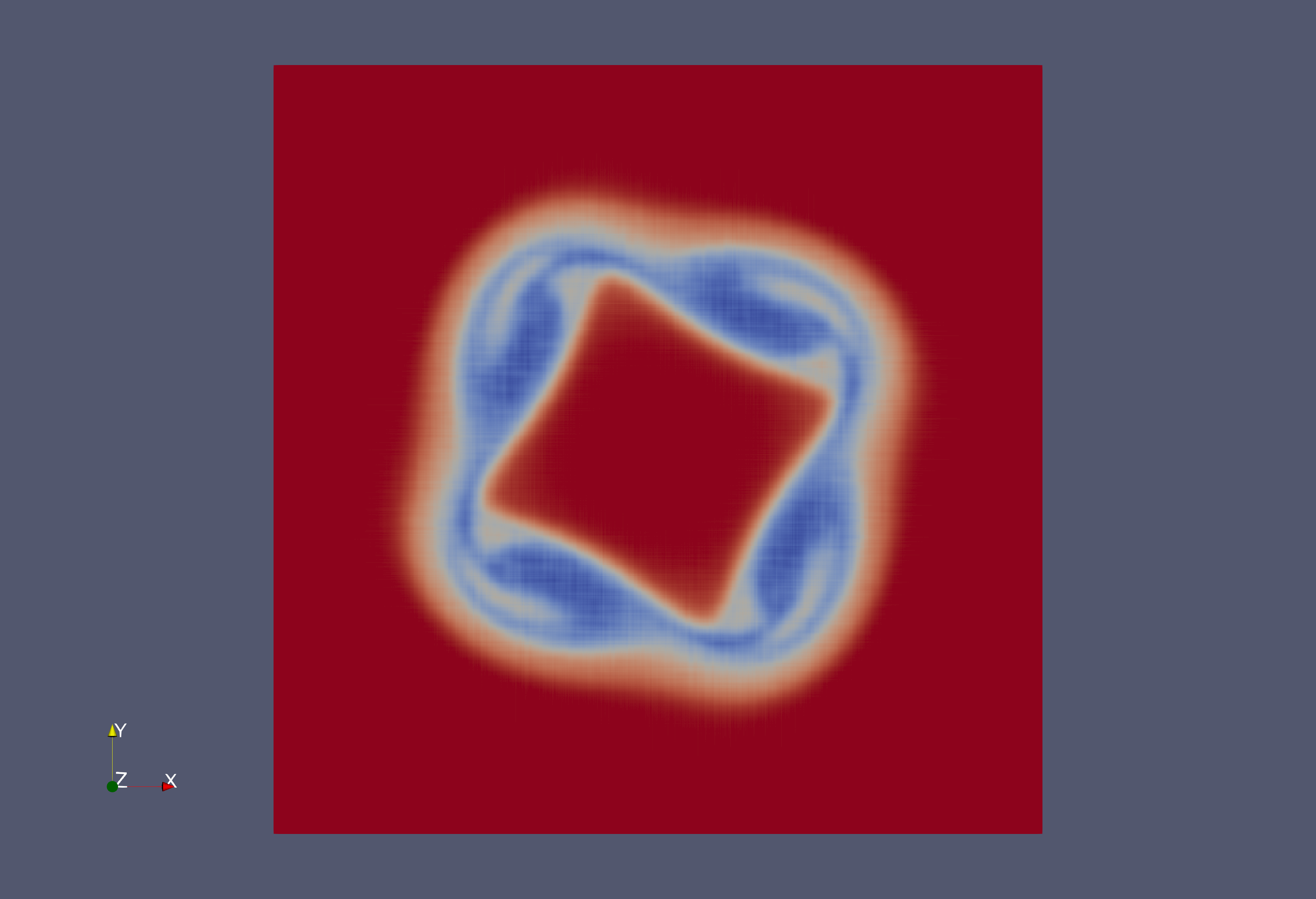}
}
\subfigure[time=0]{
\includegraphics[trim=23cm 6cm 23cm 6cm,clip=true,width=0.3\columnwidth]{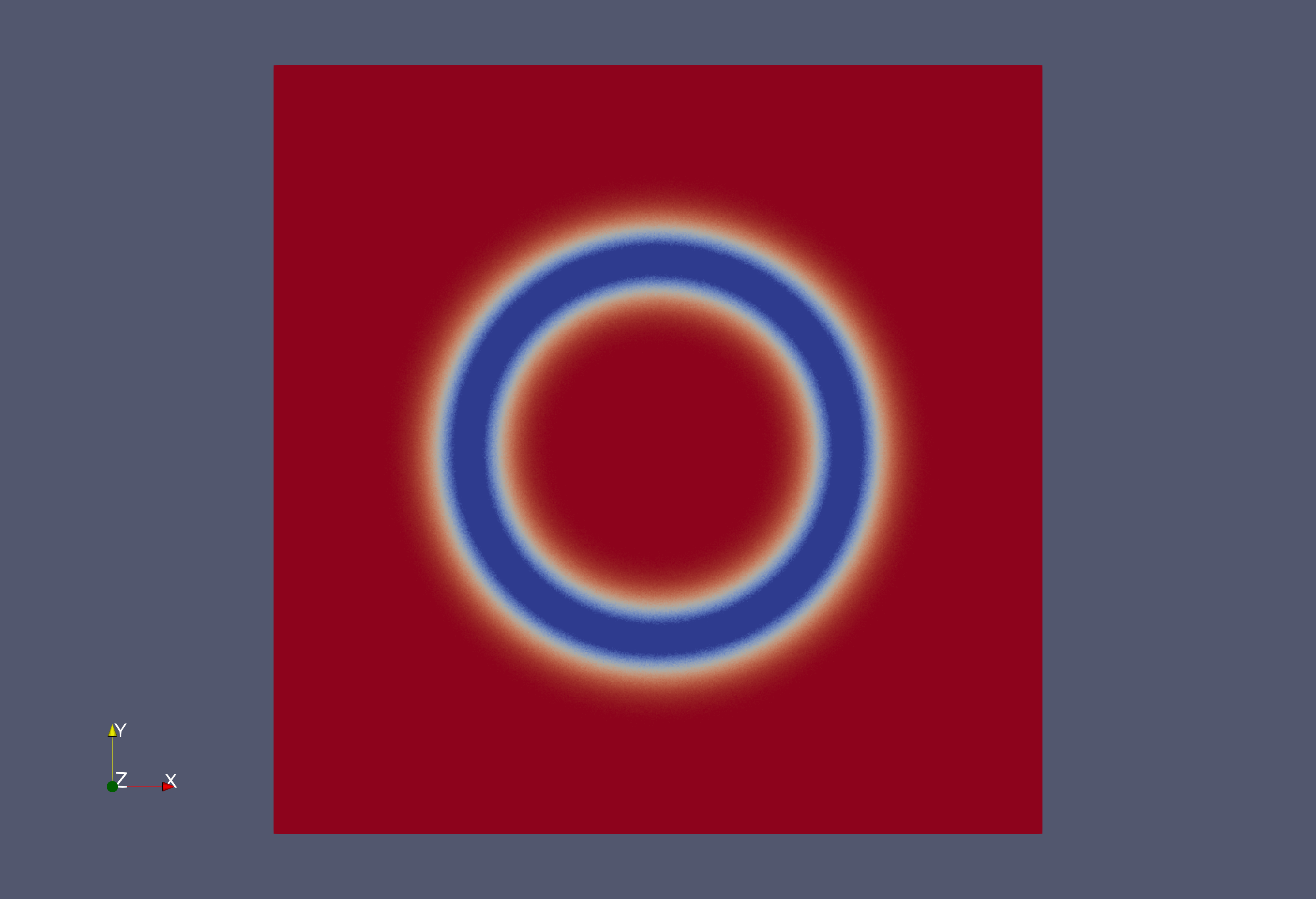}
}
\subfigure[time=10]{
\includegraphics[trim=23cm 6cm 23cm 6cm,clip=true,width=0.3\columnwidth]{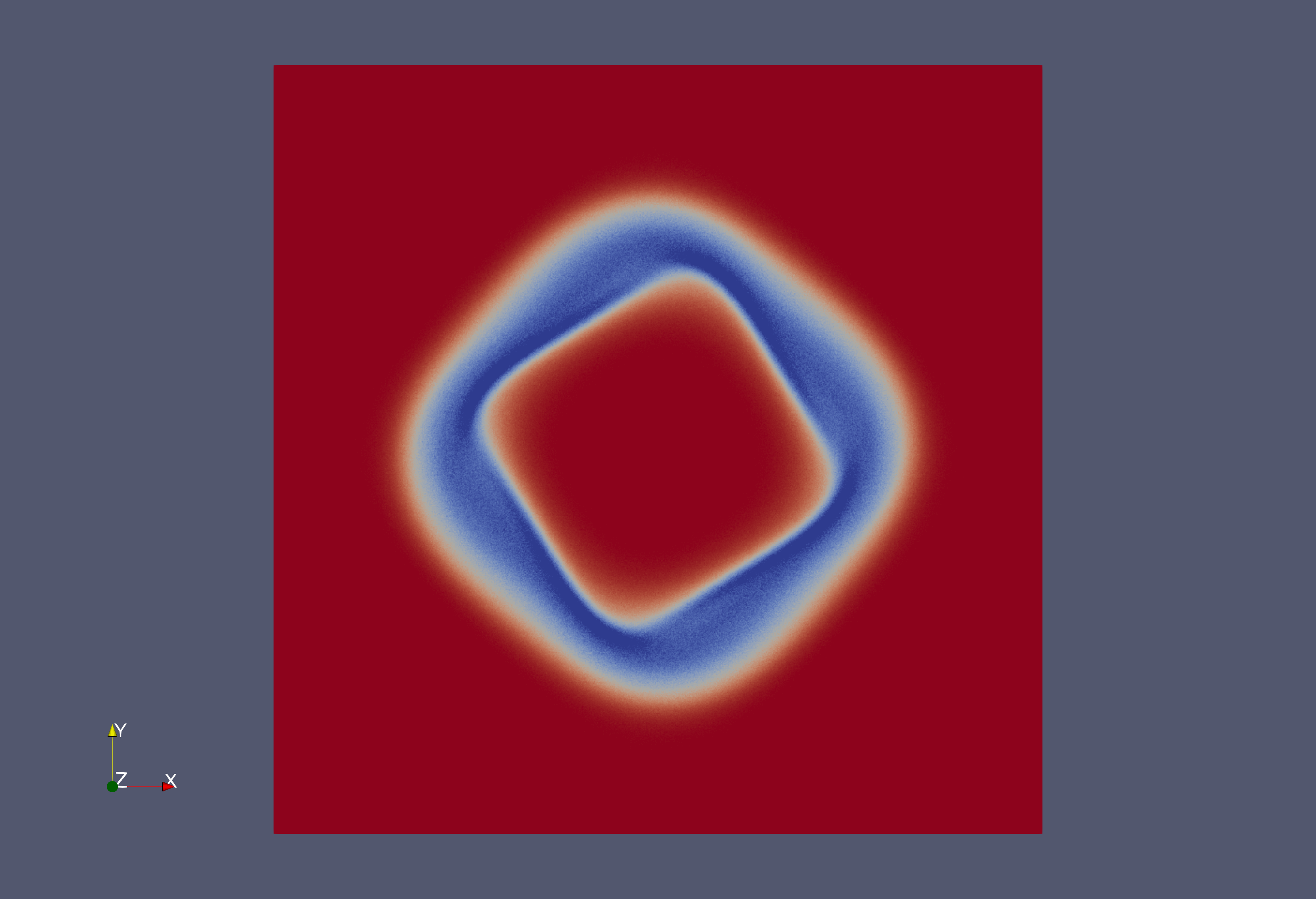}
}
\subfigure[time=17.5]{    
\includegraphics[trim=23cm 6cm 23cm 6cm,clip=true,width=0.3\columnwidth]{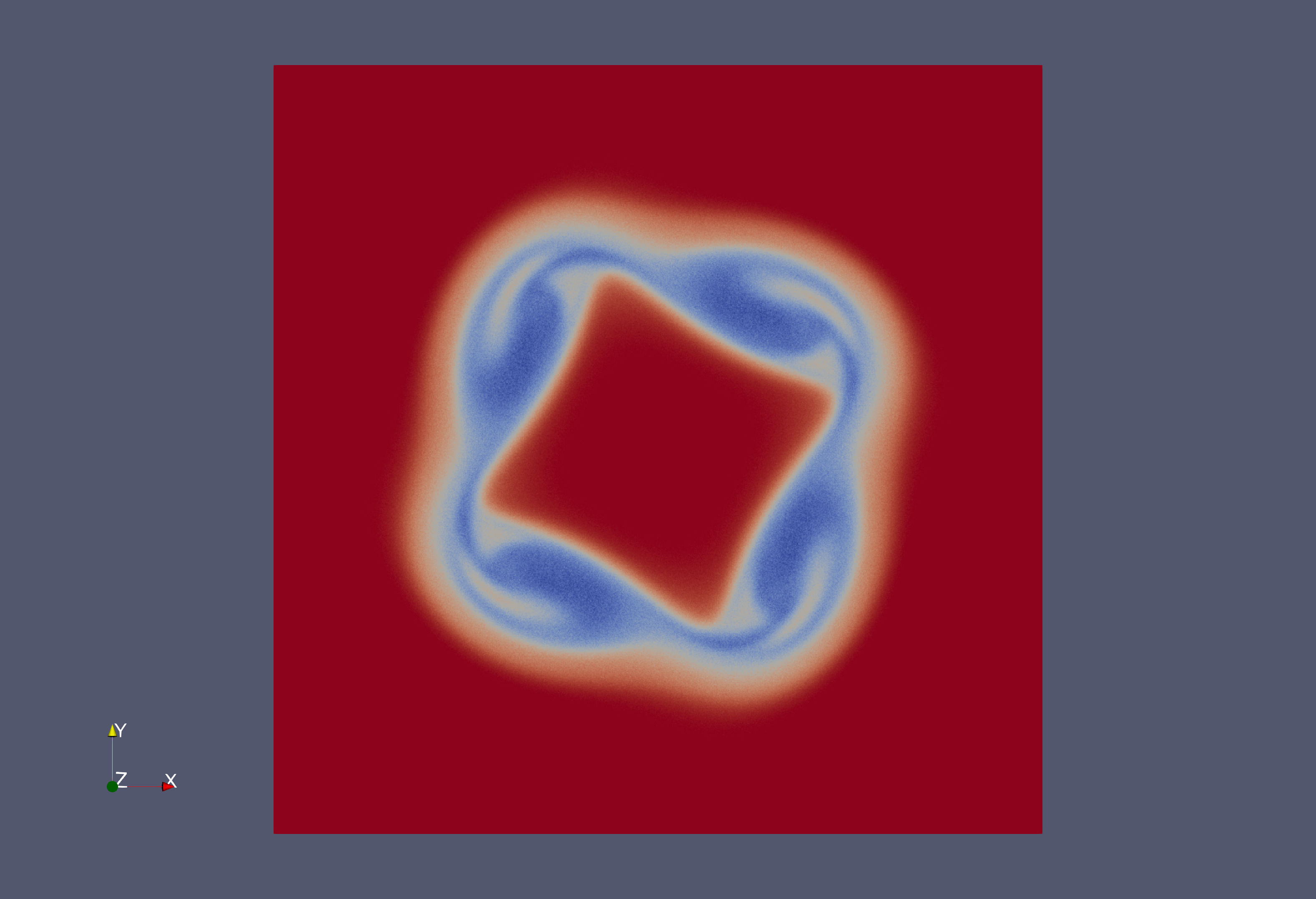}
}
    \caption{2D diocotron instability. Gaussian sampling: Evolution of the electron charge density with time for regular PIC, $P_c=5$ (first row); $\tau=1$, $P_c=5$ (second row); adaptive $\tau$, $P_c=5$ (third row); and regular PIC, $P_c=80$ (fourth row). The mesh considered here is $1024^2$.}
\figlab{diocotron_density}
\end{figure}

    Figure \figref{diocotron_density} shows the evolution of the electron charge density with time for regular, $\tau=1$ and adaptive $\tau$ PIC for a $1024^2$ mesh. For the first three rows $P_c=5$ and for the last row $P_c=80$. From the first and 
    second rows we can see that while the regular PIC results are dominated by noise, $\tau=1$ results are dominated by grid error due to the smearing
    of fine scale structures. This is also noted in \cite{ricketson2016sparse} in their sparse PIC studies. In contrast, the adaptive $\tau$ results in the third row strikes a 
    balance between the grid-based error and noise and are in close agreement (in visual norm) with the regular PIC results with high $P_c$ in the
    fourth row.

\begin{figure}[h!t!b!]
\subfigure[$256^2$, $P_c=5$]{
\includegraphics[width=0.5\columnwidth]{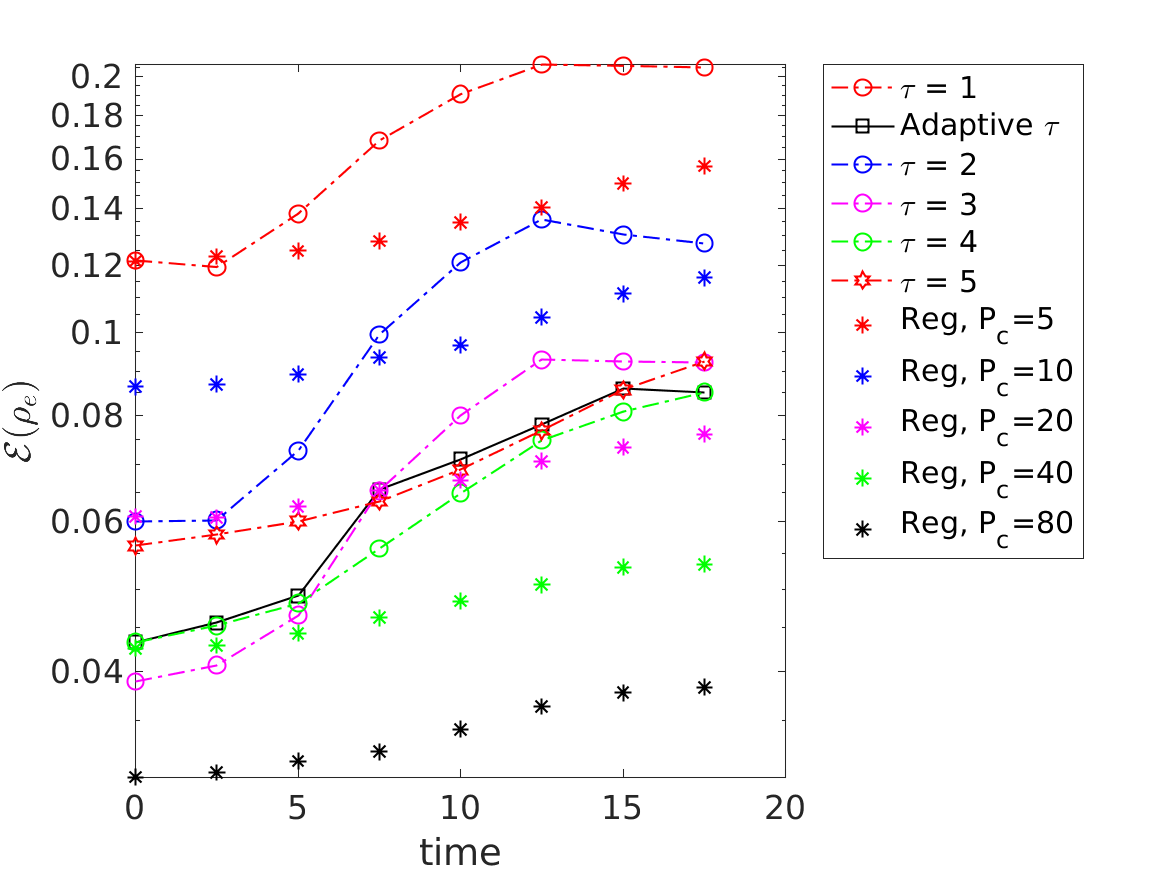}
}
\subfigure[$256^2$, $P_c=5$]{
\includegraphics[width=0.5\columnwidth]{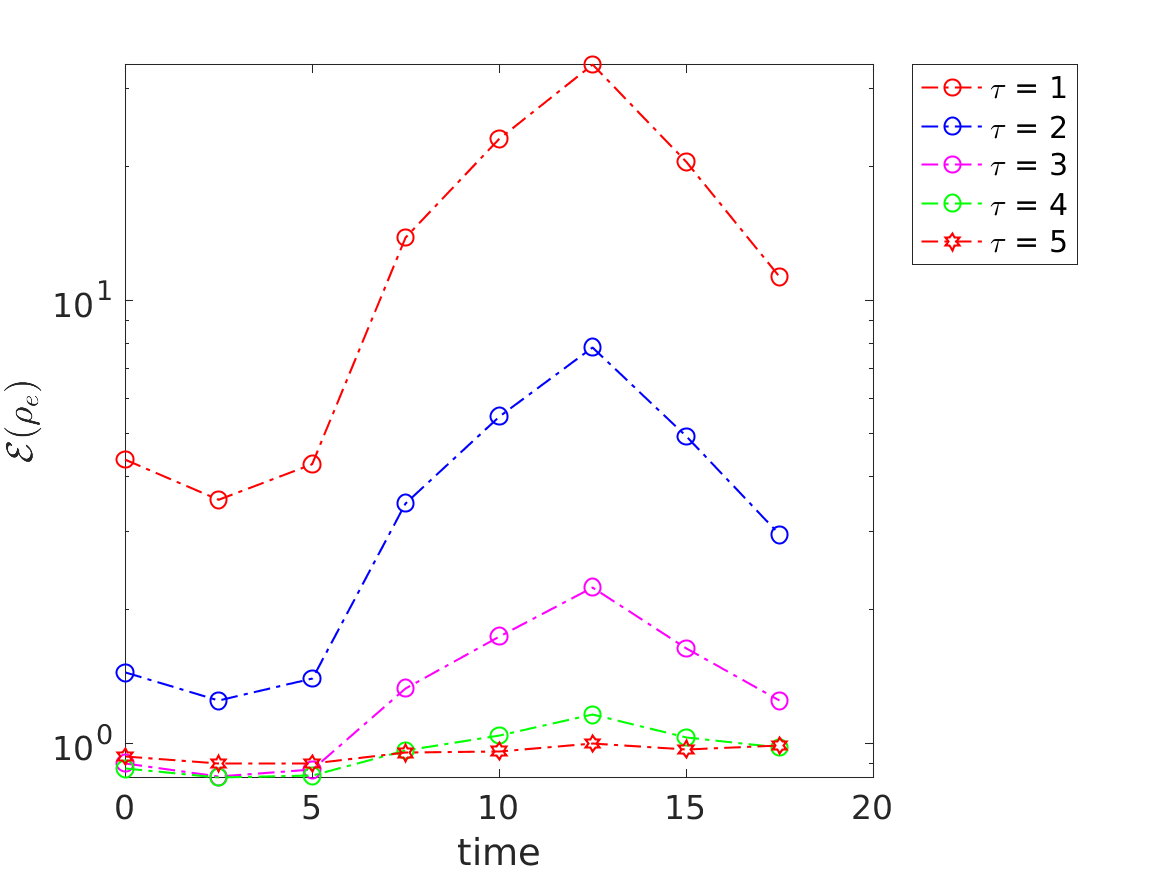}
}
\subfigure[$512^2$, $P_c=5$]{
\includegraphics[width=0.5\columnwidth]{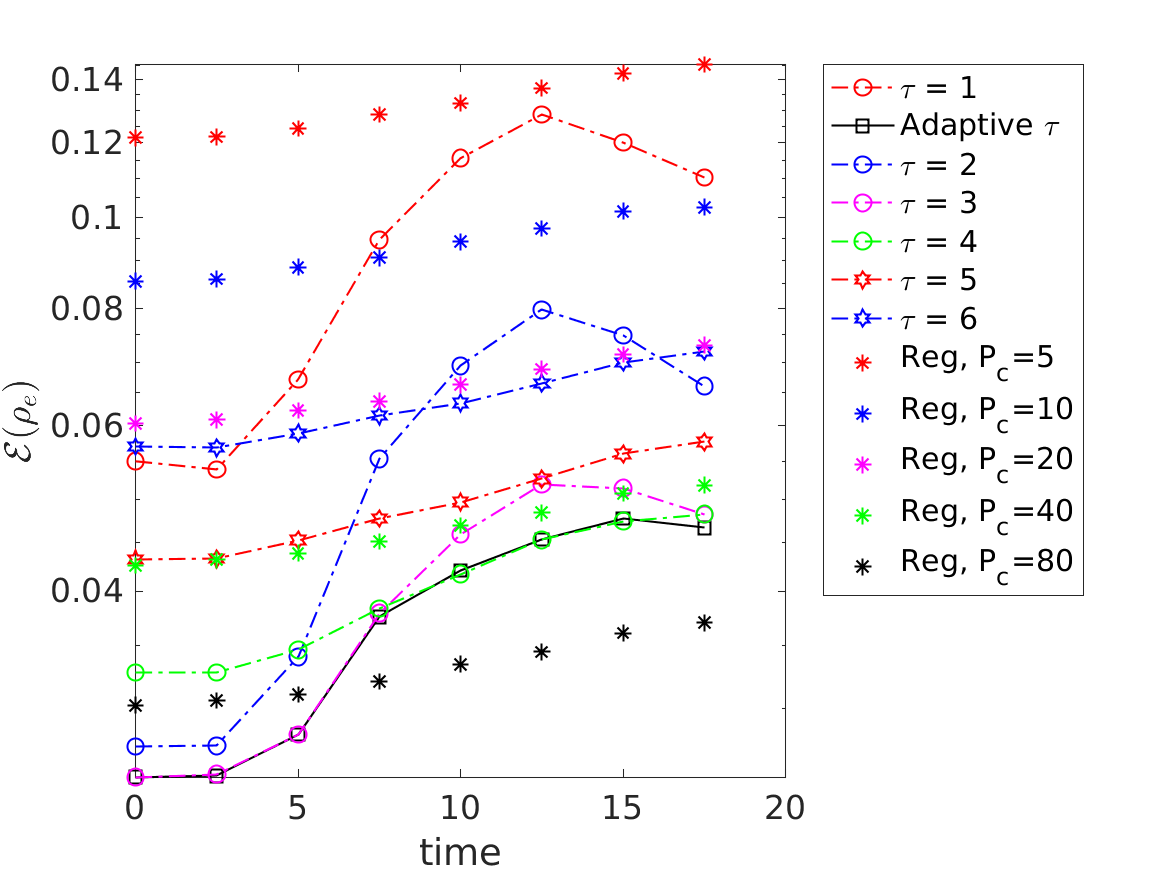}
}
\subfigure[$512^2$, $P_c=5$]{
\includegraphics[width=0.5\columnwidth]{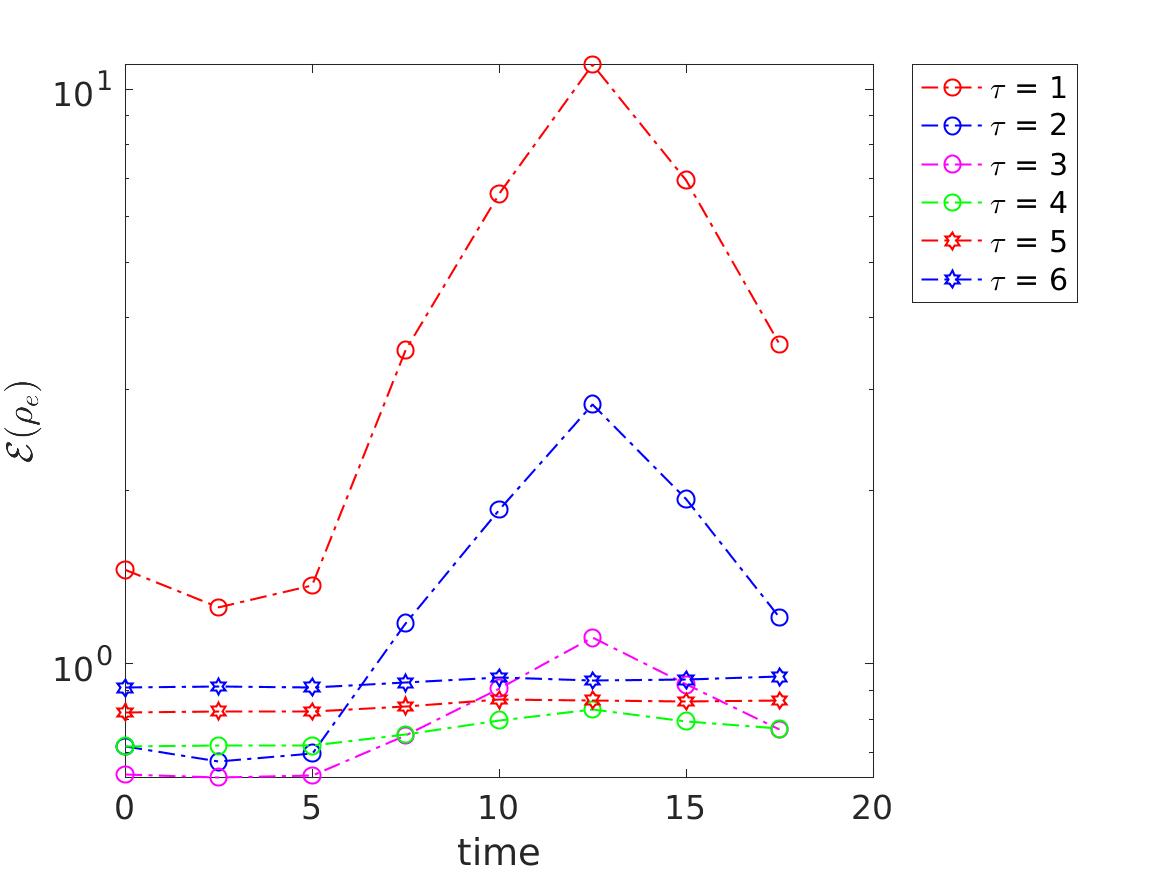}
    \figlab{512_pc5_gauss}
}
\subfigure[$1024^2$, $P_c=5$]{
\includegraphics[width=0.5\columnwidth]{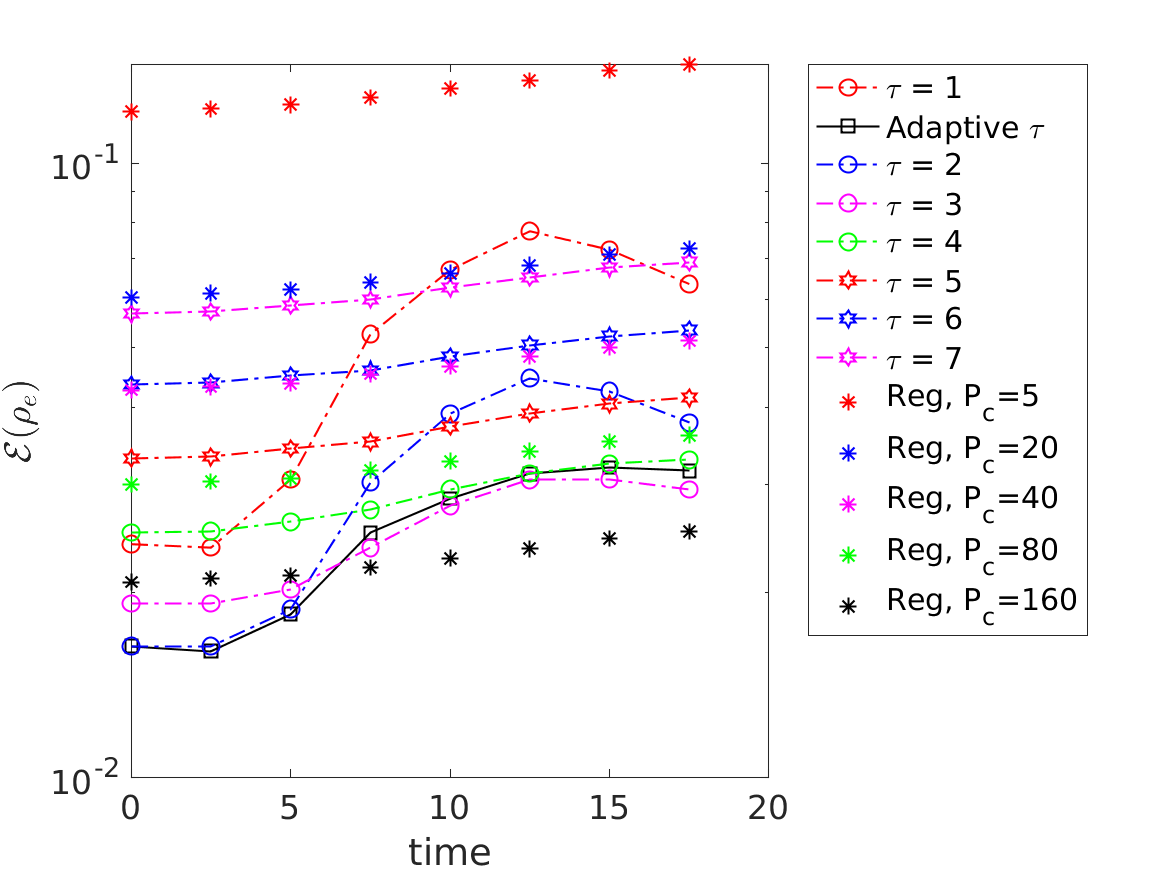}
}
\subfigure[$1024^2$, $P_c=5$]{
\includegraphics[width=0.5\columnwidth]{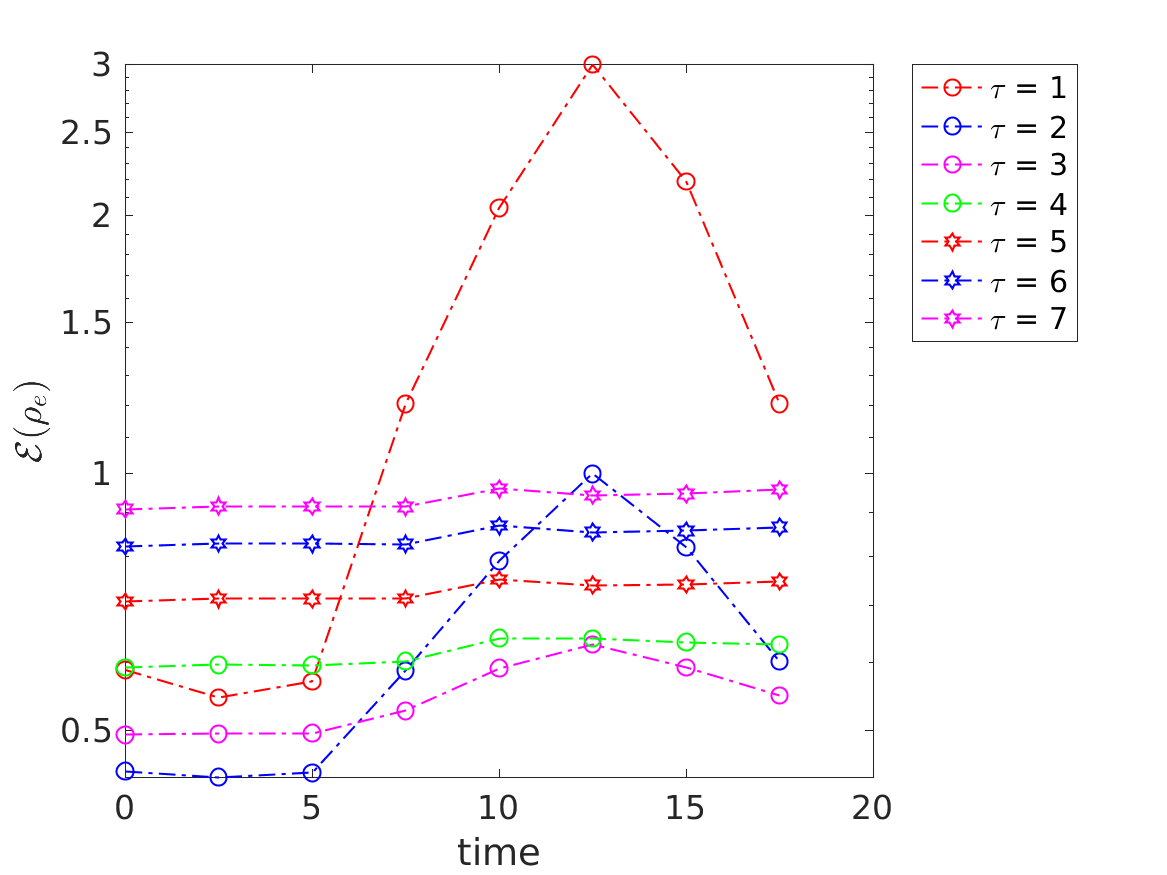}
}
    \caption{2D diocotron instability. Gaussian sampling: Electron charge density error comparison between regular (Reg), fixed $\tau$ and adaptive $\tau$ PIC. The left column is the actual error calculated using equation \eqnref{error_def} and the right column is the estimations from the $\tau$ estimator based on which the optimal $\tau$ is selected. The fixed as well as adaptive $\tau$ has the number of particles per cell $P_c=5$.}
\figlab{diocotron_pc5_gauss}
\end{figure}

\begin{figure}[h!t!b!]
\subfigure[$256^2$, $P_c=10$]{
\includegraphics[width=0.5\columnwidth]{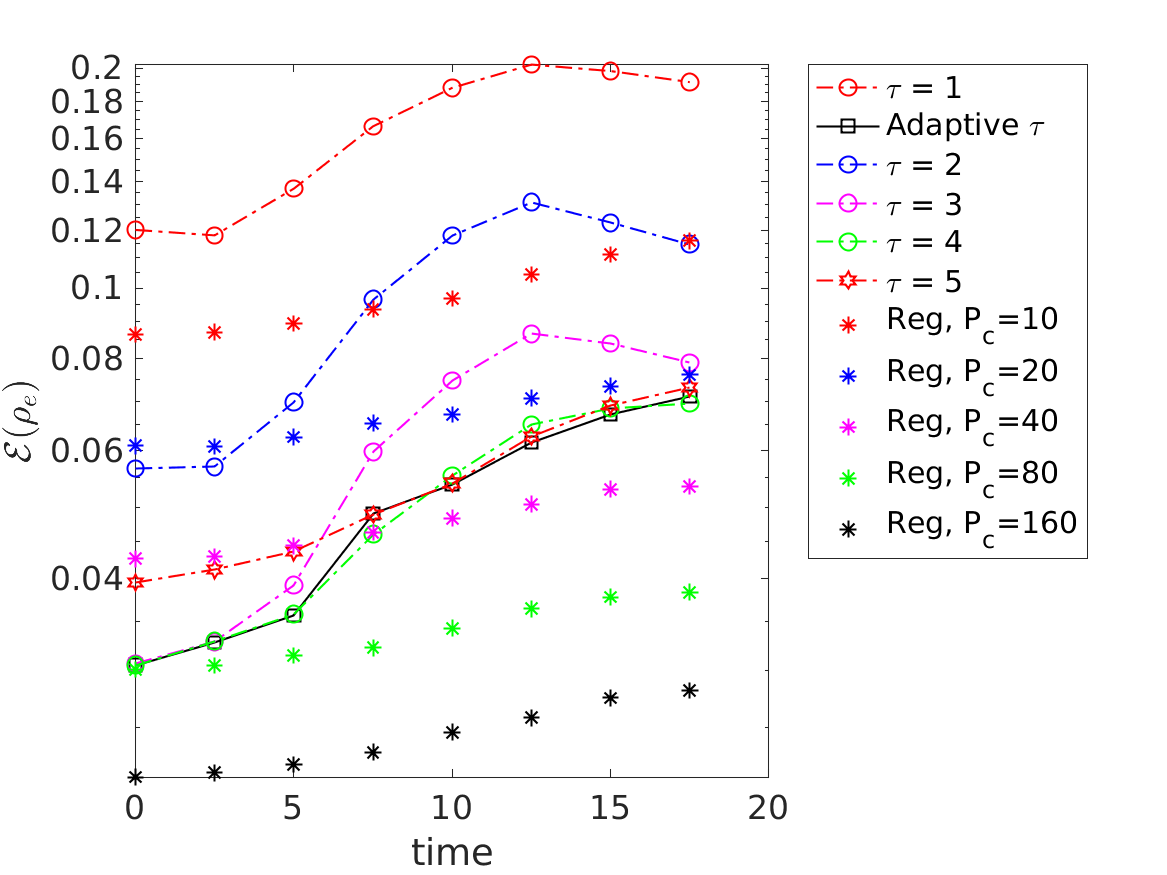}
}
\subfigure[$256^2$, $P_c=10$]{
\includegraphics[width=0.5\columnwidth]{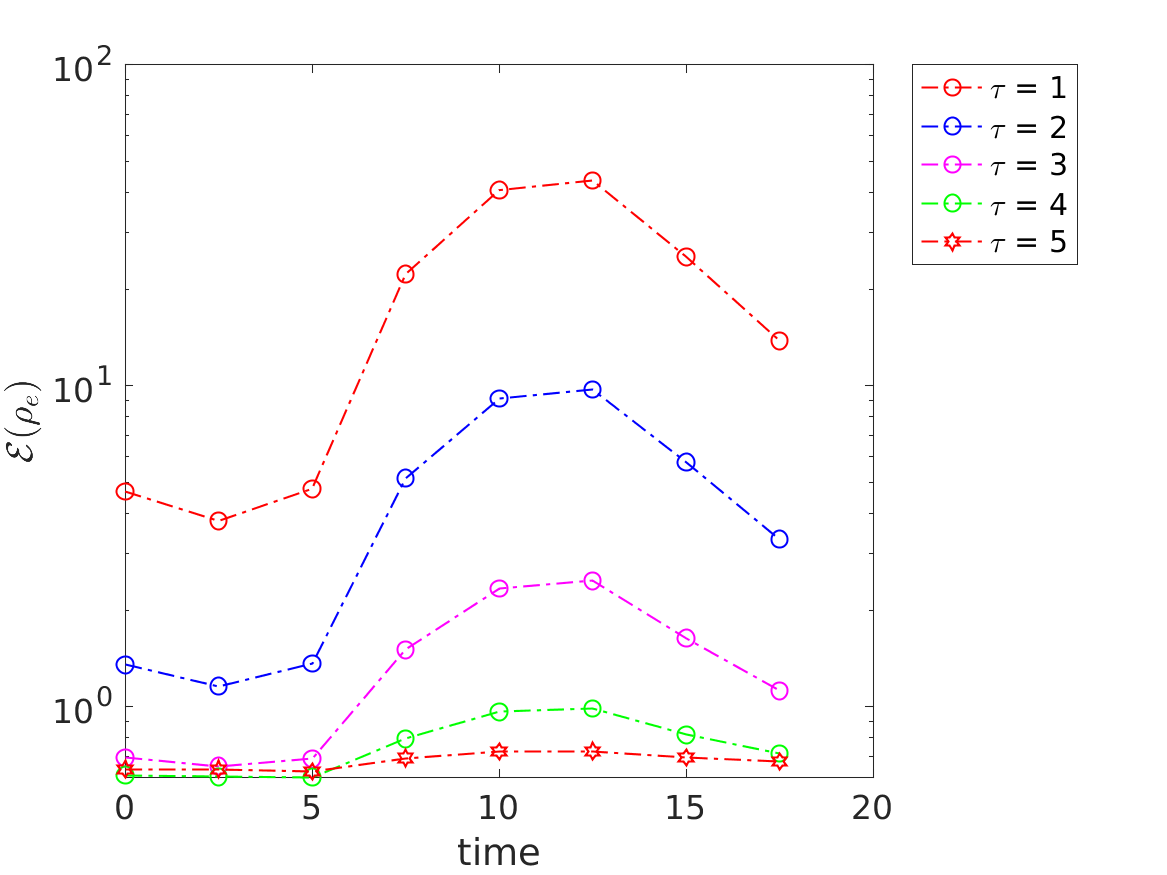}
}
\subfigure[$512^2$, $P_c=10$]{
\includegraphics[width=0.5\columnwidth]{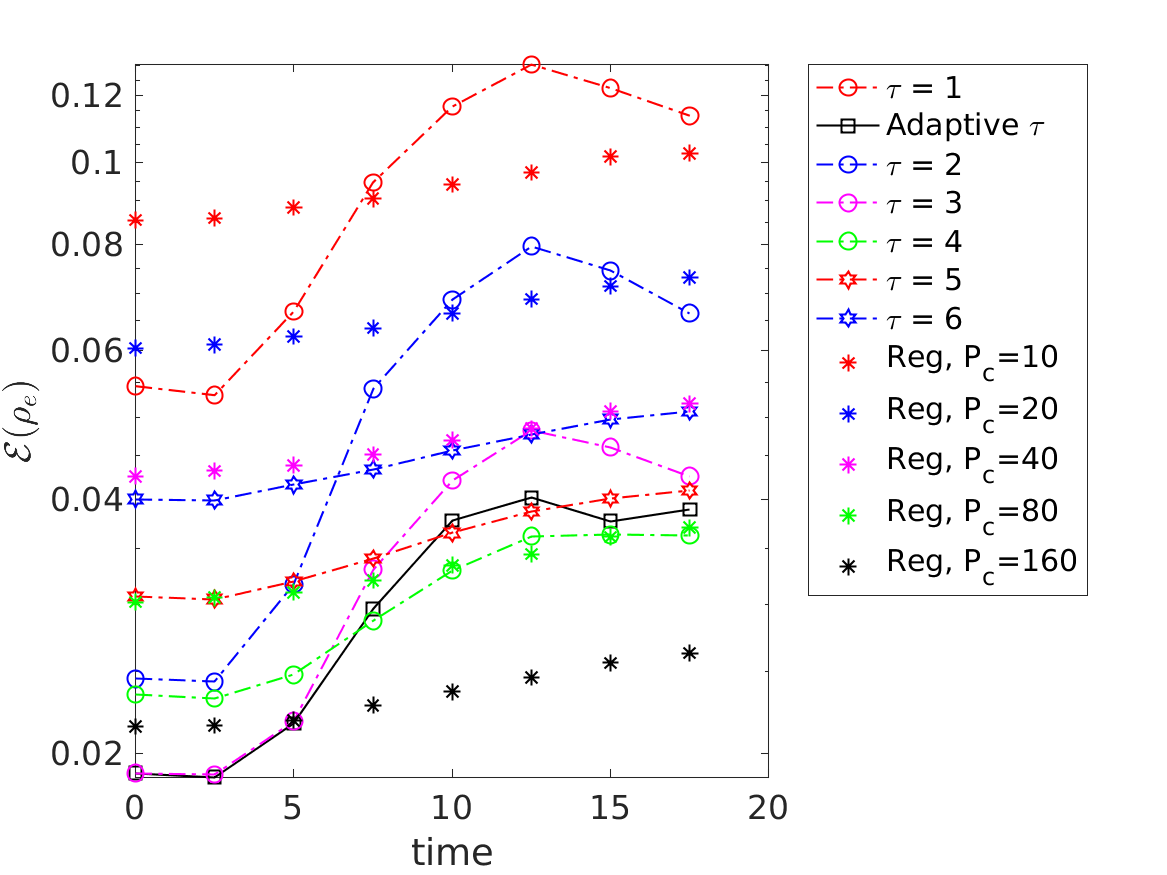}
}
\subfigure[$512^2$, $P_c=10$]{
\includegraphics[width=0.5\columnwidth]{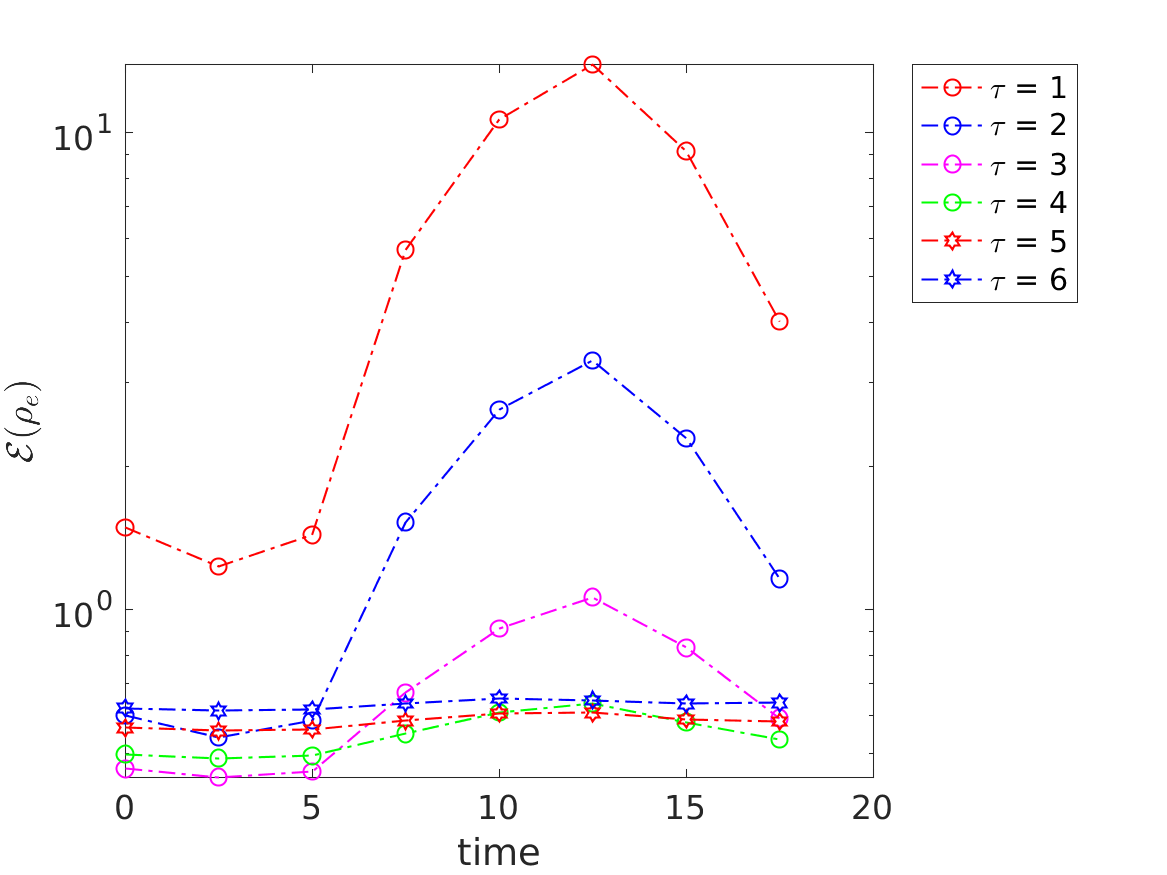}
    \figlab{512_pc10_gauss}
}
\subfigure[$1024^2$, $P_c=10$]{
\includegraphics[width=0.5\columnwidth]{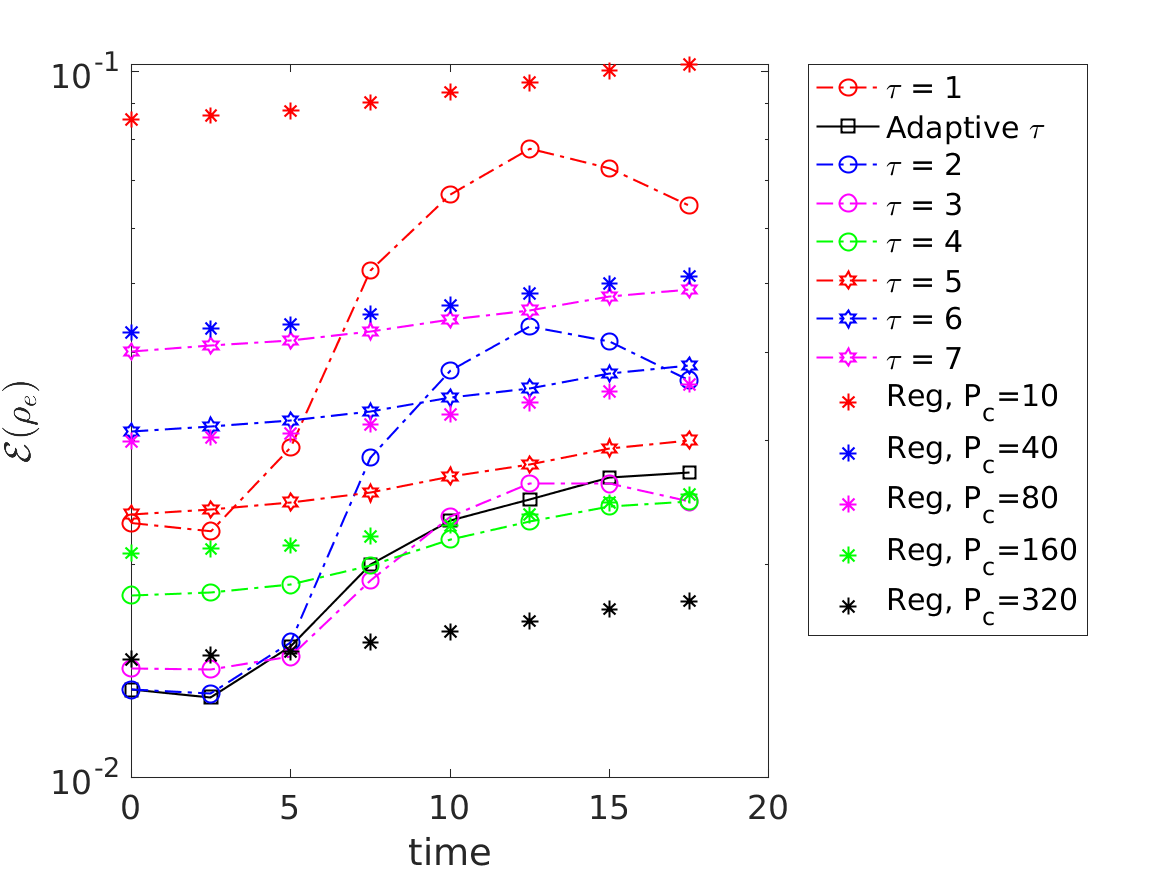}
}
\subfigure[$1024^2$, $P_c=10$]{
\includegraphics[width=0.5\columnwidth]{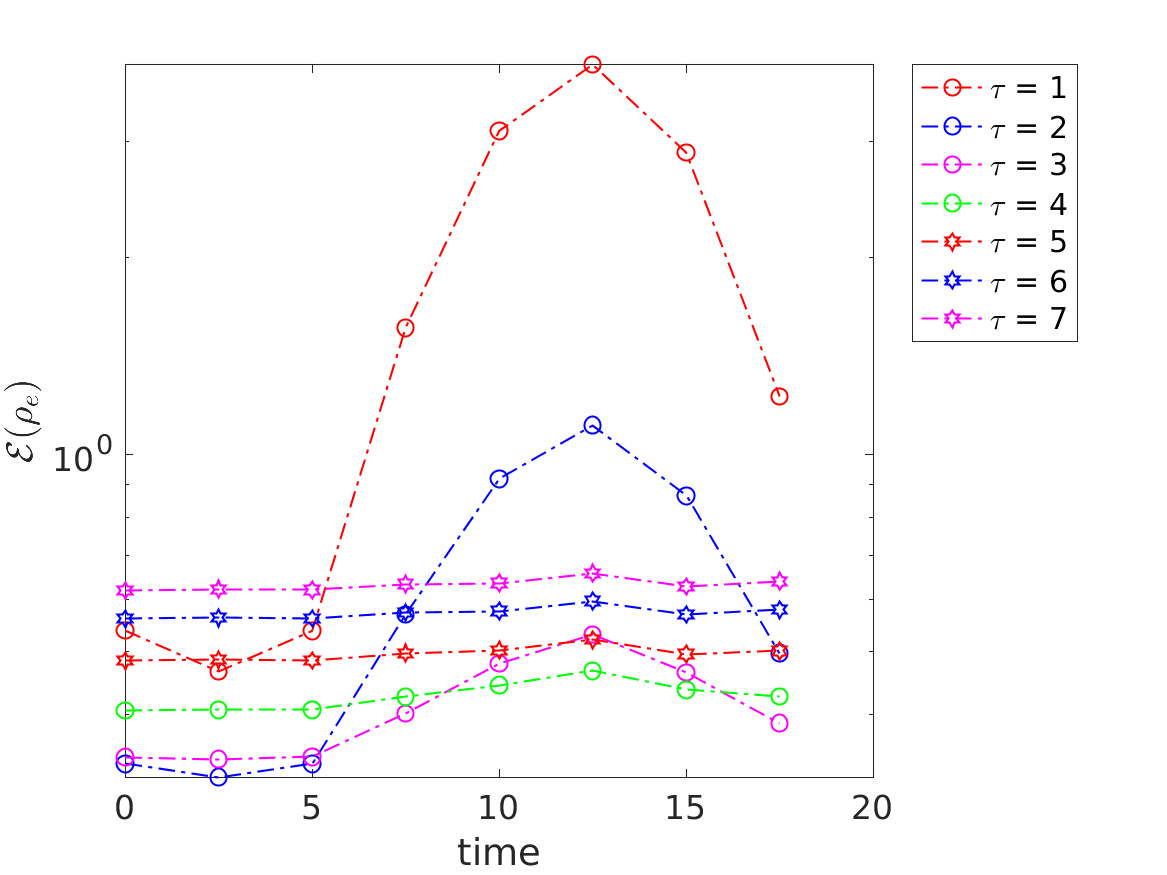}
}
    \caption{2D diocotron instability. Gaussian sampling: Electron charge density error comparison between regular (Reg), fixed $\tau$ and adaptive $\tau$ PIC. The left column is the actual error calculated using equation \eqnref{error_def} and the right column is the estimations from the $\tau$ estimator based on which the optimal $\tau$ is selected. The fixed as well as adaptive $\tau$ has the number of particles per cell $P_c=10$.}
\figlab{diocotron_pc10_gauss}
\end{figure}

\begin{figure}[h!t!b!]
\subfigure[$256^2$, $P_c=20$]{
\includegraphics[width=0.5\columnwidth]{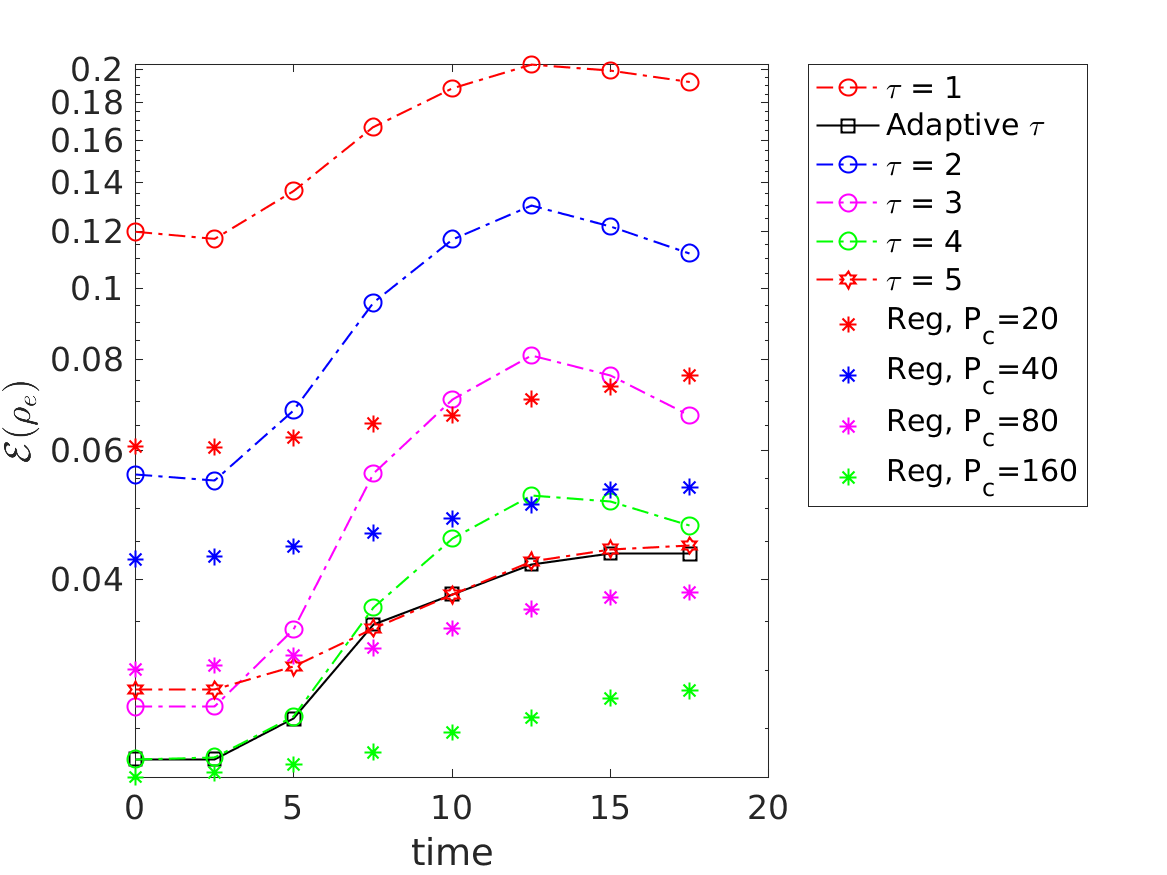}
}
\subfigure[$256^2$, $P_c=20$]{
\includegraphics[width=0.5\columnwidth]{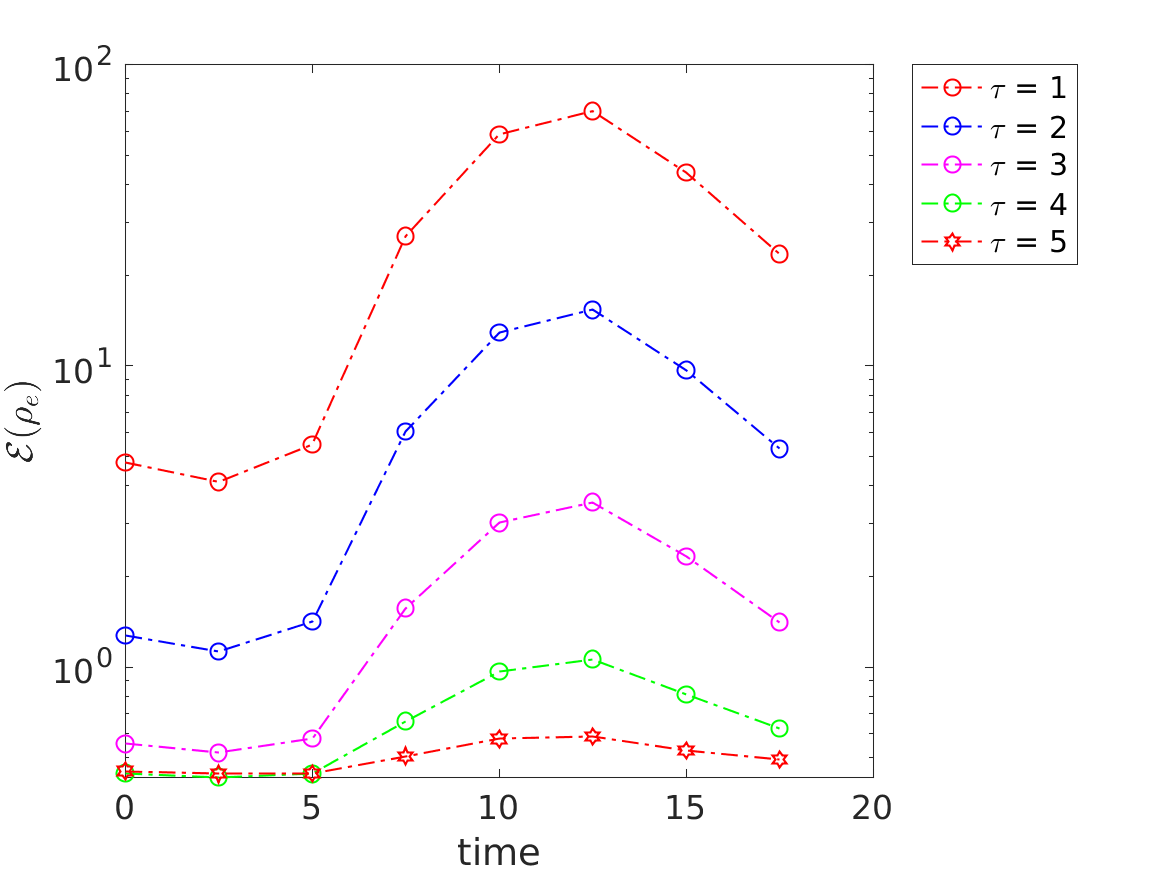}
}
\subfigure[$512^2$, $P_c=20$]{
\includegraphics[width=0.5\columnwidth]{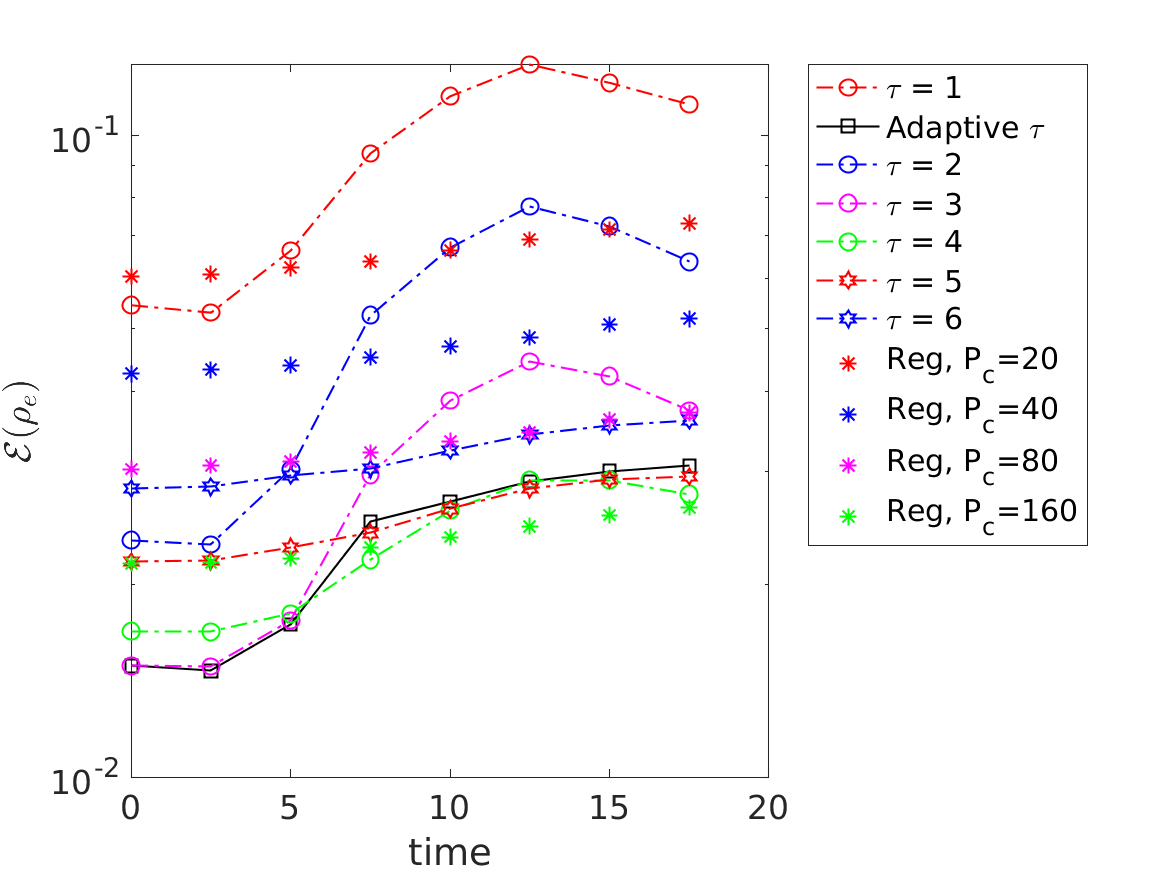}
}
\subfigure[$512^2$, $P_c=20$]{
\includegraphics[width=0.5\columnwidth]{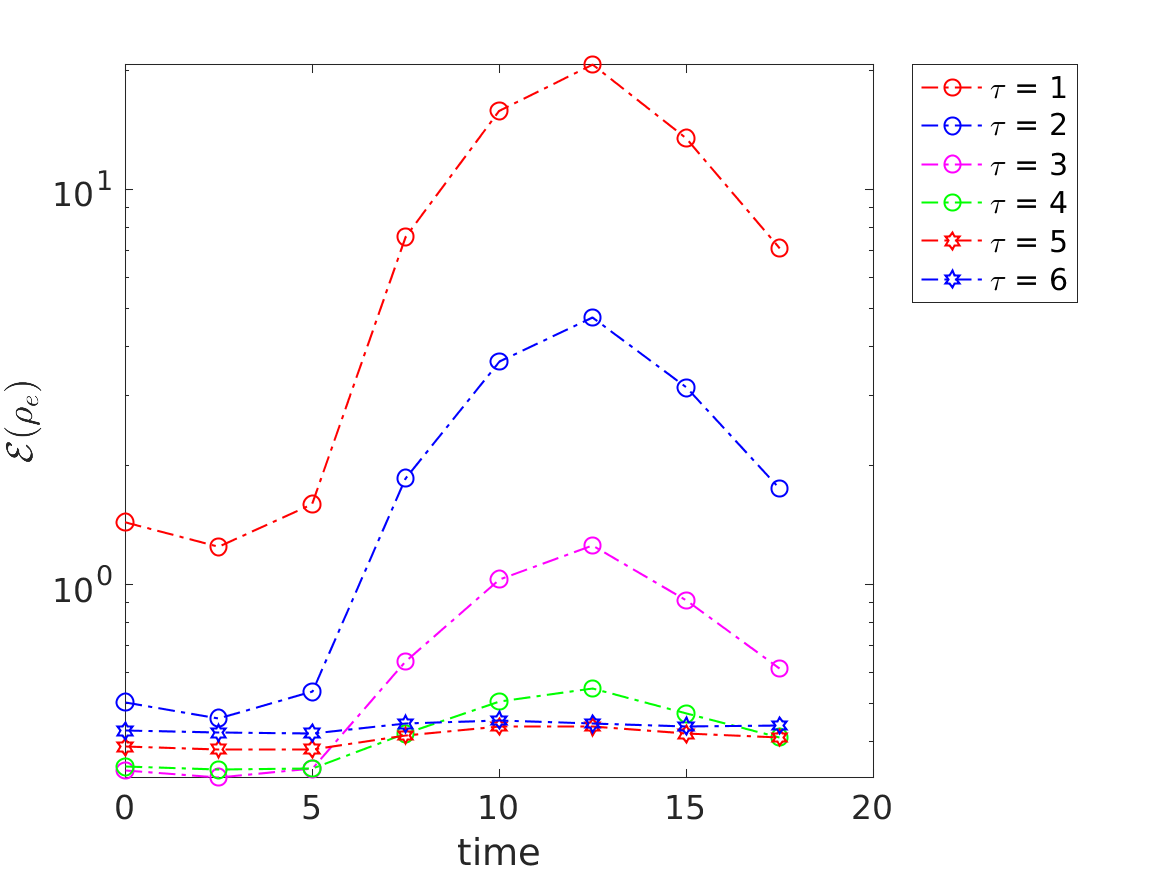}
}
\subfigure[$1024^2$, $P_c=20$]{
\includegraphics[width=0.5\columnwidth]{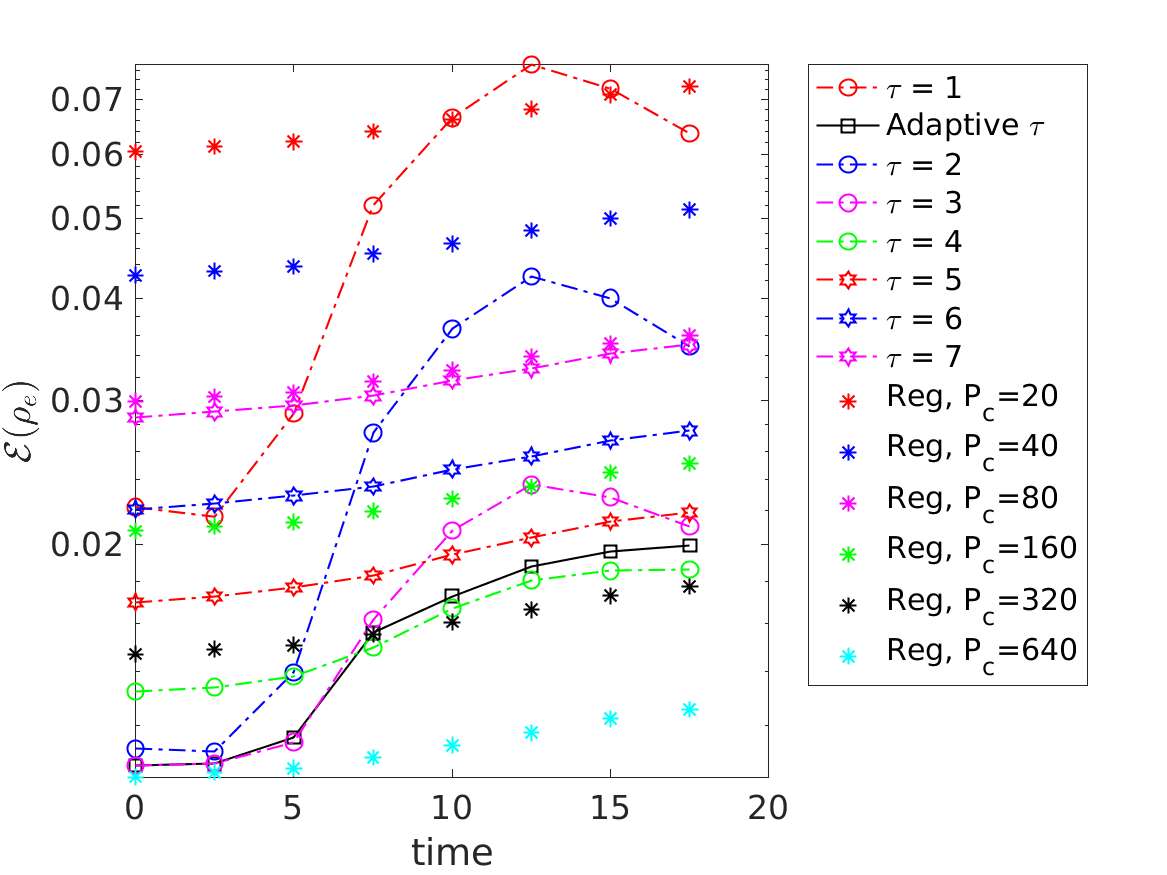}
    \figlab{1024_pc20_gauss}
}
\subfigure[$1024^2$, $P_c=20$]{
\includegraphics[width=0.5\columnwidth]{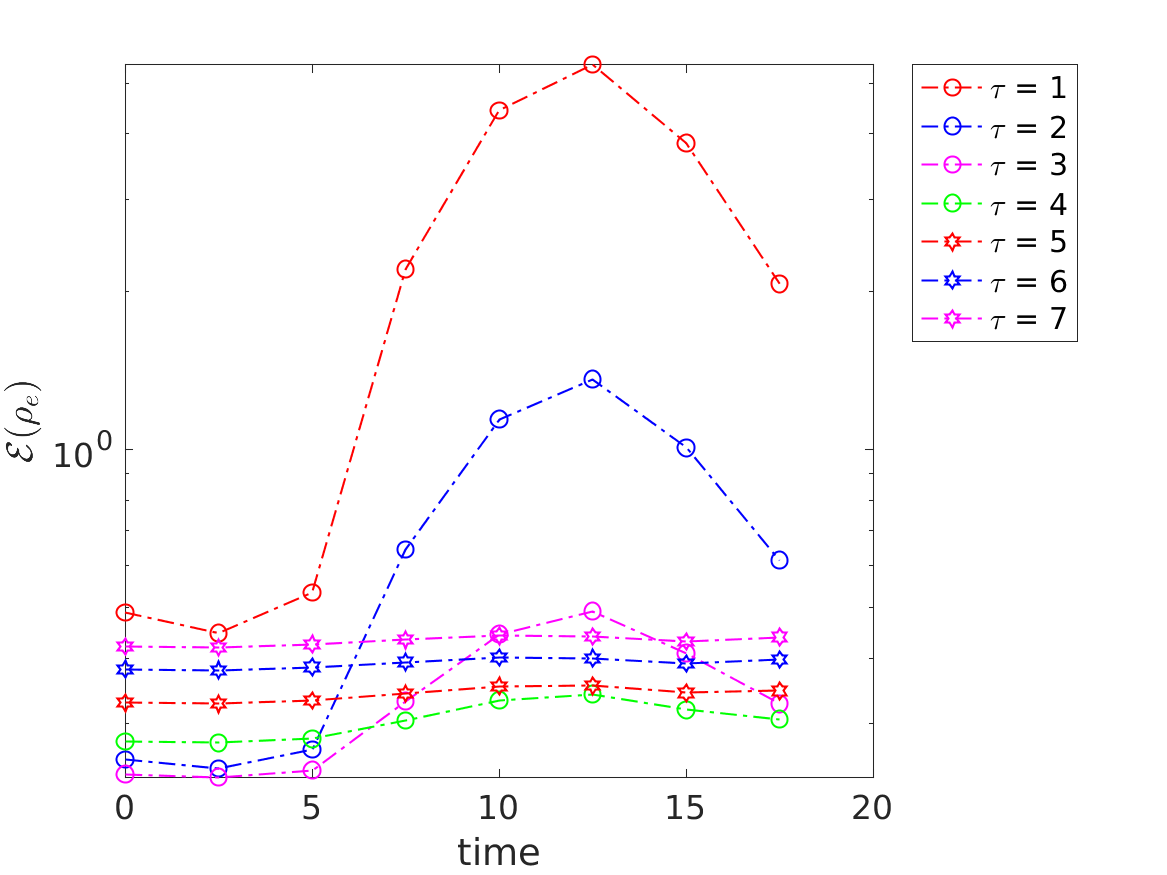}
}
    \caption{2D diocotron instability. Gaussian sampling: Electron charge density error comparison between regular (Reg), fixed $\tau$ and adaptive $\tau$ PIC. The left column is the actual error calculated using equation \eqnref{error_def} and the right column is the estimations from the $\tau$ estimator based on which the optimal $\tau$ is selected. The fixed as well as adaptive $\tau$ has the number of particles per cell $P_c=20$. The errors for regular PIC with $P_c=320$ and $640$ are
    calculated from that of $P_c=160$ based on the theoretical particle error scaling $1/\sqrt{Pc}$. This is based on the observation that the errors for the regular PIC are in the noise dominated regime.}
\figlab{diocotron_pc20_gauss}
\end{figure}
    In order to make a quantitative comparison, in the left columns of Figures \figref{diocotron_pc5_gauss}-\figref{diocotron_pc20_gauss}, the error in $\rho_e$ calculated using \eqnref{error_def} at 8 time instants is shown for three different meshes
    $256^2,512^2,1024^2$ and number of particles per cell $P_c=5,10,20$. For regular PIC we also carried out simulations at higher $P_c$, namely 
    $40,80,160$ in order to compare the accuracy level with adaptive $\tau$ results. The reference in equation \eqnref{error_def} is computed using the average
    of 8 independent regular PIC simulations each with a $1024^2$ mesh and $P_c=320$. In equation \eqnref{error_def}, the $N_{points}$ are taken as the cell-centered points
    in the mesh under consideration and the reference $\rho_e$ is interpolated to these points for calculating error. In Figure \figref{1024_pc20_gauss}, for calculating the error with regular PIC at
    $P_c=320,640$ we divided the error for $P_c=160$ by $\sqrt{2}$ and $\sqrt{4}$ respectively as we observed the errors are already in the noise dominated
    regime and follow the scaling $1/\sqrt{P_c}$. On the right columns of Figures \figref{diocotron_pc5_gauss}-\figref{diocotron_pc20_gauss} are the estimations
    of the error for different $\tau$ values from the $\tau$ estimator divided by the root mean squared value of the reference $\rho_e$. It is based on these
    curves that the optimal $\tau$, i.e. the one with minimum error, is selected at each time step during the simulation.

    From the left columns of Figures \figref{diocotron_pc5_gauss}-\figref{diocotron_pc20_gauss}, we can see that in general the adaptive $\tau$ performs well
    in terms of picking one of the $\tau$ values with the lowest error (if not the optimal $\tau$ at all time instants). The shapes of the error curves for 
    individual $\tau$ values are also similar for the estimated and actual ones. It demonstrates the ability of our estimator to predict correct error
    dynamics for different $\tau$ cases. While we do not have to worry about the magnitude of the errors in the estimator, the ordering of the error curves 
    between different $\tau$ values is of importance as it decides the optimal $\tau$, and we want it to be close to the actual scenario on the left 
    columns. To that extent, we make an observation that in the time interval $t\in[7.5,17.5]$ the difference in the magnitude of errors between different $\tau$ values in the estimator differs more from the actual scenario than in the time interval $t\in[0,7.5)$. More specifically, for low $\tau$ values ($\tau=1,2,3$) the estimator predicts 
    a significantly higher error compared to the other $\tau$ values in that regime. 

    One of the reasons for this behavior is for low $\tau$ cases e.g., $\tau=1,2$ and $3$, the number of component grids in the combination technique is higher than that for the high $\tau$ cases. Since we use the triangle inequality to bound the errors, both the grid and particle errors
    tend to be more over-estimated for the low $\tau$ cases than those for the high $\tau$ ones. Another reason is, in the estimates for the grid error we use the derivatives based on the regular grid. While this is a sharper upper bound for high $\tau$, the derivatives seen in reality by the low $\tau$
    cases for functions with fine scale structures will be smaller because of the larger mesh sizes. Indeed, fine scale structures form in the time interval  
    $t\in[7.5,17.5]$ and hence grid error dominated for the simulations with sparse grids noise reduction. 

        In spite of these differences, in all the cases even with the predicted sub-optimal $\tau$ the error values of the adaptive $\tau$ PIC is 
    significantly lower than that of the regular PIC with same $P_c$. If we use some problem specific information, then it may be possible to reduce the over-estimations in the grid and particle errors by introducing a correction factor for different $\tau$ values.

\begin{figure}[h!t!b!]
\subfigure[$256^2$, $P_c=5$]{
\includegraphics[width=0.3\columnwidth]{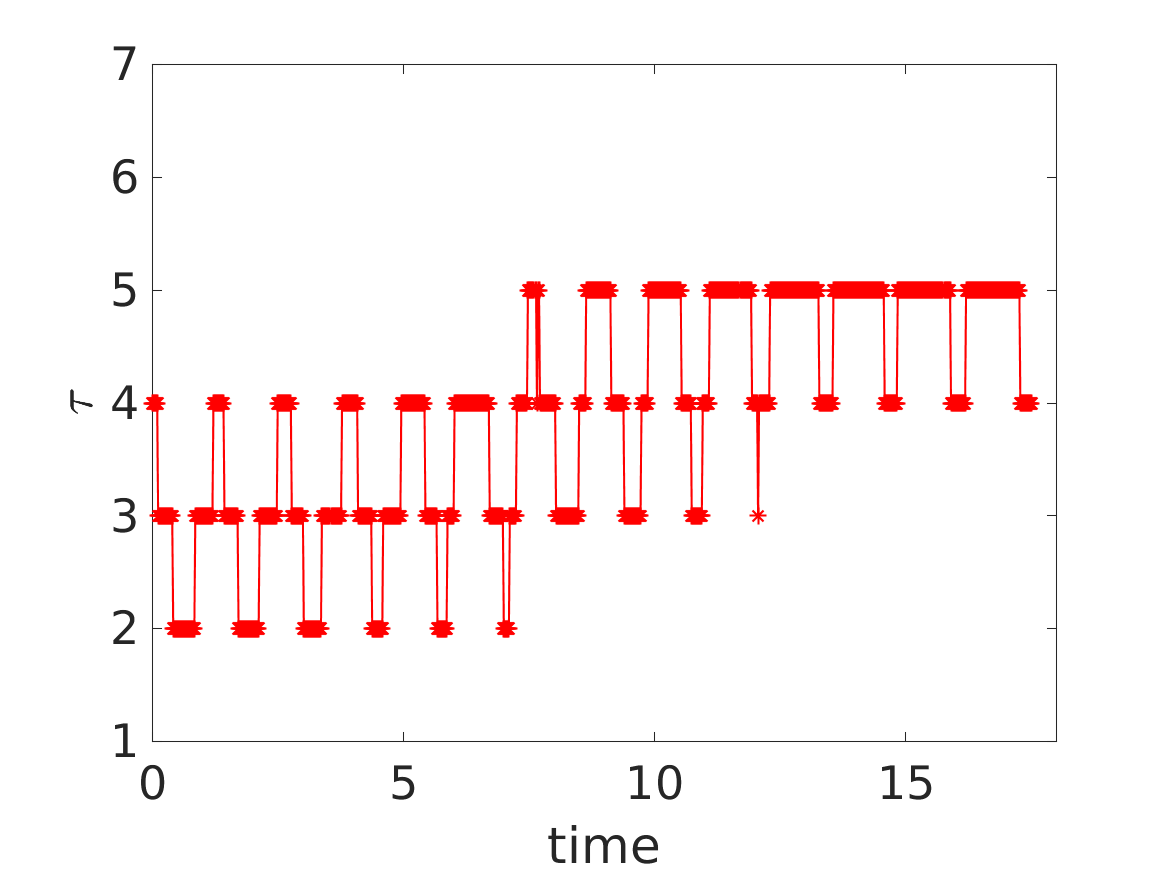}
}
\subfigure[$512^2$, $P_c=5$]{
\includegraphics[width=0.3\columnwidth]{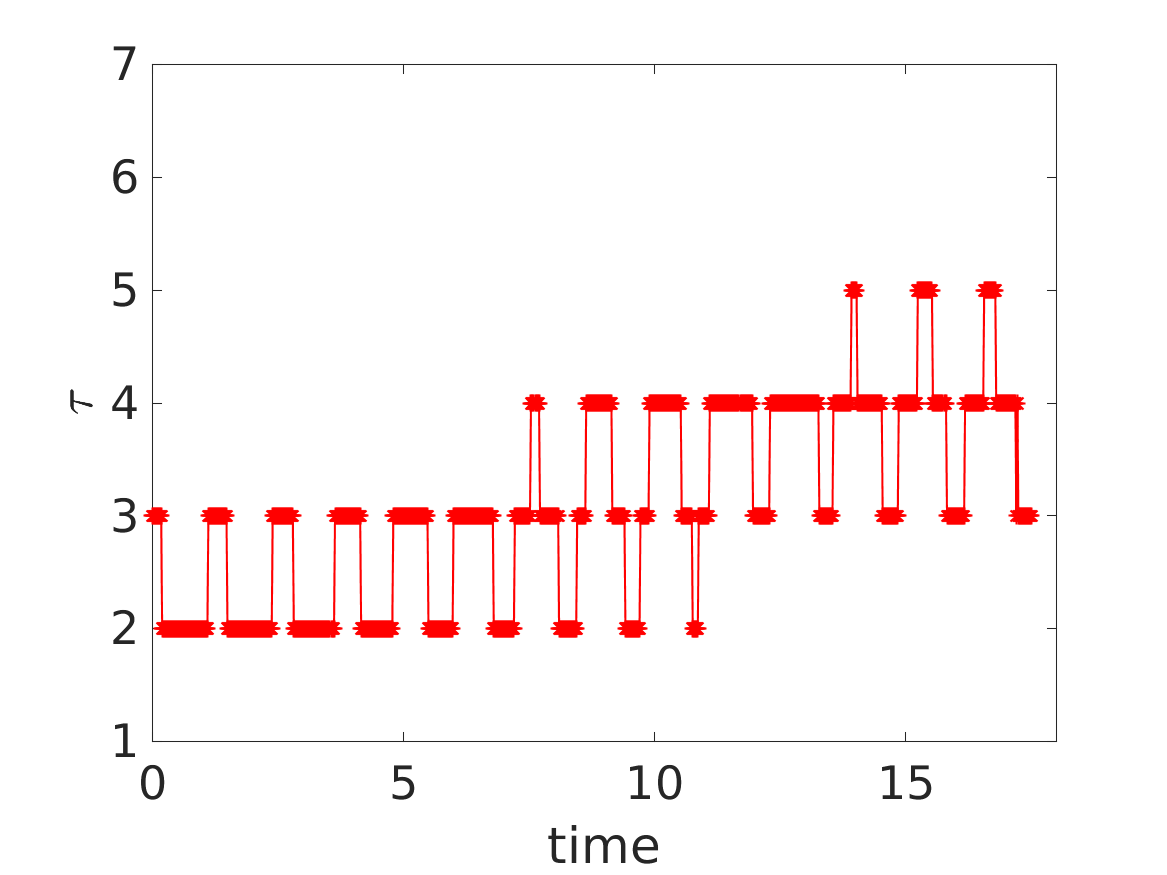}
}
\subfigure[$1024^2$, $P_c=5$]{
\includegraphics[width=0.3\columnwidth]{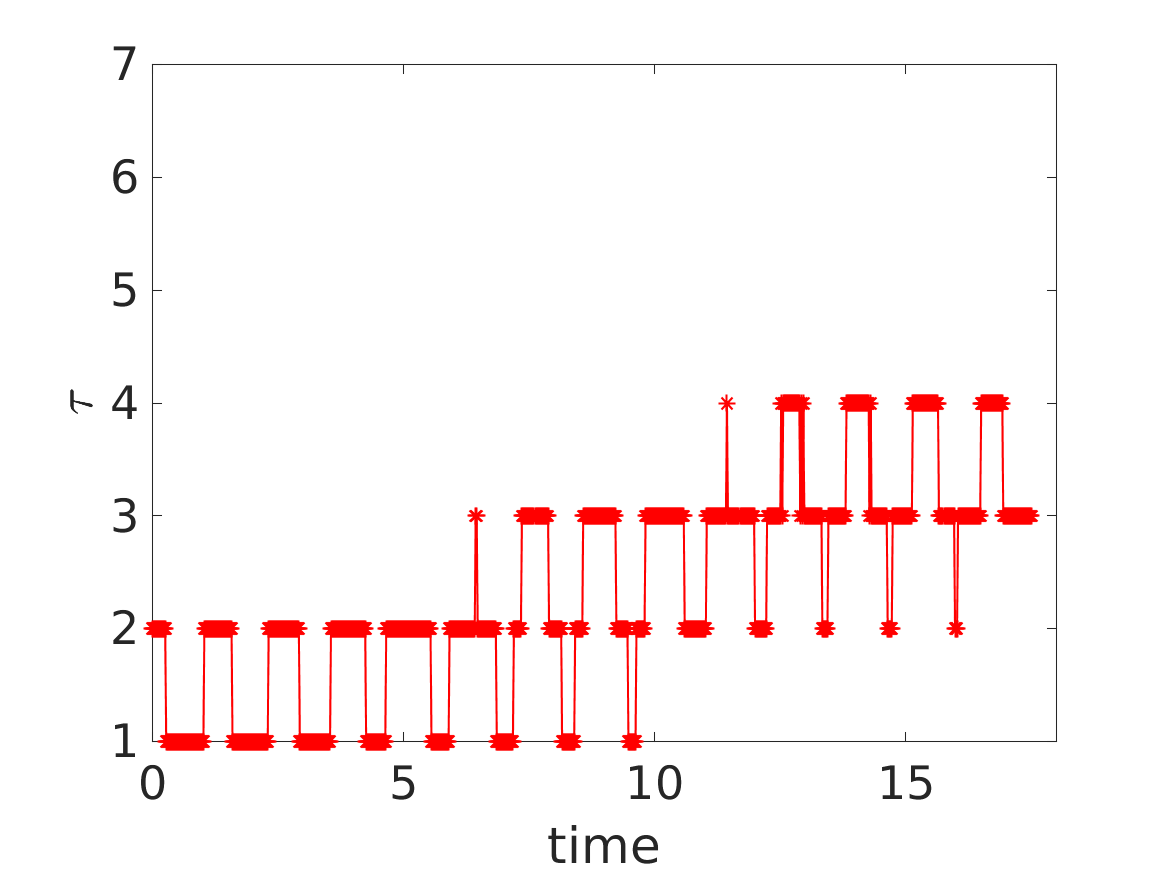}
}
\subfigure[$256^2$, $P_c=10$]{
\includegraphics[width=0.3\columnwidth]{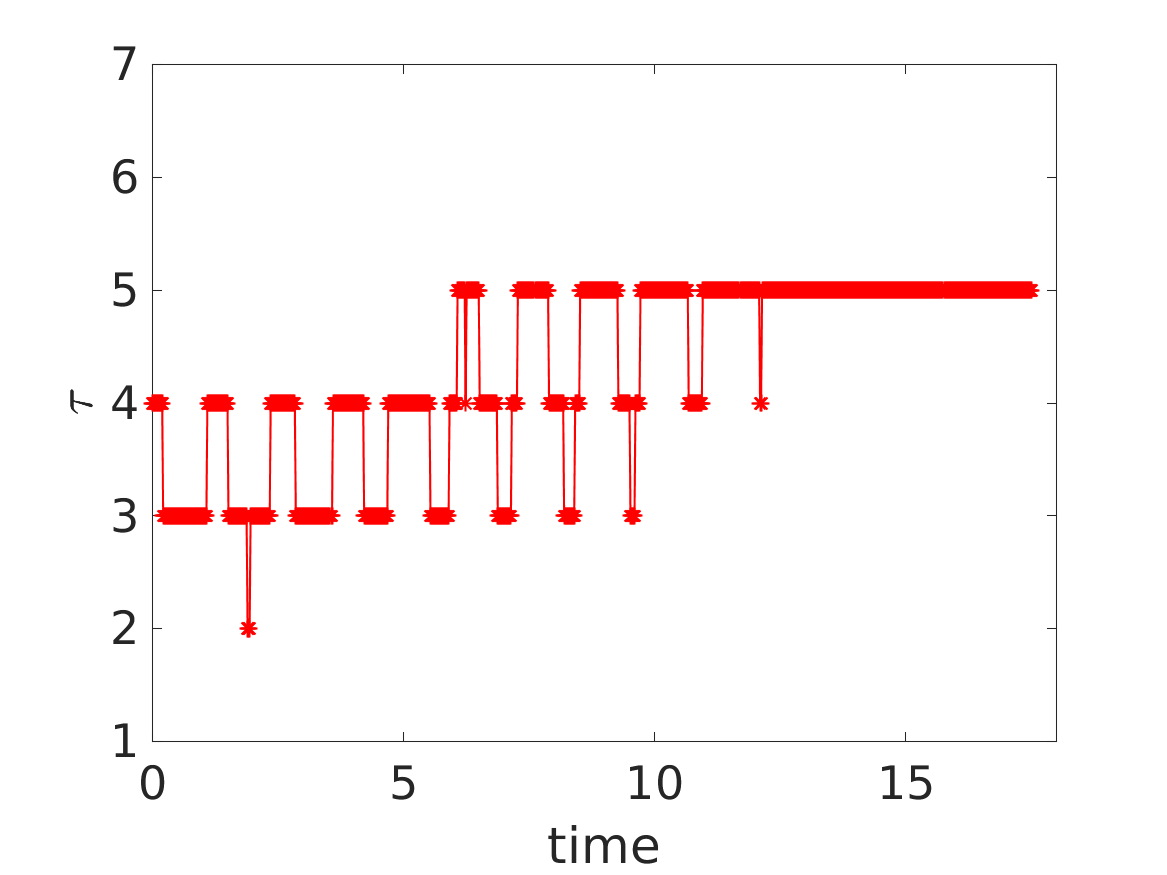}
}
\subfigure[$512^2$, $P_c=10$]{
\includegraphics[width=0.3\columnwidth]{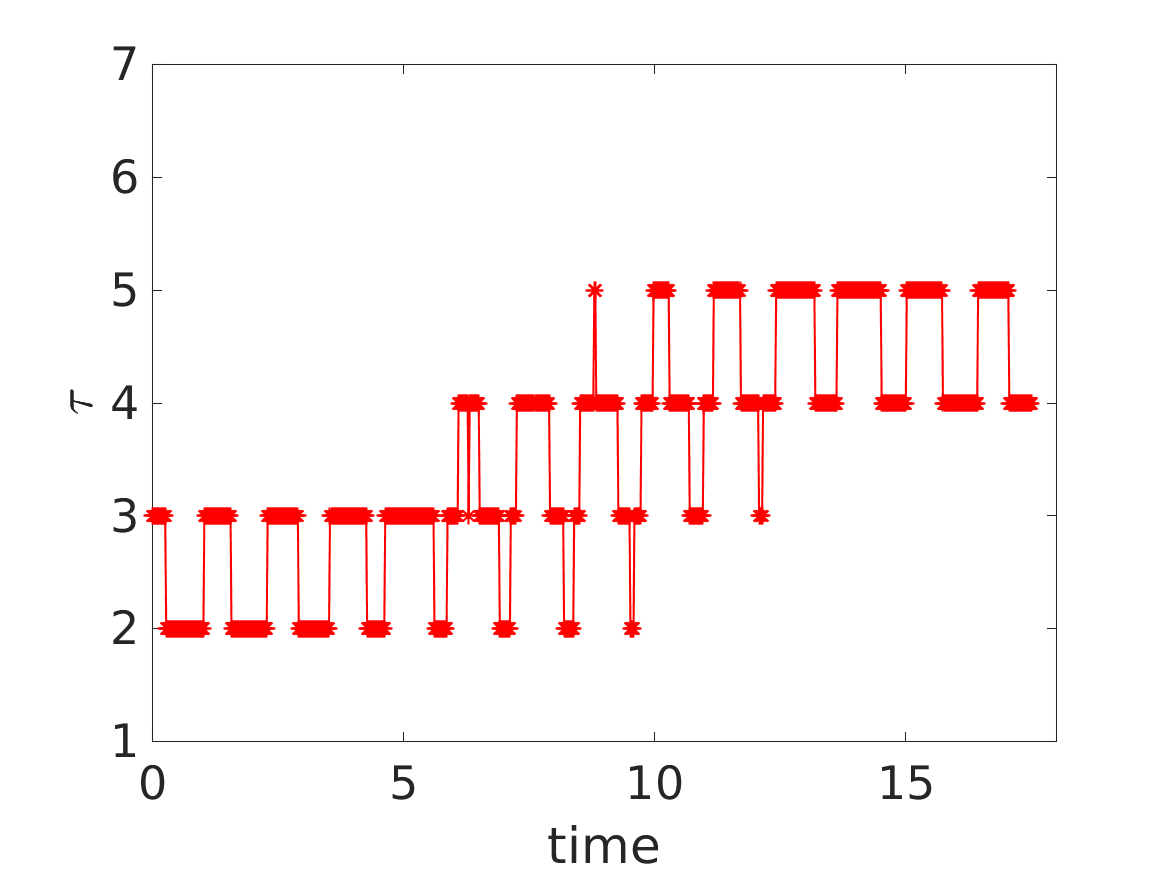}
}
\subfigure[$1024^2$, $P_c=10$]{
\includegraphics[width=0.3\columnwidth]{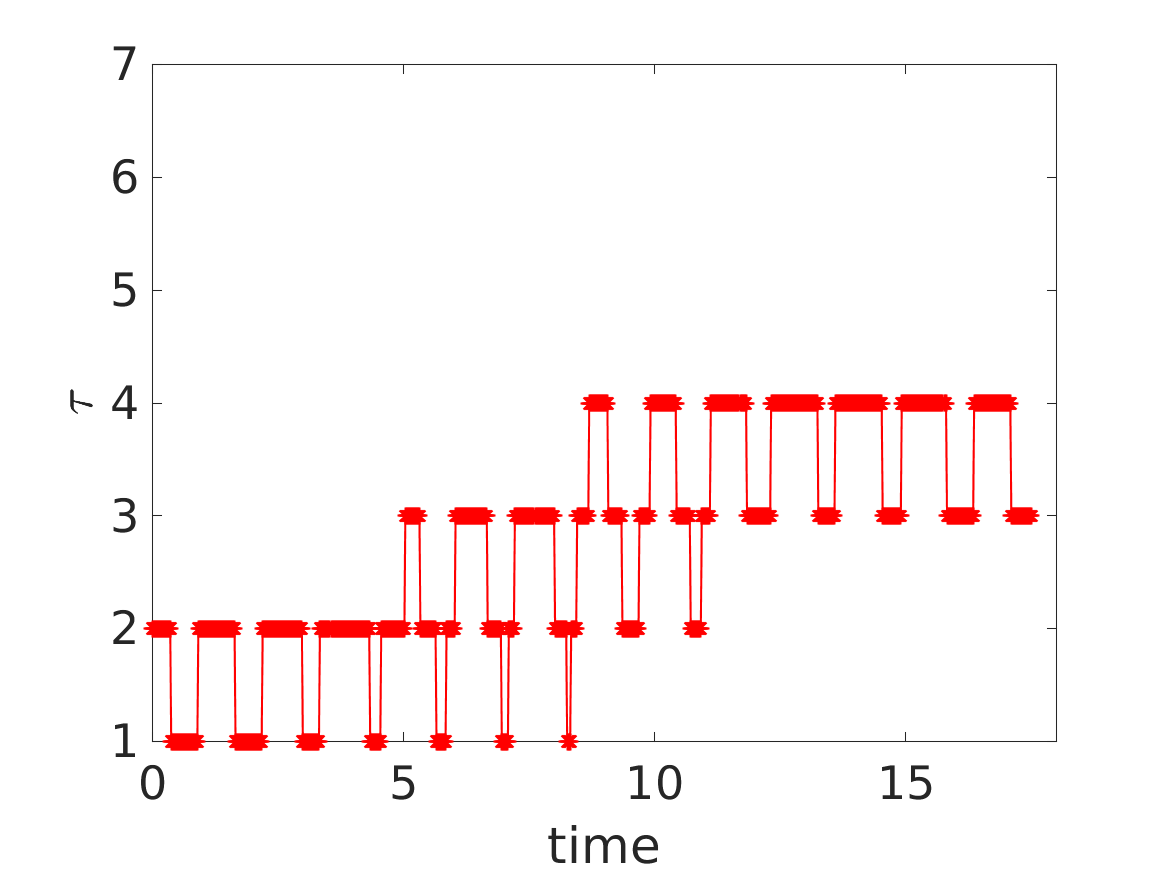}
}
\subfigure[$256^2$, $P_c=20$]{
\includegraphics[width=0.3\columnwidth]{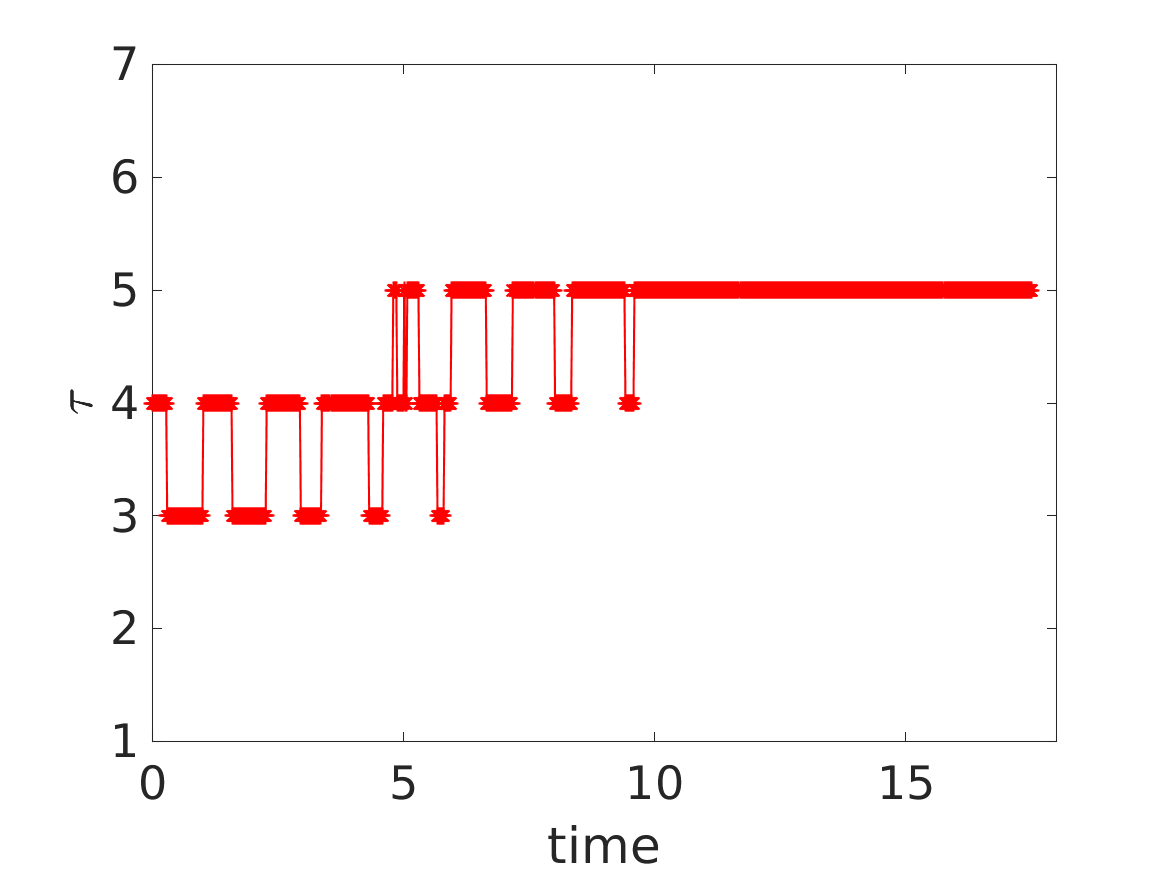}
}
\subfigure[$512^2$, $P_c=20$]{
\includegraphics[width=0.3\columnwidth]{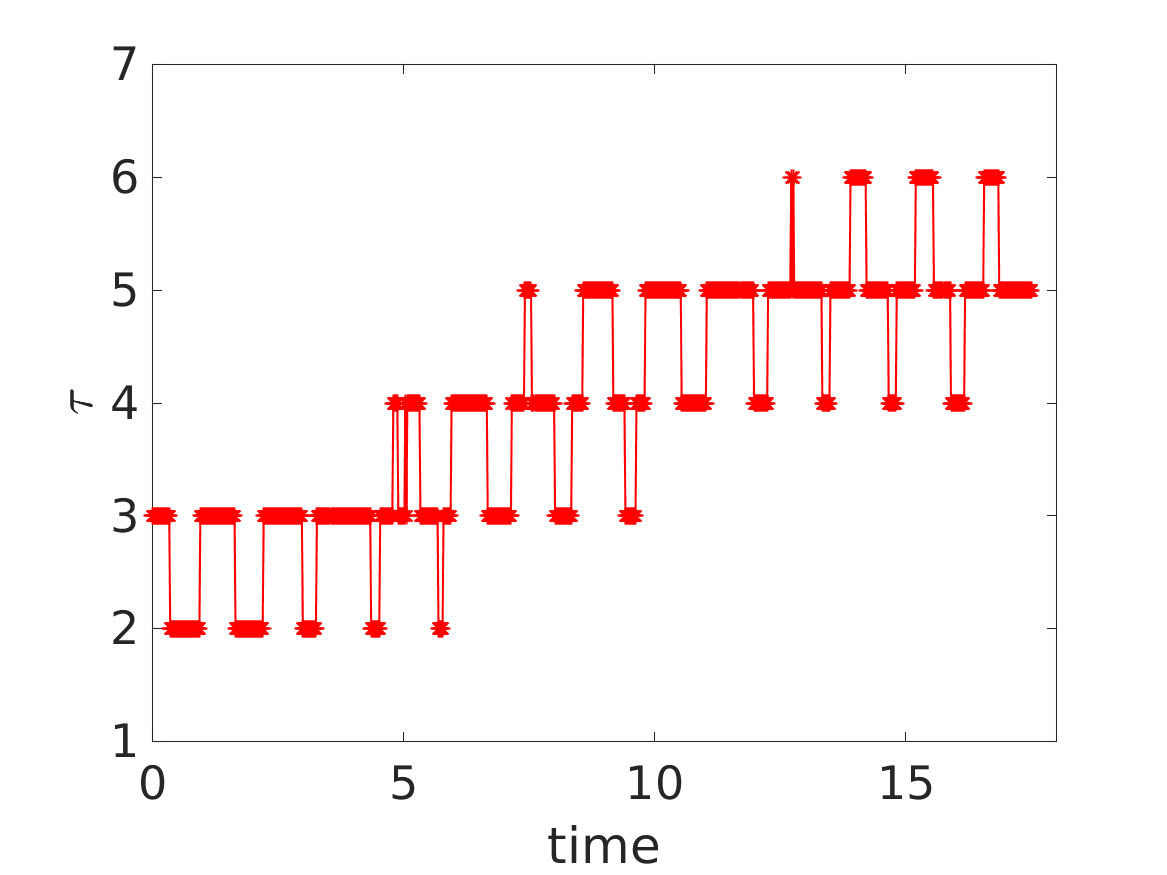}
}
\subfigure[$1024^2$, $P_c=20$]{
\includegraphics[width=0.3\columnwidth]{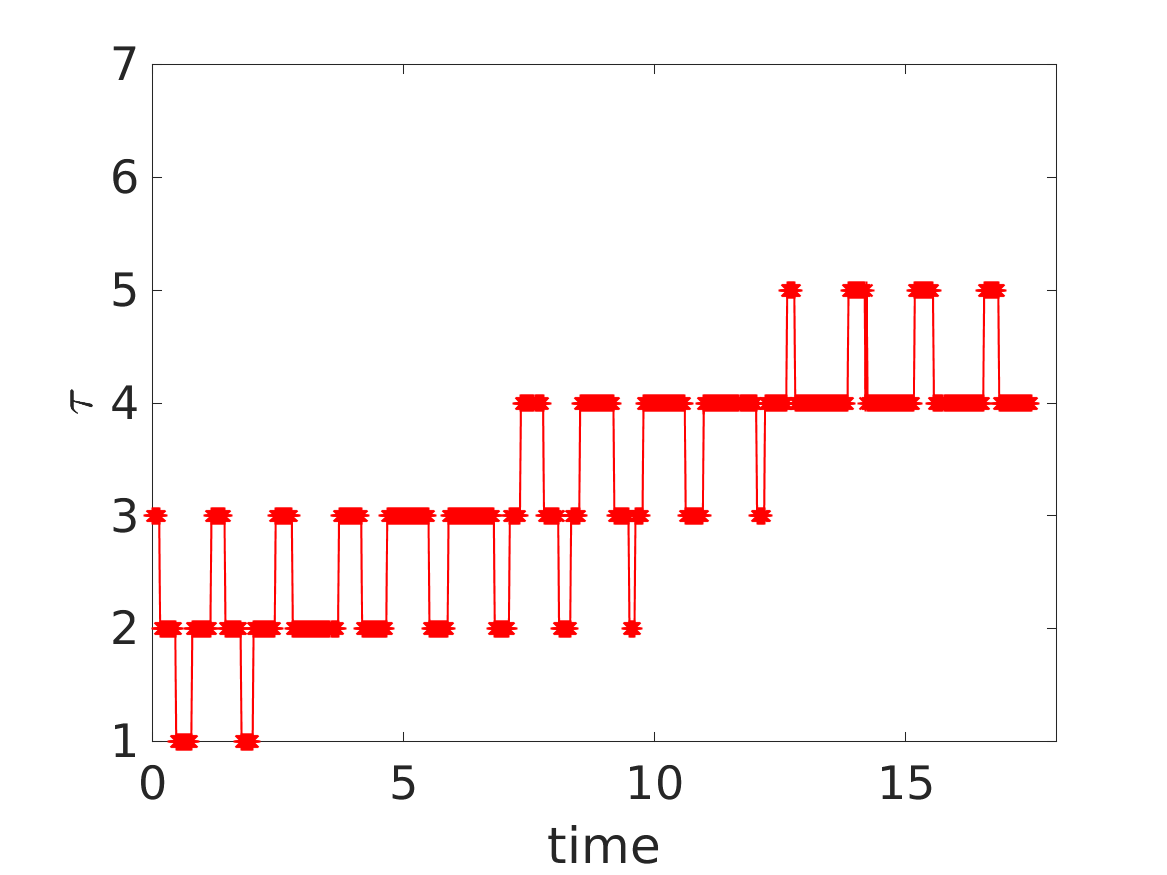}
}
    \caption{2D diocotron instability. Gaussian sampling: Time history of $\tau$ for different mesh sizes and 
    number of particles per cell $P_c$.}
\figlab{diocotron_tau_history_gauss}
\end{figure}

        In Figure \figref{diocotron_tau_history_gauss}, the time history of $\tau$ is shown for the meshes and $P_c$ considered in Figures 
        \figref{diocotron_pc5_gauss}-\figref{diocotron_pc20_gauss}. Here we can see that for the same $P_c$, when we decrease the mesh size, i.e., going from left to right in Figure \figref{diocotron_tau_history_gauss}, the $\tau$ values decrease. This is because we are moving from the grid error dominated regime to the particle error dominated regime. On the other hand,
        for the same mesh size and increasing $P_c$, i.e., moving from top to bottom in Figure \figref{diocotron_tau_history_gauss}, the $\tau$ values increase
        as we are moving from the particle error dominated regime to the grid error dominated regime. Also, for a particular mesh size and given $P_c$ the later time instants
        have higher $\tau$ compared to the earlier ones. This is due to the formation of fine scale structures in the problem and resolving them require
        a higher $\tau$. 

\begin{figure}[h!t!b!]
\subfigure[$256^2$, $P_c=5$]{
\includegraphics[width=0.3\columnwidth]{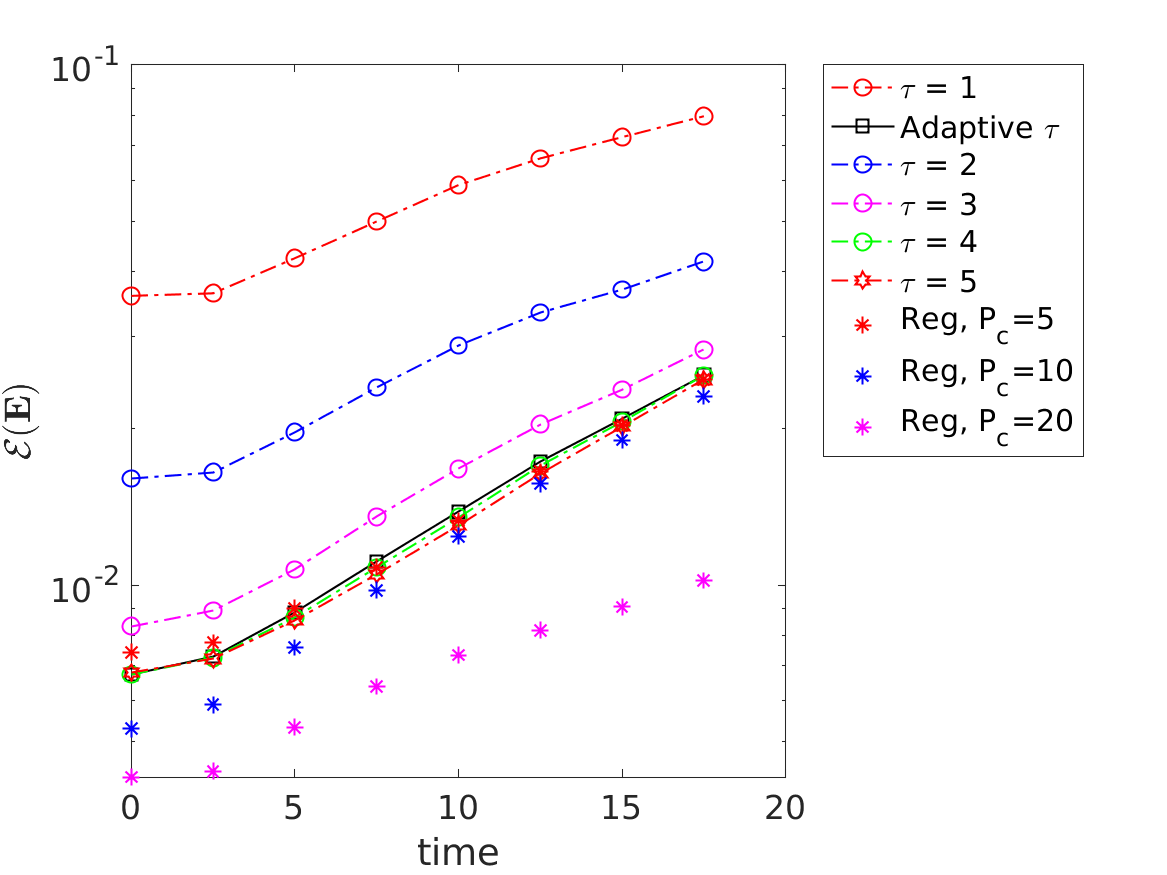}
}
\subfigure[$512^2$, $P_c=5$]{
\includegraphics[width=0.3\columnwidth]{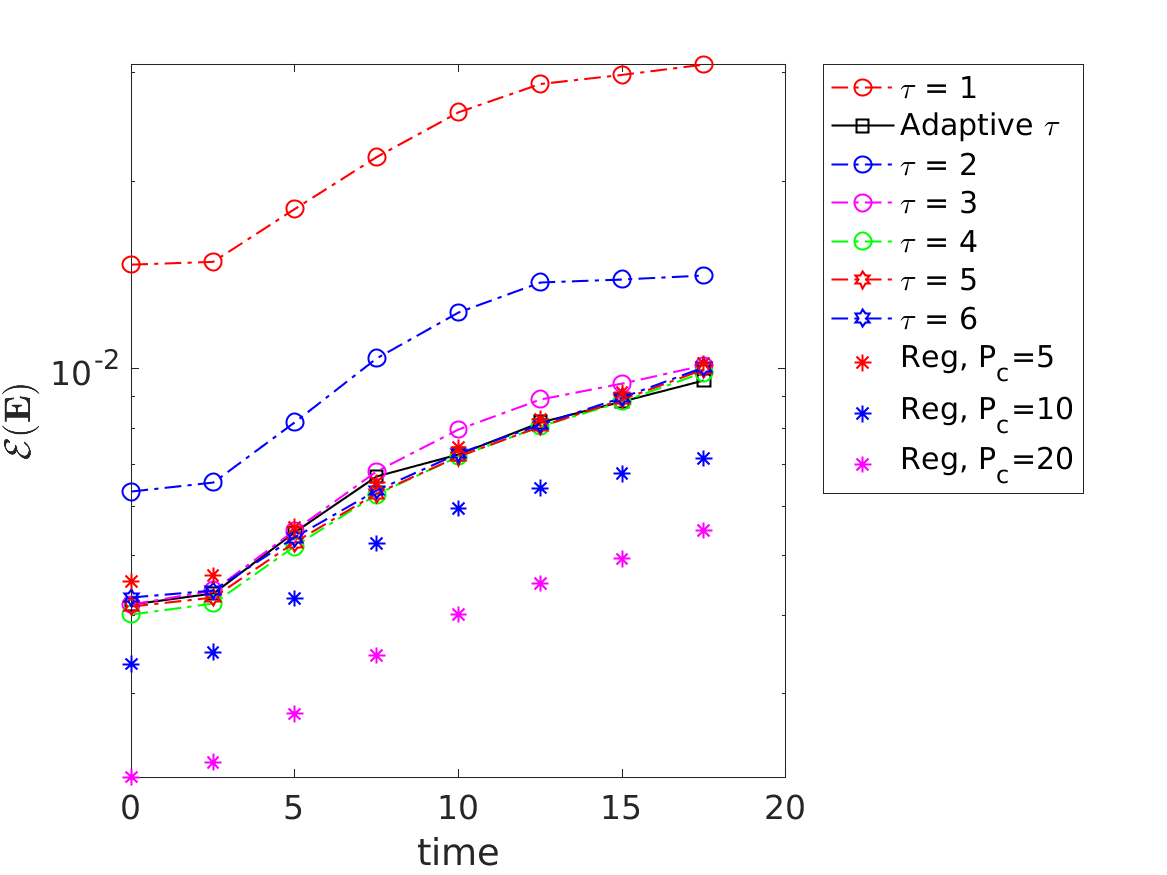}
}
\subfigure[$1024^2$, $P_c=5$]{
\includegraphics[width=0.3\columnwidth]{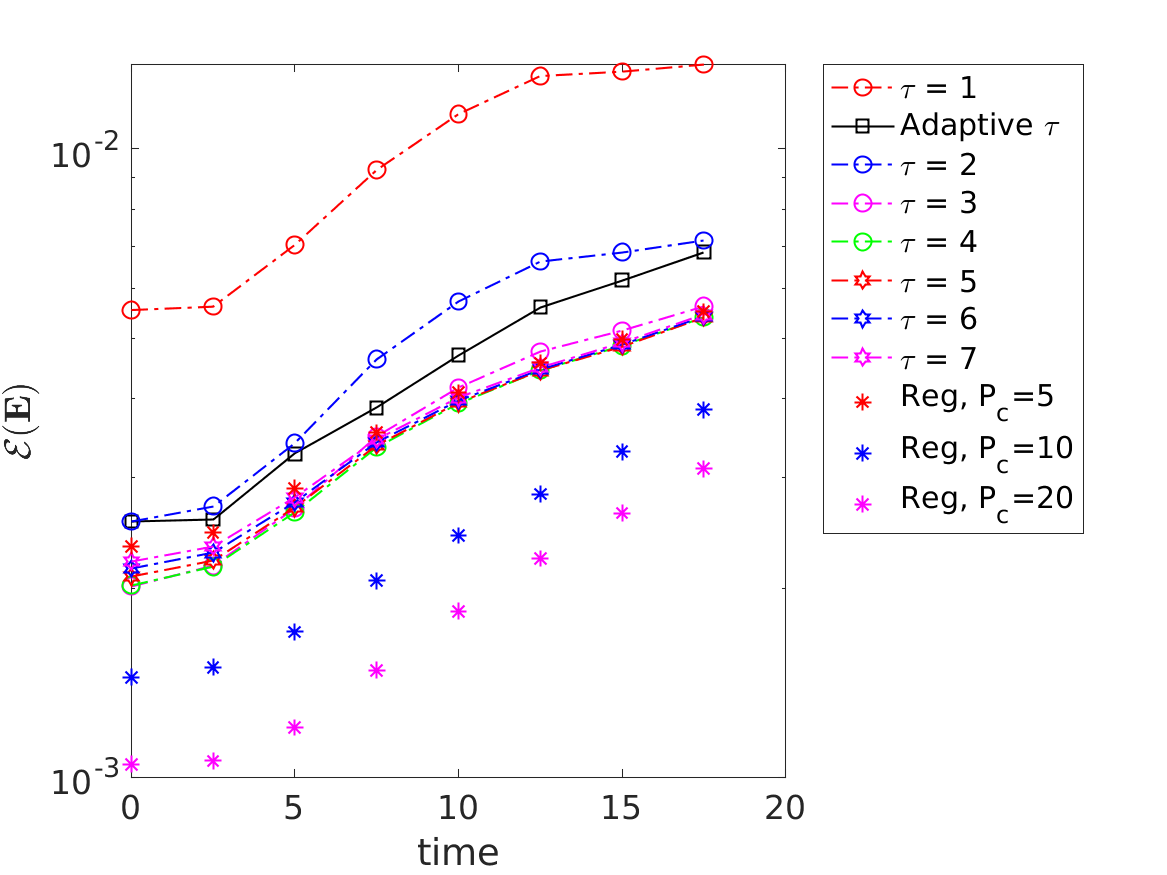}
}
\subfigure[$256^2$, $P_c=10$]{
\includegraphics[width=0.3\columnwidth]{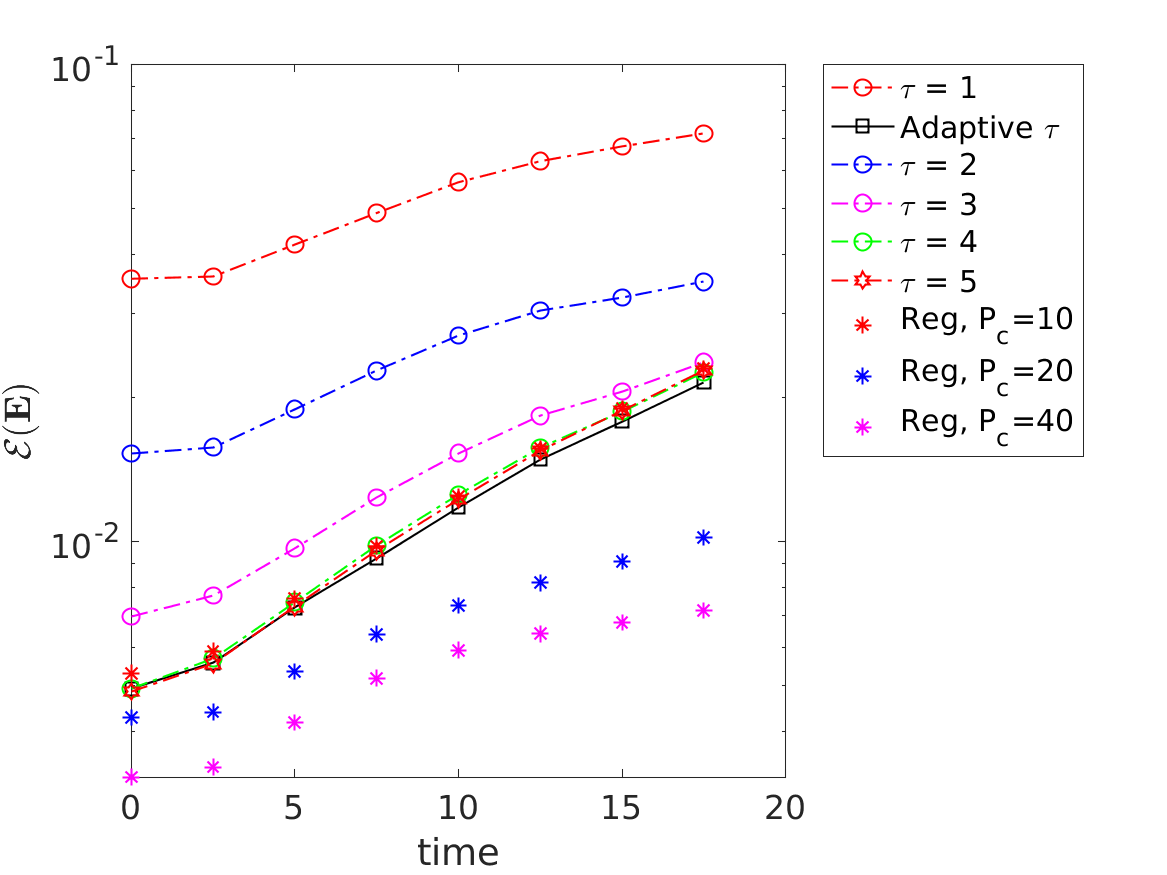}
}
\subfigure[$512^2$, $P_c=10$]{
\includegraphics[width=0.3\columnwidth]{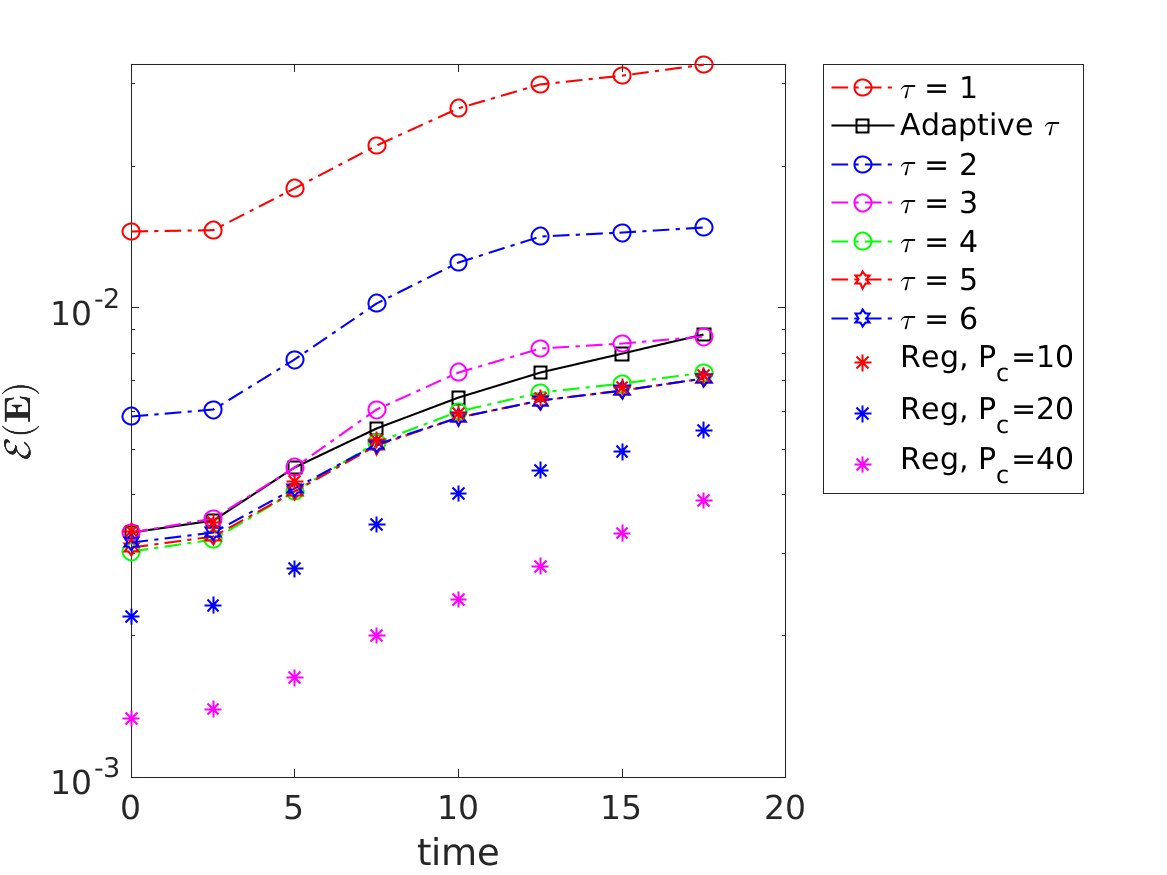}
}
\subfigure[$1024^2$, $P_c=10$]{
\includegraphics[width=0.3\columnwidth]{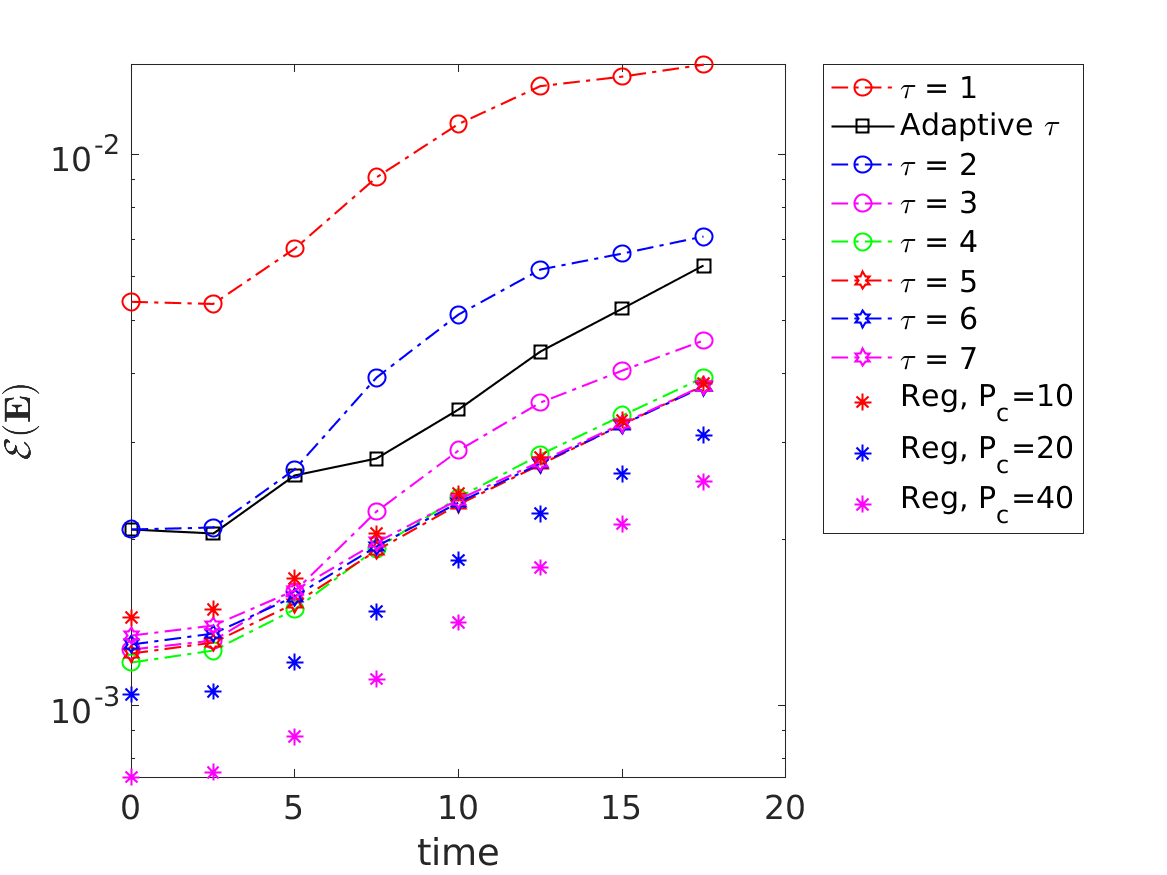}
}
\subfigure[$256^2$, $P_c=20$]{
\includegraphics[width=0.3\columnwidth]{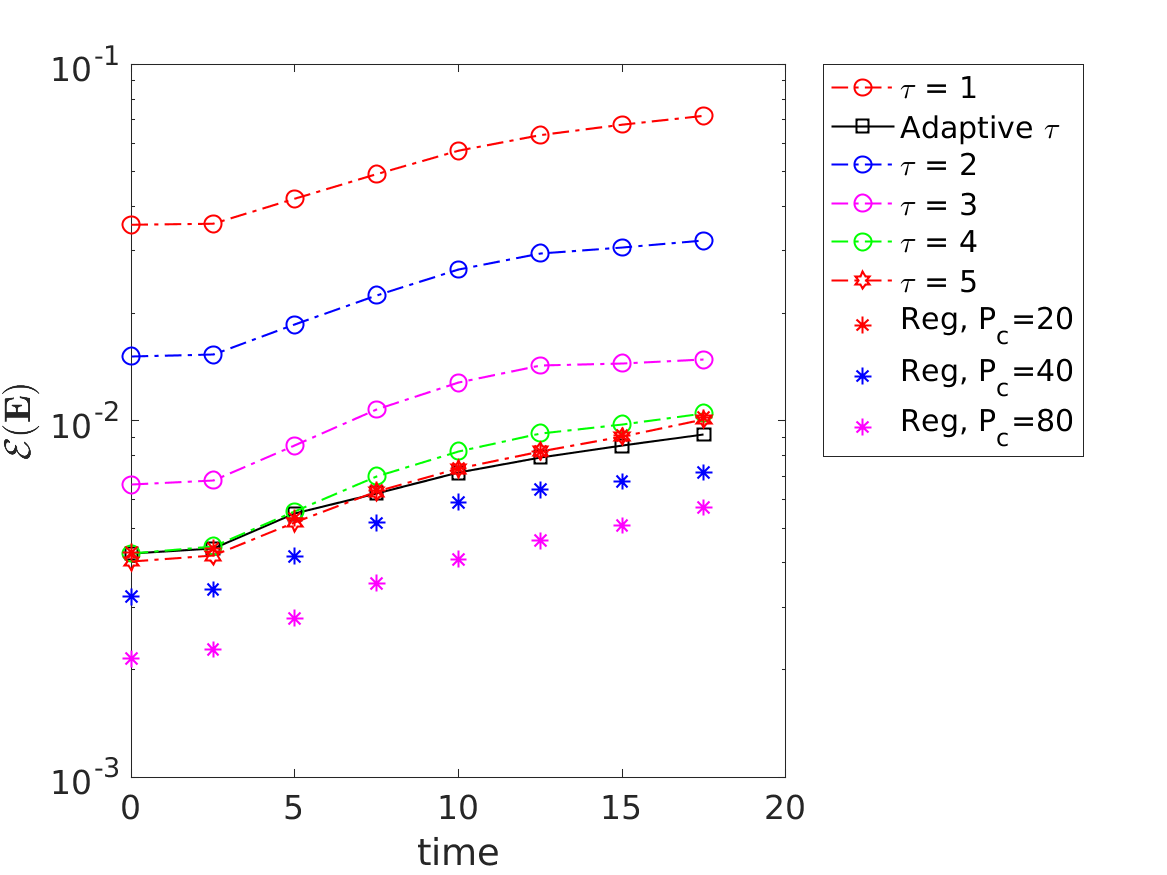}
}
\subfigure[$512^2$, $P_c=20$]{
\includegraphics[width=0.3\columnwidth]{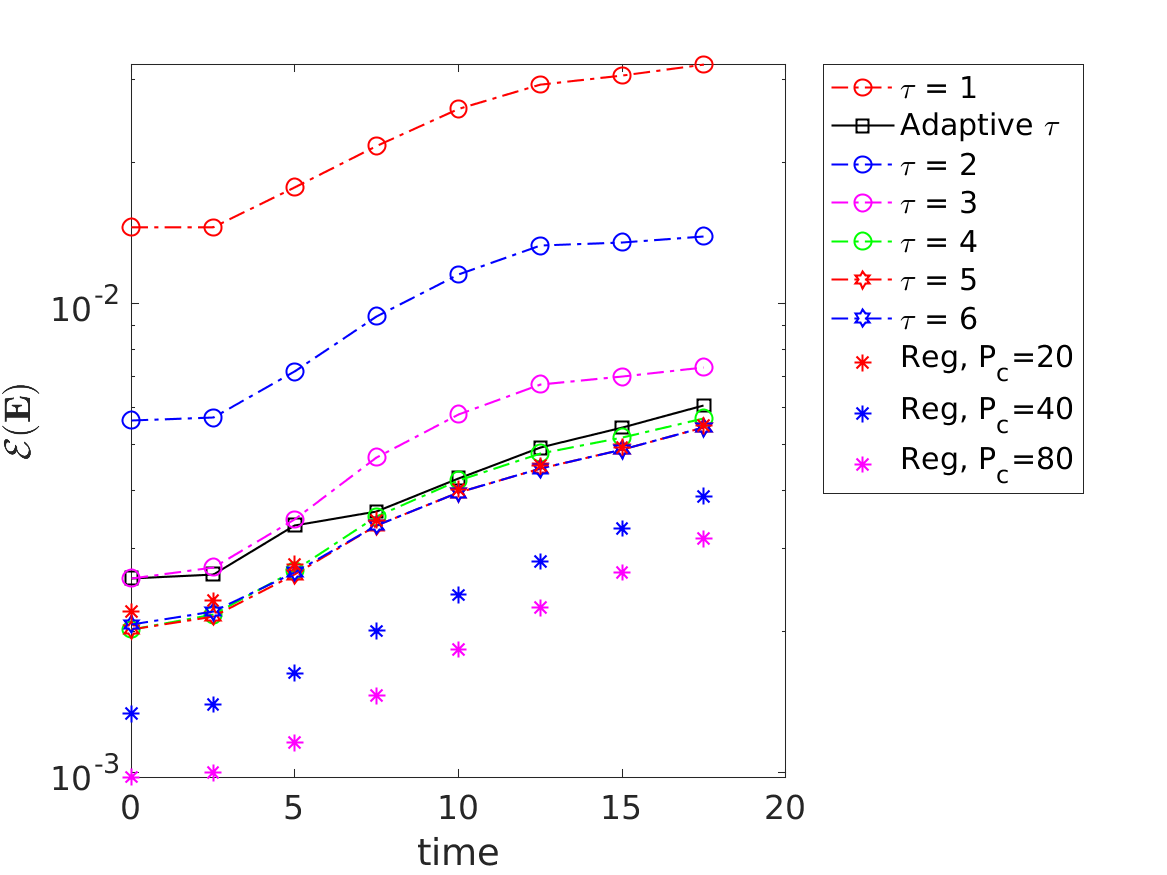}
}
\subfigure[$1024^2$, $P_c=20$]{
\includegraphics[width=0.3\columnwidth]{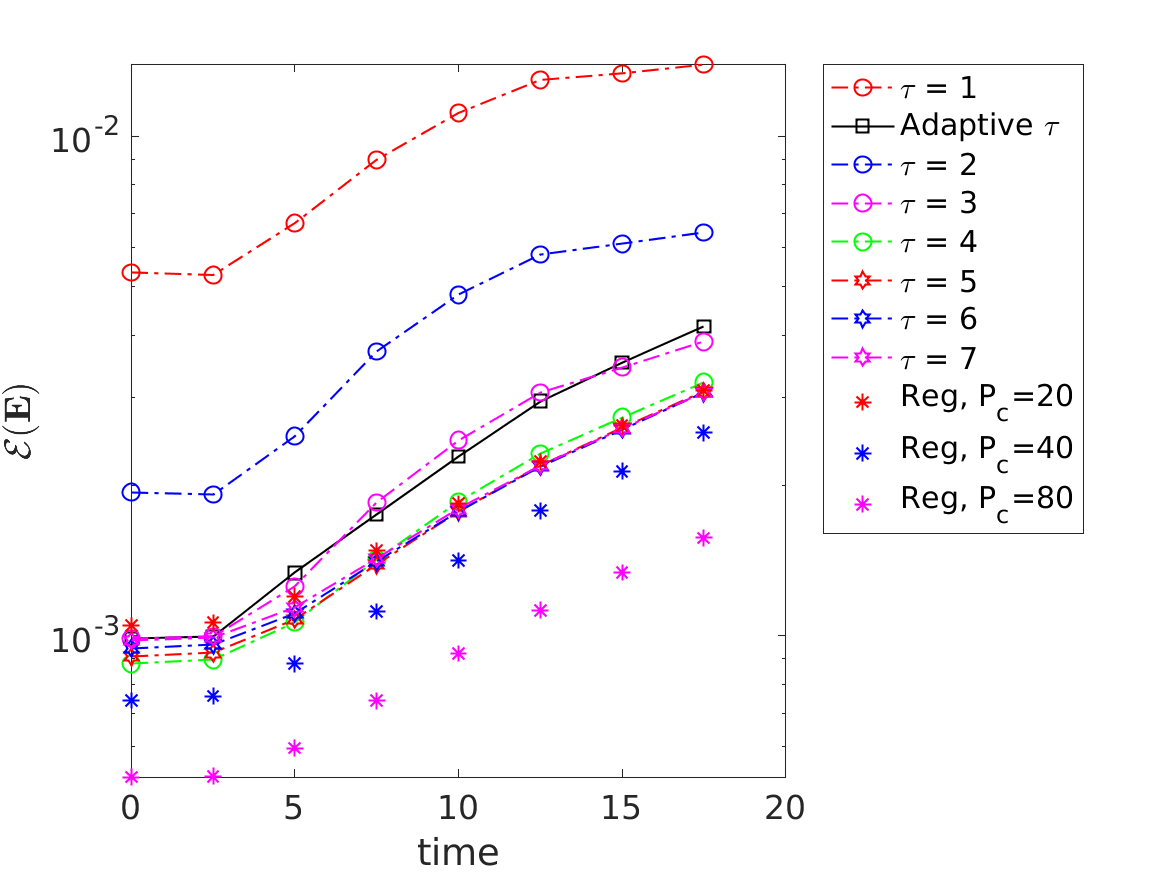}
}
    \caption{2D diocotron instability. Gaussian sampling: Electric field error comparison between regular (Reg), fixed $\tau$ and adaptive $\tau$ PIC.}
\figlab{diocotron_electric_field}
\end{figure}

In Figure \figref{diocotron_electric_field}, the error in the electric field $\Eb$ calculated using equation \eqnref{error_def} is shown 
for the meshes and $P_c$ considered. We can see that the adaptive $\tau$ errors at the best are similar to the regular PIC and in 
some cases it is higher than regular PIC error for the same $P_c$. We also notice that none of the fixed $\tau$ error levels are better than the regular PIC errors. 
The reason for this is as follows: the electric field is obtained by integrating the charge density, and integration is a smoothing operation which 
reduces the particle noise. Since in our adaptive $\tau$ noise reduction algorithm we increase the grid-based error to reduce the particle noise and minimize the total error in the density, this can result in either similar or even an increase in the electric field error as compared to the regular PIC if the 
integration itself is sufficient enough to reduce the noise. High-order shape functions are a promising option to address this limitation as they may reduce the particle 
noise without increasing the grid-based error. We will investigate the combination of high-order shape functions with our algorithm in future work. 

\begin{figure}[h!t!b!]
\subfigure[$256^2$, $P_c=5$]{
\includegraphics[width=0.3\columnwidth]{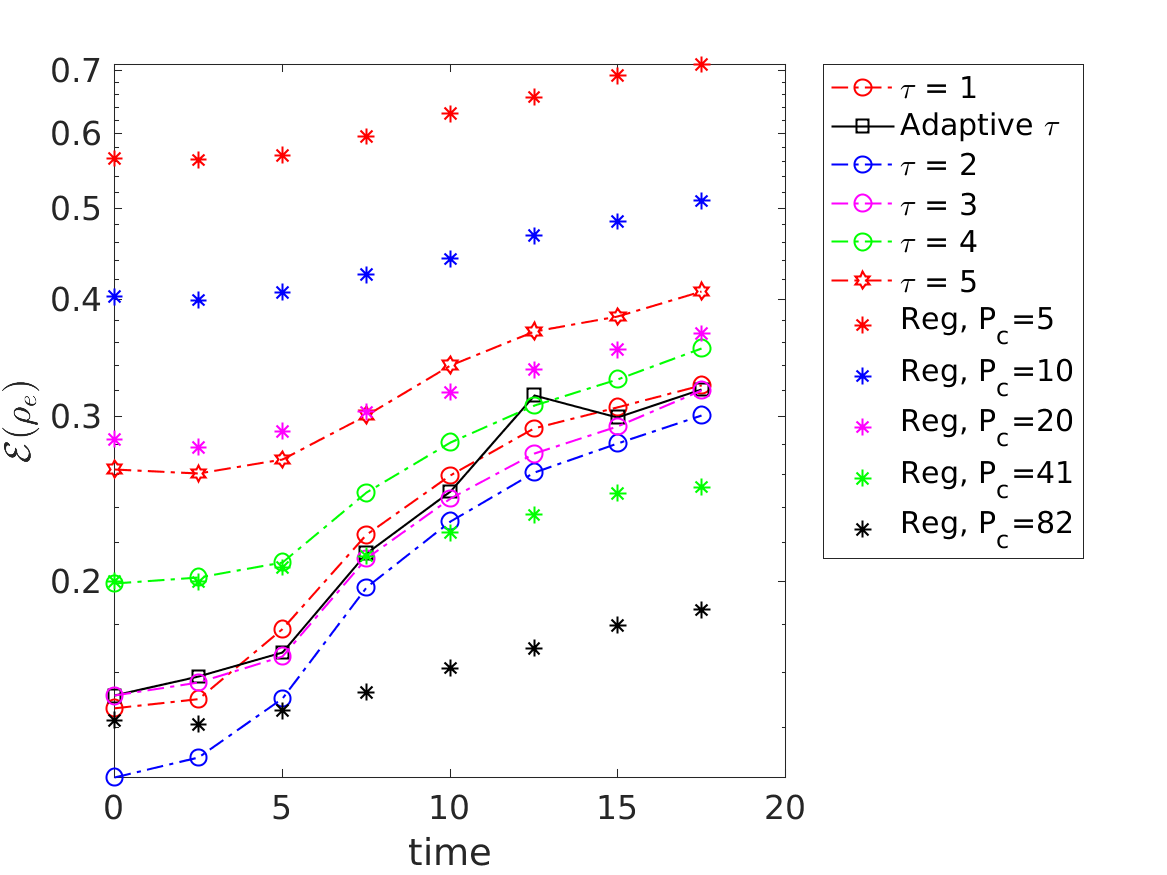}
}
\subfigure[$512^2$, $P_c=5$]{
\includegraphics[width=0.3\columnwidth]{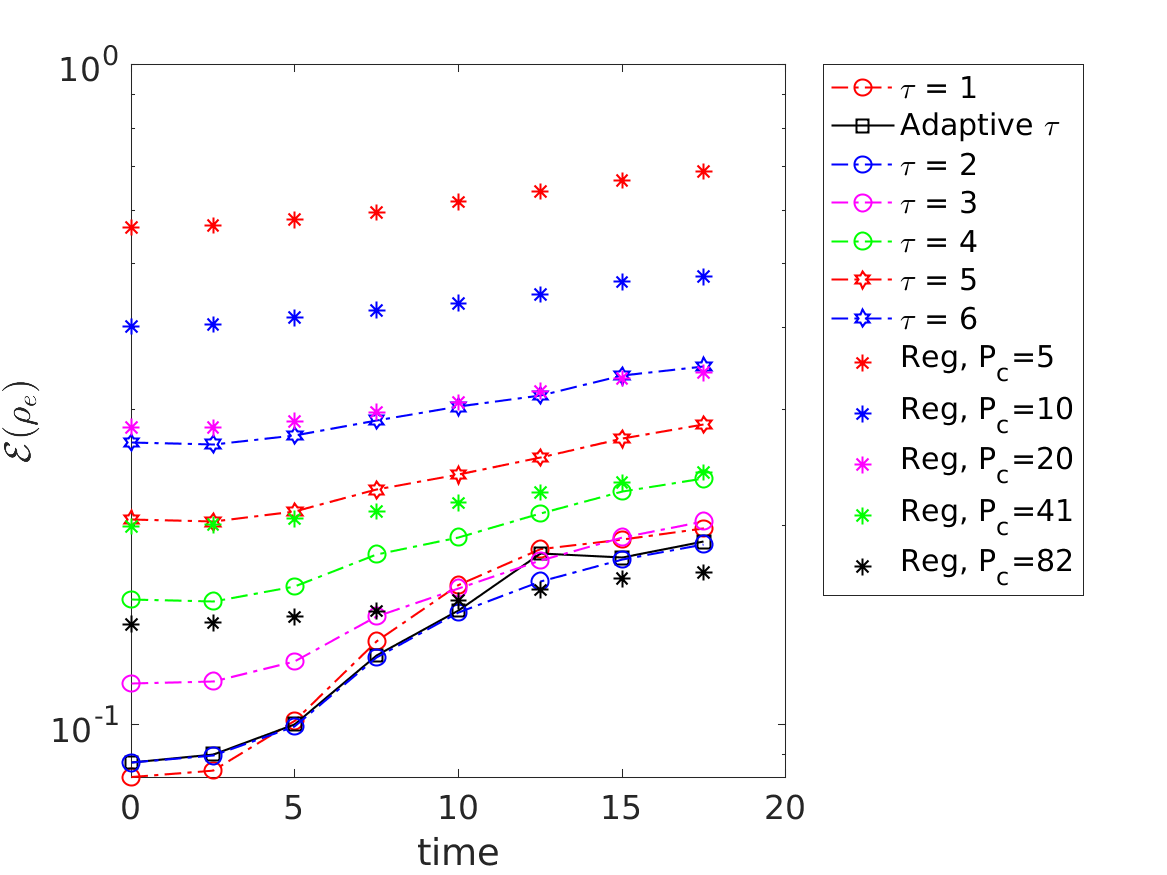}
}
\subfigure[$1024^2$, $P_c=5$]{
\includegraphics[width=0.3\columnwidth]{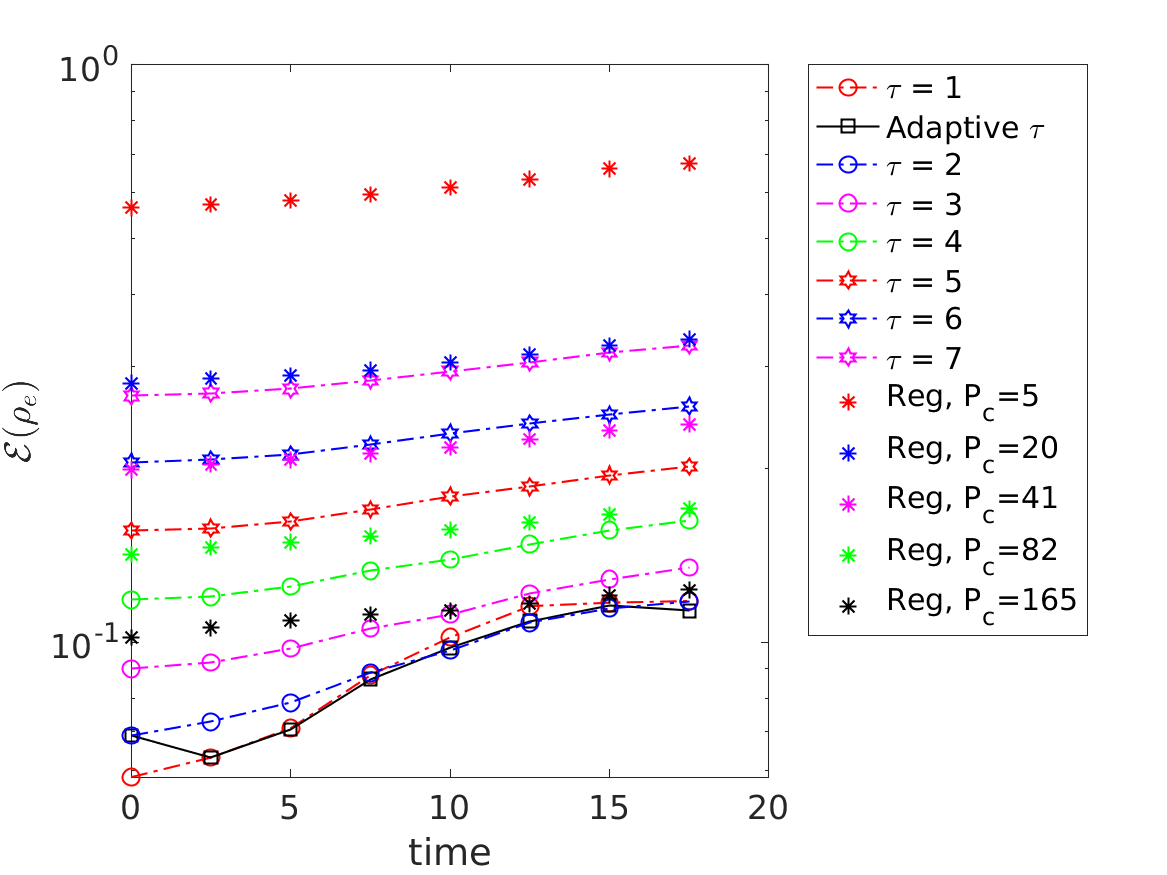}
}
\subfigure[$256^2$, $P_c=10$]{
\includegraphics[width=0.3\columnwidth]{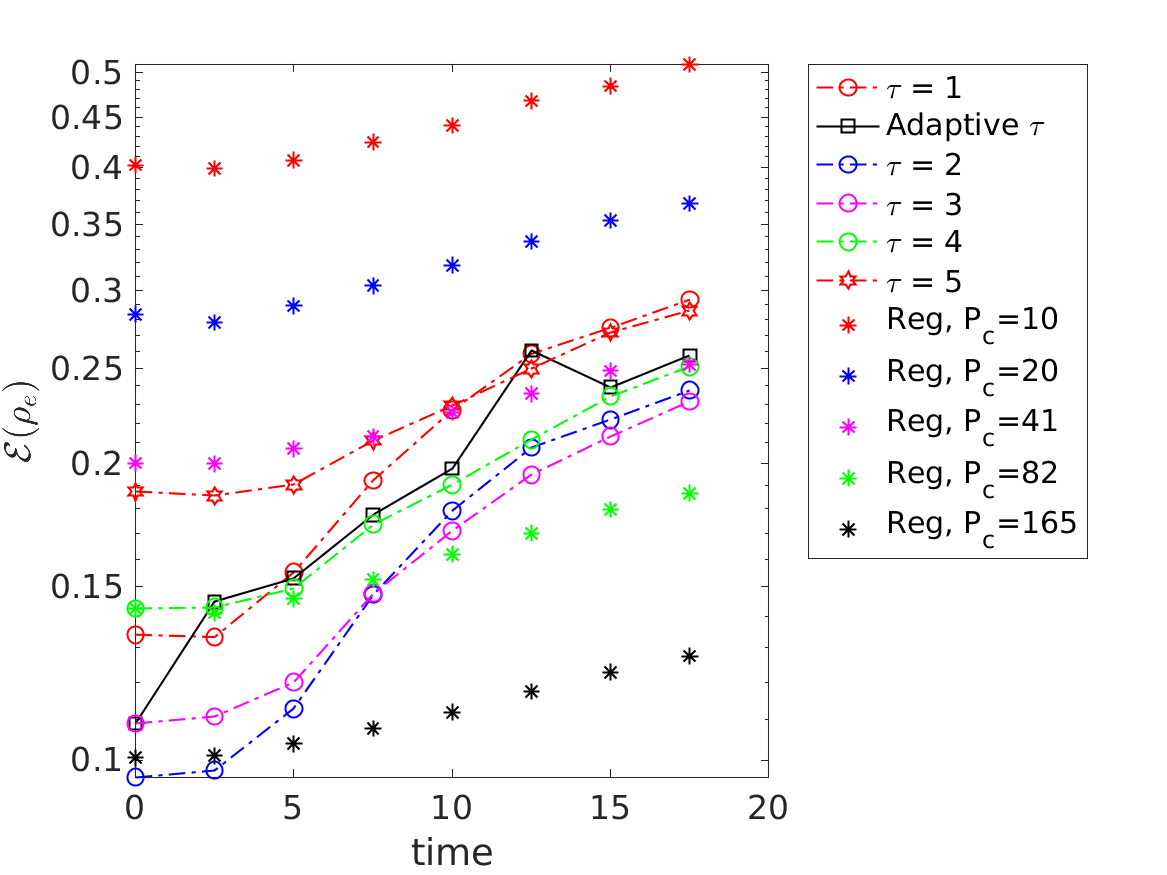}
}
\subfigure[$512^2$, $P_c=10$]{
\includegraphics[width=0.3\columnwidth]{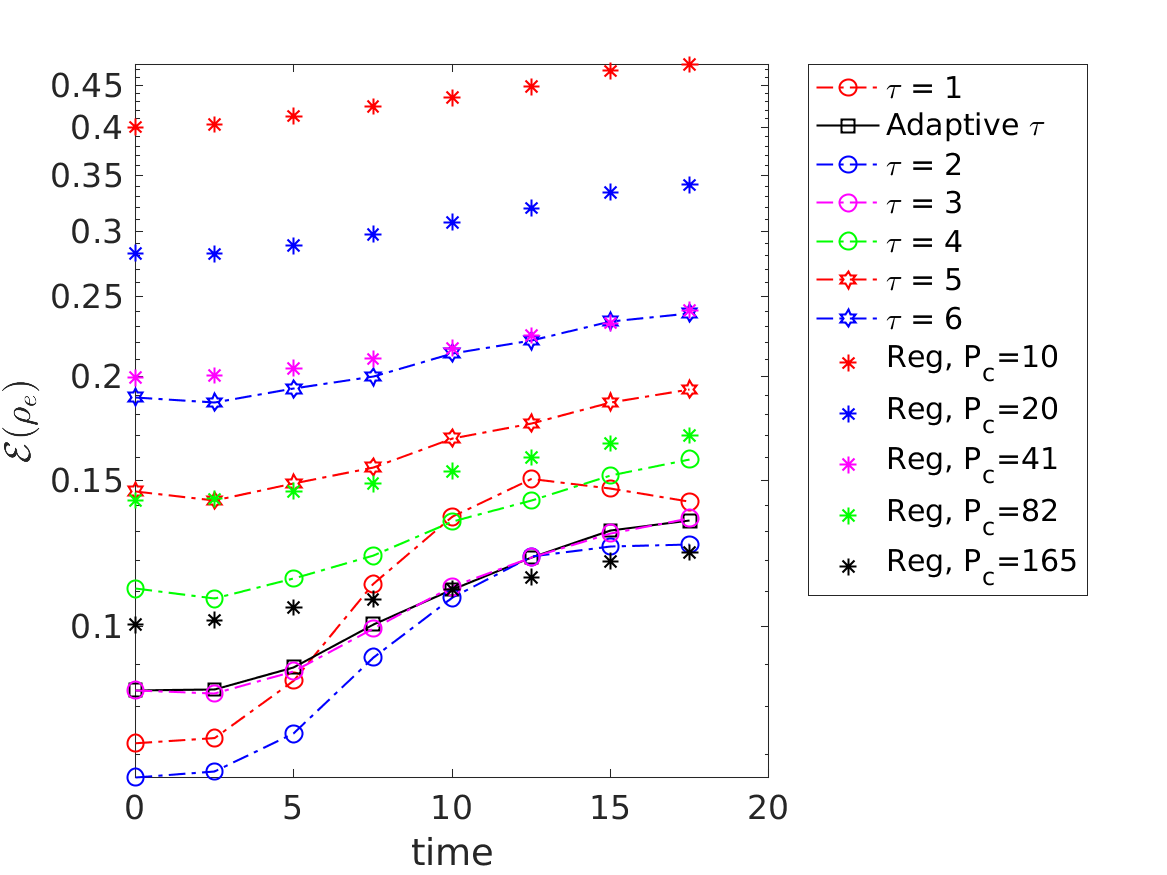}
}
\subfigure[$1024^2$, $P_c=10$]{
\includegraphics[width=0.3\columnwidth]{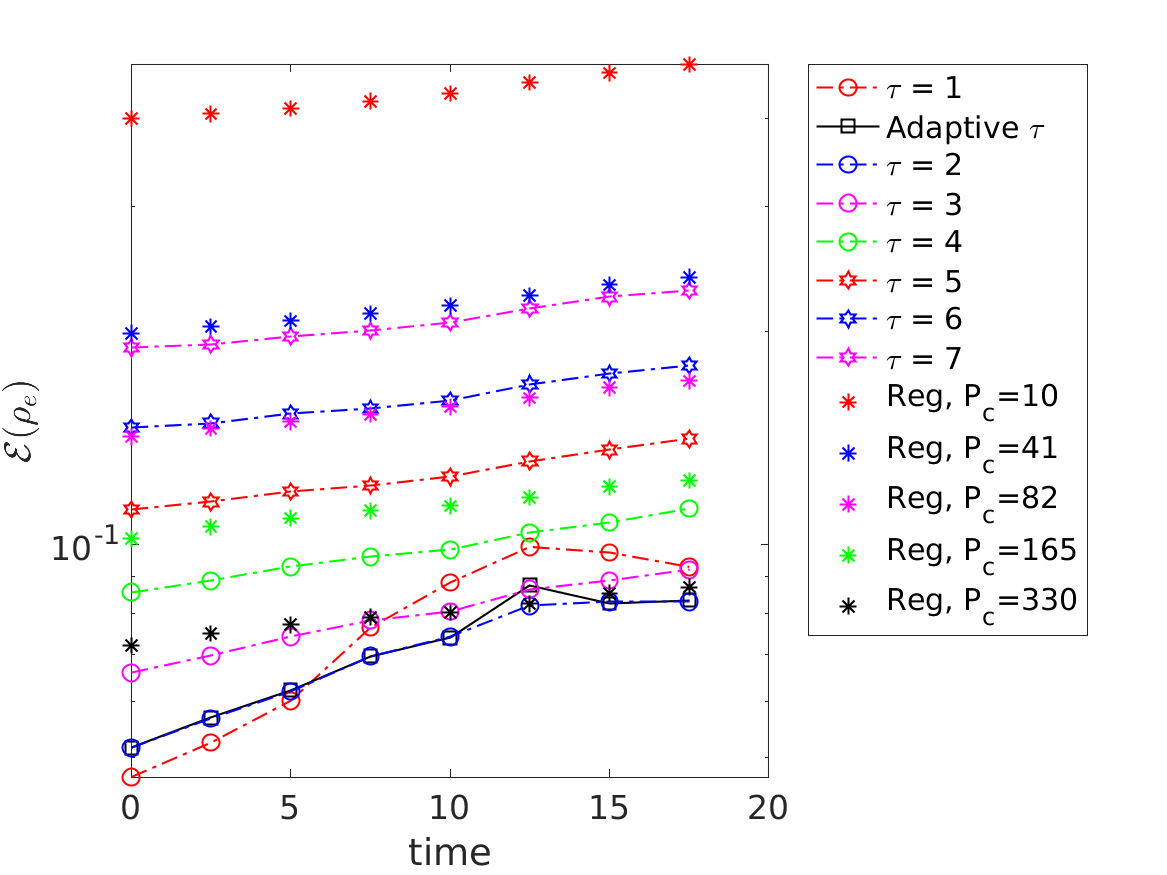}
}
\subfigure[$256^2$, $P_c=20$]{
\includegraphics[width=0.3\columnwidth]{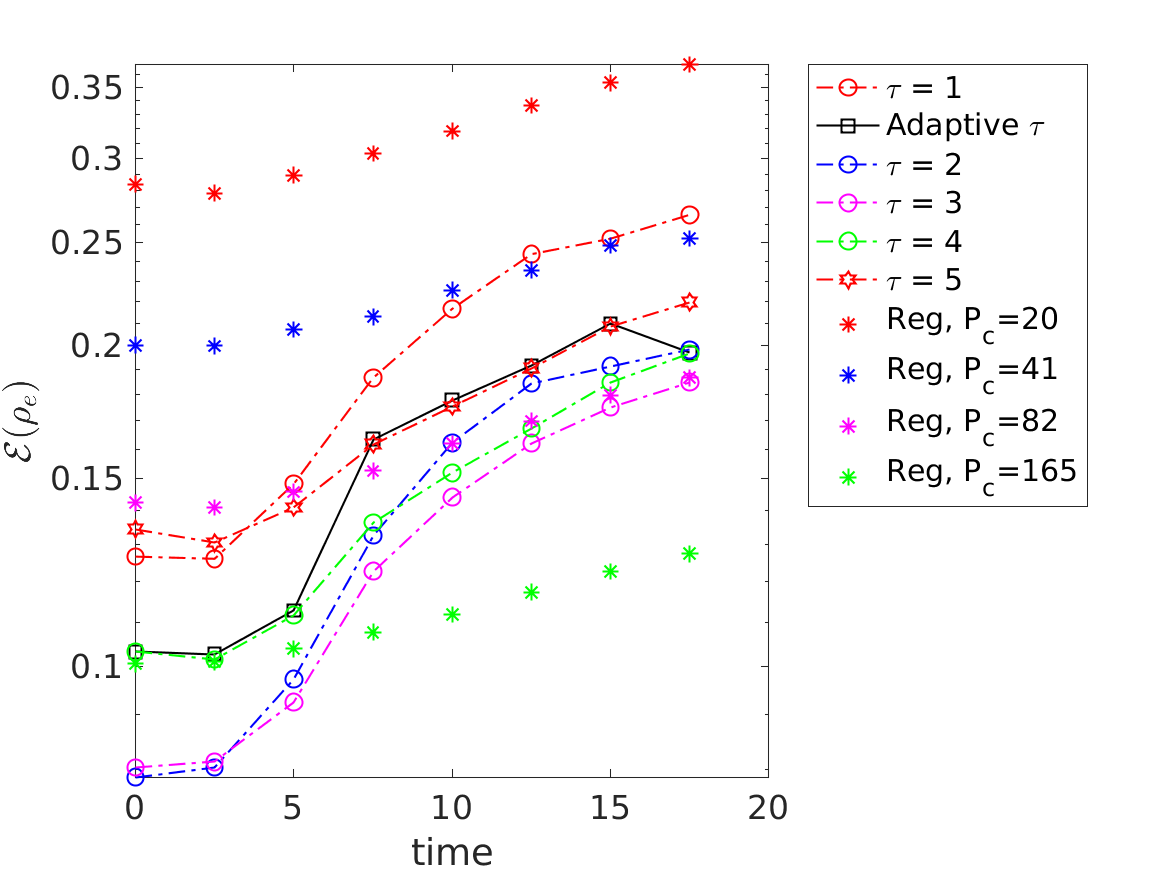}
}
\subfigure[$512^2$, $P_c=20$]{
\includegraphics[width=0.3\columnwidth]{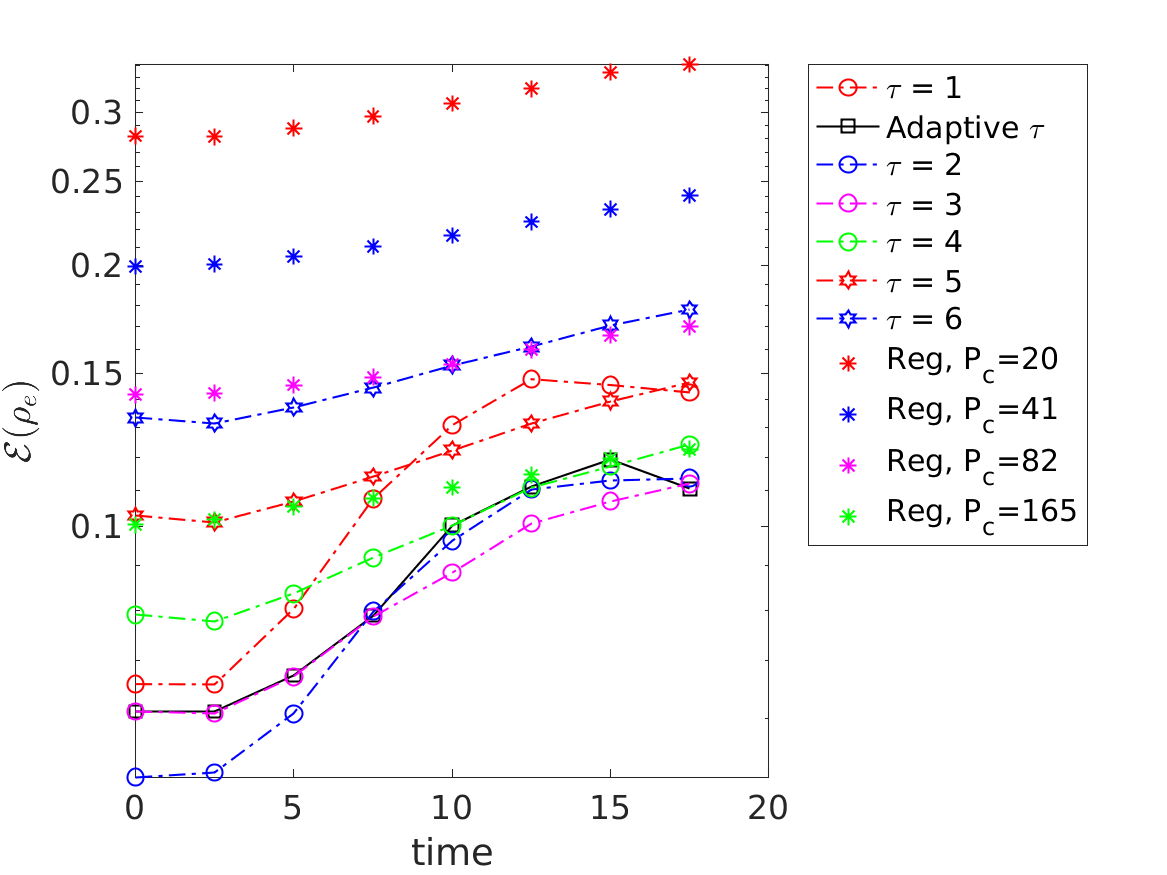}
}
\subfigure[$1024^2$, $P_c=20$]{
\includegraphics[width=0.3\columnwidth]{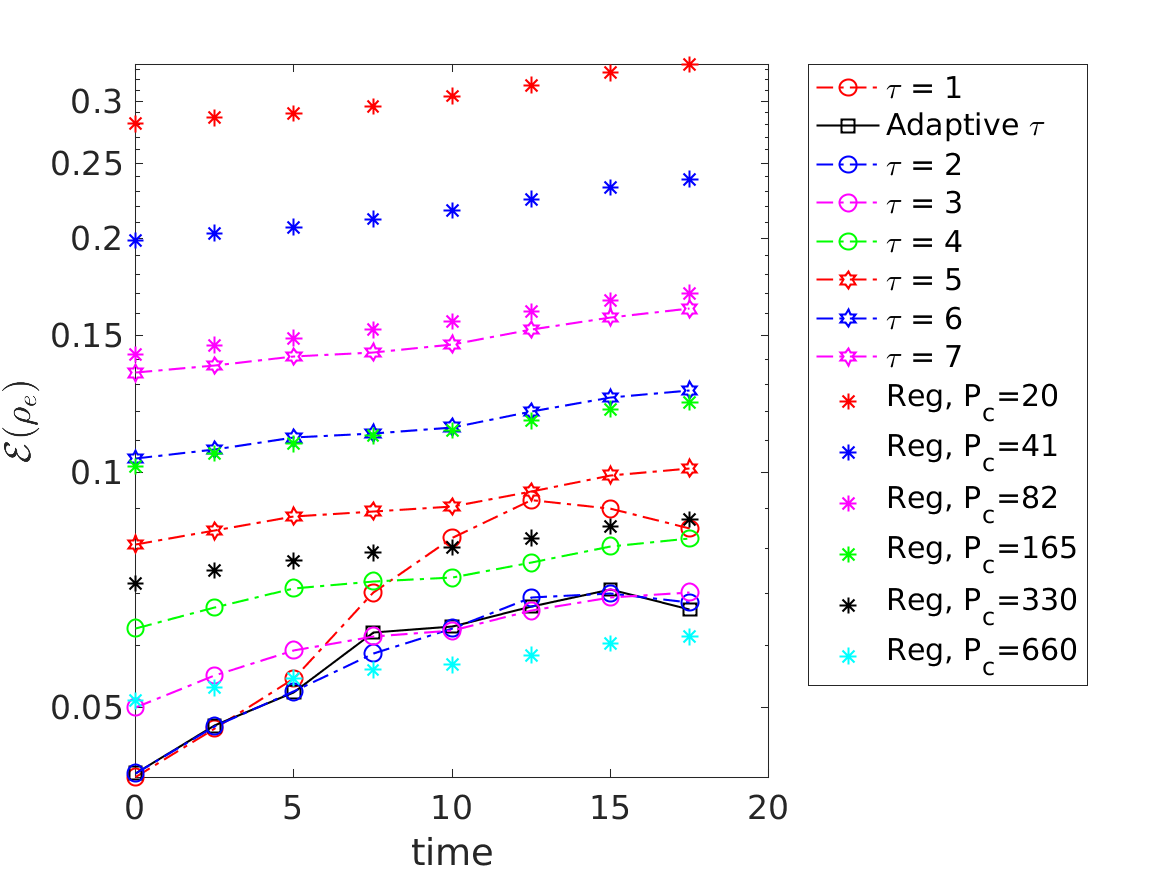}
}
    \caption{2D diocotron instability. Uniform sampling: Electron charge density error comparison between regular (Reg), fixed $\tau$ and adaptive $\tau$ PIC.  The errors for regular PIC with $P_c=330$ and $660$ are
    calculated from that of $P_c=165$ based on the theoretical particle error scaling $1/\sqrt{Pc}$. This is based on the observation that the errors for the regular PIC are in the noise dominated regime.}
\figlab{diocotron_density_uniform}
\end{figure}

\begin{figure}[h!t!b!]
\subfigure[$256^2$, $P_c=5$]{
\includegraphics[width=0.3\columnwidth]{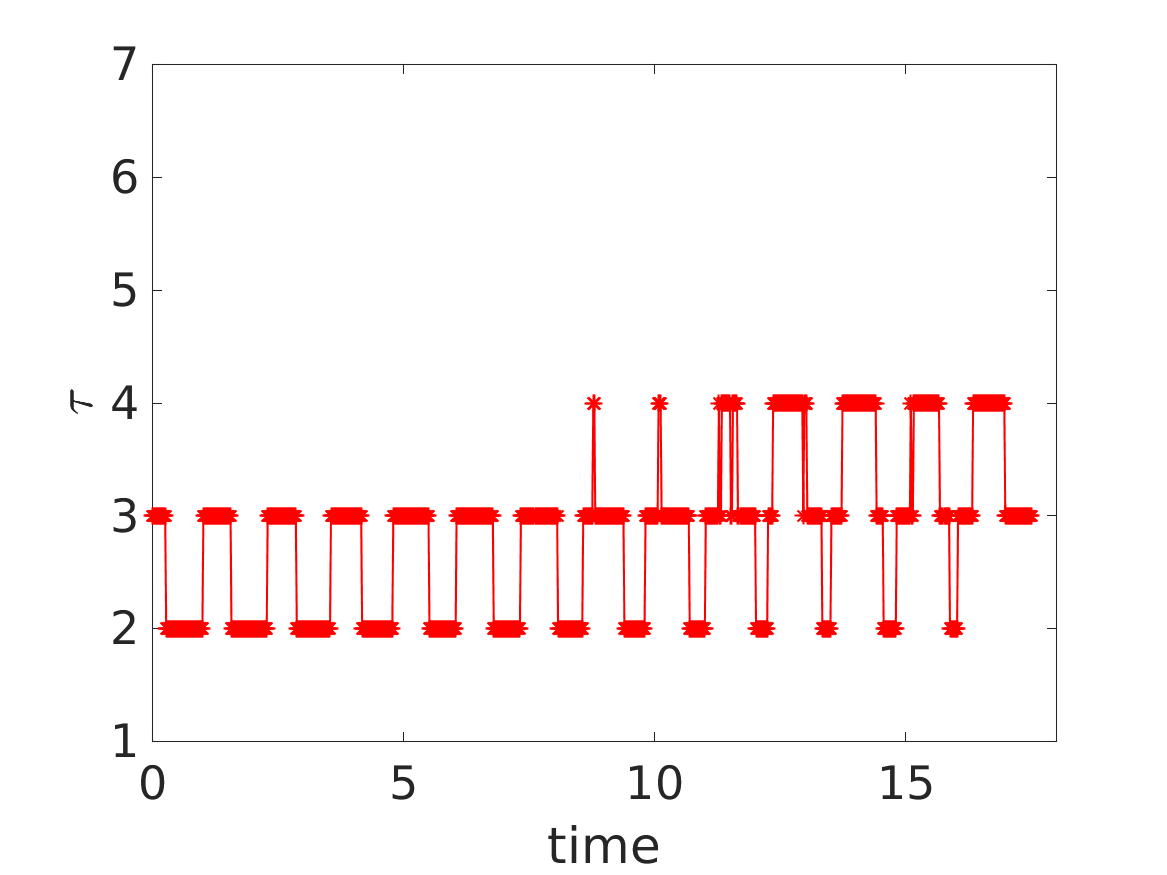}
}
\subfigure[$512^2$, $P_c=5$]{
\includegraphics[width=0.3\columnwidth]{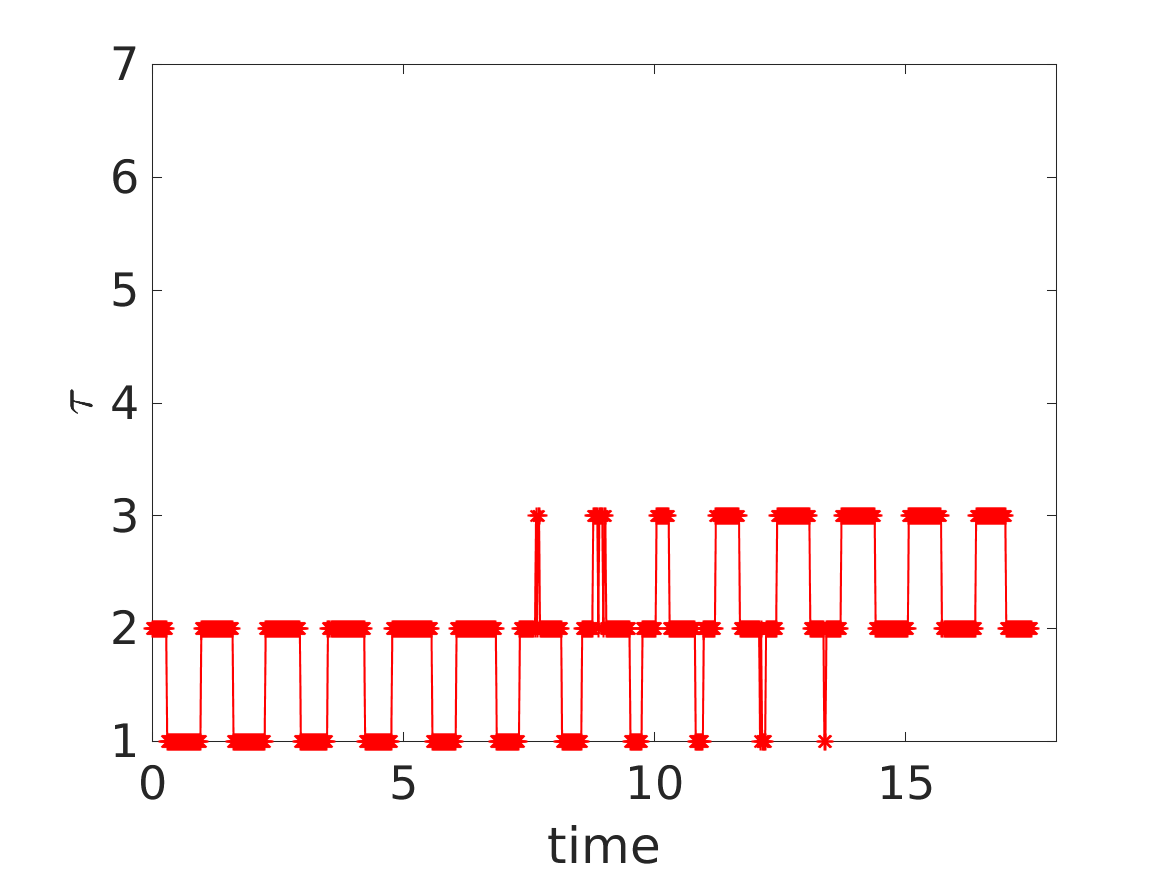}
}
\subfigure[$1024^2$, $P_c=5$]{
\includegraphics[width=0.3\columnwidth]{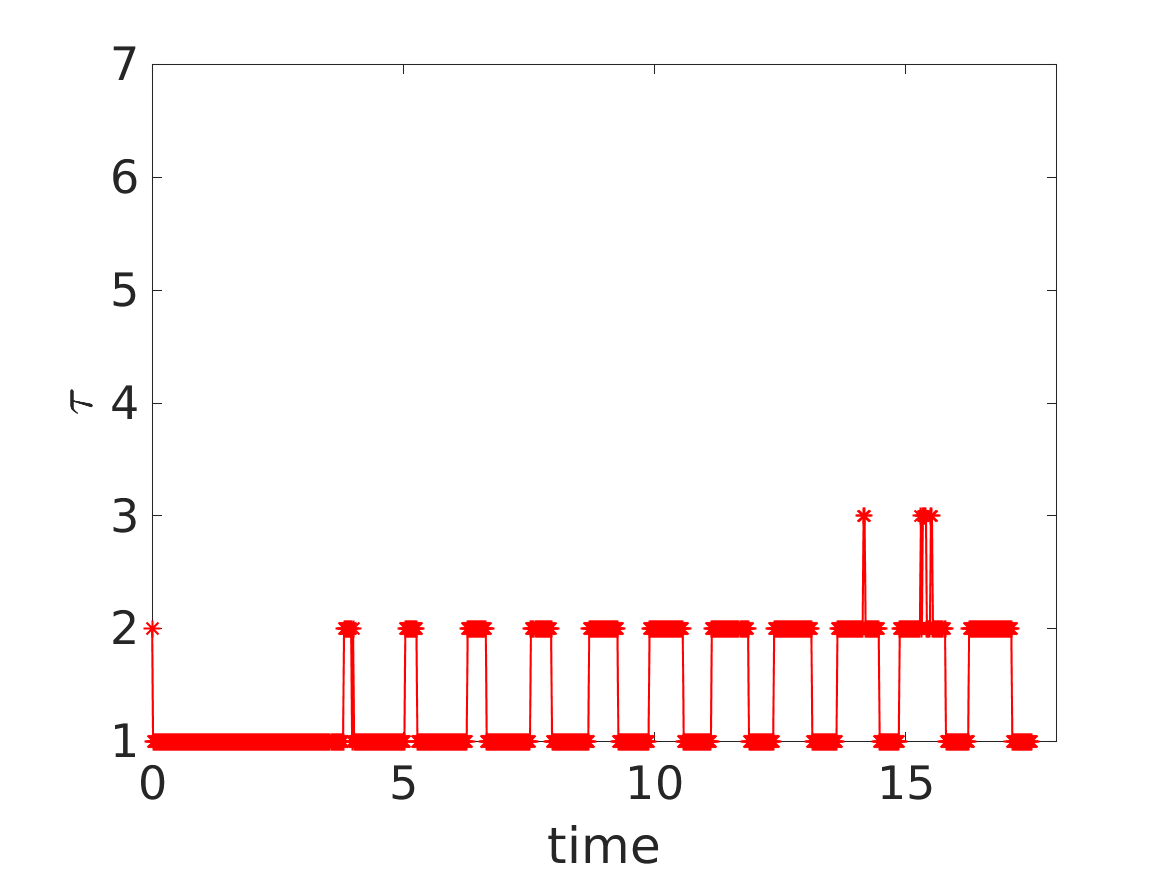}
}
\subfigure[$256^2$, $P_c=10$]{
\includegraphics[width=0.3\columnwidth]{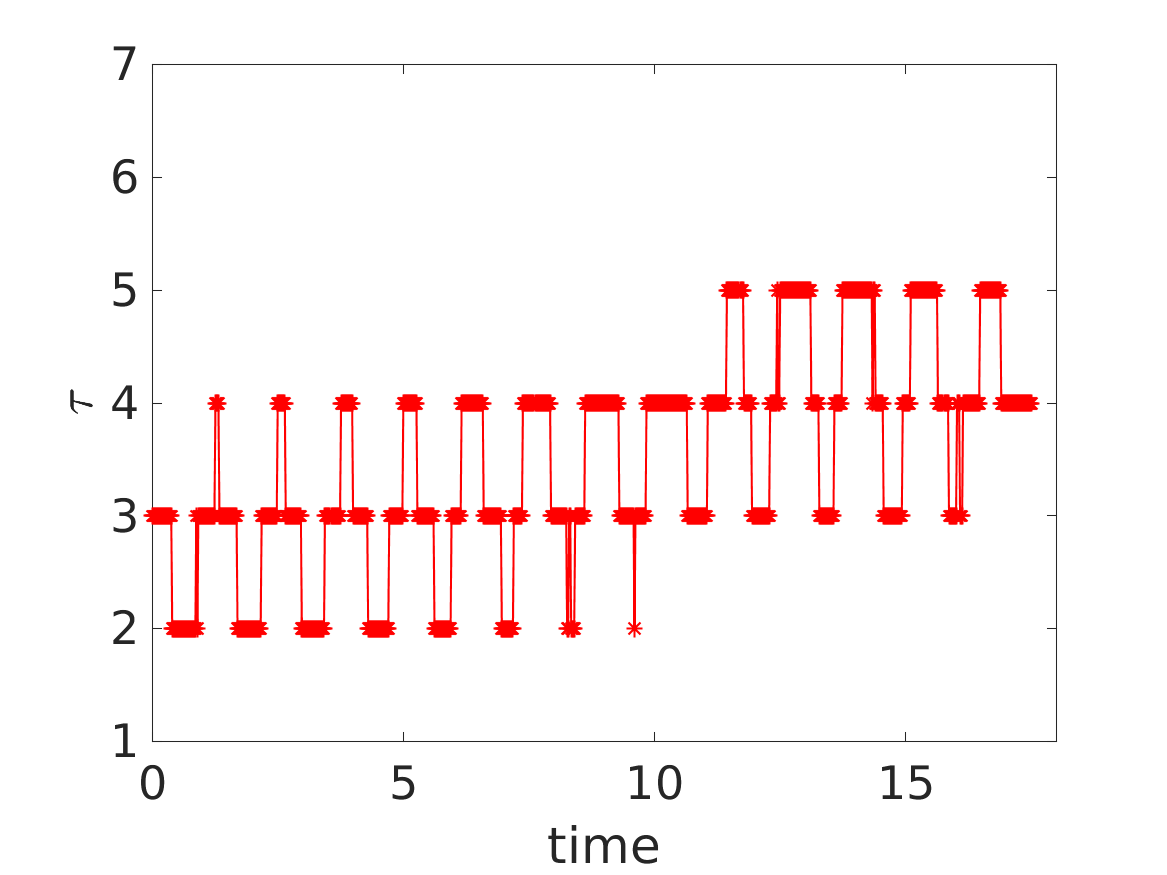}
}
\subfigure[$512^2$, $P_c=10$]{
\includegraphics[width=0.3\columnwidth]{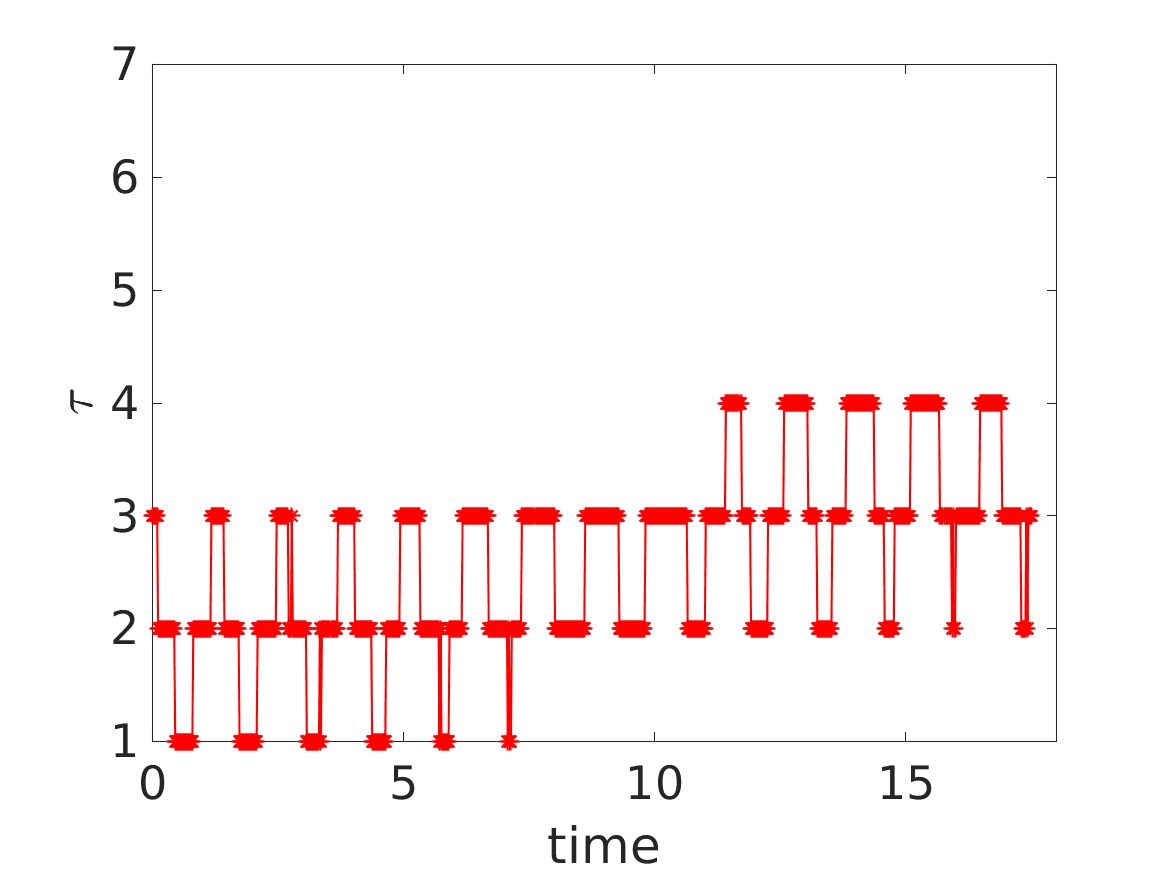}
}
\subfigure[$1024^2$, $P_c=10$]{
\includegraphics[width=0.3\columnwidth]{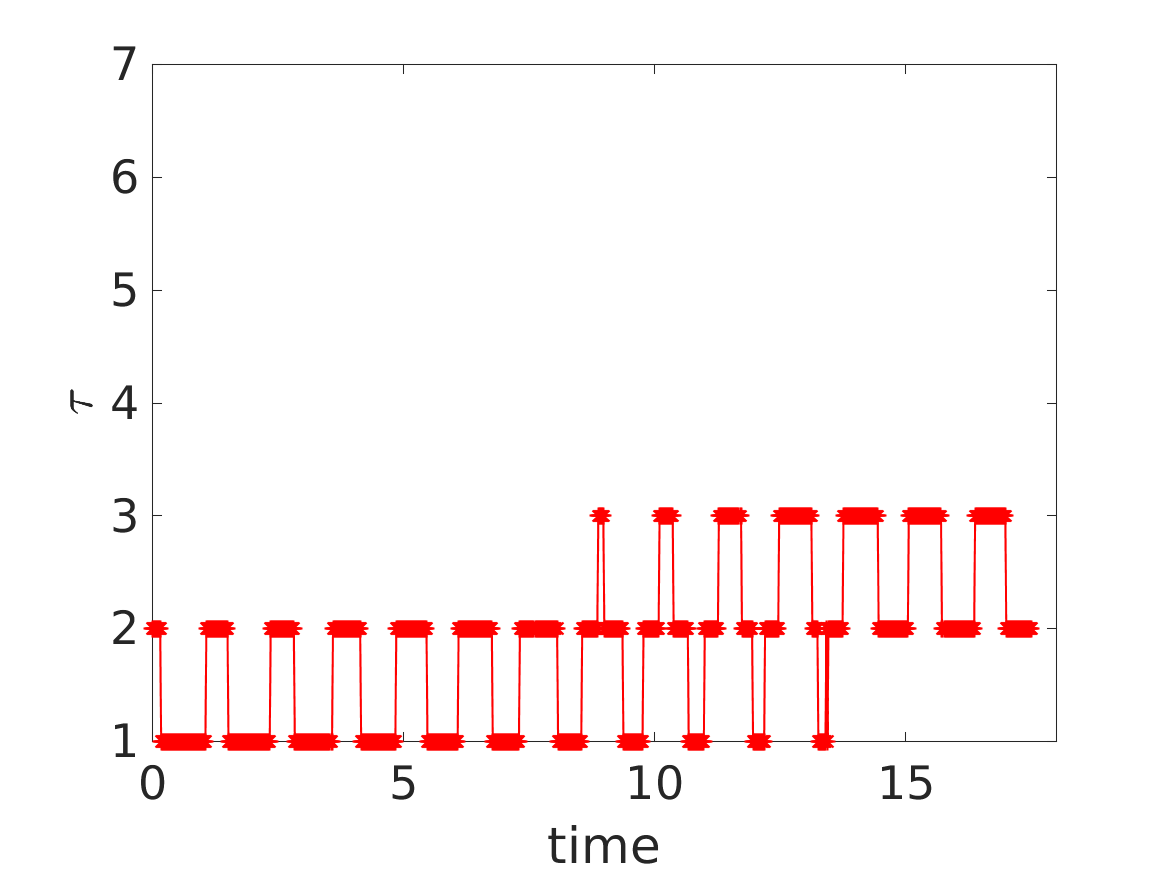}
}
\subfigure[$256^2$, $P_c=20$]{
\includegraphics[width=0.3\columnwidth]{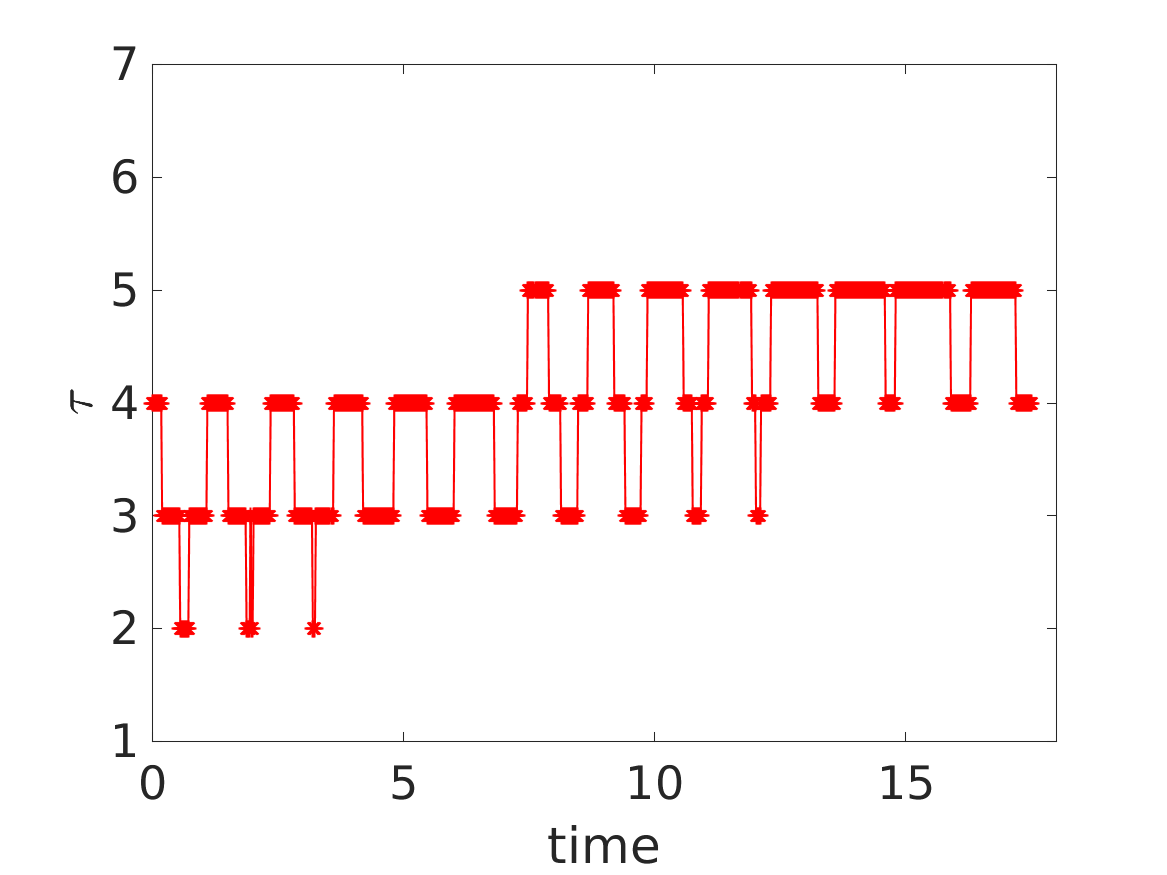}
}
\subfigure[$512^2$, $P_c=20$]{
\includegraphics[width=0.3\columnwidth]{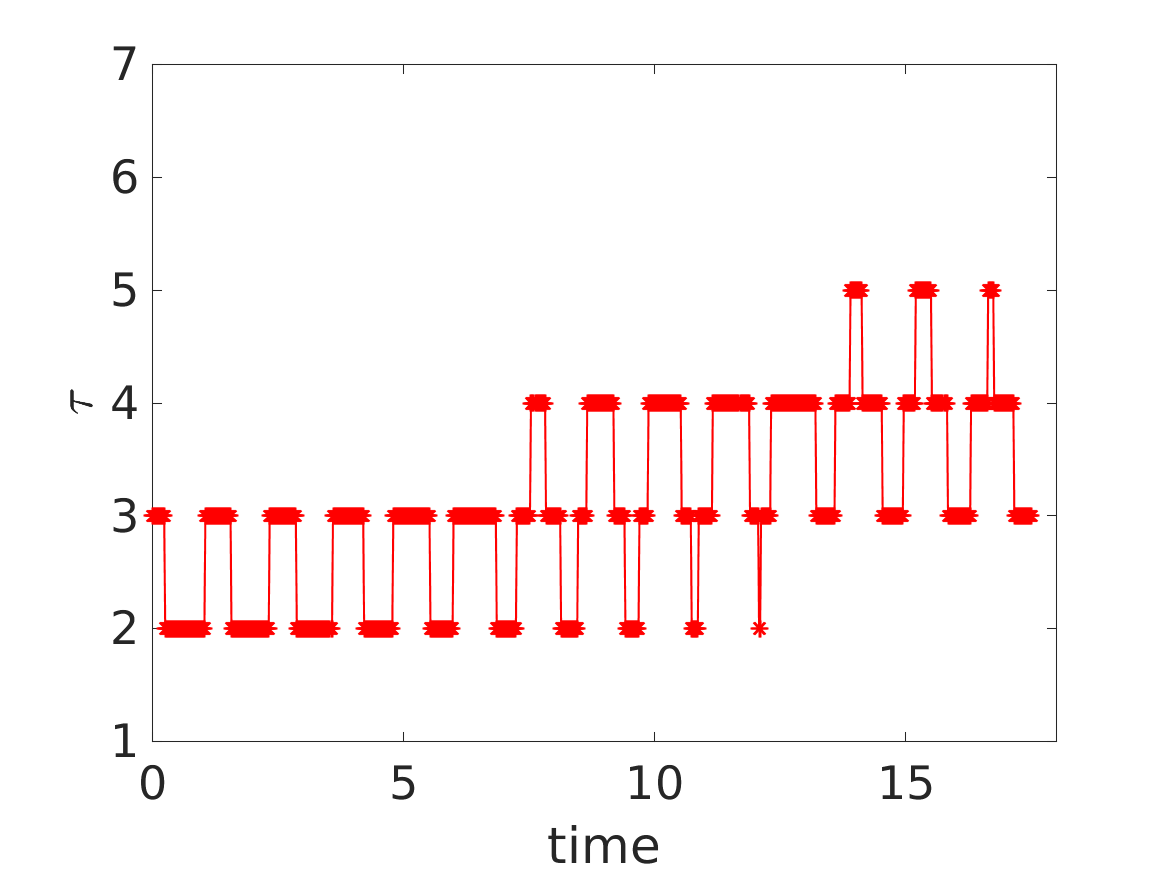}
}
\subfigure[$1024^2$, $P_c=20$]{
\includegraphics[width=0.3\columnwidth]{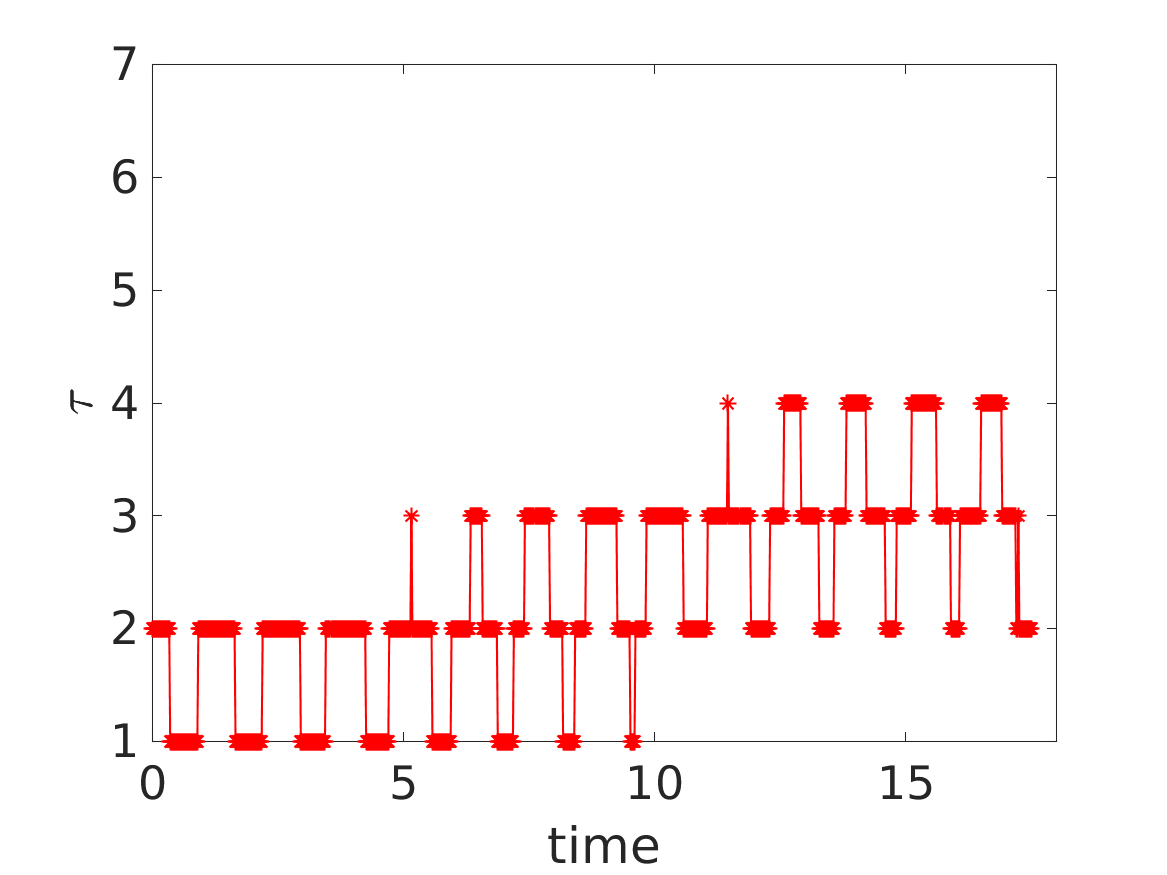}
}
    \caption{2D diocotron instability. Uniform sampling: Time history of $\tau$ for different mesh sizes and number of particles per cell $P_c$.}
\figlab{diocotron_tau_history_uniform}
\end{figure}

Having studied the adaptivity of the algorithm with respect to mesh size, $P_c$ and time (smoothness of the function) we will now
consider a different initial sampling technique and evaluate the performance. To that end, we sample the $f$ in equation 
\eqnref{dist_diocotron} with a uniform distribution in all the variables. The range for the velocity variables is chosen as $[-6,6]$ 
while for the configuration space it is $[0,L]$. Note that unlike the Gaussian sampling described earlier, with this sampling each particle 
will have a different constant charge $q_e$ \cite{aydemir1994unified} to match the distribution. But the charge to mass ratio is the same for all the 
particles. Similar to \cite{wang2011particle}, we ignore particles 
with weights less than $1.0\times10^{-9}$. This naive Monte-Carlo sampling strategy, as we will see in the results for this example, has higher noise levels than the Gaussian sampling. However it can be useful in scenarios where we do not know of an importance sampling technique to sample the distribution 
at hand.
  
Owing to higher noise levels in the sampling strategy, we needed $(P_c)_{ref}=5$ and $\alpha=0.03$ for the calculation of the denoising threshold. 
Figure \figref{diocotron_density_uniform} shows that in general, except for the coarsest mesh size $256^2$, the adaptive $\tau$ algorithm performs well in this case also in terms of picking a nearly optimal $\tau$ for most of the cases. The time history of $\tau$ in Figure \figref{diocotron_tau_history_uniform} shows that as expected, the optimal $\tau$ values for this
case are lower than that for the Gaussian sampling in Figure \figref{diocotron_tau_history_gauss} due to higher noise levels.

\begin{table}[h!b!t!]
\centering
\begin{tabular}{|r|c|c|c||c|c|c|}
\hline
    & \multicolumn{3}{c||}{Gaussian sampling} & \multicolumn{3}{c|}{Uniform sampling}\\
\cline{2-7}
    \!\!\! Mesh \!\!\!\! & \multicolumn{3}{c||}{\!\!\scriptsize  $P_c$ \!\!} & \multicolumn{3}{c|}{\!\!\scriptsize  $P_c$ \!\!} \\
\hline
    & \scriptsize 5 & \scriptsize 10 & \scriptsize 20 & \scriptsize 5 & \scriptsize 10 & \scriptsize 20 \\
\cline{2-7}
    $256^2$ & 43.9 & 47.7 & 55.5 & 42.5 & 44.7 & 50.3 \\
    $512^2$ & 92.2 & 110 & 146 & 85 & 96 & 119 \\
    $1024^2$ & 435 & 501.7 & 654 & 394.7 & 438.5 & 544 \\
\hline
\end{tabular}
\caption{\label{tab:diocotron_runtime_adaptive}2D diocotron instability. Adaptive $\tau$ PIC: Total run time in seconds on 64 cores for different mesh sizes and number of particles per cell for the Gaussian and uniform sampling techniques.}
\end{table}

\begin{table}[h!b!t!]
\centering
    \begin{tabular}{|r|c|c|c|c|c|c|}
\hline
\cline{2-7}
        \!\!\! Mesh \!\!\!\! &  \multicolumn{3}{c|}{\!\!\scriptsize  Regular PIC \!\!} & \multicolumn{3}{c|}{\!\!\scriptsize Reg/adaptive $\tau$\!\!}\\
\hline
\cline{2-7}
        $256^2$ & 51 ($20$) & 51 ($20$)  & 104.7 ($80$) & 1.2 & 1.1 & 1.9 \\
        $512^2$ & 201.7 ($40$) & 364.3 ($80$) & 709 ($160$) & 2.2 & 3.3 & 4.8 \\
        $1024^2$ & 1543.8 ($80$) & 2911.1 ($160$) &  5857 ($320$) & 3.5 & 5.8 & 8.9\\
\hline
\end{tabular}
    \caption{\label{tab:diocotron_runtime_gauss}2D diocotron instability. Gaussian sampling: Columns $2-4$ are the  total run time in seconds taken 
    by the regular PIC on 64 cores for different mesh sizes and number of particles per cell (within parentheses) to reach a comparable accuracy (based on visual norm from the left columns of Figures \figref{diocotron_pc5_gauss}-\figref{diocotron_pc20_gauss}) in charge density 
    that of the adaptive $\tau$ results in Table \ref{tab:diocotron_runtime_adaptive} at time $T=17.5$. Columns $5-7$ are the ratio of time 
    taken by regular PIC to the values in columns $2-4$ of Table \ref{tab:diocotron_runtime_adaptive} for adaptive $\tau$ PIC.}
\end{table}

\begin{table}[h!b!t!]
\centering
    \begin{tabular}{|r|c|c|c|c|c|c|}
\hline
\cline{2-7}
    \!\!\! Mesh \!\!\!\! &  \multicolumn{3}{c|}{\!\!\scriptsize  Regular PIC \!\!} & \multicolumn{3}{c|}{\!\!\scriptsize  Reg/adaptive $\tau$ \!\!}\\
\hline
\cline{2-7}
        $256^2$ & 45.7 ($20$) & 56.1 ($41$)  & 80.1 ($82$) & 1.1 & 1.2 & 1.6 \\
        $512^2$ & 249.3 ($82$) & 462.9 ($165$) & 462.9 ($165$) & 2.9 & 4.8 & 3.9 \\
        $1024^2$ & 1926 ($165$) & 3665.4 ($330$) &  6964.3 ($660$) & 4.9 & 8.3 & 12.8\\
\hline
\end{tabular}
    \caption{\label{tab:diocotron_runtime_uniform}2D diocotron instability. Uniform sampling: Columns $2-4$ are the total run time in seconds taken by the regular PIC on 64 cores for different mesh sizes and number of particles per cell (within parentheses) to reach a comparable accuracy (based on visual norm from Figure \figref{diocotron_density_uniform}) in charge density that of the adaptive $\tau$ results in Table \ref{tab:diocotron_runtime_adaptive} at time $T=17.5$. Columns $5-7$ are the ratio of time 
    taken by regular PIC to the values in columns $5-7$ of Table \ref{tab:diocotron_runtime_adaptive} for adaptive $\tau$ PIC. The timing reported for $P_c=660$ and hence the speedup corresponding to it is a theoretically extrapolated one obtained by multiplying the timing for $P_c=330$ with a factor $1.9$ which is the ratio between timings for $P_c=330$ to $P_c=165$. 
 }
\end{table}

            Finally, we perform a preliminary run time performance study to see the effectiveness of the current approach in comparison to the regular PIC.
            To that extent, we note that we did not perform any optimization to both the regular PIC as well as the adaptive $\tau$ PIC routines. Optimization of different components involved in the algorithm as well as a thorough parallel performance study is left for future work. In Table \ref{tab:diocotron_runtime_adaptive} the total run time in seconds is shown for the adaptive $\tau$
        PIC on 64 cores for the mesh sizes, $P_c$ and sampling techniques considered before. All the timings reported are the average of three runs performed. 
In Tables \ref{tab:diocotron_runtime_gauss} and \ref{tab:diocotron_runtime_uniform}, we compare the adaptive $\tau$ PIC timings with the
        timings for the regular PIC with the $P_c$ value required to reach a comparable accuracy in charge density as that of the adaptive $\tau$ results at final time $T=17.5$. The approximate $P_c$ values within parentheses are obtained from Figures \figref{diocotron_pc5_gauss}-\figref{diocotron_pc20_gauss} and Figure \figref{diocotron_density_uniform} based on visual examination. 
Even in this preliminary performance study, we can see that the adaptive $\tau$ strategy can provide significant speedups up to an order of magnitude compared to the regular PIC for similar accuracy in charge density. In terms of memory storage, the benefits are even
        more pronounced. Using the number of particles $N_p$ as a measure of the dominant memory cost (for PIC methods this is usually the case) we see $\approx2-16$ times memory reduction (in case of Gaussian sampling) and $\approx4-33$ times memory reduction (in case of uniform sampling) with adaptive $\tau$ PIC compared to regular PIC. 

\subsection{3D Penning trap}

    In this section we will consider a 3D Penning trap problem as the test case. Penning traps are storage devices for charged particles, which uses a quadrupole electric field to confine the particles axially and a
    homogeneous axial magnetic field to confine the particles in the
    radial direction \cite{brown1986geonium,blaum2010penning}. The evolution
    of the density in this problem (see Figure \figref{penning_density}) is very similar to that observed in cyclotrons \cite{adam1995space,yang2010beam}. Thus this test case is very relevant to our ultimate goal of high precision large-scale simulation of cyclotrons. The fine scale structures developed in this problem pose challenges for the sparse
    grids similar to the diocotron case in the previous section.

    The parameters for this test case are as follows. The length of the periodic box is $L=20$. The external magnetic field is given by $\B_{ext}=\LRc{0,0,5}$ and the quadrupole external electric field by 
    \[
        \Eb_{ext} = \LRc{-\frac{15}{L}\LRp{x-\frac{L}{2}},-\frac{15}{L}\LRp{y-\frac{L}{2}},\frac{30}{L}\LRp{z-\frac{L}{2}}}.
    \]

    For the initial conditions, we sample the phase space using a Gaussian
    distribution in all the variables. The mean and standard deviation for
    all the velocity variables is $0$ and $1$ respectively. While the mean
    for all the configuration space variables is $L/2$ the standard 
    deviations are $0.15L$, $0.05L$ and $0.2L$ for $x$, $y$ and $z$ respectively. The total electron charge is $Q_e=-1562.5$, and the charge of each 
    particle is $q_e=\frac{Q_e}{N_p}$. 

    The denoising parameters are taken as $(P_c)_{ref}=1$ and $\alpha=0.005$ for this problem with the above mentioned sampling. 
     The time step is chosen as $\Delta t=0.05$ and the simulations are run till final time $T=15$.

\begin{figure}[h!t!b!]
\subfigure[time=0]{
\includegraphics[trim=0cm 7cm 15cm 0cm,clip=true,width=0.3\columnwidth]{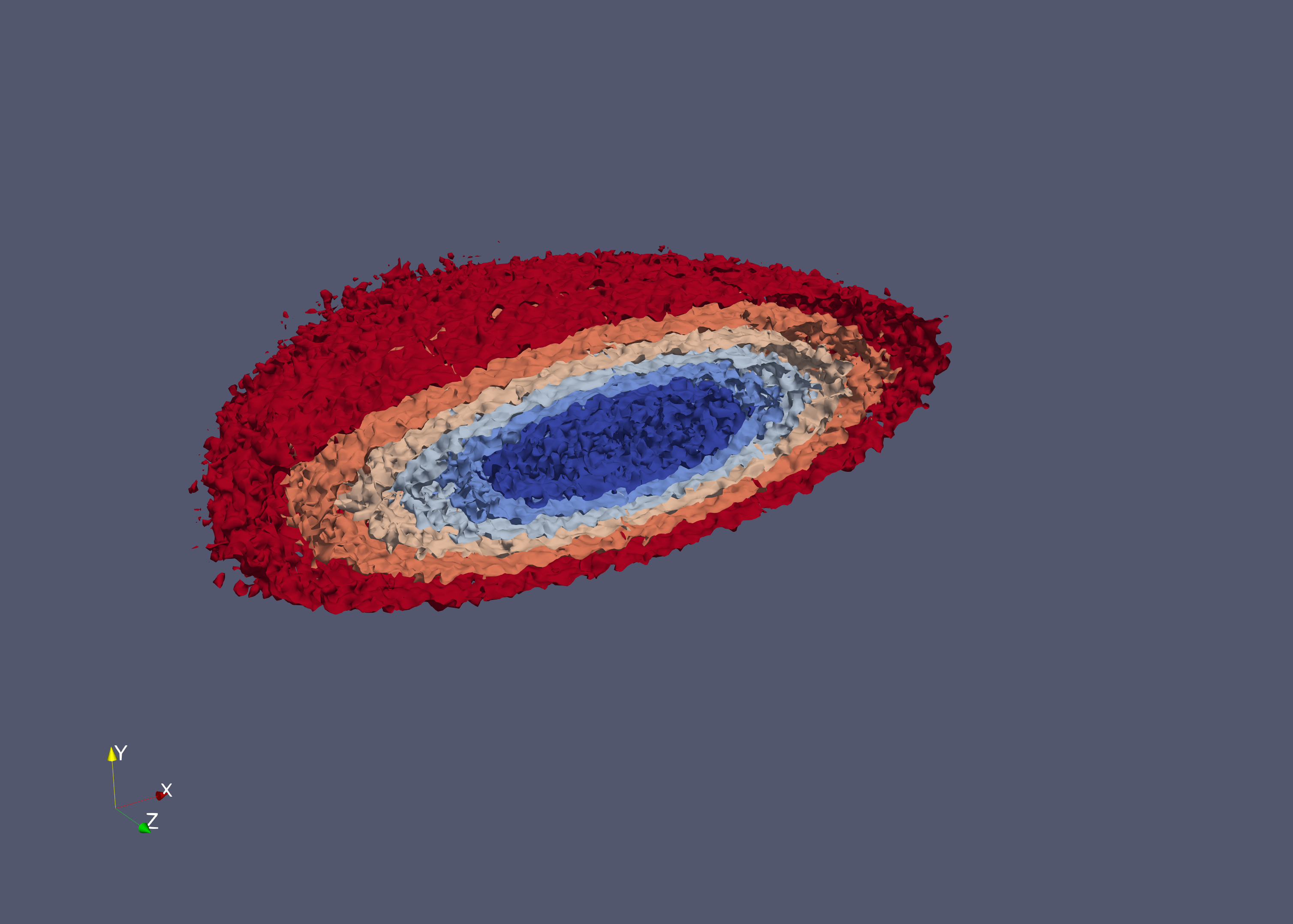}
}
\subfigure[time=10]{
\includegraphics[trim=0cm 7cm 15cm 0cm,clip=true,width=0.3\columnwidth]{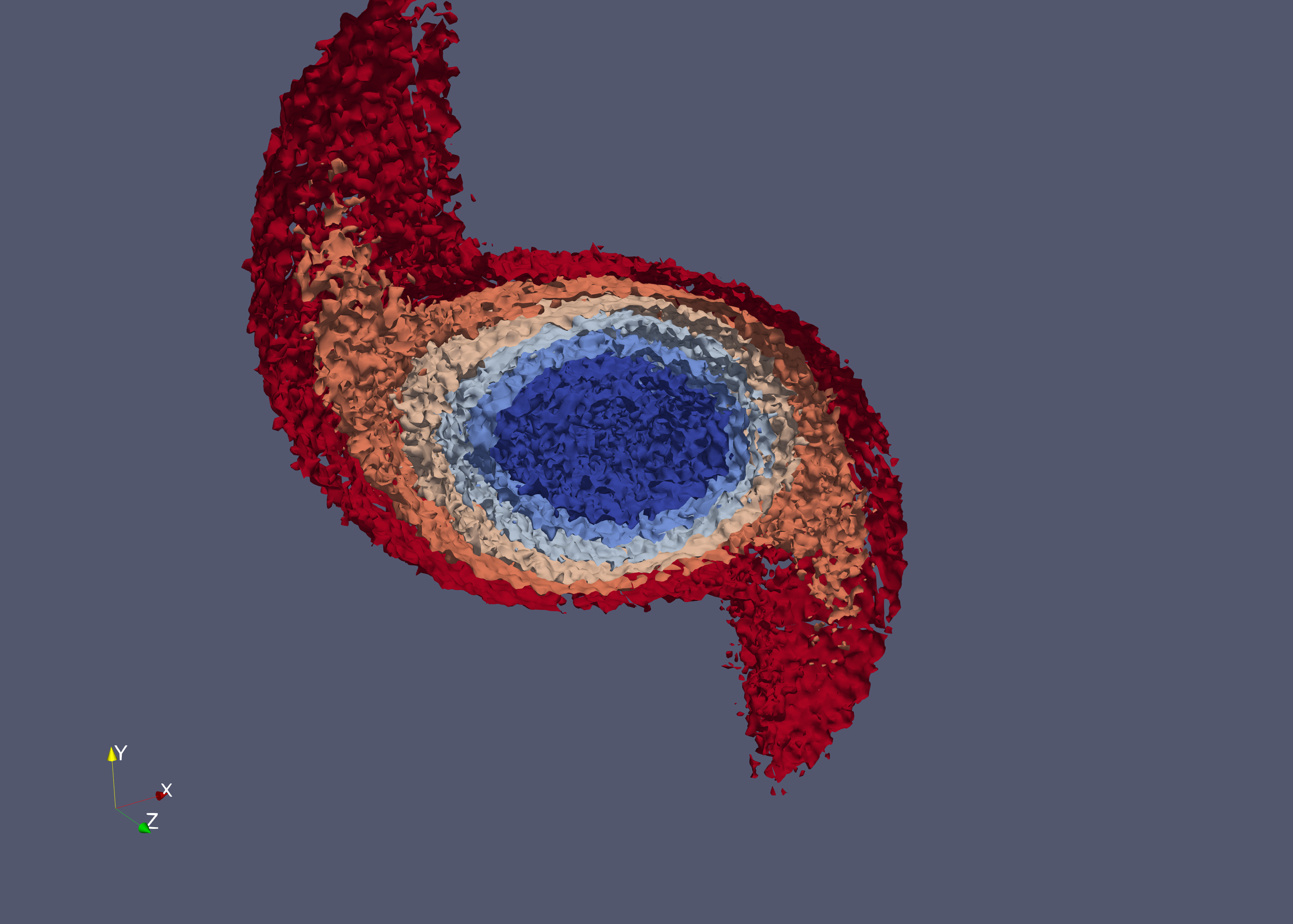}
}
\subfigure[time=15]{
\includegraphics[trim=0cm 7cm 15cm 0cm,clip=true,width=0.3\columnwidth]{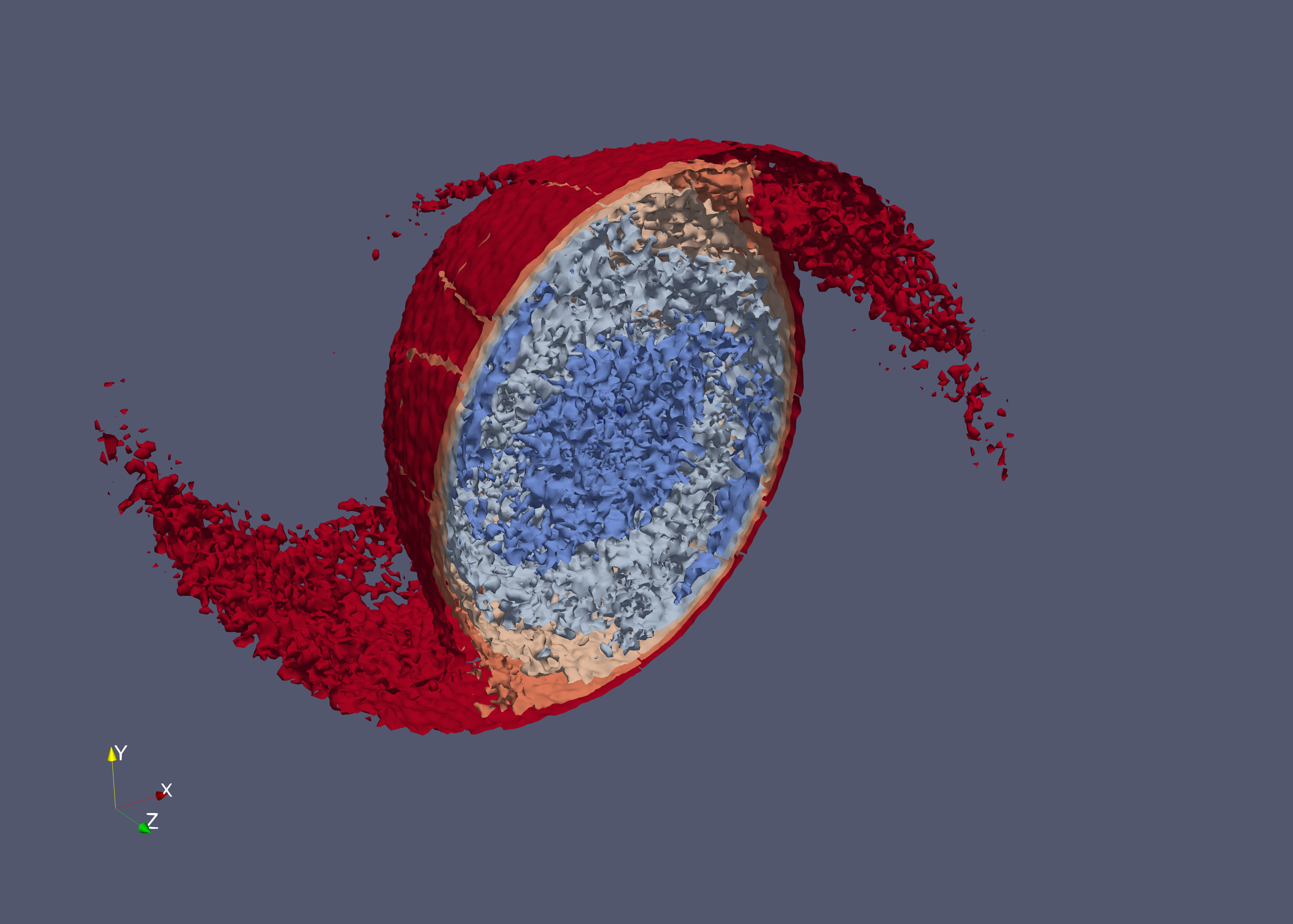}
}
\subfigure[time=0]{
\includegraphics[trim=0cm 7cm 15cm 0cm,clip=true,width=0.3\columnwidth]{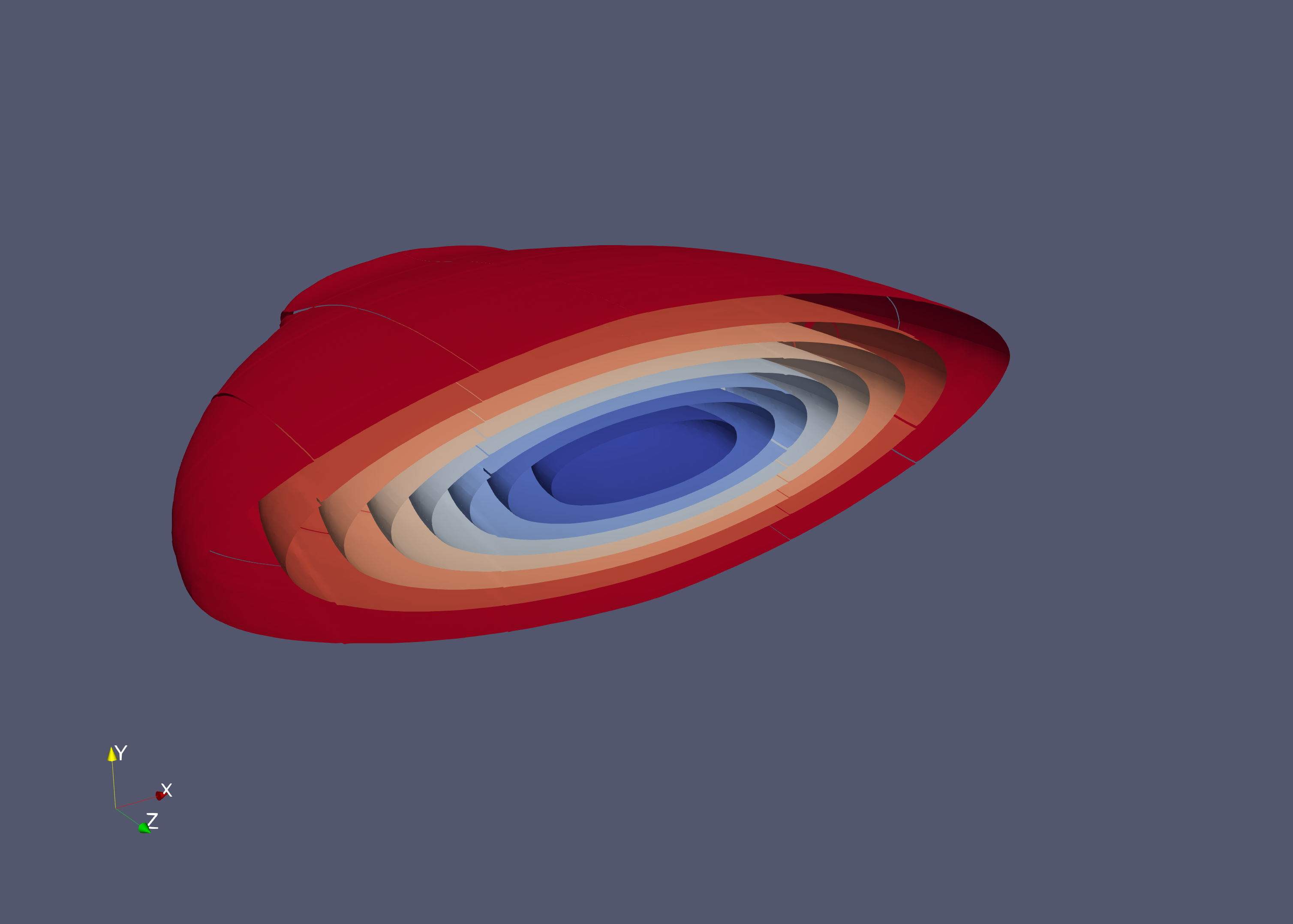}
\figlab{time0_penning_tau1_density}
}
\subfigure[time=10]{
\includegraphics[trim=0cm 7cm 15cm 0cm,clip=true,width=0.3\columnwidth]{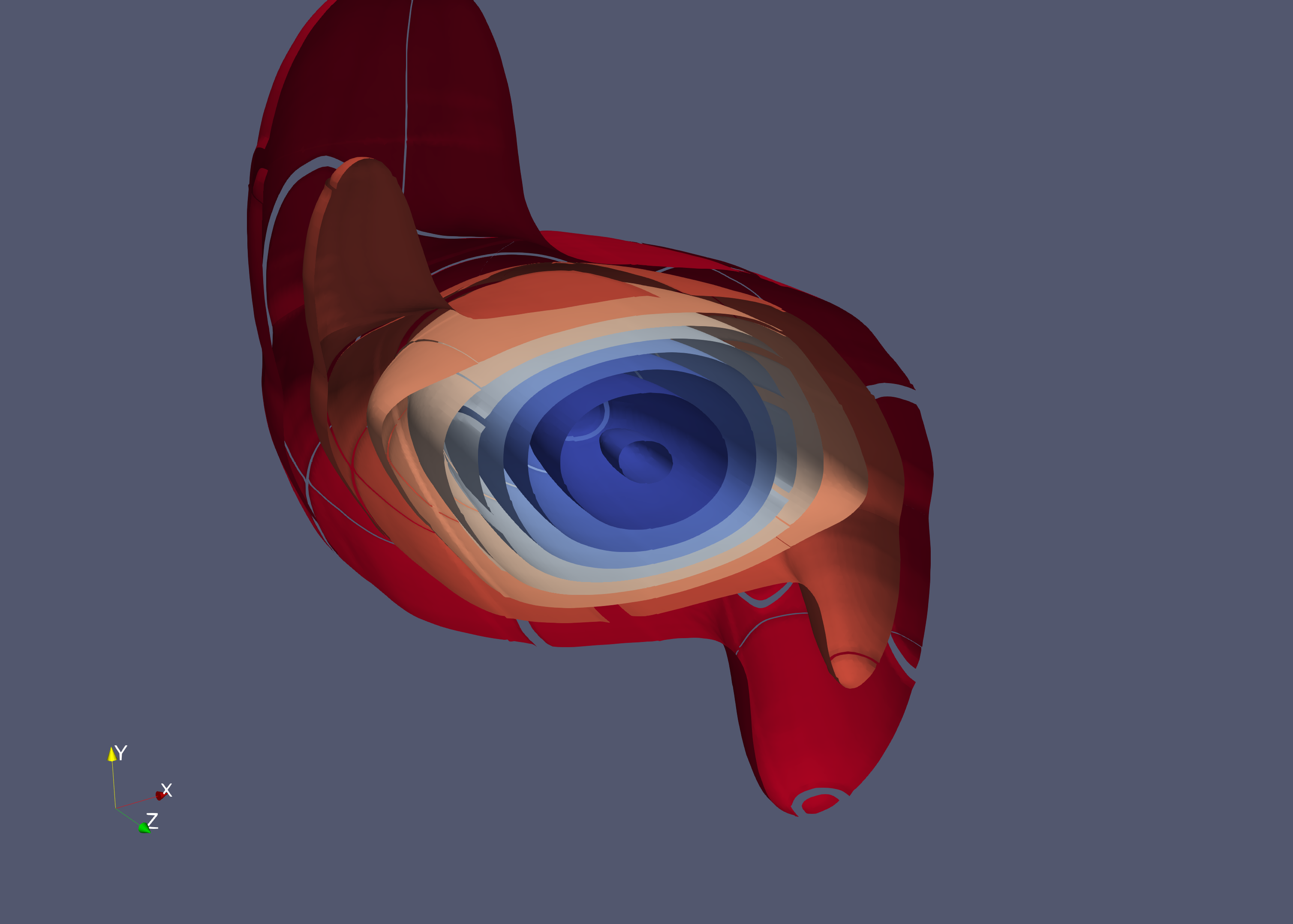}
\figlab{time200_penning_tau1_density}
}
\subfigure[time=15]{
\includegraphics[trim=0cm 7cm 15cm 0cm,clip=true,width=0.3\columnwidth]{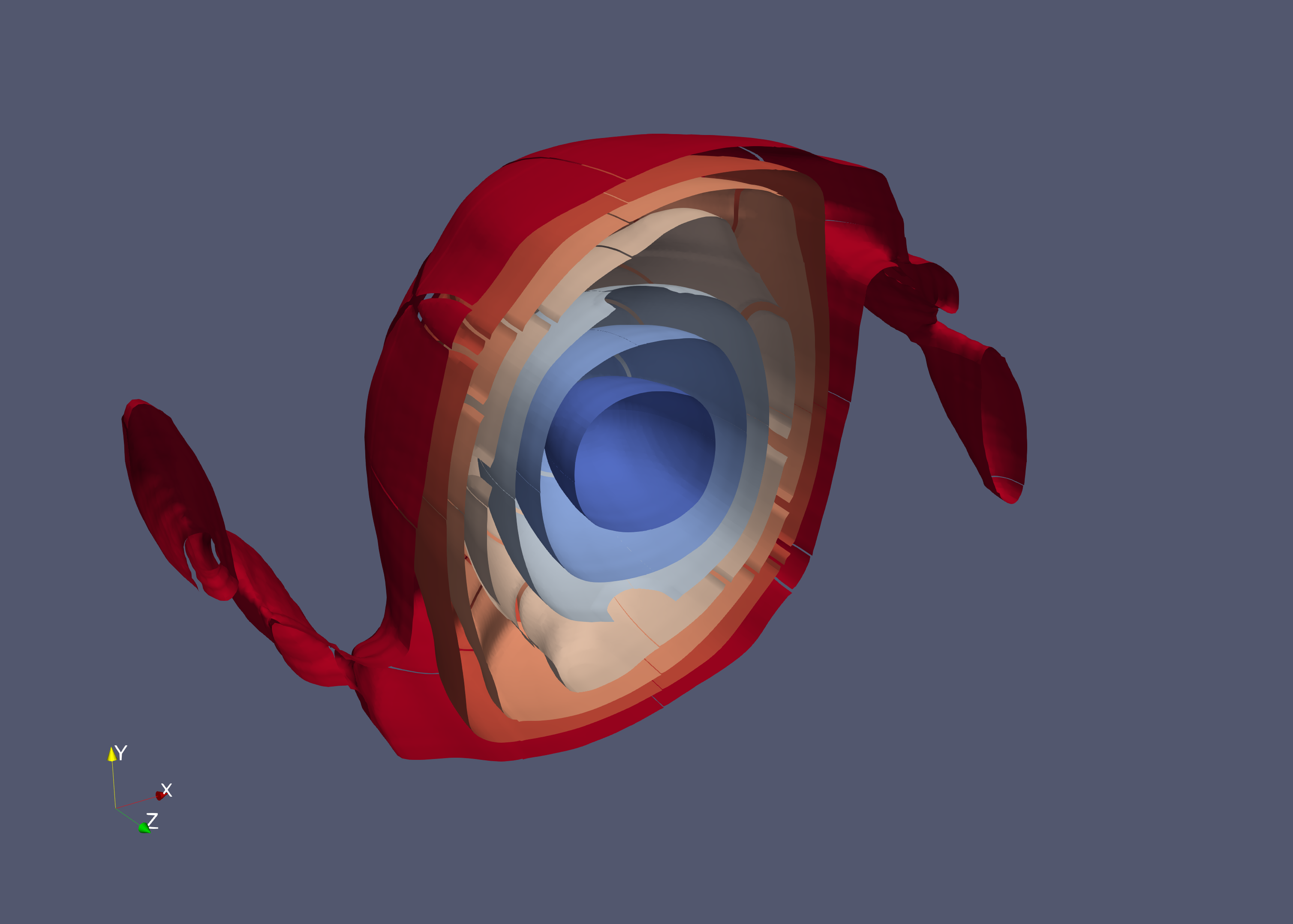}
\figlab{time300_penning_tau1_density}
}
\subfigure[time=0]{
\includegraphics[trim=0cm 7cm 15cm 0cm,clip=true,width=0.3\columnwidth]{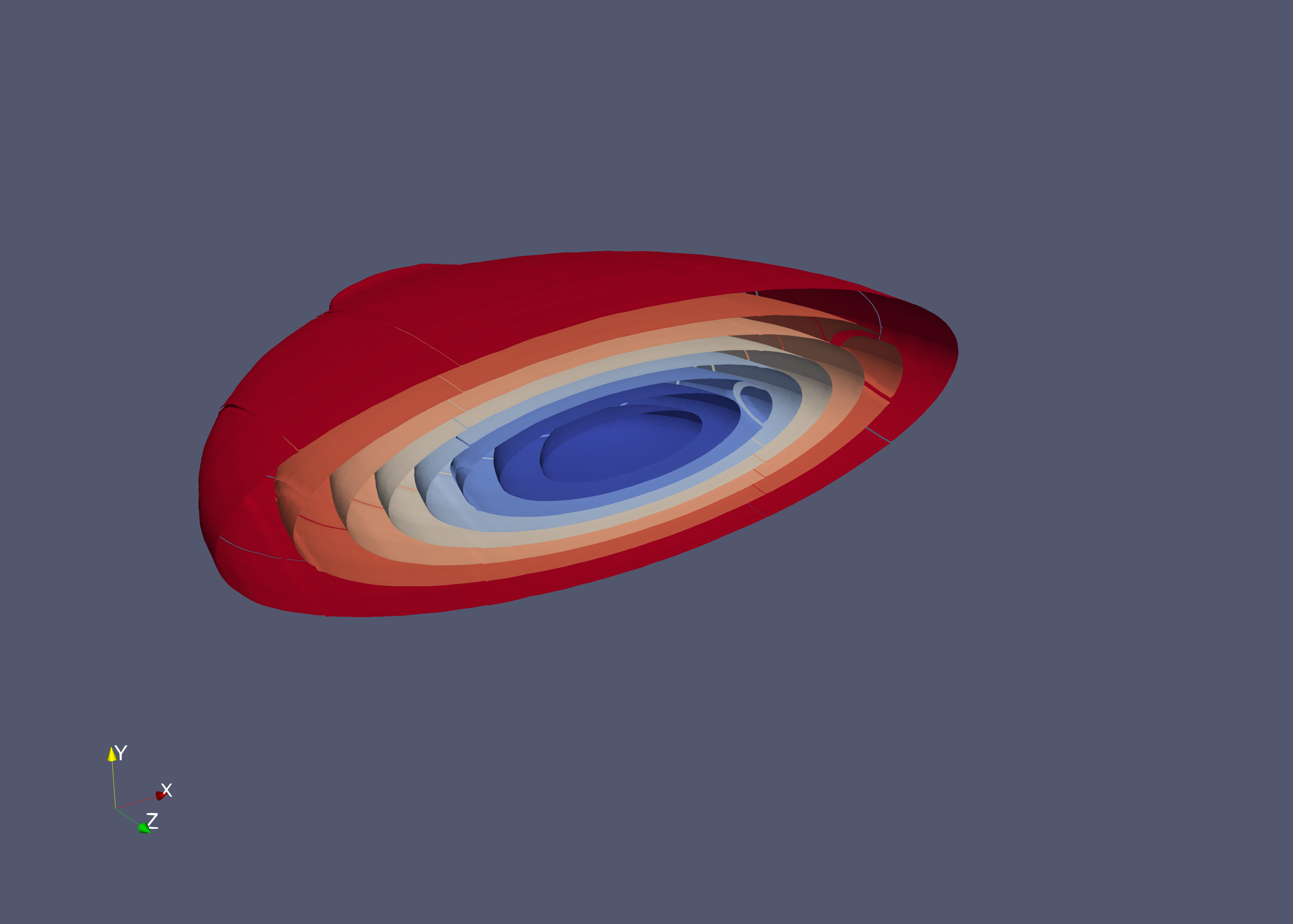}
}
\subfigure[time=10]{
\includegraphics[trim=0cm 7cm 15cm 0cm,clip=true,width=0.3\columnwidth]{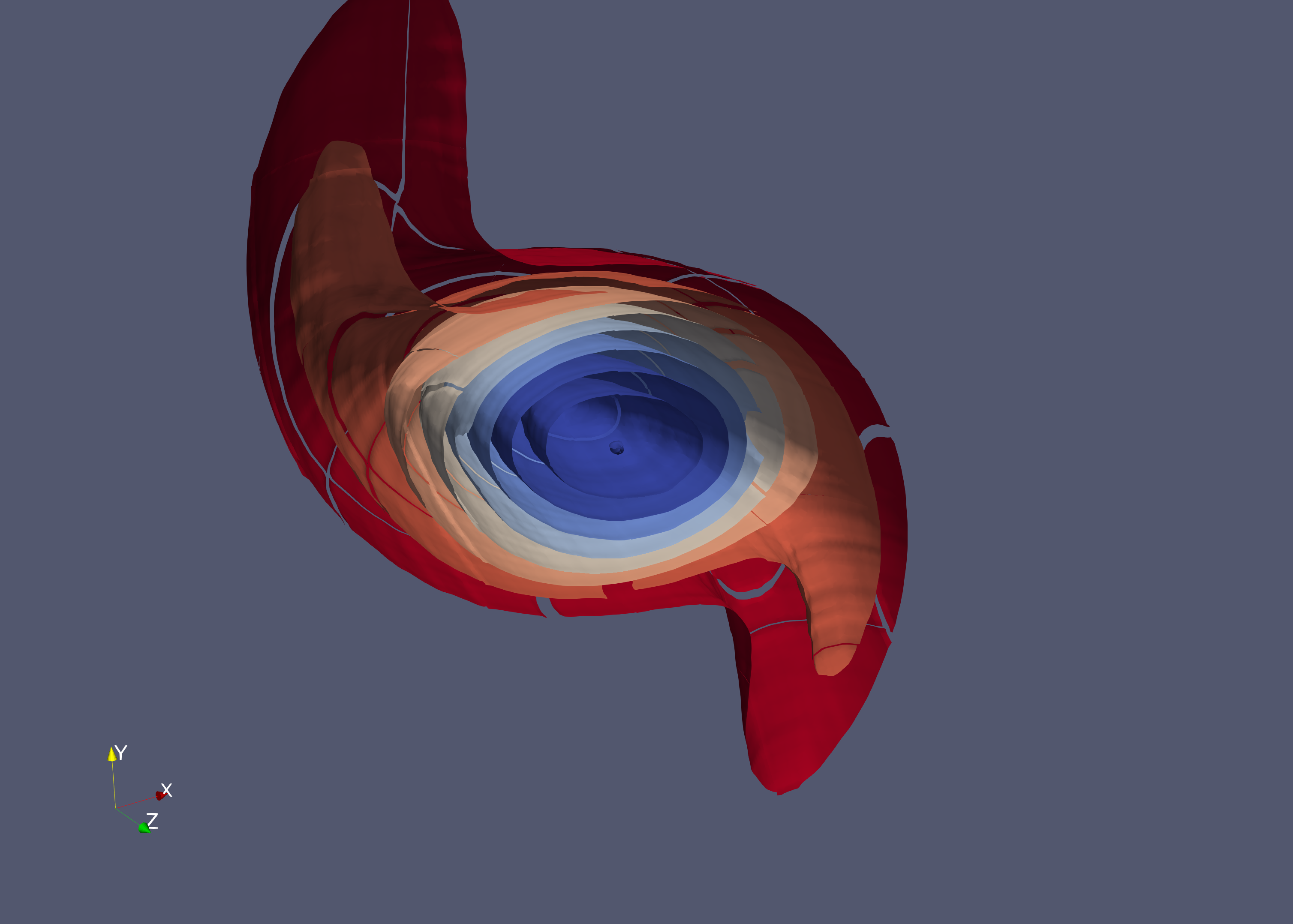}
}
\subfigure[time=15]{
\includegraphics[trim=0cm 7cm 15cm 0cm,clip=true,width=0.3\columnwidth]{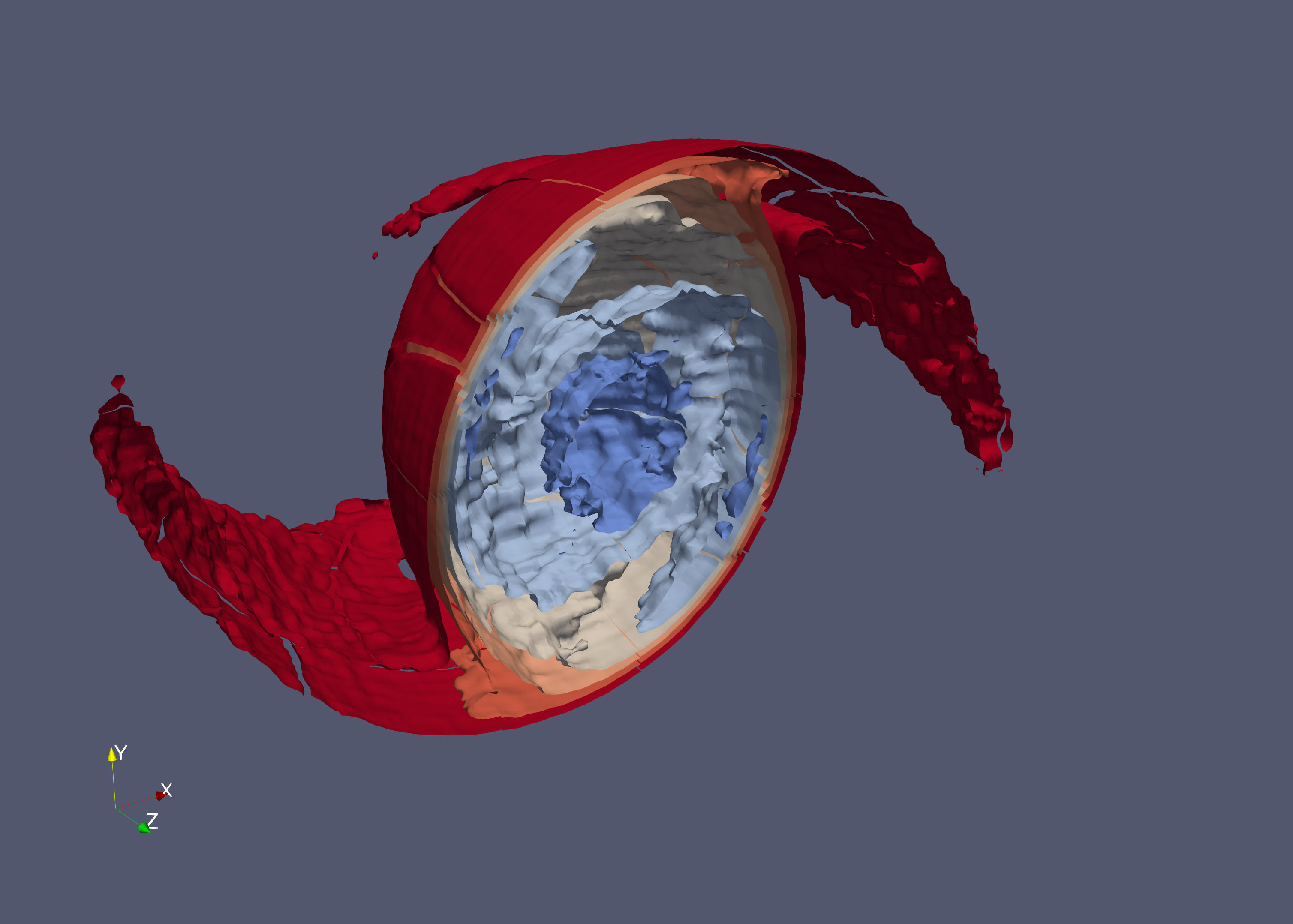}
}
\subfigure[time=0]{
\includegraphics[trim=0cm 7cm 15cm 0cm,clip=true,width=0.3\columnwidth]{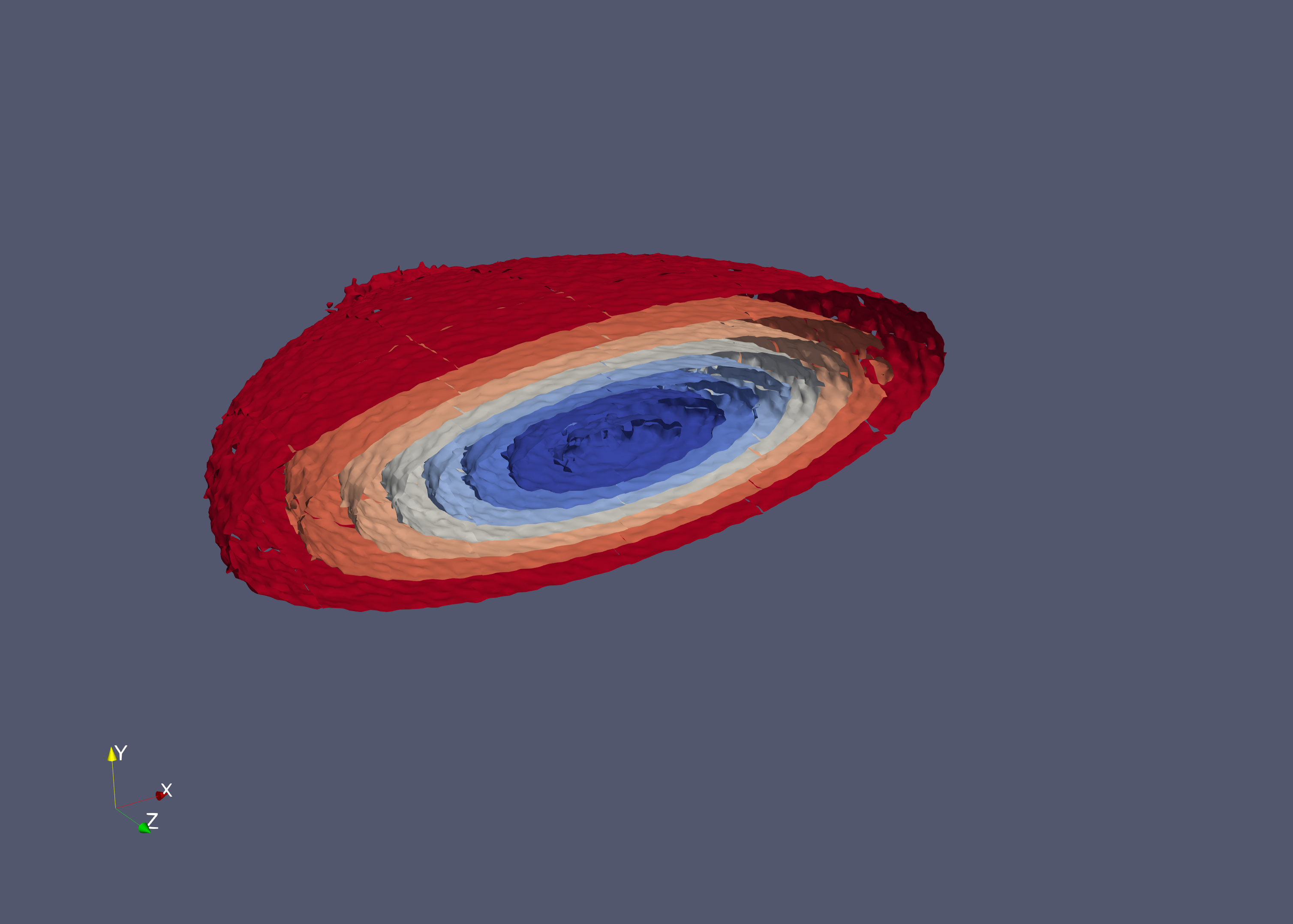}
}
\subfigure[time=10]{
\includegraphics[trim=0cm 7cm 15cm 0cm,clip=true,width=0.3\columnwidth]{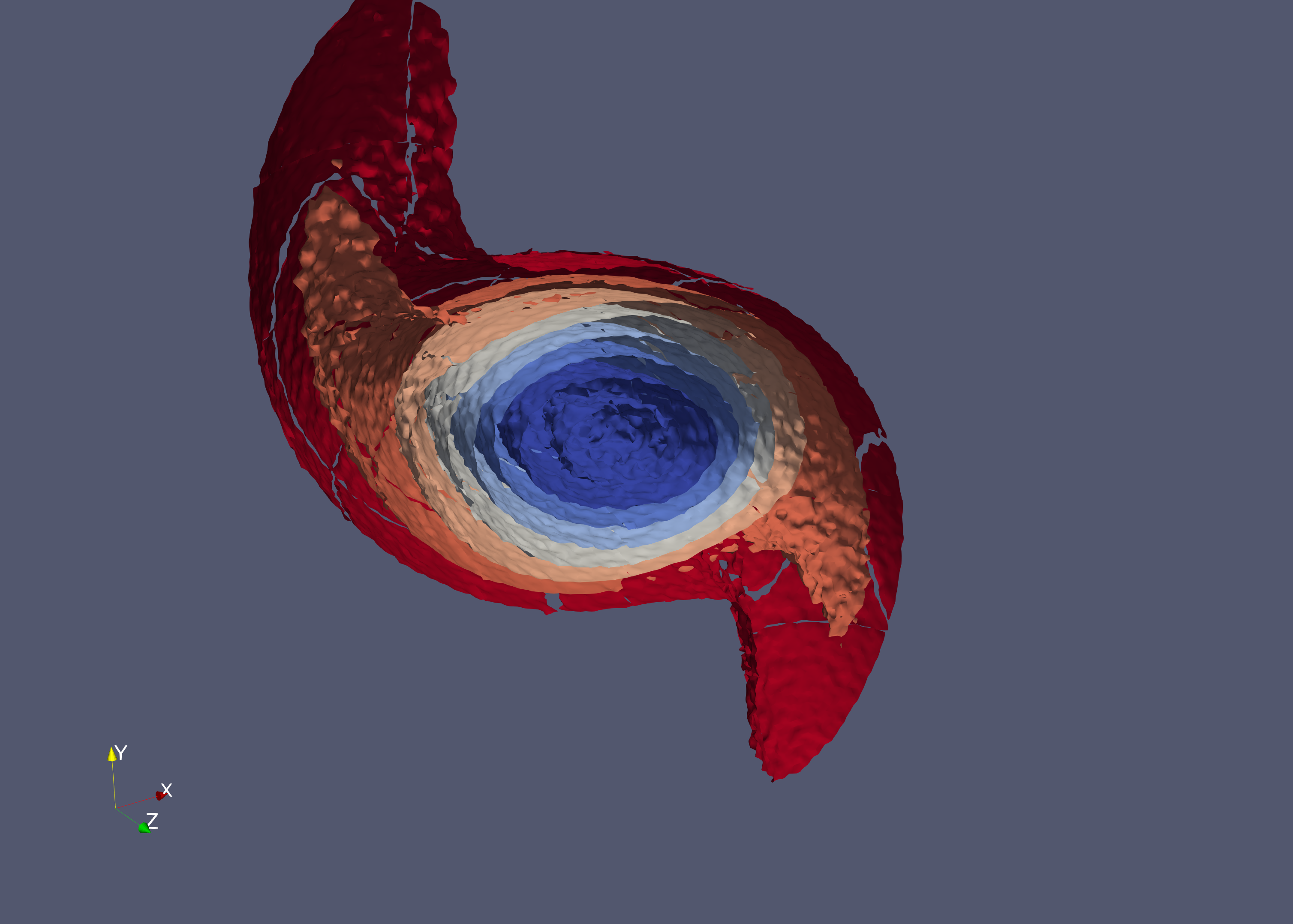}
}
\subfigure[time=15]{
\includegraphics[trim=0cm 7cm 15cm 0cm,clip=true,width=0.3\columnwidth]{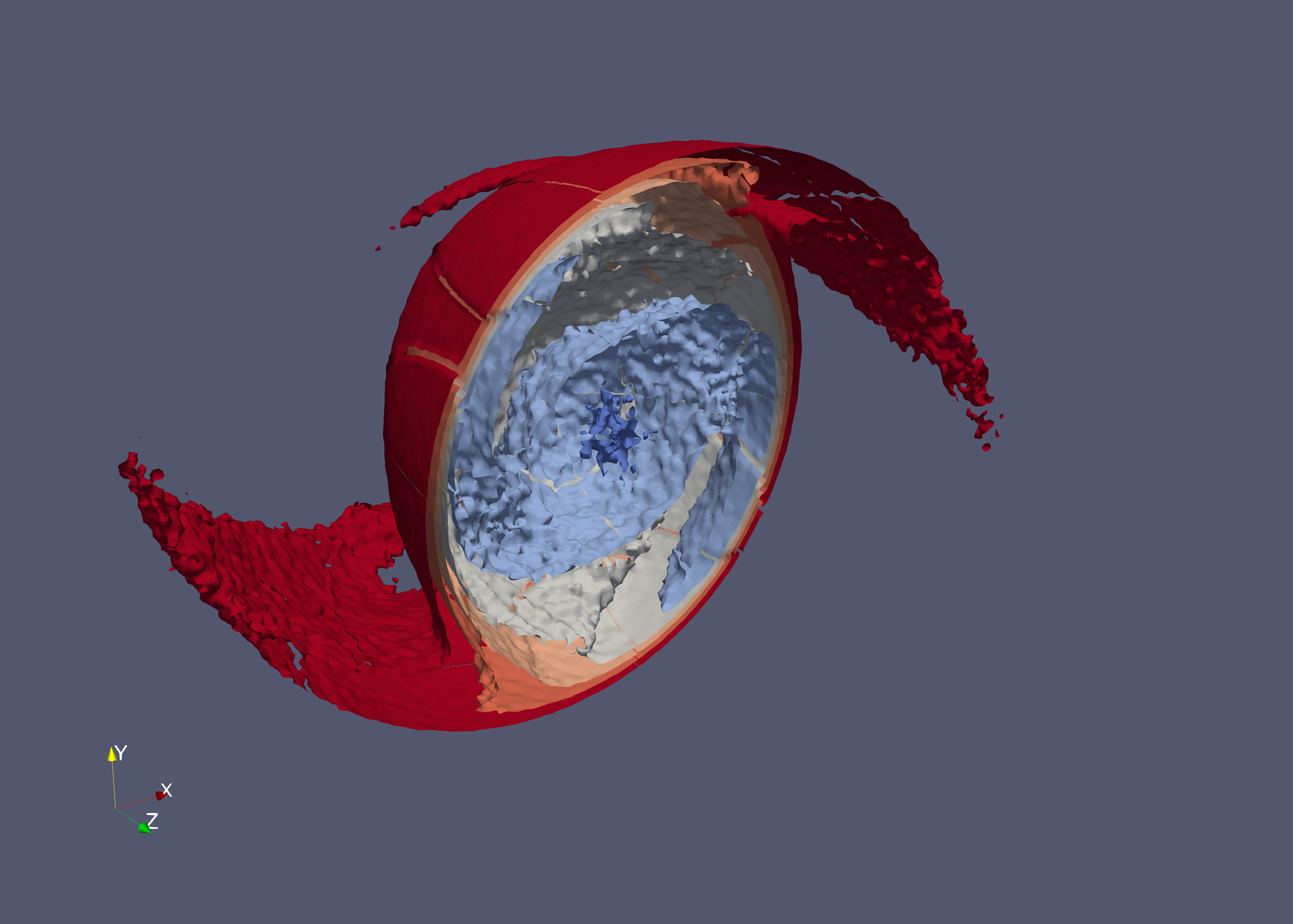}
}
\caption{3D Penning trap: Evolution of the electron charge density with time for regular PIC, $P_c=1$ (first row); $\tau=1$, $P_c=1$ (second row); adaptive $\tau$, $P_c=1$ (third row); and regular PIC, $P_c=20$ (fourth row). The mesh considered here is $256^3$.}
\figlab{penning_density}
\end{figure}

Figure \figref{penning_density} shows the evolution of the electron charge density with time
for regular, $\tau=1$ and adaptive $\tau$ PIC. The mesh used is $256^3$ and
$P_c=1$ for the first three rows and $20$ for the last row. As we had seen in Figure \figref{diocotron_density} for the diocotron test case, the adaptive $\tau$ results,
in the third row are better than both the regular PIC and $\tau=1$ results and are comparable to the results of the regular PIC with higher $P_c$ in the last row.

\begin{figure}[h!t!b!]
\subfigure[$64^3$, $P_c=1$]{
\includegraphics[width=0.5\columnwidth]{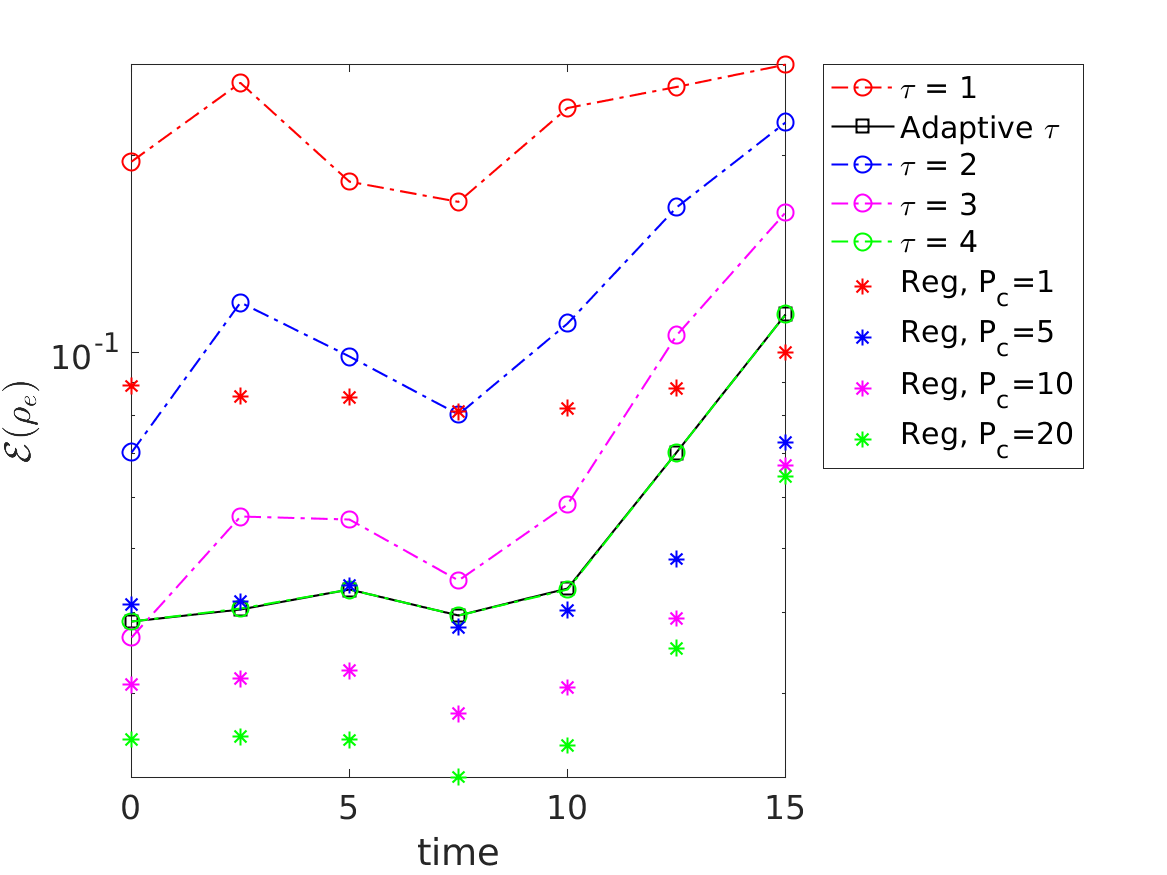}
}
\subfigure[$64^3$, $P_c=1$]{
\includegraphics[width=0.5\columnwidth]{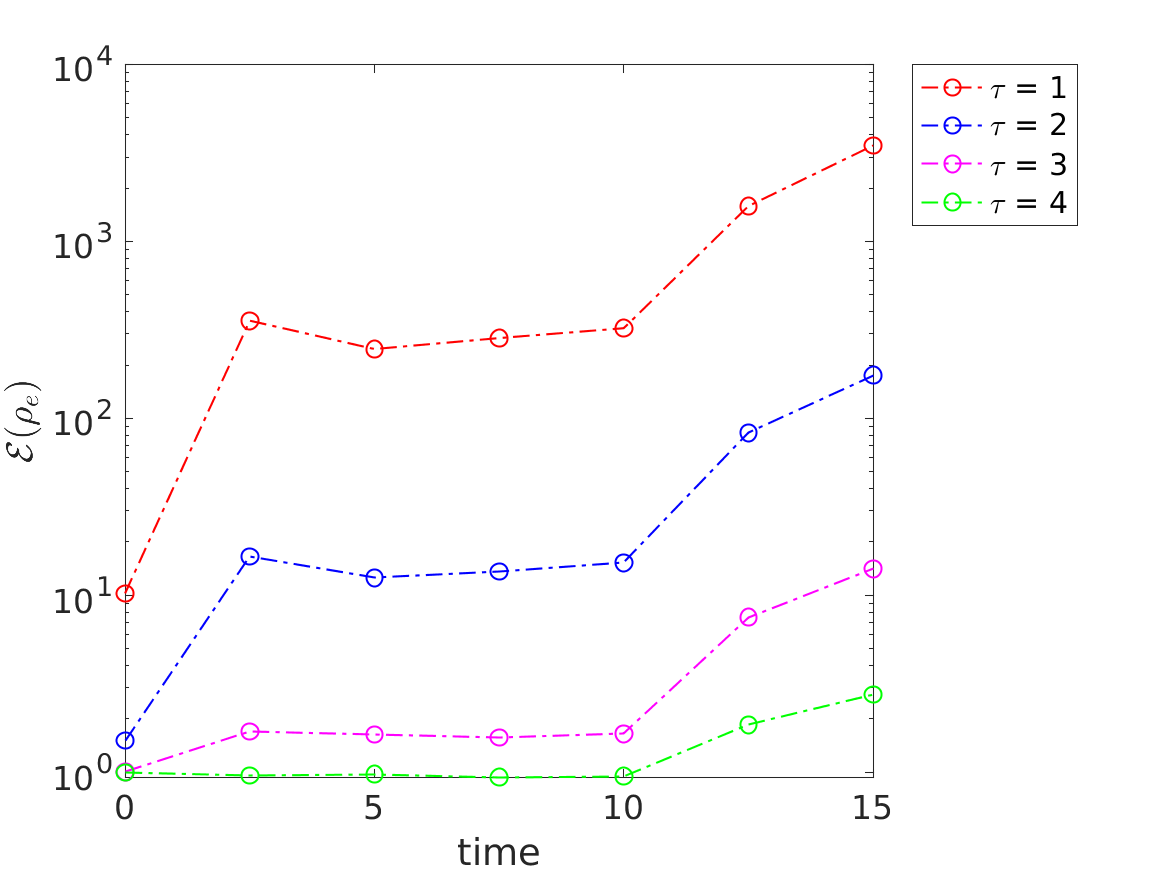}
}
\subfigure[$128^3$, $P_c=1$]{
\includegraphics[width=0.5\columnwidth]{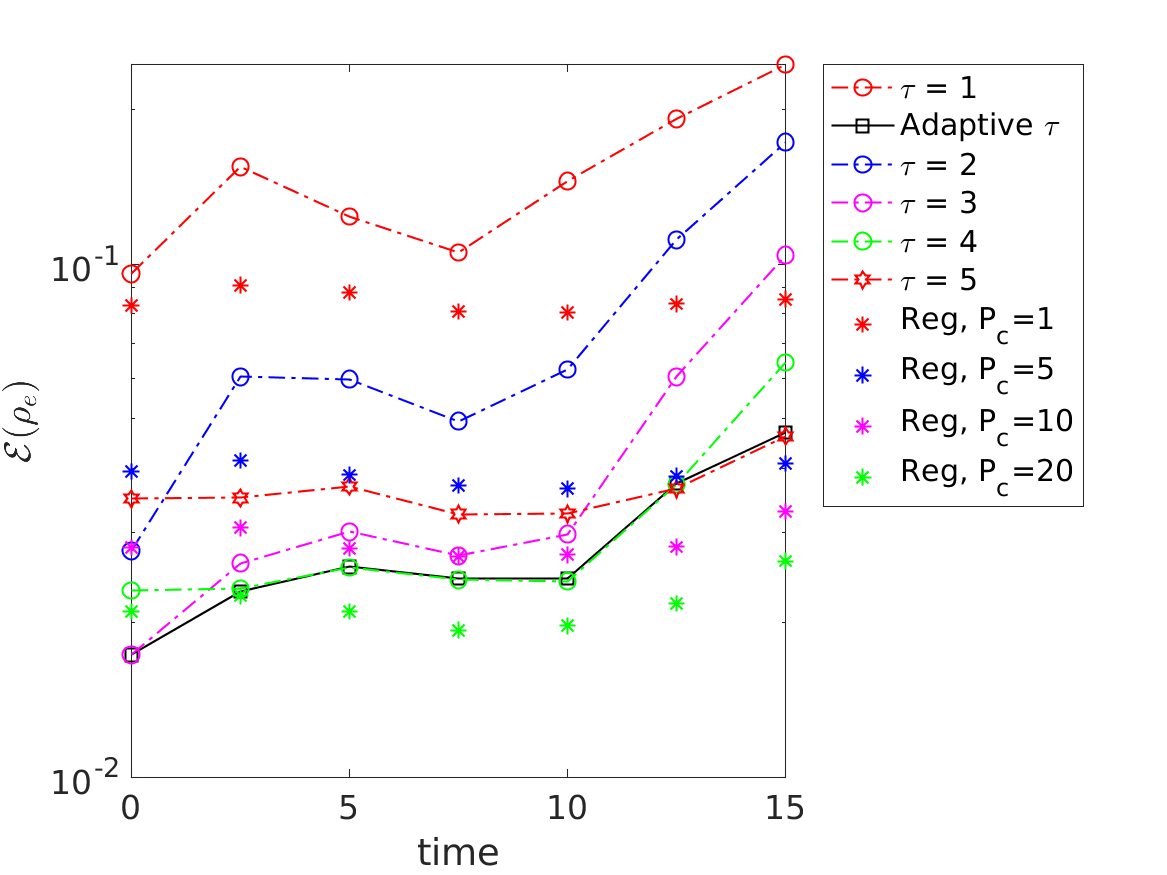}
}
\subfigure[$128^3$, $P_c=1$]{
\includegraphics[width=0.5\columnwidth]{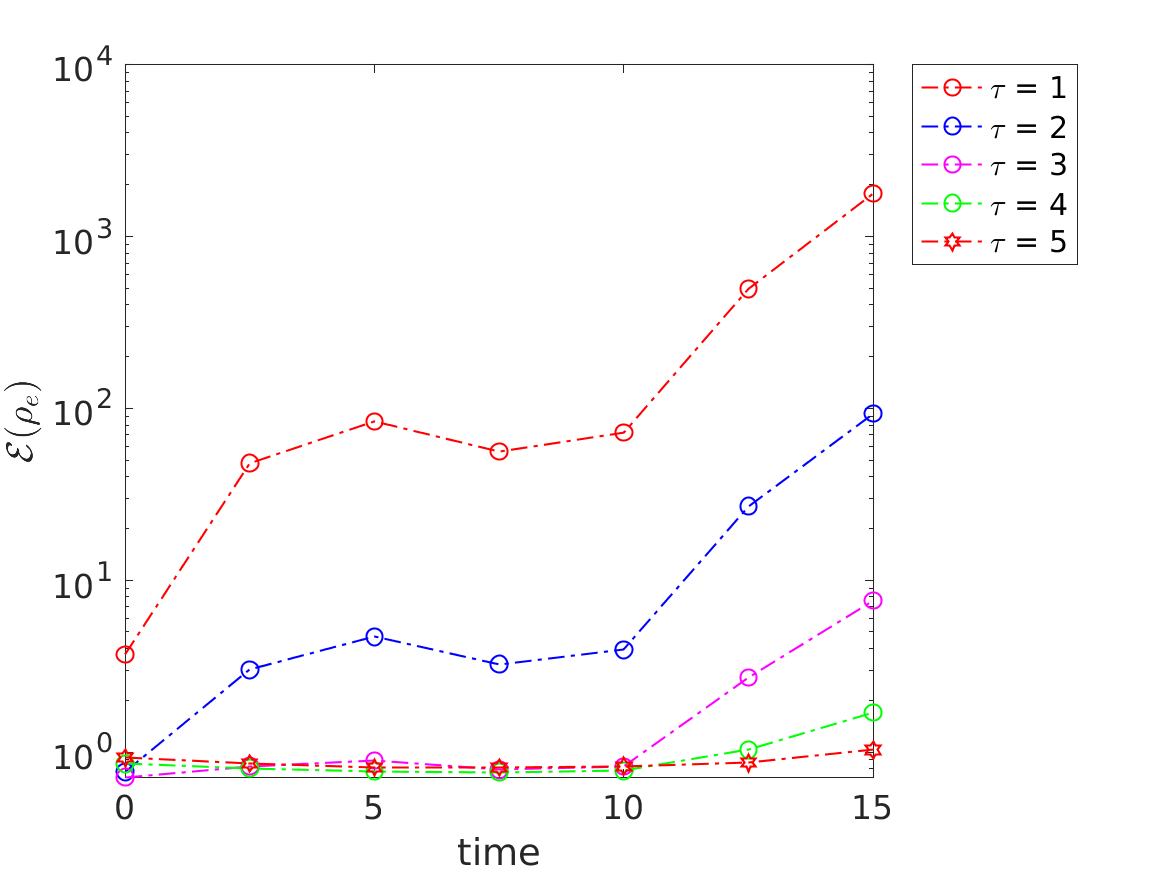}
}
\subfigure[$256^3$, $P_c=1$]{
\includegraphics[width=0.5\columnwidth]{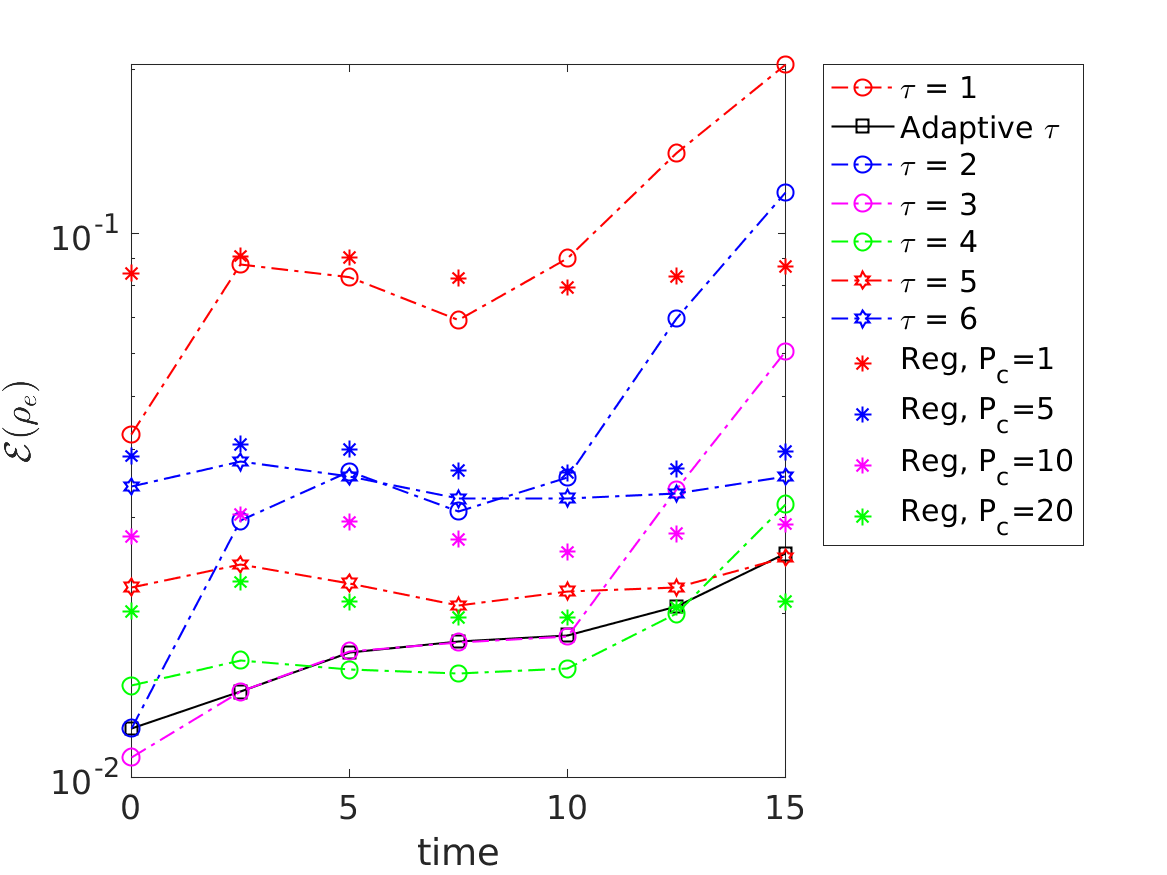}
}
\subfigure[$256^3$, $P_c=1$]{
\includegraphics[width=0.5\columnwidth]{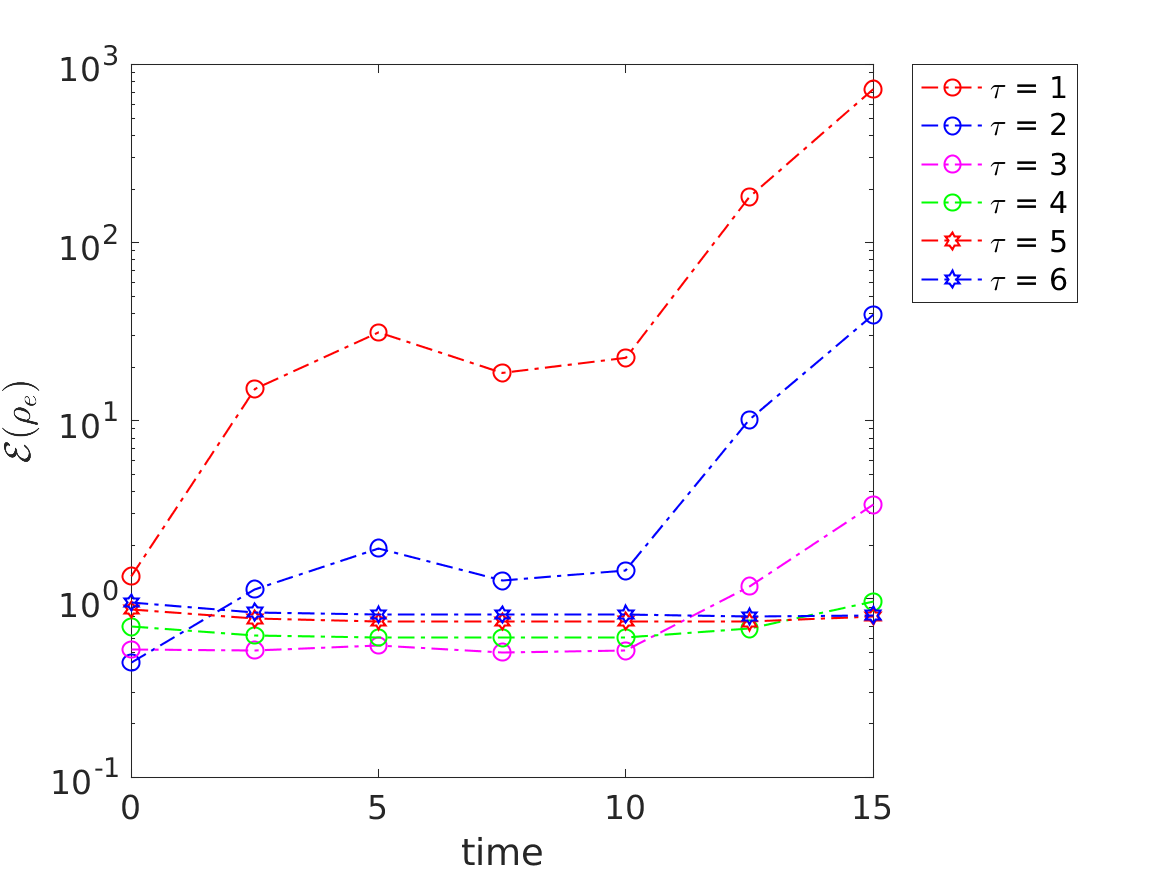}
}
    \caption{3D Penning trap: Electron charge density error comparison between regular (Reg), fixed $\tau$ and adaptive $\tau$ PIC. The left column is the actual error calculated using equation \eqnref{error_def} and the right column is the estimations from the $\tau$ estimator based on which the optimal $\tau$ is selected. The fixed as well as adaptive $\tau$ has the number of particles per cell $P_c=1$.}
\figlab{penning_pc1}
\end{figure}

\begin{figure}[h!t!b!]
\subfigure[$64^3$, $P_c=5$]{
\includegraphics[width=0.5\columnwidth]{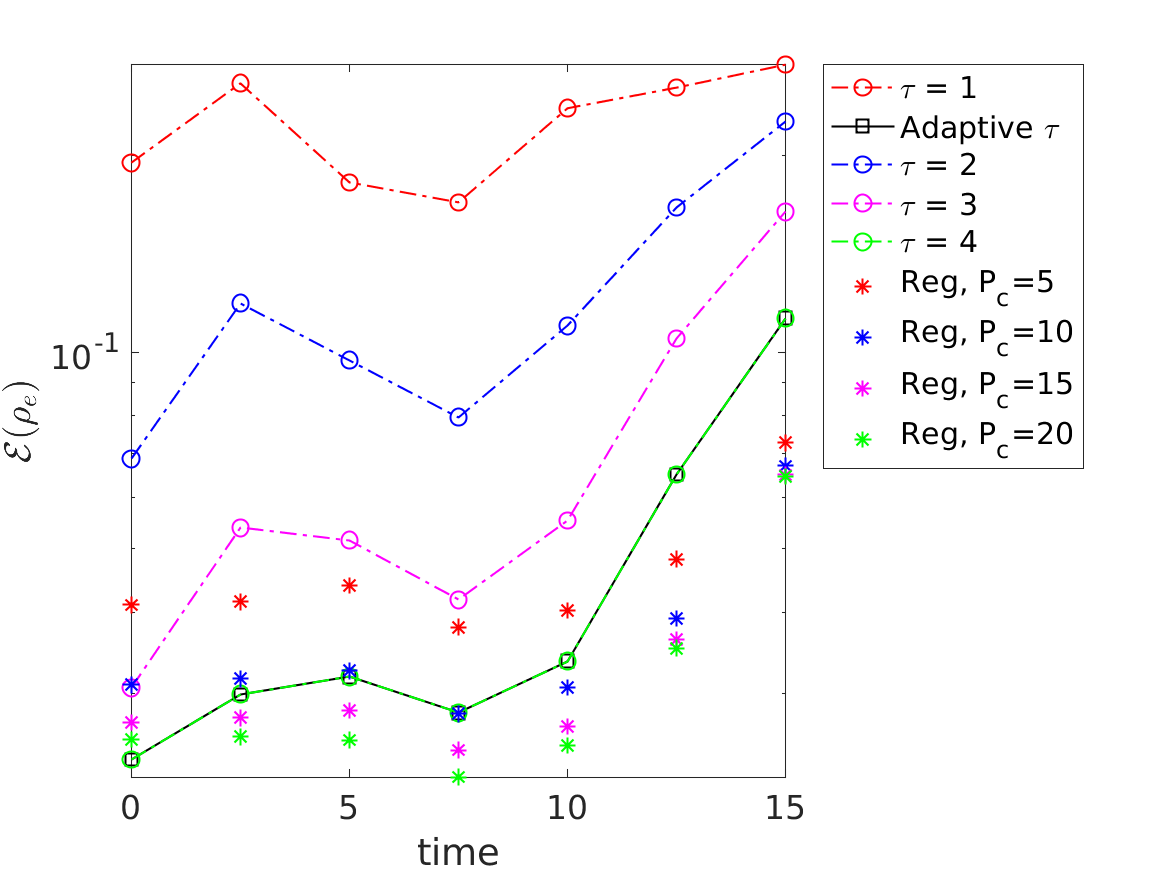}
}
\subfigure[$64^3$, $P_c=5$]{
\includegraphics[width=0.5\columnwidth]{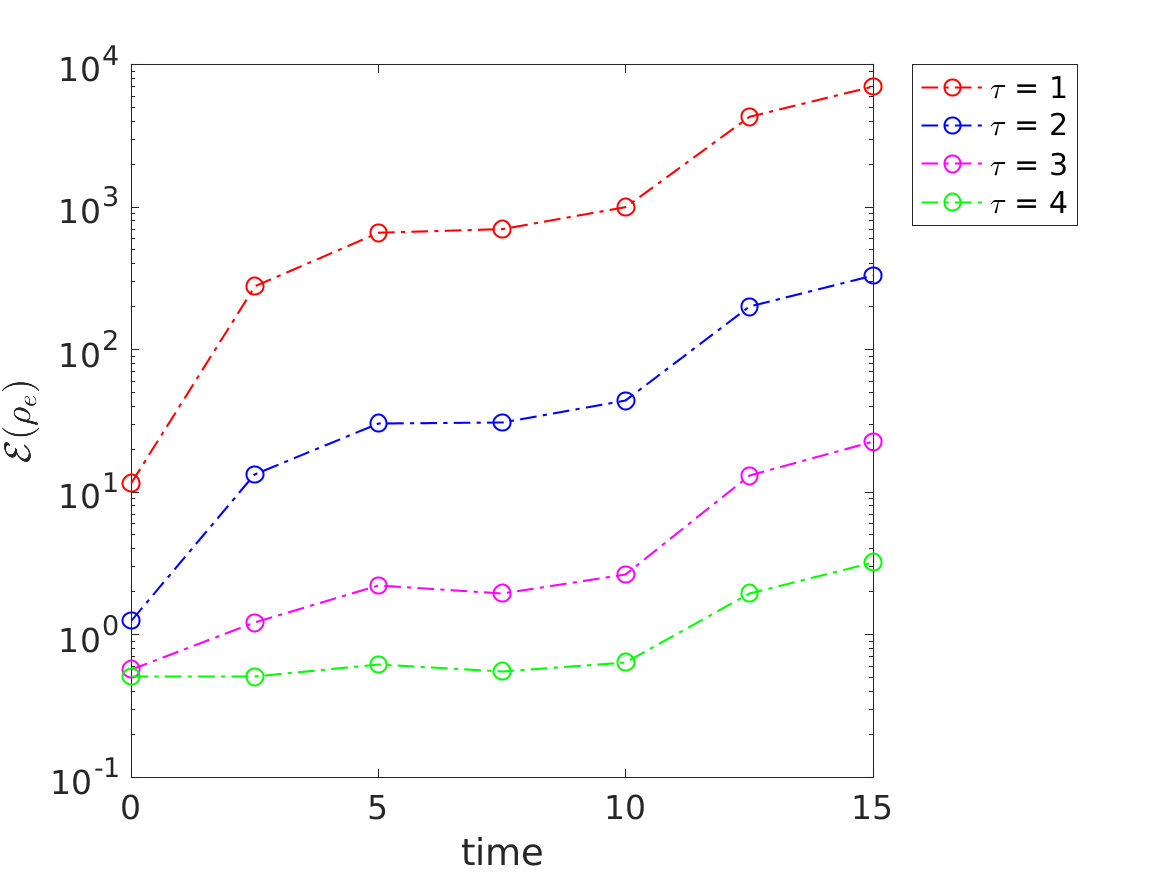}
}
\subfigure[$128^3$, $P_c=5$]{
\includegraphics[width=0.5\columnwidth]{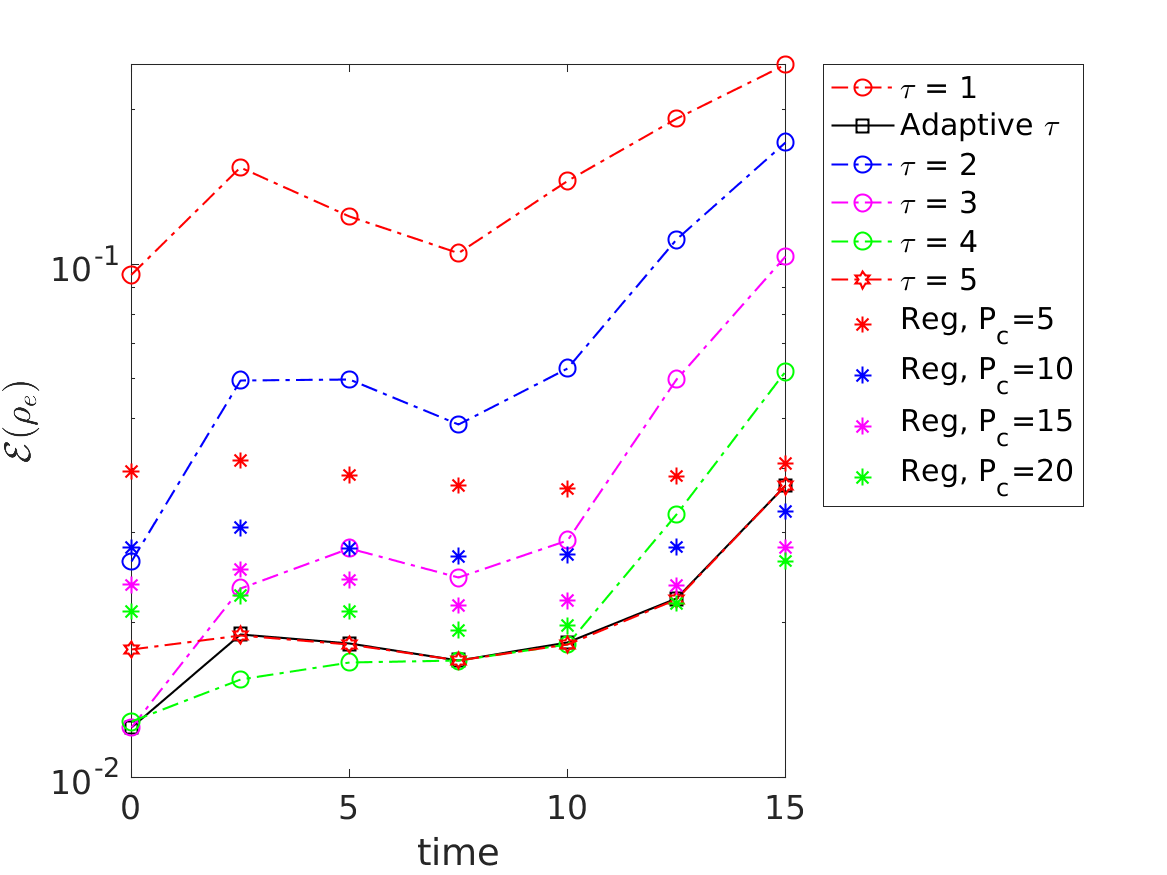}
}
\subfigure[$128^3$, $P_c=5$]{
\includegraphics[width=0.5\columnwidth]{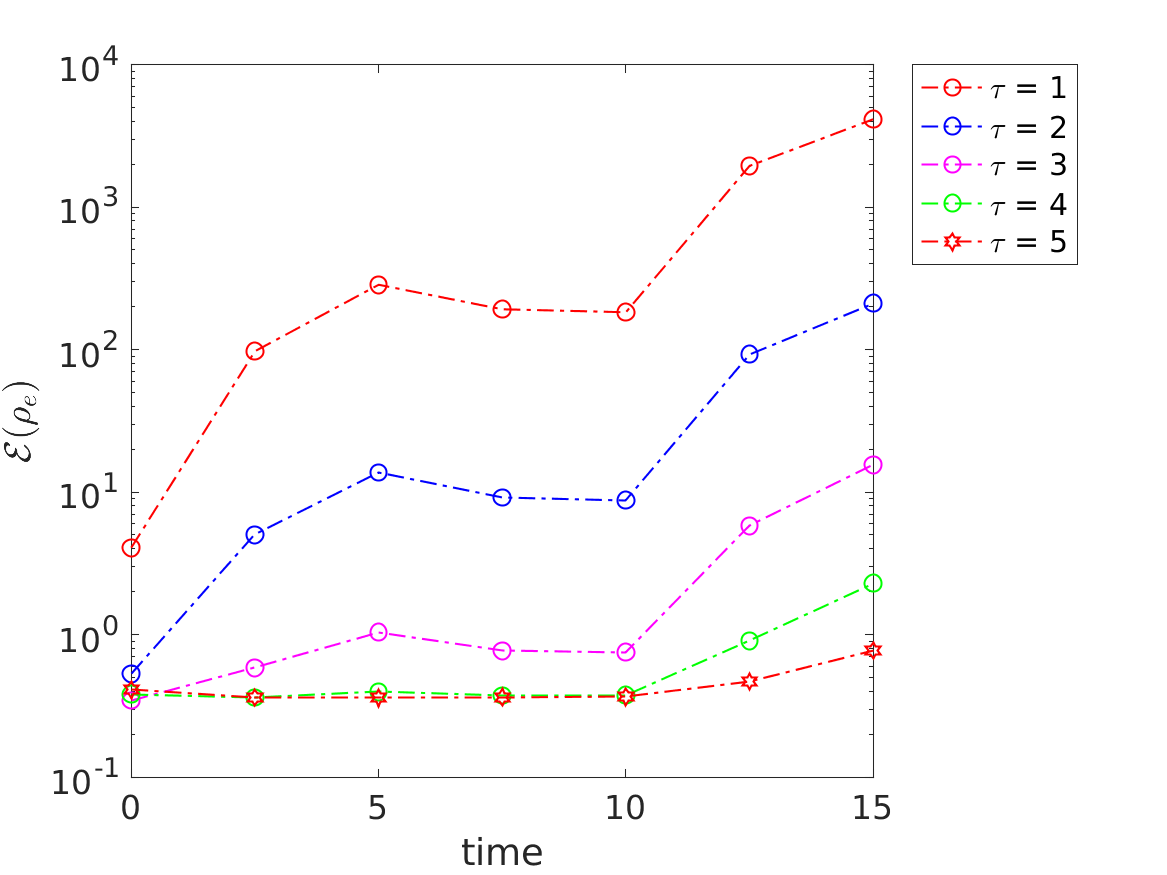}
}
\subfigure[$256^3$, $P_c=5$]{
\includegraphics[width=0.5\columnwidth]{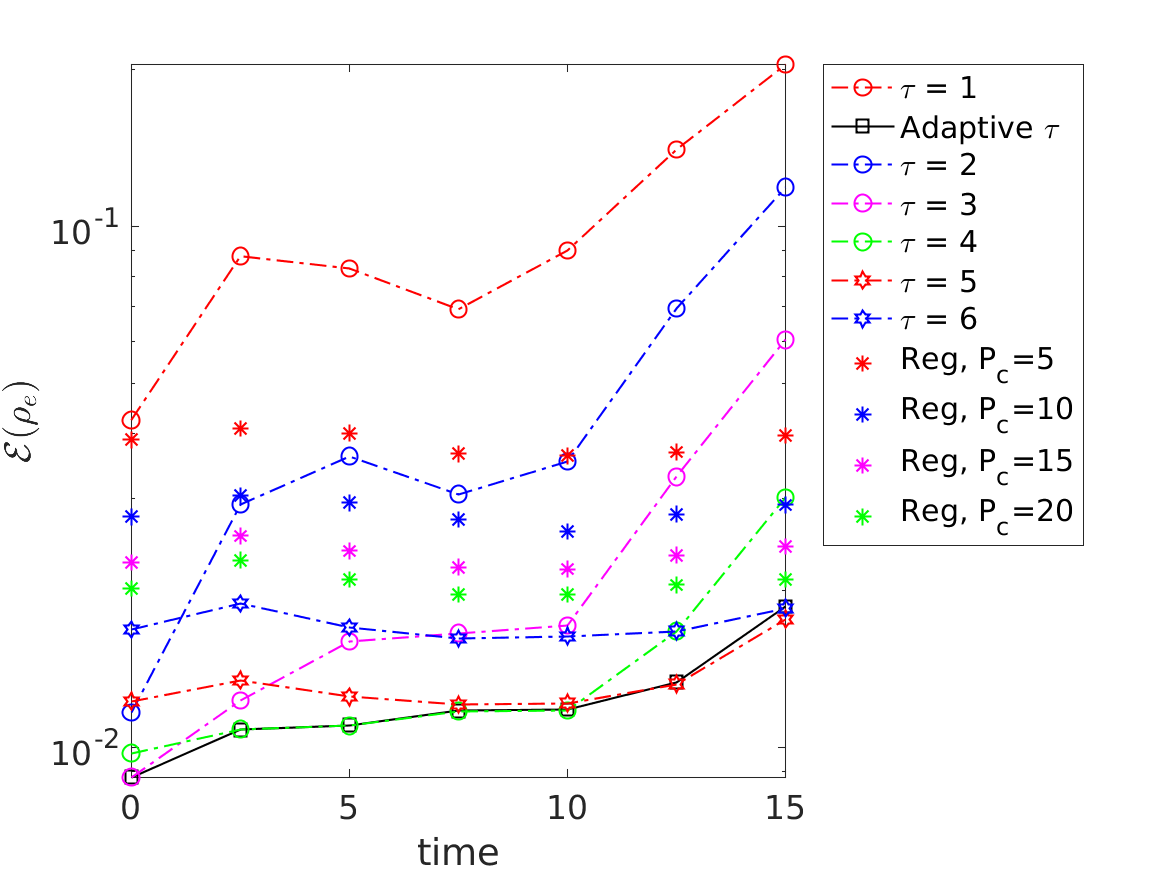}
}
\subfigure[$256^3$, $P_c=5$]{
\includegraphics[width=0.5\columnwidth]{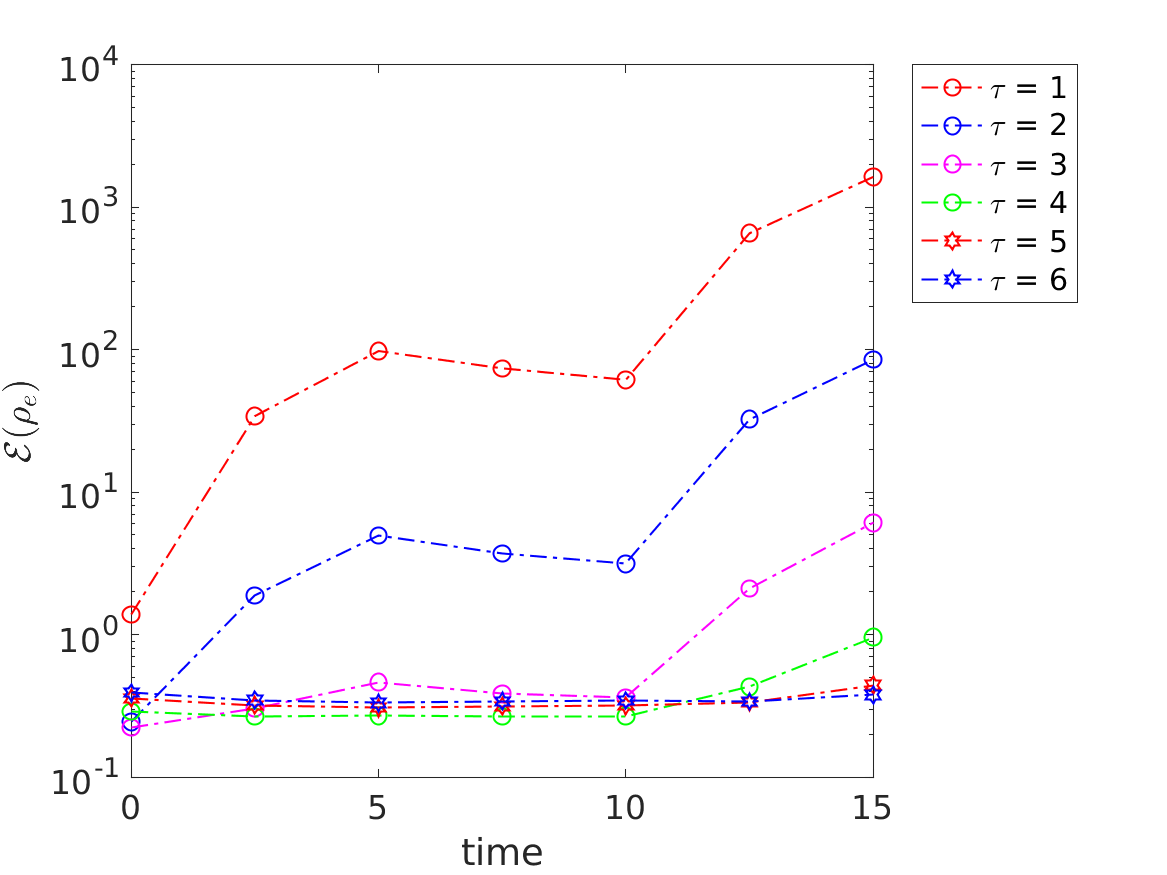}
}
    \caption{3D Penning trap: Electron charge density error comparison between regular (Reg), fixed $\tau$ and adaptive $\tau$ PIC. The left column is the actual error calculated using equation \eqnref{error_def} and the right column is the estimations from the $\tau$ estimator based on which the optimal $\tau$ is selected. The fixed as well as adaptive $\tau$ has the number of particles per cell $P_c=5$.}
\figlab{penning_pc5}
\end{figure}

    In a way analogous to Figures \figref{diocotron_pc5_gauss}-\figref{diocotron_pc20_gauss} for the diocotron instability, in Figures \figref{penning_pc1}-\figref{penning_pc5}
    we show the errors calculated using equation \eqnref{error_def} and the estimations from the $\tau$ estimator for meshes $64^3,128^3,256^3$ and 
    $P_c=1,5$. The reference in equation \eqnref{error_def} is the average of 5 independent computations of regular PIC with $256^3$ mesh and $P_c=40$. 
    For the $N_{points}$ in equation \eqnref{error_def}, we select approximately $4096$ random points throughout the domain and interpolate both the 
    reference density as well as the density under consideration at these points to measure the error. The errors are measured at 7 time instants in the simulation. 

    In general, as before the adaptive $\tau$ predictions are close to optimal and most of the conclusions from the diocotron test case are applicable in this case too. Figure \figref{penning_tau_history} shows the time history of $\tau$ for the meshes and $P_c$ considered and the 
    high values of $\tau$ indicate that the total error is dominated by the grid-based error in these cases. 


\begin{figure}[h!t!b!]
\subfigure[$64^3$, $P_c=1$]{
\includegraphics[width=0.3\columnwidth]{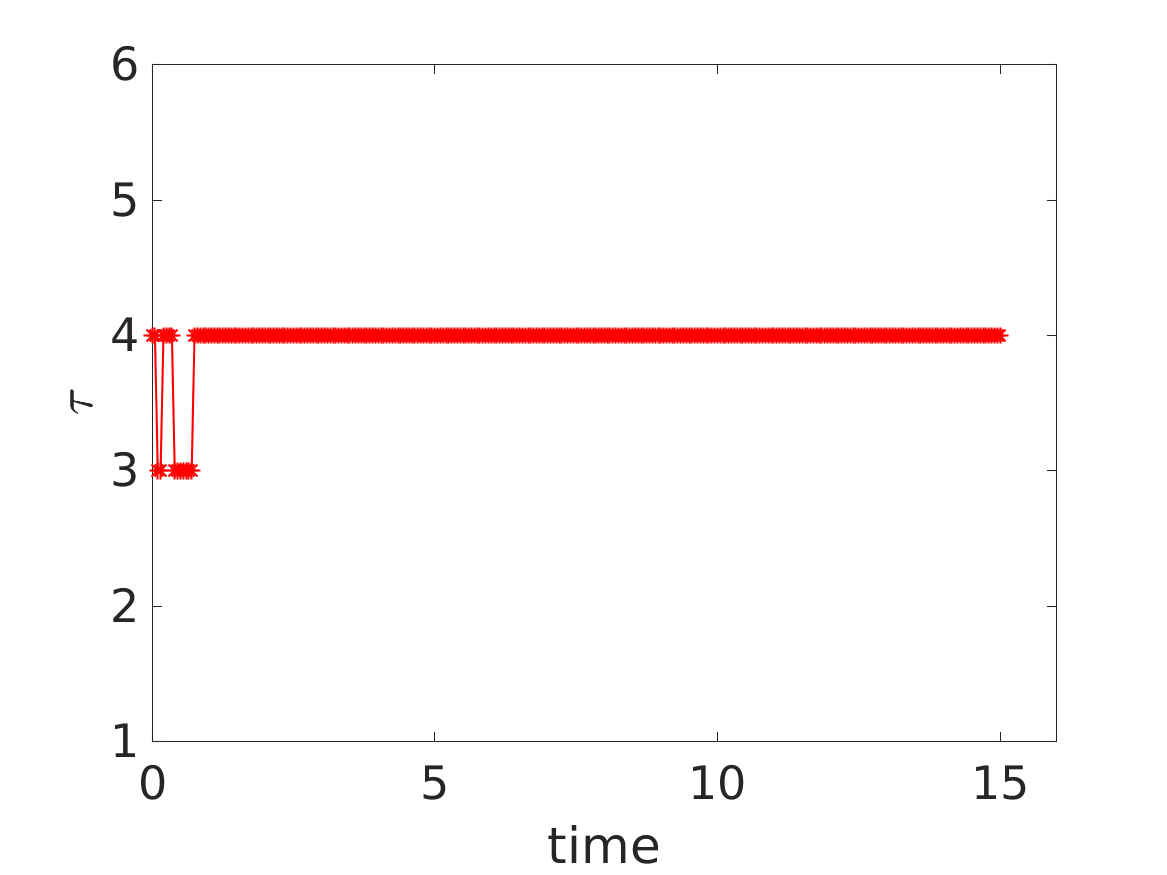}
}
\subfigure[$128^3$, $P_c=1$]{
\includegraphics[width=0.3\columnwidth]{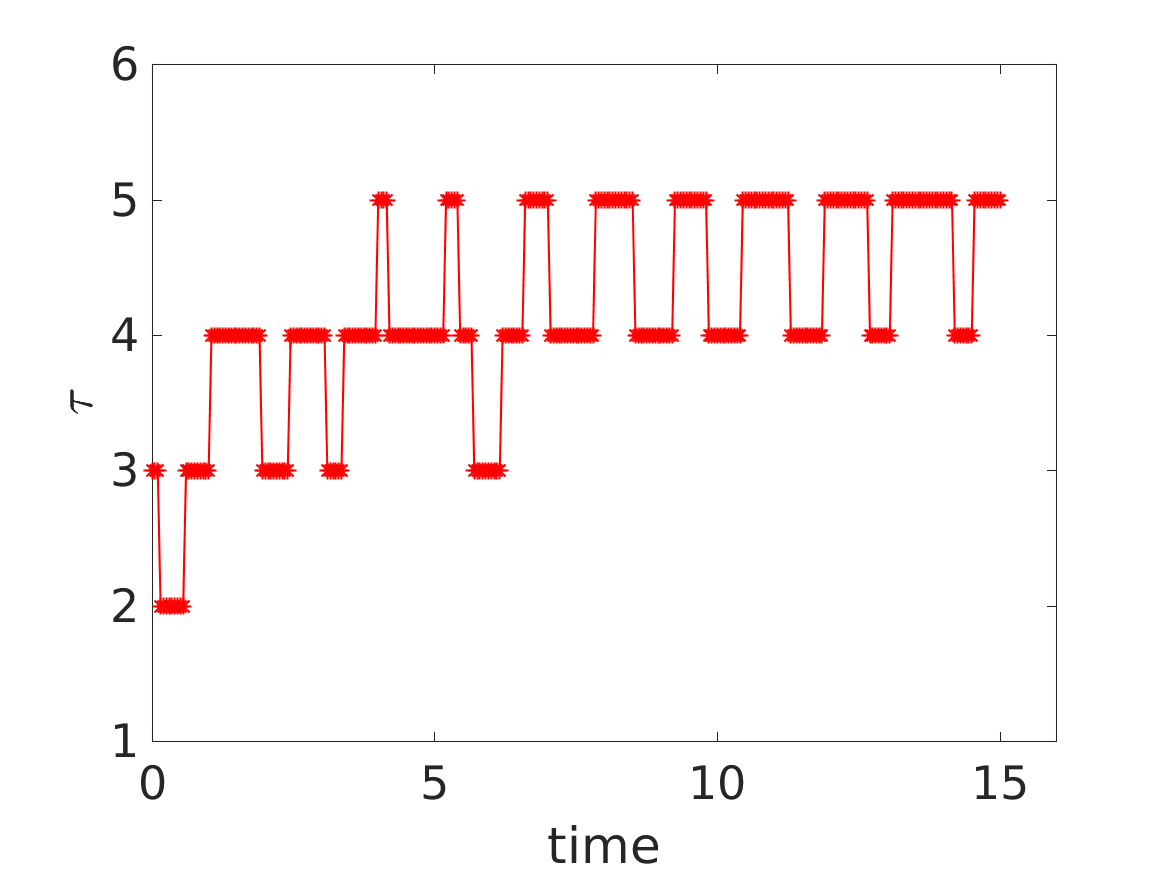}
}
\subfigure[$256^3$, $P_c=1$]{
\includegraphics[width=0.3\columnwidth]{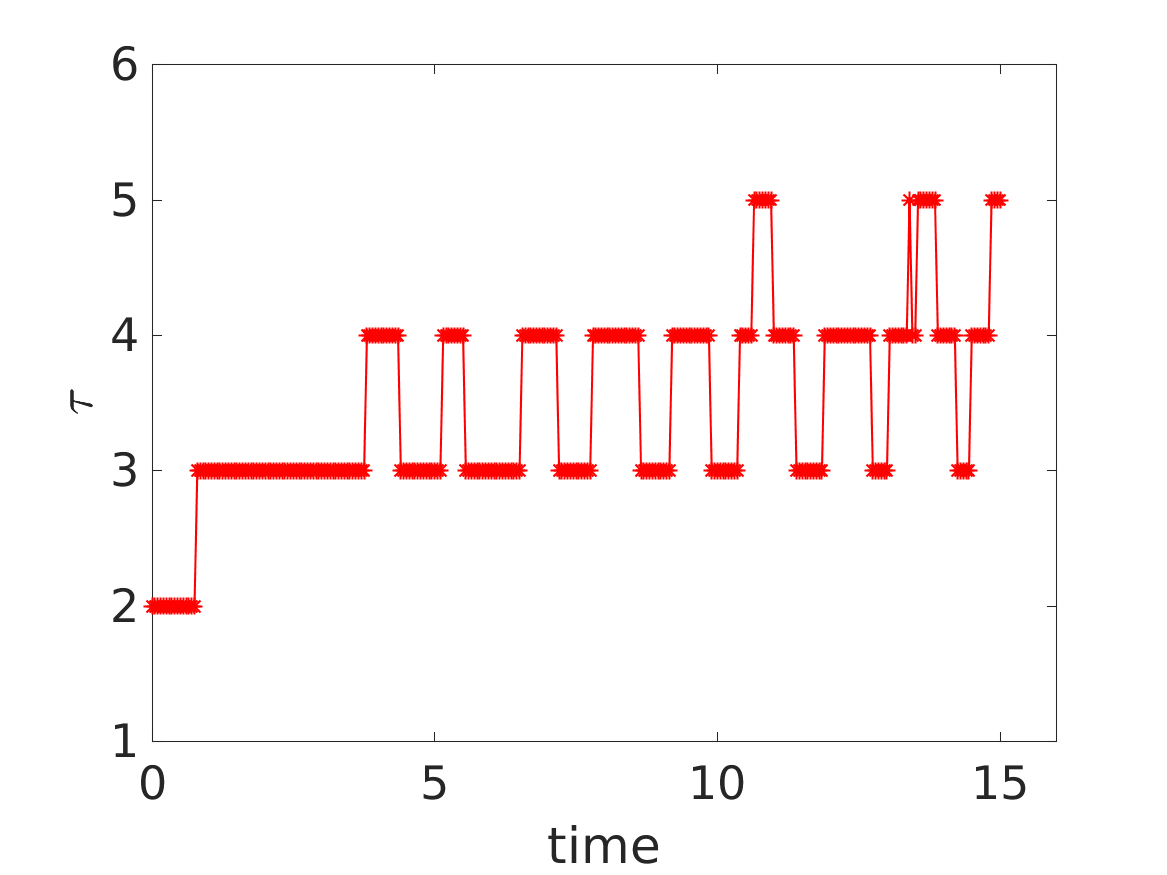}
}
\subfigure[$64^3$, $P_c=5$]{
\includegraphics[width=0.3\columnwidth]{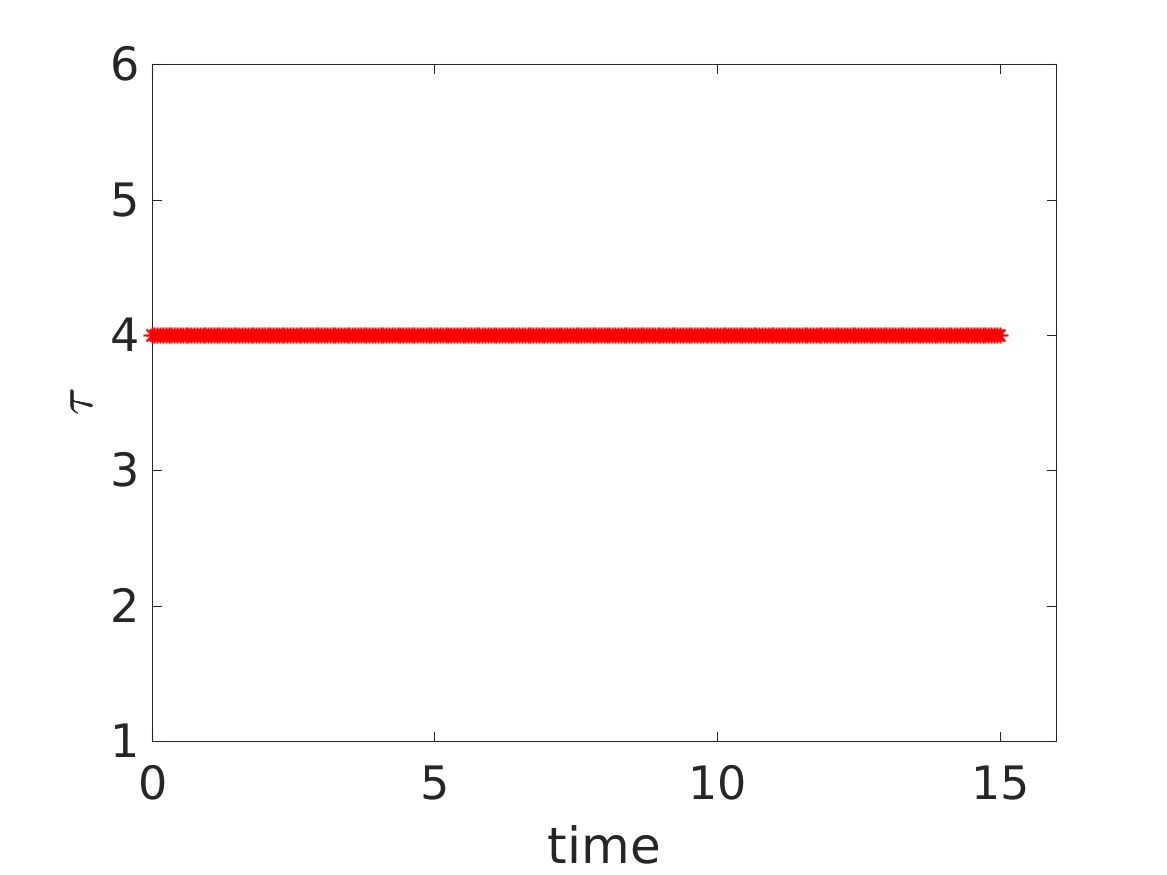}
}
\subfigure[$128^3$, $P_c=5$]{
\includegraphics[width=0.3\columnwidth]{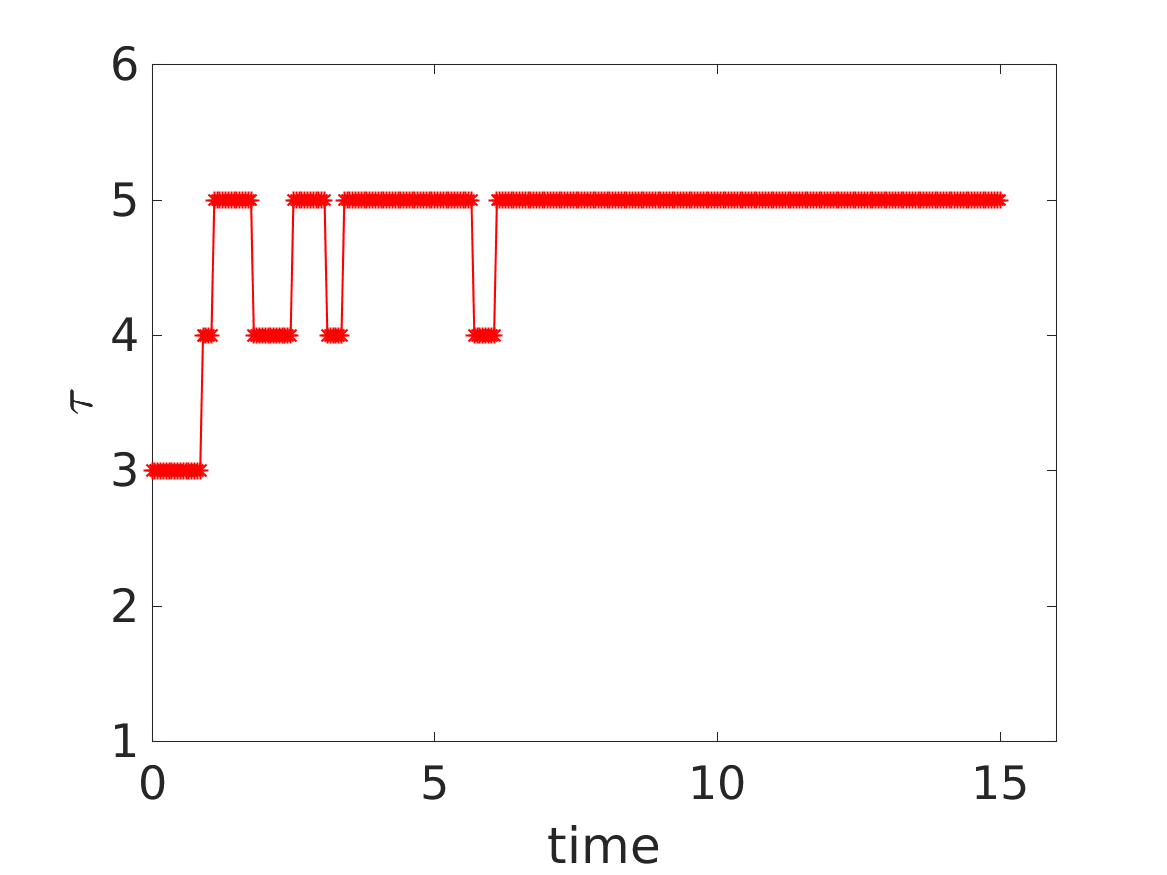}
}
\subfigure[$256^3$, $P_c=5$]{
\includegraphics[width=0.3\columnwidth]{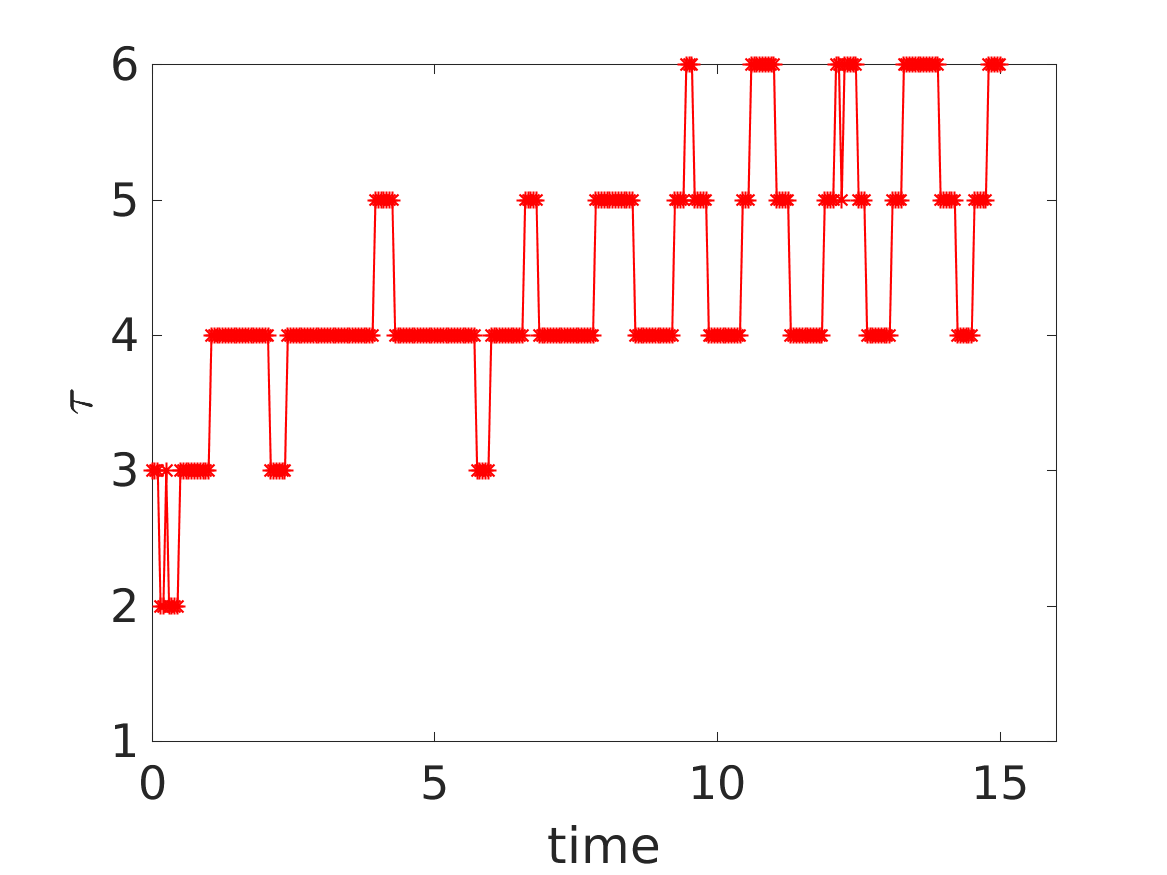}
}
    \caption{3D Penning trap: Time history of $\tau$ for different mesh sizes and 
    number of particles per cell $P_c$.}
\figlab{penning_tau_history}
\end{figure}

    In terms of run time performance comparisons, we ran the $64^3,128^3$ mesh cases on 64 cores and the $256^3$ test cases on 512 cores for both the regular 
    and adaptive $\tau$ PIC. 
    For $64^3$ mesh, at the last time instant we can see that the regular PIC is more accurate than the adaptive $\tau$ or any other fixed $\tau$ PIC. 

\begin{table}[h!b!t!]
\centering
\begin{tabular}{|r|c|c||c|c||c|c|}
\hline
    & \multicolumn{2}{c||}{Adaptive $\tau$} & \multicolumn{2}{c||}{Regular} & \multicolumn{2}{c|}{Reg/adaptive $\tau$}\\
\hline
\cline{2-7}
    $128^3$ & 361 ($1$) & 476 ($5$) & 273 ($5$) & 444 ($10$) & 0.8 & 0.9 \\
    $256^3$ & 829 ($1$) & 1201 ($5$) & 2360 ($15$) & 3081 ($20$) & 2.8 & 2.6 \\
\hline
\end{tabular}
    \caption{\label{tab:penning_runtime}3D Penning Trap: Total run time in seconds on 64 cores for $128^3$ mesh and 512 cores for $256^3$ mesh in case of the regular and adaptive $\tau$ PIC. The values within the parentheses represent the different number of particles per cell required to reach a comparable accuracy (based on visual norm from the left columns of Figures \figref{penning_pc1}-\figref{penning_pc5}) in the charge density for both the schemes at final time $T=15$. Columns $6-7$ are the ratio of time taken
    by the regular PIC to that for adaptive $\tau$ PIC.}
\end{table}

    For $128^3$ and $256^3$ meshes, from Table \ref{tab:penning_runtime} we can see a maximum speedup of $2.8$ with adaptive $\tau$ PIC over the regular PIC for the finest mesh size. Again considering the number of particles as a measure for the memory cost adaptive $\tau$ PIC is $2-15$ times cheaper than the regular PIC. In order to see
    more computational benefits with the adaptive $\tau$ PIC for this problem we need to perform runs with finer meshes and more particles per cell. These 3D large-scale simulations are part of our future work and the results will be reported elsewhere.


\section{Conclusions}
\seclab{conclusions}

    We have proposed a sparse grids based adaptive noise reduction strategy for particle-in-cell (PIC) simulations. Unlike the typical physical or Fourier
    domain filters used in PIC methods, the strategy adapts to mesh size, number of particles per cell, smoothness of the charge density and the initial sampling 
    technique. In order to construct the strategy we use the key idea of increased particles per cell in sparse grids compared to the regular grid for the same
    total number of particles as proposed in \cite{ricketson2016sparse}. The current work extends that concept in several directions. Specifically, 
    we present a filtering perspective for the sparse grids based noise reduction which helps to incorporate it with ease in existing high performance 
    large-scale PIC code bases and also opens the door for sparse grids based filtering approaches. We tackle the problem of large grid-based error of sparse grids for non-aligned and non-smooth functions by means of the truncated combination technique \cite{leentvaar2008pricing,benk2012hybrid,benk2012variants}.
    We show in the context of PIC simulations that the truncated combination technique provides a natural framework to minimize the sum of grid-based error and
    particle noise. This allows us to propose a heuristic based on formal error analysis to select the optimal truncation parameter on the fly that 
    minimizes the total error.

    We show the performance and applicability of our strategy on two benchmark problems; namely the 2D diocotron instability and electron dynamics in a 3D Penning trap. In both test cases the adaptive noise reduction strategy picks a truncation parameter which is close to optimal for all times. To achieve comparable accuracy for the charge density deposition we obtain significant speedups and memory savings up to an order of magnitude with the noise reduction technique compared to regular PIC 
    in the 2D diocotron test case. For the 3D Penning trap test case a maximum speedup of $2.8$ and $15$ times memory reduction is obtained for the 
    finest mesh size tested. Further speedups and memory reduction in the 3D test case require us to test the strategy for even finer resolutions and
    that is part of future work.

    Our strategy can be very easily integrated into existing high performance large-scale PIC code bases and ongoing work is to integrate it into the
    open source particle accelerator library OPAL \cite{adelmann2019opal}. In terms of future work, we plan on investigating the applicability and performance
    of the noise reduction strategy on large-scale high intensity particle accelerator simulations such as the IsoDAR project \cite{bungau2012proposal,yang2013beam} with a particular focus on understanding the dynamics of halo particles and efficient collimation strategies. Filtering strategies have much more impact on the electromagnetic PIC simulations as reported in \cite{vay2011numerical}. Hence we would like to extend the current approach for Vlasov-Maxwell equations and investigate the
    performance in that context. Use of machine learning approaches to tune denoising threshold in our strategy is also of interest. Finally, we also intend to compare the current strategy with other filtering approaches and denoising techniques. 

\section*{Acknowledgments}
    This project has received funding from the European Union's Horizon 2020 research and innovation program under the Marie Sk{\l}odowska-Curie grant agreement No. 701647 and from the United States National Science Foundation under Grant No. PHY-1820852. L.F.\ Ricketson's work was performed under the auspices of the U.S. Department of Energy by Lawrence Livermore National Laboratory under Contract DE-AC52-07NA27344. Lawrence Livermore National Security, LLC.  We are grateful for the support. The first author would like to thank Dr. Weiqun Zhang for help with the AMReX related queries.

\section*{Appendix A. Proof of Proposition \proporef{MC_filtering_equivalence} relating the direct charge density deposition onto the component grids and the two-step approach}
\seclab{equivalence_proof}
\begin{proof}
Even though sparse grids make sense only for dimensions 2 and higher we can
still understand the essence of the proof in 1D. Also, since the shape functions and 
transfer operators in 2D and 3D are obtained by the tensor product of 1D
linear interpolation functions the proof extends naturally to those cases.

\begin{figure}[h!b!t!]
\subfigure[Node-centered grid]{
\includegraphics[width=0.49\columnwidth]{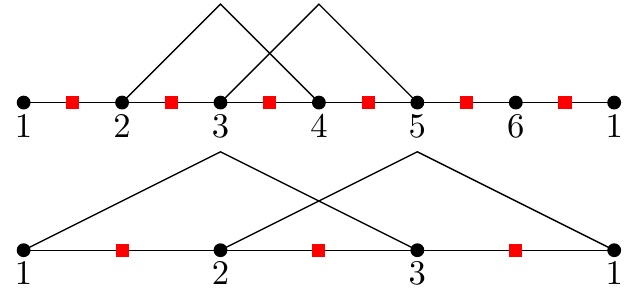}
        \figlab{node_center}
}
\subfigure[Cell-centered grid]{
\includegraphics[width=0.49\columnwidth]{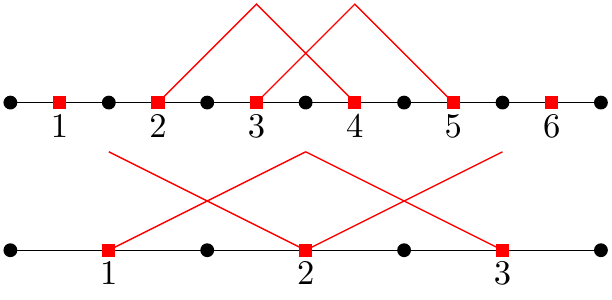}
        \figlab{cell_center}
}
        \caption{Schematic showing the node-centered and cell-centered grids and the corresponding shape functions. The nodes are marked with black circles and the cell-centers with red squares. The domain is periodic. The shape functions $W_c$ corresponding to the coarse grid are linear between the nodes in the fine grid
        in case of node-centered grids. For cell-centered grids $W_c$ has discontinuity in derivative between some of the cell-centers in the fine grid 
        whereas between nodes of the fine grid it is always linear.}
\figlab{node_cell_center}
\end{figure}

    Consider a periodic 1D domain $[0,L]$ and two grids with mesh sizes $h_f$ and $h_c$. The grid with mesh size $h_c$ is coarser than the one with $h_f$ and assume $h_c$ is an integer multiple of $h_f$. Let us first consider the node-centered grids where all the coarse grid points are also grid points in the fine grid as shown in Figure \figref{node_center}.

    The particles deposit onto the fine grid with mesh size $h_f$ and the charge density $\rhoa_e$ is given by 
\begin{equation}
    \eqnlab{rho_hf}
        \rhoa_e(\tilde x_j) = \frac{Q_e}{N_p h_f} \sum_{p=1}^{N_p} W_f(\tilde x_j - x_p), 
\end{equation}
    where $W_f(\zeta) = \max\LRc{0,1-\frac{|\zeta|}{h_f}}$ is the cloud-in-cell shape function and $x_p$ and $\tilde x_j$ are the locations of the particles and the grid points in the fine grid respectively. 
    Now, we transfer the density $\rhoa_e$ to the coarse grid by means of the transfer operator $R$ in equation \eqnref{restriction} which gives 
\begin{equation}
    \eqnlab{rho_hc}
        \varrho_e(x_k) = \frac{h_f}{h_c} \sum_{j=1}^{N_c} \rhoa_e(\tilde x_j) W_c(x_k - \tilde x_j), 
\end{equation}
where $W_c(\zeta) = \max\LRc{0,1-\frac{|\zeta|}{h_c}}$, $x_k$ are the locations of the grid points in the coarse grid and $N_c$ is the total number of cells in the fine grid. Substituting for $\rhoa_e$ from equation \eqnref{rho_hf} and switching the order of sums we get
\begin{equation}
    \eqnlab{rho_hc2}
        \varrho_e(x_k) = \frac{Q_e}{N_ph_c} \sum_{p=1}^{N_p}\sum_{j=1}^{N_c} W_c(x_k - \tilde x_j)W_f(\tilde x_j - x_p). 
\end{equation}
Now, for a given particle, $W_f(\tilde x_j - x_p)$ is non-zero for exactly two values of $j$: the floor of $x_p/h_f$ and the ceiling of that same quantity. Let us call these values $J$ and $J + 1$ and assume the grid points are ordered such that $x_J$ is to the left of $x_{J+1}$. We have
\begin{align}
    \sum_{j=1}^{N_c} W_c(x_k - \tilde x_j)W_f(\tilde x_j - x_p) &= W_c(x_k - \tilde x_J)W_f(\tilde x_J - x_p) \nonumber \\&+ W_c(x_k - \tilde x_{J+1})W_f(\tilde x_{J+1} - x_p). \eqnlab{sum_wc_wf}
\end{align}

Now we note that because of the way the two grids are related (mesh sizes are integer
multiples, coincident grid points), we are guaranteed that $W_c(x_k-\tilde x)$ is linear on the interval $\tilde x\in[\tilde x_J,\tilde x_{J+1}]$. This is because the places where $W_c$ has a discontinuity in its derivative are
    guaranteed to be fine grid points as shown in Figure \figref{node_center}. So, linear interpolation is exact for $W_c$ on the interval $[\tilde x_J,\tilde x_{J+1}]$.
Since $x_p$ is in this interval, we have
\[
    W_c(x_k - x_p) = W_c(x_k - \tilde x_J) \LRs{\frac{\tilde x_{J+1}-x_p}{\tilde x_{J+1}-\tilde x_{J}}} + W_c(x_k - \tilde x_{J+1}) \LRs{1-\frac{\tilde x_{J+1}-x_p}{\tilde x_{J+1}-\tilde x_{J}}}.
\]
Now we notice that
\[
    \LRs{\frac{\tilde x_{J+1}-x_p}{\tilde x_{J+1}-\tilde x_{J}}} = \LRs{\frac{\tilde x_{J} + h_f -x_p}{h_f}} = 1+\frac{\tilde x_J - x_p}{h_f} = 1-\frac{|\tilde x_J - x_p|}{h_f} = W_f(\tilde x_J - x_p),
\]
and a nearly identical reasoning gives 
\[
    \LRs{1-\frac{\tilde x_{J+1}-x_p}{\tilde x_{J+1}-\tilde x_{J}}} = W_f(\tilde x_{J+1} - x_p).
\]
Combining these with equation \eqnref{sum_wc_wf} we get 
\begin{equation}
\eqnlab{sum_wc_wf2}
    \sum_{j=1}^{N_c} W_c(x_k - \tilde x_j)W_f(\tilde x_j - x_p) = W_c(x_k - x_p).
\end{equation}
Substituting this into equation \eqnref{rho_hc2} we get the density on the coarse grid as 
\begin{equation}
    \eqnlab{rho_hc_equivalent}
        \varrho_e(x_k) = \frac{Q_e}{N_p h_c} \sum_{p=1}^{N_p} W_c(x_k - x_p). 
\end{equation}
Comparing equation \eqnref{rho_hc_equivalent} with equation \eqnref{rho_hf} we see this is exactly the expression we would obtain if the particles were to deposit directly onto the coarse grid with mesh size $h_c$.

Now we will consider the cell-centered grids. In this case the coarse grid points are also not the grid points in the fine grid and $W_c$ will have a discontinuity in the derivative for some of the intervals $[\tilde x_j,\tilde x_{j+1}]$ as shown in Figure \figref{cell_center} depending on the ratio $h_c/h_f$. Hence an exact equivalence between
    the two approaches does not hold. However, we will now show that $$ \mc{W}_c(x_k-x) = \sum_{j=1}^{N_c} W_c(x_k - \tilde x_j)W_f(\tilde x_j - x)$$ can be considered as a 
    shape function by itself. To that end, we will show that it satisfies the three conditions for any shape function as given in \cite{filbet2003numerical}. These are listed as follows
    \begin{enumerate}
        \item $\mc{W}_c(\zeta) = \mc{W}_c(-\zeta)$,
        \item $\frac{1}{h_c}\int \mc{W}_c(\zeta) d\zeta = 1$,
        \vspace{0.1cm}
        \item $\sum\limits_{k} \mc{W}_c(x_k - x) = 1$.
    \end{enumerate}

The first condition is manifestly true as $W_f$ which is the standard hat function is even. For the second condition we observe that 
\[
    \frac{1}{h_c}\int \mc{W}_c(\zeta) d\zeta = \frac{h_f}{h_c} \sum_{j=1}^{N_c}W_c(x_k - \tilde x_j),
\]
as $W_f$ is a shape function and by definition integrates to $h_f$. Now, $h_f\sum_{j=1}^{N_c}W_c(x_k - \tilde x_j)$ is the midpoint rule applied for the integration $\int W_c(x_k - \tilde x)$ over the fine grid. From Figure \figref{cell_center} it is clear that $W_c$ is linear on each integration cell and the midpoint rule is exact. Thus,
\[
    \frac{h_f}{h_c} \sum_{j=1}^{N_c}W_c(x_k - \tilde x_j) = \frac{1}{h_c} \int W_c(x_k - \tilde x) d \tilde x = 1,
\]
where the last step comes from the fact that $W_c$ which is also a standard hat function integrates to $h_c$ by definition. Finally, the third condition
is related to global charge conservation and we note that since $W_c$ and $W_f$ are standard hat functions they satisfy the partition of unity and hence
$\mc{W}_c$ also satisfies it when we carry out the summation.

Now, using conditions 1 and 2 and noting that $\mc{W}_c$ is bounded in $[0,L]$ we can carry out the same set of steps shown in appendix B for a standard hat function. We can then see the grid-based error for $\mc{W}_c$ is of $\mc{O}(\abs{\ppx\rhoe}h_c^2)$ and the particle noise is $\mc{O}(\sqrt{\abs{Q_e\rhoe}/N_p h_c})$ as in equations \eqnref{1D_grid} and \eqnref{1D_noise} but with the constants depending on the ratio of $h_c$ to $h_f$. 
\end{proof}

\section*{Appendix B. Grid-based and particle errors in the charge density deposition for regular PIC schemes}

In this section, we follow the analysis in \cite{ricketson2016sparse} and derive in details the grid-based error and noise estimates for the  
charge density deposition in regular PIC schemes explicitly revealing the constants.  
For simplicity, let us consider a 1D PIC scheme and extensions to 2D and 3D 
are relatively straightforward. In the following, we consider a particular time instant and hence suppress the dependence of the different 
quantities with respect to time.

Let $\f(x,\v)$ be the electron phase-space distribution under consideration and let us define $\fh$ as
\[
    \fh = \frac{f}{\int\int\f dx d\v}.
\]
Since, $\fh$ is non-negative and its phase-space integral is unity it can be interpreted as probability density.
The exact charge density $\rhoe(x)$ is given by
\begin{align}
    \rhoe(x) &= q_e\int\int \f({\xi},\v) \delta(x-\xi) d\xi d\v, \\
             &= q_e\LRp{\int\int\f dx d\v} \int\int \fh(\xi,\v) \delta(x-\xi) d\xi d\v, \\
             &= Q_e\int\int \fh(\xi,\v) \delta(x-\xi) d \xi d\v,
\end{align}
where $Q_e = q_e\int\int\f dx d\v$ is the total electron charge in the system and $\delta(x-\xi)$ is the Dirac delta function.

In PIC, we approximate $\delta(x-\xi)$ with the shape function $S(x-\xi)$
which for our discussion here consider it to be the cloud-in-cell or linear interpolation function. The approximate charge density $\bar\rho_e$ with the shape function $S(x-\xi)$ is given by

\begin{align}
    \eqnlab{rhoa1}
    \bar\rho_e(x) &= Q_e\int\int \fh(\xi,\v) S(x-\xi) d\xi d\v,\\
    \eqnlab{rhoa2}
             &= Q_e\Ebb_{\fh(\xi,\v)}\LRs{S(x-\xi)},
\end{align}
where $\Ebb$ is the expected value over the probability density $\fh$.

\subsection*{B.1 Grid-based error estimate}

This is the error due to approximating $\delta(x-\xi)$ with the shape
function $S(x-\xi)$ 

\begin{equation}
    \eqnlab{eg}
    e_g = \abs{\rhoe - \bar\rho_e}.
\end{equation}
Towards estimating this error, let us expand $\fh(\xi,\v)$ in equation \eqnref{rhoa1} in terms of Taylor's series about $x$,

\begin{align}
    \bar\rho_e &= Q_e\int\int\LRp{\fh(x,\v) + (\xi-x)\px\fh(x,\v) \right.\nonumber \\&+ \left. \frac{(\xi-x)^2}{2}\ppx\fh(x,\v) + \cdots}S(x-\xi) d\xi dv, \\
            &= \underbrace{Q_e\int\fh dv}_{\rhoe}\underbrace{\int S(x-\xi) d\xi}_{1} + Q_e\int \px\fh dv \int (\xi-x)S(x-\xi)d\xi  \nonumber \\
            &+ Q_e\int \ppx\fh dv \int \frac{(\xi-x)^2}{2}S(x-\xi)d\xi + \cdots, 
\end{align}
where we have used the fact that the integral of the shape function $S(x-\xi)$ is unity. In the above equations we have used the short hand notations $\px=\frac{\partial(.)}{\partial x}$ and $\ppx=\frac{\partial^2(.)}{\partial x^2}$. Taking outside the partial derivatives with respect to $x$ in the
$\int dv$ integrals we get

\begin{equation}
    \eqnlab{rhoa}
    \bar\rho_e = \rho_e + \px\rhoe \int (\xi-x) S(x-\xi) d\xi
              + \ppx \rhoe \int \frac{(\xi-x)^2}{2} S(x-\xi)d\xi
              + \cdots.
\end{equation}
The cloud-in-cell shape function is given by 
\begin{equation}
    S(\zeta) = \frac{1}{h_x}\max\LRc{0, 1-\frac{\abs{\zeta}}{h_x}}.
\end{equation}
Performing a change of variables with $\zeta=\xi-x$ in equation \eqnref{rhoa} and noting that $S(\zeta)$ has a compact support and is
zero outside $|\zeta| \leq h_x$ all the integrals has to be carried 
only in $-h_x \leq \zeta \leq h_x$.

Also, $S(\zeta)$ is an even function and hence $\int_{-h_x}^{h_x} \zeta S(\zeta) d\zeta$ which is the second term in equation \eqnref{rhoa} is 0. However, the integrand in the third term of the equation \eqnref{rhoa} is an even function and it evaluates to

\[
    \int \frac{(\xi-x)^2}{2} S(x-\xi) d\xi = \frac{1}{h_x} \int_0^{h_x} \zeta^2 \LRp{1-\frac{\zeta}{h_x}} d\zeta = \frac{h_x^2}{12}.
\]
Thus equation \eqnref{eg} becomes 

\begin{align}
    \nonumber
    e_g(x) &\leq \frac{h_x^2}{12}\abs{\ppx\rhoe(x)}  + \cdots, \\ \eqnlab{1D_grid}
    e_g    &= \mc{O}\LRp{\frac{h_x^2}{12}\abs{\ppx\rhoe(x)}}.
\end{align}

Since, the cloud-in-cell shape functions in 2D and 3D are obtained by the 
tensor product of 1D shape functions the analysis extends easily to these
cases. Carrying out similar steps we obtain the grid-based error for 2D
and 3D as
\begin{align}
    e_g &=
        \mc{O}\LRp{\frac{1}{12}\LRc{\abs{\frac{\partial^2 \rhoe}{\partial x^2}} h_x^2 + \abs{\frac{\partial^2 \rhoe}{\partial y^2}} h_y^2} + \frac{1}{144} \abs{\frac{\partial^4 \rhoe}{\partial x^2 \partial y^2}} h_x^2 h_y^2} \quad \text{in 2D,} \eqnlab{2D_grid} \\
        e_g &= \mc{O}\LRp{\frac{1}{12}\LRc{\abs{\frac{\partial^2 \rhoe}{\partial x^2}} h_x^2 + \abs{\frac{\partial^2 \rhoe}{\partial y^2}} h_y^2 + \abs{\frac{\partial^2 \rhoe}{\partial z^2}} h_z^2} \right. \nonumber \\&+ \left. \frac{1}{144} \LRc{\abs{\frac{\partial^4 \rhoe}{\partial x^2 \partial y^2}} h_x^2 h_y^2 + \abs{\frac{\partial^4 \rhoe}{\partial y^2 \partial z^2}} h_y^2 h_z^2 + \abs{\frac{\partial^4 \rhoe}{\partial z^2 \partial x^2}} h_z^2 h_x^2} \right. \nonumber \\&+ \left.
        \frac{1}{1728} \abs{\frac{\partial^6
             \rhoe}{\partial x^2 \partial y^2 \partial z^2}} h_x^2 h_y^2 h_z^2} \quad \text{in 3D}. \eqnlab{3D_grid}
\end{align}
Note in the above equations the reason for including the only higher order
terms proportional to the mixed derivatives is because these terms will contribute to the dominant error for the sparse grid combination. Hence, the
constants in front of these terms are of interest for estimating the 
coefficients of the grid-based error in section \secref{heuristic}.
\subsection*{B.2 Noise estimate}

This is the error which occurs when we approximate the expected value of
the shape function by means of an arithmetic mean over the number of discrete particles.
Thus equation \eqnref{rhoa2} becomes

\begin{equation}
    \eqnlab{rhoaa1}
    \bar\rho_e(x) \approx \rhoa_e(x) = \frac{Q_e}{N_p} \sum_p S(x-x_p).
\end{equation}
The error incurred by this approximation $\eta(x)$ is a random variable with
mean 0 and variance given by 
\begin{align}
    Var_{\fh}\LRs{\eta(x)} &= \Ebb_{\fh}\LRs{\LRp{\bar\rho_e - \rhoa_e}^2},\\
                       &= \bar\rho_e^2 - 2\bar\rho_e\Ebb_{\fh}[\rhoa_e] + \Ebb_{\fh}[\rhoa_e^2],\\
    &= \Ebb_{\fh}[\rhoa_e^2] - \bar\rho_e^2. \eqnlab{var_eta}
\end{align}
Here, in equation \eqnref{var_eta} we used the fact that $\Ebb_{\fh}[\rhoa_e]=\Ebb_{\fh}[\bar\rho_e]=\bar\rho_e$.
Let us compute $\Ebb_{\fh}\LRs{\rhoa_e^2}$
\begin{equation}
    \Ebb_{\fh}\LRs{\rhoa_e^2} = \Ebb_{\fh}\LRs{\frac{Q_e^2}{{N_p}^2}\LRp{\sum_p S(x-x_p)}^2}. 
\end{equation}
Similar to \cite{ricketson2016sparse} we assume that the initial particle states have been chosen by independent sampling from $\fh(t=0)$
and also they remain approximately independent for $N_p \gg 1$. Then $\Ebb_{\fh}\LRs{S(x-x_p)S(x-x_q)} = 0$ if $p \neq q$ and all the cross terms vanish giving  
\begin{align}
    \Ebb_{\fh}\LRs{\rhoa_e^2} &= \frac{Q_e^2}{{N_p}^2}\sum_p \Ebb_{\fh}\LRs{\LRp{S(x-x_p)}^2},\\
                        &= \frac{Q_e^2}{{N_p}}\Ebb_{\fh}\LRs{\LRp{S(x-x_p)}^2},
\eqnlab{E_rho_tilde}
\end{align}
where, we have used the fact that each particle has same $\Ebb_{\fh}\LRs{\LRp{S(x-x_p)}^2}$. Now, 
\begin{align}
    \frac{Q_e^2}{{N_p}}\Ebb_{\fh}\LRs{\LRp{S(x-x_p)}^2} &= \frac{Q_e^2}{{N_p}}
    \int\int\fh(x_p,v) \LRp{S(x-x_p)}^2 dx_pdv,\\
                                                &= \frac{Q_e^2}{{N_p}}
    \int\int\LRp{\fh(x,v) + (x_p-x)\px\fh(x,v) \right. \nonumber \\&+ \left. \frac{(x_p-x)^2}{2}\ppx\fh(x,v) + \cdots}\LRp{S(x-x_p)}^2 dx_pdv,
\end{align}
and similar to the previous exercise for grid-based error the term associated with $(x_p-x)\px\fh(x,v)$ vanishes and the third term evaluates to $\mc{O}(h_x)$. Hence evaluating the leading order term gives
\begin{align}
\frac{Q_e^2}{{N_p}}\int\int\fh(x,v) \LRp{S(x-x_p)}^2 dx_pdv &= \frac{Q_e}{N_p}\underbrace{\int Q_e \fh dv}_{\rhoe}\int\LRp{S(x-x_p)}^2 dx_p, \\
    &= \frac{Q_e\rhoe}{N_p} \frac{2}{h_x^2}\int_0^{h_x} \LRp{1-\frac{\zeta}{h_x}}^2 d\zeta,\\
    &= \frac{2}{3} \frac{Q_e\rhoe}{N_p h_x}.
\end{align}
Plugging the above term in equation \eqnref{E_rho_tilde} gives 
\begin{equation}
    \eqnlab{rho_tilde_e_square}
    \Ebb_{\fh}\LRs{\rhoa_e^2} = \frac{2}{3} \frac{Q_e\rhoe}{N_p h_x} + \mc{O}(h_x) +
    \cdots.
\end{equation}
Omitting the $\bar\rho_e^2$ term in equation \eqnref{var_eta} as it is small compared to equation \eqnref{rho_tilde_e_square} and substituting the above expression gives 
\[
Var_{\fh}\LRs{\eta(x)} \approx \frac{2}{3} \frac{Q_e\rhoe}{N_p h_x}. 
\]
Defining the particle noise error $e_n$ as the standard deviation of the random variable $\eta$ we get 
\begin{equation}
    \eqnlab{1D_noise}
    e_n(x) = \mc{O}\LRp{\sqrt{\frac{2}{3} \frac{\abs{Q_e\rhoe(x)}}{N_p h_x}}}.
\end{equation}

Similarly, carrying out the same set of steps in 2D and 3D we get the 
estimates for the particle noise as
\begin{align}
    e_n &= \mc{O}\LRp{\sqrt{\frac{4}{9} \frac{\abs{Q_e\rhoe}}{N_p h_x h_y}}} \quad \text{in 2D}, \eqnlab{2D_noise} \\
    e_n &= \mc{O}\LRp{\sqrt{\frac{8}{27} \frac{\abs{Q_e\rhoe}}{N_p h_x h_y h_z}}} \quad \text{in 3D}. \eqnlab{3D_noise}
\end{align}

\section*{References}

\bibliographystyle{elsarticle-num}

\bibliography{references}

\end{document}